\let\ConnSpace\relax
\let\GauGroup\relax
\let\GauAlgebra\relax
\MyNewMathOperator{\ConnSpace}		{command={\mathbrush{C}}, sort={C}, display={$\ConnSpace(P)$}, description={Space of connections of the principal bundle $P$}}
\MyNewMathOperator{\GauGroup}			{command={\mathbrush{G}}, sort={G}, display={$\GauGroup(P)$}, description={Group of gauge transformation of the (principal) bundle $P$}}
\MyNewMathOperator{\GauAlgebra}		{command={\mathrm{L}\mathbrush{G}}, sort={LG}, display={$\GauAlgebra(P)$}, description={Lie algebra of infinitesimal gauge transformation of the principal bundle $P$}}
\RenewDocumentCommand{\wedgeDual}{s m m}{#2 \mathbin{\dot{\wedge}} #3}
\begin{document}

\MakeTitle

\tableofcontents

\listoftodos

\section{Introduction}
While perturbative methods in quantum field theory yield a satisfactory description of high energy processes in particle physics, many low energy processes are dominated by non-perturbative effects and, so far, have eluded from a rigorous theoretical explanation.
Many attempts to develop a mathematically rigorous non-perturbative quantum gauge theory are inspired by the general program of constructive quantum field theory.
In its classical version, this program is based on the Euclidean path integral concept and on lattice approximation as an intermediate step, see \parencite{Hooft2005,JaffeWitten} for status reports on the case of Yang--Mills theory.
A different option is provided by the Hamiltonian approach, see \parencite{KogutSusskind1975,BalianDrouffeItzykson1974} and \parencite{KijowskiRudolph2002,KijowskiRudolph2005} for the theory on a finite lattice and \parencite{GrundlingRudolph2017} for the construction of the thermodynamic limit.
Concerning the Hamiltonian viewpoint, one may think of another approach: instead of using the lattice approximation as an intermediate step, study classical continuum gauge theory as an infinite-dimensional Hamiltonian system with a symmetry and, next, try to extend methods from geometric quantization to the infinite-dimensional setting.
The fundamental problem in dealing with Yang--Mills theory is the elimination of the unphysical gauge degrees of freedom.
Within the Hamiltonian picture, this is accomplished via symplectic reduction.
\emph{In this paper, we develop the theory of symplectic reduction in an infinite-dimensional setting and we apply it to gauge symmetry reduction of the Yang--Mills--Higgs system.}

In the context of classical particle mechanics, the system is described by a finite-dimensional symplectic manifold \( (M, \omega) \) and the symmetry is encoded in the action of a Lie group \( G \) on $M$.
With the help of the momentum map \( J: M \to \LieA{g}^* \), which captures the conserved quantities, one passes to the reduced phase space \( M \sslash_\mu G \defeq J^{-1}(\mu) \slash G \) at \( \mu \in \LieA{g}^* \).
Often the action of \( G \) is not free.
Then, the reduced space is not a smooth manifold but a stratified space with each stratum being a symplectic manifold.
This process of passing to the reduced phase space is called singular Marsden--Weinstein symplectic reduction \parencite{MarsdenWeinstein1974,SjamaarLerman1991}.
Starting in the early nineties \parencite{EmmrichRoemer1990}, various case studies have shown that the singularities may have an influence on the properties of the quantum theory, see \parencite[Chapter~8]{RudolphSchmidt2014} for a detailed discussion.
In particular, they may carry information about the spectrum of the Hamiltonian, see \parencite{HuebschmannRudolphEtAl2009}.

In most of the applications in physics, the phase space is a cotangent bundle \( \CotBundle Q \) over the configuration space $Q$ of the system. 
Symplectic reduction for that case has attracted much attention in recent years.
In particular, it is of interest to connect the geometry of the symplectically reduced space \( \CotBundle Q \sslash_\mu G \) to the properties of the quotient \( Q \slash G \).
In finite dimensions and for proper free $G$-actions this connection is well understood: the reduction \( \CotBundle Q \sslash_0 G \) at zero is symplectomorphic to the cotangent bundle \( \CotBundle (Q \slash G) \) (with its canonical symplectic form) of the reduced configuration space \( Q \slash G \).
For reduction at non-zero momentum \( \mu \in \LieA{g}^* \), the reduced space is symplectomorphic to the fibered product of \( \CotBundle (Q \slash G) \) and the coadjoint orbit \( G \cdot \mu \) endowed with a magnetic symplectic form.
See, for example, \parencite[Section~2]{MarsdenMisiolekEtAl2007} for details.
The singular case is more complicated due to the occurrence of new phenomena that are absent in the regular case.
In finite dimensions, \textcite{PerlmutterRodriguez-OlmosSousa-Dias2007} have shown that the fibered structure of the cotangent bundle yields a refinement of the usual orbit-momentum type strata into so-called seams.
The principal seam is symplectomorphic to a cotangent bundle while the singular seams are coisotropic submanifolds of the corresponding symplectic stratum.
As in the context of standard symplectic reduction, the proof that the seams are manifolds follows from an appropriate symplectic slice theorem, see \parencite{Schmah2007,Rodriguez-OlmosTeixido-Roman2014}.
The latter theorem provides a normal form both for the symplectic structure and for the momentum map, and additionally the local symplectomorphism bringing the system to normal form is adapted to the fiber structure of \( \CotBundle Q \).

\emph{In the first part of the paper, we extend the above results concerning singular symplectic reduction of cotangent bundles to the case of infinite-dimensional Fréchet manifolds endowed with the action of an infinite-dimensional Fréchet Lie group.}
Infinite-dimensional geometry is usually developed in the functional analytic setting of Banach spaces.
However, many examples of interest do not allow for a satisfactory description in terms of Banach spaces.
For example, the group of diffeomorphisms of a given Sobolev regularity fails to be a Banach Lie group, because composition is not a smooth map.
Note, however, that the group of smooth diffeomorphisms is a Fréchet Lie group.
In order to include these important examples, we assume that all manifolds we are dealing with are Fréchet.
We stress that, in the Fréchet context, the notion of a cotangent bundle needs special care, see \cref{Setting,sec:cotangentBundle:definition}.
We also note that it is impossible to directly extend the approach of \parencite{Schmah2007,PerlmutterRodriguez-OlmosSousa-Dias2007} to the Fréchet setting according to the following serious obstacles: one needs the existence of (orthogonal) complements and an appropriate version of regular symplectic reduction to construct the normal form reference system; moreover, the inverse function theorem is central to the construction of the local symplectomorphism bringing the system to normal form.
All these tools are not readily available for Fréchet manifolds.

In sharp contrast to the strategy in \parencite{Schmah2007,Rodriguez-OlmosTeixido-Roman2014} outlined above, at the root of our approach lies the observation that, for the symplectic reduction procedure, it is not essential to bring the symplectic structure into a normal form.
Instead, the focus lies on the momentum map only.
Within this strategy, we construct a normal form for momentum maps of lifted actions (see \cref{prop:cotangentBundle:simpleNormalForm}) and we prove a singular cotangent bundle reduction theorem including the analysis of the secondary stratification into seams, see \cref{prop:cotangentBundle:singularCotangentBundleRed}.
The main technical tool is a version of the slice theorem for Fréchet $G$-manifolds as proved in \parencite{DiezSlice}.
Since cotangent bundle reduction at non-zero momentum values is still not completely understood even in finite dimensions, we restrict our attention to reduction at zero, which is fortunately the situation that we encounter for Yang--Mills--Higgs theory.
We also discuss dynamics in this context.
The possible significance of the seams for the dynamics is demonstrated by analyzing the finite-dimensional example of the harmonic oscillator, see \cref{ex:cotangentBundle:harmonicOss}.
Clearly, as a by-product, our theory provides a much simpler approach in the finite-dimensional setting. 

\emph{In the second part of the paper, we apply our general theory to the singular cotangent bundle reduction of Yang--Mills--Higgs theory.}
Concerning the classical configuration space, its stratified structure has been studied already before, see \parencite{KondrackiRogulski1986} for the general theory and \parencite{RudolphSchmidtEtAl2002b,RudolphSchmidtEtAl2002a,RudolphSchmidtEtAl2002,HertschRudolphSchmidt2010,HertschRudolphSchmidt2011} for the classification of the orbit types for all classical groups.
We use the Hamiltonian picture for this model as developed in \parencite{DiezRudolphClebsch}, \cf also \parencite{Sniatycki1999}.
First, we check that the model meets the assumptions made in the general theory showing that the singular cotangent bundle reduction theorem holds here.
This implies that the reduced phase space of the theory is a stratified symplectic space\footnote{See, however, \cref{sec:yangMillsHiggs} for the problems related to the frontier condition.}.
Next, we analyze the normal form and the stratification in some detail. 
In particular, we find that including the singular strata leads to a refinement of what is called the resolution of the Gauß constraint in the physics literature.
Our analysis of the normal form improves upon earlier work \parencite{ArmsMarsdenEtAl1981,Arms1981} on the singular geometry of the momentum map level set.
We also describe the orbit types for the model after symmetry breaking, which leads to a more transparent picture of the stratification.

Finally, we further analyze the secondary stratification in the concrete example of the Higgs sector of the Glashow--Weinberg--Salam model.
In this context, we find that the configuration space has only two orbit types.
The singular stratum is characterized by the remarkable physical property that the \( W \)-bosons are absent, \ie the Z-boson is the only non-trivial intermediate vector boson on the singular stratum.
We then pass to the discussion of the stratification of the phase space.
The secondary stratification turns out to be similar to that of the harmonic oscillator in the sense that there are only three strata: two cotangent bundles, which are glued together by one seam.
The non-generic cotangent bundle is the phase space of the sub-theory consisting of electrodynamics described by a photon, the theory of a massive vector boson described by the \( Z \)-boson and the theory of a self-interacting real scalar field described by the Higgs boson.
The seam is characterized by the condition that the \( W \)-boson field vanishes but its conjugate momentum is non-trivial.
In contrast, on the generic cotangent bundle all intermediate vector bosons of the model are present.
Finally, we study the structure of the strata of the reduced phase space in terms of gauge invariant quantities for the theory on \( S^3 \).
By implementing unitary and the Coulomb gauge fixing in a geometric fashion using momentum maps we show that the singular structure of the reduced phase space is encoded in a finite-dimensional \( \UGroup(1) \)-Lie group action.

\paragraph*{Acknowledgments}
We are very much indebted to M.~Schmidt and J.~Huebschmann for reading the manuscript and for many helpful discussions.
We gratefully acknowledge support of the Max Planck Institute for Mathematics in the Sciences in Leipzig and of the University of Leipzig.


\section{Singular cotangent bundle reduction}
\label{sec:cotangentBundleReduction}


\subsection{Problems in infinite dimensions. The setting}
\label{Setting}
Let \( Q \) be a Fréchet manifold (we refer the reader to \cref{Frechet} for conventions and further references concerning the calculus of infinite-dimensional manifolds).
The tangent bundle \( \TBundle Q \) of \( Q \) is a smooth manifold in such a way that the projection \( \TBundle Q \to Q \) is a smooth locally trivial bundle.
However, the topological dual bundle \( \TBundle' Q \defeq \bigDisjUnion_{q \in Q} (\TBundle_q Q)' \) is not a \emph{smooth} fiber bundle for non-Banach manifolds \( Q \), \cf \parencite[Remark~I.3.9]{Neeb2006} and \cref{sec:cotangentBundle:definition}.
As a substitute, we say that a smooth Fréchet bundle \( \CotBundleProj: \CotBundle Q \to Q \) is a cotangent bundle if there exists a fiberwise non-degenerate pairing with \( \TBundle Q \), see \cref{sec:cotangentBundle:definition} for details.
We note that according to this definition the cotangent bundle is no longer canonically associated to \( Q \) but requires the choice of a bundle \( \CotBundle Q \) and of a pairing \( \CotBundle Q \times_Q \TBundle Q \to \R \).
Similarly to the finite-dimensional case, for a given cotangent bundle \( \CotBundleProj: \CotBundle Q \to Q \), the formula
\begin{equation}
	\theta_p (X) = \dualPair{p}{\tangent_p \CotBundleProj (X)}, \qquad X \in \TBundle_p (\CotBundle Q) \, ,
\end{equation}
defines a smooth \( 1 \)-form \( \theta \) on \( \CotBundle Q \).
Furthermore, \( \omega = \dif \theta \) is a symplectic form.

Next, assume that a Fréchet Lie group \( G \) acts smoothly on \( Q \), that is, assume that the action map \( G \times Q \to Q \) be smooth.
In the sequel, by a Fréchet $G$-manifold \( Q \) we mean a Fréchet manifold $Q$ endowed with a smooth action of a Fréchet Lie group $G$.
We will often write the action using the dot notation as \( (g, q) \mapsto g \cdot q \).
Similarly, the induced action of the Lie algebra \( \LieA{g} \) of \( G \) is denoted by \( \xi \ldot q \in \TBundle_q Q \) for \( \xi \in \LieA{g} \) and \( q \in Q \).
Clearly, \( q \mapsto \xi \ldot q \) is the fundamental vector field generated by \( \xi \). 
Throughout the paper, we assume that the \( G \)-action is proper, that is, inverse images of compact subsets under the map
\begin{equation}
	G \times Q \to Q \times Q, \qquad (g, q) \mapsto (g \cdot q, q)
\end{equation}
are compact. 

Recall that in the finite-dimensional setting, for a proper action, a slice always exists \parencite{Palais1961}.
In the infinite-dimensional case, this may no longer be true and additional hypotheses have to be made; see \parencite{DiezSlice,Subramaniam1986} for general slice theorems in infinite dimensions and \parencite{AbbatiCirelliEtAl1989,Ebin1970,CerveraMascaroEtAl1991} for constructions of slices in concrete examples.
Having in mind the application to Yang--Mills theory, it is especially important to note that, in particular, the action of the group of gauge transformations on the space of connections admits a slice.
We refer the reader to \cref{sec:calculus:groupActionsSlices} for more details and background information regarding slices.
Here, we are not going to spell out the additional conditions of the general slice theorem.
We will simply assume that a slice exists at every point.
Properness of the action and the existence of slices have important consequences for the orbit type stratification of \( Q \).
We refer the reader to \parencite{DiezSlice} for the proof of the following statements in infinite dimensions.
First, recall that the conjugacy class of a stabilizer subgroup \( G_q \) is called the orbit type\footnote{In fact, the orbit type is constant along the orbit by the equivariance property \( G_{g \cdot q} = g G_q g^{-1} \) for every \( g \in G \).} of \( q \).
We put a preorder on the set of orbit types by declaring \( (H) \leq (K) \) for two orbit types represented by the stabilizer subgroups \( H \) and \( K \) if there exists \( a \in G \) such that \( a H a^{-1} \subseteq K \).
Since the action is proper, this preorder is actually a partial ordering.
Moreover, since the action admits a slice at every point, the subset \( Q_{(H)} \) of orbit type \( (H) \) is a locally closed submanifold of \( Q \).
Similarly, the quotient \( \check{Q}_{(H)} = Q_{(H)} \slash G \) carries a smooth manifold structure such that the natural projection is a smooth submersion.
If the orbit type decomposition of \( Q \) satisfies the frontier condition, then the decomposition of \( \check{Q} = Q \slash G \) into orbit types \( \check{Q}_{(H)} \) is a stratification (\cf \cref{def:stratification:stratification}).

By linearization, we also get a smooth action of \( G \) on the tangent bundle \( \TBundle Q \), which we write using the lower dot notation as \( g \ldot Y \in \TBundle_{g \cdot q} Q \) for \( g \in G \) and \( Y \in \TBundle_q Q \).
The action on \( \TBundle Q \) induces a \( G \)-action on \( \CotBundle Q \) by requiring that the pairing be left invariant, that is,
\begin{equation}
 	\dualPair{g \cdot p}{Y}
 		= \dualPair{p}{g^{-1} \ldot Y}
 		\quad p \in \CotBundle_{g^{-1} \cdot q} Q, Y \in \TBundle_q Q. 
\end{equation}
In order that this equation defines a smooth action on \( \CotBundle Q \), the action \( \TBundle_{g^{-1} \cdot q} Q \to \TBundle_q Q \) needs to be weakly continuous with respect to the pairing \( \dualPairDot \) for every \( g \in G \).
In finite dimensions, there always exists a \( G \)-equivariant diffeomorphism between \( \TBundle Q \) and \( \CotBundle Q \) and, therefore, the orbit types of \( \TBundle Q \) and \( \CotBundle Q \) coincide.
In the infinite-dimensional setting, a vector space may not be isomorphic to its dual and thus the orbit types may differ.
Indeed, we now give an example where the action on the dual space has more orbit types.
\begin{example}
	\label{ex:contangentBundle:differentOrbitTypesOnCotangent}
	Consider the space \( \SequenceSpace^1 \) of real-valued doubly infinite sequences \( (x_n)_{n \in \Z} \) satisfying
	\begin{equation}
		\norm{(x_n)}_1 \defeq \sum_{n \in \Z} \abs{x_n} < \infty.
	\end{equation}
	With respect to the norm \( \normDot_1 \), \( \SequenceSpace^1 \) is a Banach space.
	The topological dual is isomorphic to the space \( \SequenceSpace^\infty \) of all bounded sequences \( (\alpha_n)_{n \in \Z} \), which is also a Banach space with respect to the uniform norm
	\begin{equation}
	 	\norm{(\alpha_n)}_\infty \defeq \sup_{n \in \Z} \abs{\alpha_n}.
	 \end{equation}
	 The pairing between \( \SequenceSpace^1 \) and \( \SequenceSpace^\infty \) is given by
	 \begin{equation}
	 	\dualPair{(\alpha_n)}{(x_n)} \defeq \sum_{n \in \Z} \alpha_n x_n, \qquad (\alpha_n) \in \SequenceSpace^\infty, (x_n) \in \SequenceSpace^1.
	 \end{equation}
	 The group \( G = \Z \) of integers acts on \( \SequenceSpace^1 \) via the shift operators \( T^j: \SequenceSpace^1 \to \SequenceSpace^1 \), with \( j \in \Z \), defined by
	 \begin{equation}
	 	T^j (x_n) \defeq (x_{n+j}) \, .
	 \end{equation}
	 This action is linear and self-adjoint in the sense that the dual action on \( \SequenceSpace^\infty \) is also given by the shift operators \( T^j: \SequenceSpace^\infty \to \SequenceSpace^\infty \).
	 The actions on \( \SequenceSpace^1 \) and \( \SequenceSpace^\infty \) are continuous for the discrete topology on \( \Z \).
	 Recall that all subgroups of \( \Z \) are of the form \( k \Z \) for some integer \( k \geq 0 \).
	 Note that a sequence \( (x_n) \) has stabilizer \( k \Z \) if and only if it is \( k \)-periodic.
	 The subgroups \( \set{0} \) (\( k =0 \)) and \( \Z \) (\( k=1 \)) can easily be realized as stabilizer subgroups of some \( (x_n) \in \SequenceSpace^1 \).
	 However, none of the other subgroups \( k \Z \) with \( k > 1 \) occurs as the stabilizer subgroup of the action on \( \SequenceSpace^1 \).
	 Indeed, if a sequence \( (x_n) \) is non-zero and \( k \)-periodic for \( k > 1 \), then it is divergent and hence never an element of \( \SequenceSpace^1 \).
	 On the other hand, every periodic sequence is bounded and thus all subgroups \( k \Z \) for \( k \geq 0 \) occur as stabilizers of the action on the dual space \( \SequenceSpace^\infty \).
\end{example}
In order to exclude such pathological phenomena that make it impossible to connect the orbit types of \( Q \) with the ones for the lifted action on \( \CotBundle Q \), in the sequel, we will assume that there exist a \( G \)-equivariant vector bundle isomorphism between \( \TBundle Q \) and \( \CotBundle Q \); an assumption that holds in the applications we are interested in.
Note that for a Hilbert manifold \( Q \), a \( G \)-invariant scalar product yields such a \( G \)-equivariant diffeomorphism. 

By \cref{prop:contangentBundle:existenceMomentumMap}, the lifted action of \( G \) on the cotangent bundle \( \CotBundle Q \) preserves the canonical symplectic form \( \omega \).
In order to define the momentum map, we need to specify a dual space to the Lie algebra \( \LieA{g} \) of \( G \).
Similarly to the above strategy for the cotangent bundle, choose a Fréchet space \( \LieA{g}^* \) and a separately continuous non-degenerate bilinear form \( \kappa: \LieA{g}^* \times \LieA{g} \to \R \).
With respect to these data, the momentum map \( J: \CotBundle Q \to \LieA{g}^* \), if it exists, satisfies
\begin{equation}
 	\kappa(J(p), \xi) = \dualPair{p}{\xi \ldot q}, 
\end{equation}
for \( p \in \CotBundle_q Q \) and \( \xi \in \LieA{g} \), see \cref{prop:contangentBundle:existenceMomentumMap}.
Note that the right-hand side, viewed as a functional on \( \LieA{g} \), may not be representable by an element \( J(p) \in \LieA{g}^* \).
In this case, the momentum map for the lifted action does not exist.
The existence of the momentum map mainly depends on the chosen duality \( \kappa \) and not so much on the cotangent bundle.
In fact, we now give an example for an action of an infinite-dimensional Lie group acting on a \emph{finite-dimensional} cotangent bundle that does not possess a momentum map.
\begin{example}
	\label{ex:contangentBundle:actionWithoutMomentumMap}
 	Let \( M \) be a compact finite-dimensional manifold endowed with a volume form \( \vol \).
 	Consider a finite-dimensional Lie group \( G \) that acts on a finite-dimensional manifold \( Q \).
 	Fix a point \( \star \in M \) and let the current group \( \sFunctionSpace(M, G) \) act on \( Q \) via the evaluation group homomorphism \( \ev_\star: \sFunctionSpace(M, G) \to G \), that is,
 	\begin{equation}
 		\sFunctionSpace(M, G) \times Q \to Q, \quad (\lambda, q) \mapsto \lambda(\star) \cdot q \, .
 	\end{equation}
 	The induced action of \( \sFunctionSpace(M, G) \) on the cotangent bundle \( \CotBundle Q \) is symplectic but it  does not possess a momentum map with respect to the natural dual pairing
 	\begin{equation}
 		\kappa: \sFunctionSpace(M, \LieA{g}^*) \times \sFunctionSpace(M, \LieA{g}) \ni (\mu, \xi) \mapsto \int_M \dualPair{\mu}{\xi} \vol \in \R.
 	\end{equation}
 	If the momentum map \( J \) were to exist, then it would necessarily satisfy the relation
 	\begin{equation}
 		\int_M \dualPair{J(p)}{\xi} \vol = \kappa(J(p), \xi) = \dualPair{p}{\xi(\star) \ldot q} = \dualPair{J_G(p)}{\xi(\star)}
 	\end{equation}
 	for all \( p \in \CotBundle Q \) and \( \xi \in \sFunctionSpace(M, \LieA{g}) \), where \( J_G: \CotBundle Q \to \LieA{g}^* \) denotes the momentum map for the \( G \)-action.
 	This is only possible if \( J(p) \) is a delta distribution localized at the point \( \star \).
 	However, by definition, the dual \( \sFunctionSpace(M, \LieA{g}^*) \) only contains regular distributions, and thus there exists no momentum map with values in \( \sFunctionSpace(M, \LieA{g}^*) \).
 	Note that a momentum map does exist if the dual space of \( \sFunctionSpace(M, \LieA{g}) \) is chosen in such a way that it contains the delta distributions.
	
	From a different point of view, the reason for the non-existence of the momentum map is that the evaluation map \( \ev_\star: \sFunctionSpace(M, \LieA{g}) \to \LieA{g} \) is \emph{not} weakly continuous with respect to the pairings \( \kappa \) and \( \dualPairDot \), and hence does not have an adjoint \( \LieA{g}^* \to \sFunctionSpace(M, \LieA{g}^*) \).
\end{example}
In the sequel, we always assume that the lifted action on \( \CotBundle Q \) has a momentum map.

Finally, there is another basic problem typical for the infinite-dimensional setting.
As the map \( \ev_\star: \sFunctionSpace(M, \LieA{g}) \to \LieA{g} \) of \cref{ex:contangentBundle:actionWithoutMomentumMap} shows, the adjoint of a linear mapping between Fréchet spaces may not exist.
But, our strategy for the construction of the normal form will involve dualizing. 
Thus, in all constructions of \cref{sec:cotangentBundleReduction} that rely on dualizing, we will need to assume that the corresponding adjoint maps exist.
Moreover, even if the adjoint exists and the original map is injective, the adjoint is in general not surjective but only has a dense image.
We will also assume that such adjoints are actually surjective (see \cref{defn:cotangentBundle:sliceCompatible} below for the precise formulation of this assumption).
As we will see in \cref{sec:yangMillsHiggs}, for Yang--Mills theory these assumptions concerning adjoints are fulfilled, because the maps involved are elliptic differential operators.

Thus, in summary, we make the following assumptions:
\begin{enumerate}
	\item
		The action of \( G \) on \( Q \) is proper, and a slice exists at every point.
	\item
		For a given choice of \( \CotBundle Q \), the \( G \)-action lifts to an action on \( \CotBundle Q \) that has a momentum map \( J: \CotBundle Q \to \LieA{g}^* \) (relative to a chosen dual pairing \( \kappa: \LieA{g}^* \times \LieA{g} \to \R \)).
	\item
		The adjoints of some injective maps relevant for the development of the theory exist and are surjective.
		This assumption will be made precise below in \cref{defn:cotangentBundle:sliceCompatible} by introducing the notion of a slice compatible with the cotangent bundle structures. 
\end{enumerate}
These assumptions will be in effect throughout \cref{sec:cotangentBundleReduction}.
For the sake of completeness, they will be repeated in the statements of the theorems that follow. 

\subsection{Lifted slices and normal form}
\label{sec:cotangentBundle:normalForm}
Our general strategy for the construction of the normal form is the following:
\begin{enumerate}
	\item 
		By using a slice \( S \) at \( q \in Q \), reduce the problem to \( \CotBundle (G \times_{G_q} S) \),
		where \( G \times_{G_q} S \equiv (G \times S) \slash G_q \) is the \( S \)-bundle associated to \( G \to G \slash G_q \).
	\item
		Establish an equivariant diffeomorphism \( \CotBundle (G \times_{G_q} S) \isomorph G \times_{G_q} (\LieA{m}^* \times \CotBundle S) \), where \( \LieA{m} \) is a complement of \( \LieA{g}_q \) in \( \LieA{g} \).
	\item
		Calculate the momentum map under these identifications.
\end{enumerate}

Let \( p \in \CotBundle Q \) be a point in the fiber over \( q \in Q \).
Assume that the \( G \)-action on \( Q \) has a slice \( S \) at \( q \).
According to the strategy outlined above, the first step is to reduce the problem of determining the local structure of the momentum map near \( p \) to a problem on \( \CotBundle (G \times_{G_q} S) \).
This reduction is accomplished by applying the following standard result to the tube map\footnote{Recall that \( \chi \) is a diffeomorphism onto an open neighborhood of \( q \) in \( Q \). Moreover, it is \( G \)-equivariant with respect to the action of \( G \) on \( G \times_{G_q} S \) by left translation on the \( G \)-factor.} \( \chi: G \times_{G_q} S \to Q \), which plainly extends from the finite- to the infinite-dimensional setting.

\begin{prop}[Lifting point transformations]
	Let \( C \) and \( Q \) be Fréchet manifolds and let \( \phi: C \to Q \) be a diffeomorphism.
	Assume that the lift
	\begin{equation}
		\cotangent \phi: \CotBundle Q \to \CotBundle C, \quad p \mapsto \phi^* p,
	\end{equation}
	of \( \phi \) is a diffeomorphism.
	Then, \( \cotangent \phi \) is a symplectomorphism.
	If, moreover, \( \phi \) is \( G \)-equivariant and the lifted action on \( \CotBundle Q \) has a momentum map \( J_Q \), then \( J_C \defeq J_Q \circ \cotangent \phi^{-1}: \CotBundle C \to \LieA{g}^* \) is a momentum map for the lifted \( G \)-action on \( \CotBundle C \).
\end{prop}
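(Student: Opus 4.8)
The plan is to reduce everything to the single fact that the cotangent lift preserves the tautological one-form, that is, \( (\cotangent\phi)^* \theta_C = \theta_Q \), where \( \theta_C \) and \( \theta_Q \) denote the canonical \( 1 \)-forms on \( \CotBundle C \) and \( \CotBundle Q \). Granting this, the symplectomorphism claim is immediate: since the exterior derivative commutes with pullback, \( (\cotangent\phi)^*\omega_C = (\cotangent\phi)^* \dif\theta_C = \dif\bigl((\cotangent\phi)^*\theta_C\bigr) = \dif\theta_Q = \omega_Q \), and \( \cotangent\phi \) is a diffeomorphism by hypothesis. No analytic input is needed here beyond the assumed smoothness of \( \cotangent\phi \); in particular, the usual Fréchet obstruction — that the transpose of the isomorphism \( \tangent_c\phi \) need not map the chosen fibre \( \CotBundle_{\phi(c)} Q \) into the chosen fibre \( \CotBundle_c C \) — is precisely what the hypothesis that \( \cotangent\phi \) be a diffeomorphism rules out.

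To prove the one-form identity I would argue pointwise. Fix \( p \in \CotBundle_q Q \), write \( c \defeq \phi^{-1}(q) \) and \( \tilde p \defeq \cotangent\phi(p) = \phi^* p \in \CotBundle_c C \), and take \( Y \in \TBundle_p(\CotBundle Q) \). The only geometric ingredient is the intertwining of the bundle projections, \( \CotBundleProj_C \circ \cotangent\phi = \phi^{-1}\circ\CotBundleProj_Q \), which holds because \( \cotangent\phi \) sends the fibre over \( q \) to the fibre over \( \phi^{-1}(q) \). Differentiating this relation gives \( \tangent_{\tilde p}\CotBundleProj_C \circ \tangent_p(\cotangent\phi) = \tangent_q\phi^{-1}\circ\tangent_p\CotBundleProj_Q \). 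Inserting this into the definition of \( \theta_C \) and then using the defining property \( \dualPair{\phi^* p}{Z} = \dualPair{p}{\tangent_c\phi(Z)} \) of the pullback together with \( \tangent_c\phi\circ\tangent_q\phi^{-1} = \mathrm{id} \) collapses the expression to \( \dualPair{p}{\tangent_p\CotBundleProj_Q(Y)} = \theta_Q|_p(Y) \). This is a purely algebraic manipulation valid verbatim in the Fréchet category.

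For the momentum map assertion I would first record that \( \cotangent\phi \) is \( G \)-equivariant. This follows from the \( G \)-equivariance of \( \phi \) on the base together with the defining relations \( \dualPair{g\cdot p}{Y} = \dualPair{p}{g^{-1}\ldot Y} \) for the lifted actions on \( \CotBundle Q \) and \( \CotBundle C \): since \( \tangent\phi \) intertwines the linearized actions, a short pairing computation yields \( \cotangent\phi(g\cdot p) = g\cdot\cotangent\phi(p) \), and in particular the fundamental vector fields are \( \cotangent\phi \)-related. With equivariance in hand, the cleanest route is to verify the defining formula of \cref{prop:contangentBundle:existenceMomentumMap} directly. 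For \( \tilde p \in \CotBundle_c C \) set \( p \defeq (\cotangent\phi)^{-1}(\tilde p) \), so that \( \tilde p = \phi^* p \) and \( p \) lies over \( \phi(c) \). Then \( \kappa(J_C(\tilde p), \xi) = \kappa(J_Q(p), \xi) = \dualPair{p}{\xi\ldot\phi(c)} \) by the momentum map property of \( J_Q \), and since the \( \phi \)-relatedness of the fundamental vector fields gives \( \tangent_c\phi(\xi\ldot c) = \xi\ldot\phi(c) \), the pullback property turns the right-hand side into \( \dualPair{\phi^* p}{\xi\ldot c} = \dualPair{\tilde p}{\xi\ldot c} \). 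This is exactly the momentum map relation for the lifted \( G \)-action on \( \CotBundle C \).

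The computations are routine once the bookkeeping is set up, and the only genuine subtlety special to the Fréchet setting has already been absorbed into the hypotheses, namely that \( \cotangent\phi \) is a well-defined smooth diffeomorphism between the chosen cotangent bundles. Consequently I expect the main work to be notational — carefully tracking base points and the directions of the maps \( \tangent\phi \), \( \tangent\phi^{-1} \), and their transposes — rather than analytical.
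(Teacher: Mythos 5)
Your proof is correct, and in fact the paper offers no proof at all for this proposition — it is presented as a standard result that ``plainly extends from the finite- to the infinite-dimensional setting'' — so your argument is precisely the standard one being invoked: the identity \( (\cotangent \phi)^* \theta_C = \theta_Q \) obtained from the projection intertwining \( \CotBundleProj_C \circ \cotangent\phi = \phi^{-1} \circ \CotBundleProj_Q \) and the defining property of the pullback, followed by the verification \( \kappa(J_C(\tilde p), \xi) = \dualPair{\tilde p}{\xi \ldot c} \), which is exactly the characterization of momentum maps for lifted actions used in \cref{prop:contangentBundle:existenceMomentumMap}. You also correctly pinpoint the only genuinely Fréchet-specific subtlety — that the transpose of \( \tangent\phi \) need not map the chosen fibre of \( \CotBundle Q \) into the chosen fibre of \( \CotBundle C \) — and observe that it is absorbed into the hypothesis that \( \cotangent\phi \) exists and is a diffeomorphism, which is why the statement carries that assumption in the first place.
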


According to step two, we now establish a convenient identification of the cotangent bundle of \( G \times_{G_q} S \).
Generalizing the finite-dimensional case, we say that a Lie subgroup \( H \subseteq G \) is \emphDef{reductive} if its Lie algebra \( \LieA{h} \) has an \( \AdAction_H \)-invariant complement in \( \LieA{g} \).
Since we consider only proper actions of \( G \) on \( Q \), the stabilizer \( G_q \) is always compact and, hence, the following lemma shows that \( G_q \) is a reductive Lie subgroup of \( G \).
\begin{lemma}
	\label{prop:cotangentBundle:compactGroupReducitive}
	For every compact Lie subgroup \( H \subseteq G \), there exists an \( \AdAction_H \)-invariant complement \( \LieA{m} \) of \( \LieA{h} \) in \( \LieA{g} \).
	Moreover, there exists a weakly complementary decomposition \( \LieA{g}^* = \LieA{m}^* \oplus \LieA{h}^* \), that is, there exist weakly closed subspaces \( \LieA{m}^* \subseteq \LieA{g}^* \) and \( \LieA{h}^* \subseteq \LieA{g}^* \) such that \( \LieA{g}^* \) is topologically isomorphic to \( \LieA{m}^* \oplus \LieA{h}^* \) relative to the weak topology.
\end{lemma}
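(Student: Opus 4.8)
The plan is to establish the two assertions separately: first I would produce the \( \AdAction_H \)-invariant complement \( \LieA{m} \) by a Haar-averaging argument, and then transport the splitting to the dual by means of \( \kappa \)-annihilators, the whole difficulty being concentrated in the existence of a suitable adjoint.

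For the first part, the key observation is that \( \LieA{h} \) is finite-dimensional: a compact Lie group modelled on a Fréchet space is necessarily finite-dimensional, since its model space is locally compact and a locally compact Hausdorff topological vector space is finite-dimensional by F.~Riesz's theorem. A finite-dimensional subspace of a Hausdorff locally convex space is always complemented: choosing a basis \( e_1, \dots, e_n \) of \( \LieA{h} \) and applying the Hahn--Banach theorem yields continuous functionals \( f_1, \dots, f_n \in \LieA{g}' \) with \( f_i(e_j) = \delta_{ij} \), so that \( P_0 \defeq \sum_i f_i(\cdot)\, e_i \) is a continuous projection of \( \LieA{g} \) onto \( \LieA{h} \). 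Since \( \LieA{h} \) is \( \AdAction_H \)-invariant, averaging \( P_0 \) over the normalized Haar measure \( \mu \) of \( H \),
\begin{equation}
	P \defeq \int_H \AdAction_h \circ P_0 \circ \AdAction_{h^{-1}} \, \dif\mu(h),
\end{equation}
produces a projection onto \( \LieA{h} \) that is \( \AdAction_H \)-equivariant by invariance of \( \mu \); the integrand takes values in the finite-dimensional space \( \LieA{h} \) and is jointly continuous, so the integral exists and, using compactness of \( H \) to secure a uniform bound, defines a continuous linear map. I would set \( \LieA{m} \defeq \ker P \), a closed \( \AdAction_H \)-invariant complement of \( \LieA{h} \), with complementary projection \( Q \defeq \mathrm{id} - P \).

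For the dual decomposition, I would define the summands as the \( \kappa \)-annihilators
\begin{equation}
	\LieA{m}^* \defeq \{\alpha \in \LieA{g}^* : \kappa(\alpha, \xi) = 0 \text{ for all } \xi \in \LieA{h}\},
	\qquad
	\LieA{h}^* \defeq \{\alpha \in \LieA{g}^* : \kappa(\alpha, \xi) = 0 \text{ for all } \xi \in \LieA{m}\}.
\end{equation}
Both are weakly closed, being intersections of kernels of the weakly continuous functionals \( \kappa(\cdot, \xi) \); and non-degeneracy of \( \kappa \) gives \( \LieA{m}^* \cap \LieA{h}^* = \{0\} \), since an element of the intersection annihilates \( \LieA{h} + \LieA{m} = \LieA{g} \), whence the sum is direct. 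The decisive step is then to verify that the equivariant projection \( P \) admits an adjoint \( P^* \colon \LieA{g}^* \to \LieA{g}^* \) with respect to \( \kappa \). Once this is granted, a direct computation shows that \( P^* \) is a projection whose kernel is \( \LieA{m}^* \) and whose image is \( \LieA{h}^* \), and \( Q^* = \mathrm{id} - P^* \) is the complementary one; hence \( \LieA{g}^* = \LieA{m}^* \oplus \LieA{h}^* \) with \( P^*, Q^* \) the required weakly continuous projections, yielding the asserted topological isomorphism for the weak topology.

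The main obstacle is precisely the existence of this adjoint, which is where the infinite-dimensional setting bites. Weak continuity of \( P \) is equivalent to the existence of a dual system \( \eta_1, \dots, \eta_n \in \LieA{g}^* \) with \( \kappa(\eta_j, e_k) = \delta_{jk} \) that \emph{simultaneously} annihilates the infinite-dimensional complement \( \LieA{m} \). Non-degeneracy of \( \kappa \) together with finite-dimensionality of \( \LieA{h} \) readily supplies functionals \( \beta_j \in \LieA{g}^* \) realizing the dual basis on \( \LieA{h} \), but forcing them into \( \LieA{m}^* \) is the genuine difficulty, because the coefficient functionals of an arbitrary finite-rank projection need not be representable by elements of the chosen dual \( \LieA{g}^* \). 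I would resolve this by building \( P \) to be weakly continuous from the outset: starting from the weakly continuous projection \( \tilde{P} \defeq \sum_j \kappa(\beta_j, \cdot)\, e_j \) and averaging it over \( H \), using the \( \AdAction_H \)-invariance of \( \LieA{m} \) and the coadjoint action of the compact group \( H \) on \( \LieA{g}^* \) to guarantee that the averaged coefficient functionals again lie in \( \LieA{g}^* \). This reduces the lemma to the availability of the coadjoint action of \( H \) — exactly the kind of weak-continuity hypothesis secured by the standing assumptions of \cref{Setting}, and which fails precisely in the pathological situation of \cref{ex:contangentBundle:actionWithoutMomentumMap}.
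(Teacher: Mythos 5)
Your proof of the first assertion is essentially the paper's own: finite-dimensionality of \( H \) via local compactness of the model space (the paper cites \parencite[Proposition~8.7.1]{Koethe1983} where you invoke Riesz), a Hahn--Banach complement, and Haar averaging of the projection. For the dual splitting you also make the same choice as the paper, namely \( \LieA{m}^* = \LieA{h}^0 \) and \( \LieA{h}^* = \LieA{m}^0 \); but where the paper disposes of the second assertion by a single citation of \parencite[Proposition~20.5.1]{Koethe1983}, you construct the adjoint projection \( P^* \) by hand. Your diagnosis of where the difficulty sits is accurate and in fact sharper than the paper's: the Köthe result applies precisely when the decomposition \( \LieA{g} = \LieA{h} \oplus \LieA{m} \) is weakly adapted, \ie when the projection onto \( \LieA{h} \) along \( \LieA{m} \) is \( \sigma(\LieA{g}, \LieA{g}^*) \)-continuous — equivalently, since \( \LieA{h} \) is finite-dimensional, when \( \LieA{m} \) is weakly closed, so that \( \LieA{m}^0 \) has dimension \( \dim \LieA{h} \) and separates \( \LieA{g} \slash \LieA{m} \). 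A Hahn--Banach projection averaged over \( H \) need not have \( \kappa \)-representable coefficient functionals, exactly as you say. Granting the existence of \( P^* \), your computation that it is a weakly continuous projection with kernel \( \LieA{m}^* \) and image \( \LieA{h}^* \) is correct and closes the argument.

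The genuine gap is your final discharge of this crux. The existence of the coadjoint \( H \)-action on \( \LieA{g}^* \) — needed so that the averaged coefficient functionals \( \int_H a_{kj}(h) \, \CoAdAction_h \beta_j \, \dif\mu(h) \) again lie in \( \LieA{g}^* \) — is \emph{not} secured by the standing assumptions of \cref{Setting}. Those assumptions are: properness and existence of slices; existence of the lifted action and of a momentum map on \( \CotBundle Q \) (the weak-continuity requirement there concerns the pairing \( \dualPair{\CotBundle Q}{\TBundle Q} \), not \( \kappa \) on \( \LieA{g}^* \times \LieA{g} \)); and condition [S2] of \cref{defn:cotangentBundle:sliceCompatible}, which concerns the maps \( \tangent_a \iota_s (a \ldot \cdot) \) and \( \tangent_s \iota_a \). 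None of these states that \( \AdAction_{h^{-1}} \) admits a \( \kappa \)-adjoint on \( \LieA{g}^* \); equivariance of the momentum map yields \( \CoAdAction_g \) only on the image of \( J \), not on your chosen dual vectors \( \beta_j \). Worse, appealing to \cref{defn:cotangentBundle:sliceCompatible} here is circular: that definition presupposes the complement \( \LieA{m} \), and the space \( \LieA{m}^* \) produced by the present lemma is what the paper subsequently uses to define \( \CotBundle (G \times_{G_q} S) \), so the lemma must be proved before slice compatibility can even be formulated. To repair the step, either add as an explicit hypothesis that the coadjoint action of the compact group \( H \) on \( \LieA{g}^* \) exists and is continuous in \( h \) (so that the \( \LieA{g}^* \)-valued average exists; this is harmless in the Yang--Mills application, where \( \kappa \) is an \( L^2 \)-type pairing and stabilizers act pointwise), or phrase the requirement as the paper implicitly does: the \( \AdAction_H \)-invariant complement can be chosen \( \sigma(\LieA{g}, \LieA{g}^*) \)-closed, after which your \( P^* \)-argument (or the Köthe citation) goes through verbatim.
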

\begin{proof}
	Every compact Lie group is finite-dimensional, because every locally compact topological vector space is finite-dimensional, see \parencite[Proposition~8.7.1]{Koethe1983}.
	As a finite-dimensional subspace, \( \LieA{h} \) is automatically closed and has a topological complement by \parencite[Proposition~15.5.2 and~20.5.5]{Koethe1983}.
	The complement can be chosen to be \( \AdAction_H \)-invariant by taking the average over the projection using the invariant Haar measure.
	By \parencite[Proposition~20.5.1]{Koethe1983}, there exists a weakly complementary decomposition \( \LieA{g}^* = \LieA{h}^0 \oplus \LieA{m}^0 \), where the superscript denotes the annihilator.
	Choose \( \LieA{m}^* = \LieA{h}^0 \) and \( \LieA{h}^* = \LieA{m}^0 \).
\end{proof}

Let \( \iota: G \times S \to G \times_{G_q} S \) be the natural projection and, for \( a \in G \) and \( s \in S \), let \( \iota_a: S \to G \times_{G_q} S \) and \( \iota_s: G \to G \times_{G_q} S \) denote the induced embeddings, respectively.
\begin{defn}
	\label{defn:cotangentBundle:sliceCompatible}
	Let \( S \) be a slice at \( q \) and let \( \LieA{m} \) be an \( \AdAction_{G_q} \)-invariant complement of \( \LieA{g}_q \) in \( \LieA{g} \).
	For a given choice of \( \CotBundle S \) and \( \CotBundle (G \times_{G_q} S) \), the slice \( S \) will be called \emphDef{compatible with the cotangent bundle structures} if it fulfills the following additional requirements:
	\begin{thmenumerate}[label={[S\arabic*]}, ref={\textup{[S\arabic*]}}]
		\item
			\label{i:cotangentBundle:sliceCompatible:lift}
			The lift \( \cotangent \chi \) of the tube map \( \chi: G \times_{G_q} S \to Q \) exists and is a diffeomorphism onto its image.
		\item
			\label{i:cotangentBundle:sliceCompatible:adjoints}
			The injective maps
			\begin{equation}
				\tangent_a \iota_s (a \ldot \cdot): \LieA{m} \to \TBundle_{\equivClass{a,s}} (G \times_{G_q} S),
				\qquad
				\tangent_s \iota_a: \TBundle_s S \to \TBundle_{\equivClass{a,s}} (G \times_{G_q} S)
			\end{equation}
			have surjective adjoints.
			\qedhere
	\end{thmenumerate}
\end{defn}
Whether a slice is compatible with the cotangent bundle structures strongly depends on the dual pairings involved, so that one cannot hope to find a general criterion for the existence of such a slice.
Note that, in the finite-dimensional setting, every slice is compatible with the cotangent bundle structures.
In the sequel, we will show that there are two natural choices for \( \CotBundle (G \times_{G_q} S) \).
\begin{lemma}
	For every choice of an \( \AdAction_{G_q} \)-invariant complement \( \LieA{m} \) of \( \LieA{g}_q \) in \( \LieA{g} \), there exists a \( G \)-equivariant diffeomorphism
	\begin{equation+}
		\label{eq:cotangentBundle:diffeoAssociatedBundle}
		\TBundle (G \times_{G_q} S) \isomorph G \times_{G_q} (\LieA{m} \times \TBundle S).
		\qedhere
	\end{equation+}
\end{lemma}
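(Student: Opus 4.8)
The plan is to regard the natural projection \( \iota\colon G\times S\to G\times_{G_q}S \) as a principal \( G_q \)-bundle and to build the identification fiberwise on the tangent bundle. Since the \( G \)-action on \( Q \) is proper, the stabilizer \( G_q \) is compact, hence a finite-dimensional Lie group by \cref{prop:cotangentBundle:compactGroupReducitive}, and it acts freely and properly on \( G\times S \) by \( h\cdot(a,s)=(ah^{-1},h\cdot s) \), the orbits being exactly the fibers of \( \iota \). Thus \( \iota \) is a smooth locally trivial principal \( G_q \)-bundle and its tangent map \( \tangent\iota\colon\TBundle(G\times S)\to\TBundle(G\times_{G_q}S) \) is a surjective submersion. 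The tangent bundle of the quotient is therefore obtained in two steps, first dividing out the vertical directions and then passing to the \( G_q \)-quotient,
\begin{equation}
	\TBundle(G\times_{G_q}S)\isomorph\bigl(\TBundle(G\times S)\big/\ker\tangent\iota\bigr)\big/G_q .
\end{equation}
I will make both quotients explicit after left-trivializing \( \TBundle G\isomorph G\times\LieA{g} \) by \( (a,\eta)\mapsto\tangent_e L_a\eta \), which yields \( \TBundle(G\times S)\isomorph G\times\LieA{g}\times\TBundle S \).

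Next I compute the vertical subbundle and exhibit the complement coming from \( \LieA{m} \). A direct calculation shows that the fundamental vector field generated by \( \zeta\in\LieA{g}_q \) at \( (a,s) \) reads \( (-\zeta,\zeta\ldot s) \) in the trivialization \( \LieA{g}\times\TBundle_s S \). Using the decomposition \( \LieA{g}=\LieA{g}_q\oplus\LieA{m} \) one checks that \( \LieA{m}\times\TBundle_s S \) meets the vertical subspace \( \{(-\zeta,\zeta\ldot s):\zeta\in\LieA{g}_q\} \) only in the origin and spans the remaining directions, so that
\begin{equation}
	\LieA{g}\times\TBundle_s S=\ker\tangent_{(a,s)}\iota\;\oplus\;\bigl(\LieA{m}\times\TBundle_s S\bigr).
\end{equation}
Hence \( \TBundle(G\times S)/\ker\tangent\iota\isomorph G\times\LieA{m}\times\TBundle S \). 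Since \( \LieA{m} \) is \( \AdAction_{G_q} \)-invariant and the vertical bundle is \( G_q \)-invariant, this complement is \( G_q \)-invariant and the identification is \( G_q \)-equivariant.

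I then identify the residual \( G_q \)-action and recognize the final quotient. Lifting \( h\cdot(a,s)=(ah^{-1},h\cdot s) \) to the tangent bundle and expressing it in the left trivialization, using \( R_{h^{-1}}\circ L_a=L_{ah^{-1}}\circ c_h \) with \( c_h(b)=hbh^{-1} \), gives
\begin{equation}
	h\cdot(a,\xi,Z)=\bigl(ah^{-1},\AdAction_h\xi,h\ldot Z\bigr),\qquad\xi\in\LieA{m}.
\end{equation}
This is precisely the defining \( G_q \)-action of the associated bundle \( G\times_{G_q}(\LieA{m}\times\TBundle S) \), so dividing by \( G_q \) yields the claimed diffeomorphism. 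The left \( G \)-action on \( G\times_{G_q}S \) lifts, in the same trivialization, to \( (a,\xi,Z)\mapsto(ga,\xi,Z) \) — left and right translations commute, so the fiber components are untouched — which is exactly left multiplication on the \( G \)-factor of \( G\times_{G_q}(\LieA{m}\times\TBundle S) \); thus the identification is \( G \)-equivariant.

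The routine parts are the two linear-algebra checks above; the point that genuinely requires the infinite-dimensional hypotheses is the smoothness of the two quotients. The first quotient is unproblematic because the complement is a smooth subbundle with \( \LieA{m} \) finite-codimensional and \( \AdAction_{G_q} \)-invariant by \cref{prop:cotangentBundle:compactGroupReducitive}. The main obstacle is ensuring that \( \iota \) is a smooth principal \( G_q \)-bundle and that the associated bundle \( G\times_{G_q}(\LieA{m}\times\TBundle S) \) is a smooth Fréchet manifold; both hinge on \( G_q \) being a finite-dimensional compact subgroup admitting the reductive splitting \( \LieA{g}=\LieA{g}_q\oplus\LieA{m} \), which supplies the local sections of \( G\to G\slash G_q \) needed to trivialize the construction.
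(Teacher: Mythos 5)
Your proof is correct and follows essentially the same route as the paper: both arguments rest on the splitting \( \LieA{g} = \LieA{g}_q \oplus \LieA{m} \), and the map you obtain — \( \tangent \iota \) restricted to your complement \( G \times \LieA{m} \times \TBundle S \) in the left trivialization — is exactly the paper's \( G_q \)-invariant submersion \( (a, \varsigma, Y_s) \mapsto \tangent_a \iota_s (a \ldot \varsigma) + \tangent_s \iota_a (Y_s) \). The only difference is packaging: the paper phrases the splitting as the homogeneous connection on \( G \to G \slash G_q \) and the induced horizontal--vertical decomposition of the associated bundle, whereas you compute the vertical bundle of \( G \times S \to G \times_{G_q} S \) and the residual twisted \( G_q \)-action explicitly, which amounts to the same descent argument made slightly more explicit.
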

\begin{proof}
	It is well-known that the choice of an \( \AdAction \)-invariant complement \( \LieA{m} \) yields a homogeneous connection in \( G \to G \slash G_q \). 
	Indeed, the horizontal space at \( a \in G \) is, by definition, \( a \ldot \LieA{m} \) and the horizontal lift of a vector \( \equivClass{a, \xi} \in G \times_{G_q} \LieA{m} \isomorph \TBundle (G \slash G_q) \) to the point \( a \in G \) is \( a \ldot \xi \).
	Accordingly, the tangent bundle to the associated bundle \( G \times_{G_q} S \to G \slash G_q \) splits into its horizontal and vertical parts so that the \( G \)-equivariant map defined by
	\begin{equation}\label{eq:cotangentBundle:identificationTBundleTube}
		G \times \LieA{m} \times \TBundle S \to \TBundle (G \times_{G_q} S),
		\qquad
		(a, \varsigma, Y_s) \mapsto \tangent_a \iota_s (a \ldot \varsigma) + \tangent_s \iota_a (Y_s),
	\end{equation}
	is a \( G_q \)-invariant submersion which descends to a \( G \)-equivariant diffeomorphism between \(  G \times_{G_q} (\LieA{m} \times \TBundle S) \) and \( \TBundle (G \times_{G_q} S) \).
\end{proof}
By dualizing the isomorphism~\eqref{eq:cotangentBundle:diffeoAssociatedBundle}, we get a natural choice for the cotangent bundle \( \CotBundle (G \times_{G_q} S) \).
For that purpose, we choose \( \LieA{m}^* \) as constructed in the proof of \cref{prop:cotangentBundle:compactGroupReducitive}.
Moreover, viewing \( S \) as a submanifold of \( Q \), let \( \CotBundle S \) be the image of \( \TBundle S \) under the \( G \)-equivariant diffeomorphism \( \TBundle Q \to \CotBundle Q \).
Now, the cotangent bundle is provided by the \( G \)-equivariant diffeomorphism
\begin{equation}
	\phi: G \times_{G_q} (\LieA{m}^* \times \CotBundle S) \to \CotBundle (G \times_{G_q} S)
\end{equation}
defined by
\begin{equation}
	\label{eq:cotangentBundle:cotTubeDiffeo}
	\dualPair{\phi(\equivClass{a, (\nu, \alpha_s)})}{\tangent_a \iota_s (a \ldot \varsigma) + \tangent_s \iota_a (Y_s)}
		= \kappa(\nu,\varsigma) + \dualPair{\alpha_s}{Y_s}
\end{equation}
for \( \varsigma \in \LieA{m} \) and \( Y_s \in \TBundle_s S \).
With this choice of \( \CotBundle (G \times_{G_q} S) \) the second condition~[S2] in \cref{defn:cotangentBundle:sliceCompatible} is automatically satisfied, because the adjoints are the identical mappings.
\begin{remark}
	Alternatively, one could define \( \CotBundle (G \times_{G_q} S) \) by dualizing the \( G \)-equivariant diffeomorphism \( \tangent \chi: \TBundle (G \times_{G_q} S) \to \TBundle Q \) given by the tube diffeomorphism \( \chi \).
	In this case, the first condition~[S1] in \cref{defn:cotangentBundle:sliceCompatible} is automatically satisfied.
\end{remark}

\begin{remark}
	In the finite-dimensional context, \textcite[Proposition~13]{Schmah2007} established a similar identification of \( \CotBundle (G \times_{G_q} S) \) using regular cotangent bundle reduction.
	The starting point is the cotangent bundle \( \CotBundle (G \times S) \).
	Using left translation, identify \( \CotBundle G \) with \( G \times \LieA{g}^* \).
	Denote points in \( \CotBundle(G \times S) \isomorph G \times \LieA{g}^* \times \CotBundle S \) by tuples \( (a, \mu, \alpha_s) \).
	A straightforward calculation shows that the cotangent lift of the twisted \( G_q \)-action \( h \cdot (g, s) = (gh^{-1}, h \cdot s) \) on \( G \times S \) has the momentum map
	\begin{equation}
		J_{G_q^\T} (a, \mu, \alpha_s) = - \restr{\mu}{\LieA{g}_q} + J_{G_q}(\alpha_s),
	\end{equation}
	where \( J_{G_q}: \CotBundle S \to \LieA{g}_q^* \) is the momentum map for the lifted \( G_q \)-action on \( \CotBundle S \) and \( \restr{\mu}{\LieA{g}_q} \) denotes the restriction of \( \mu \) to \( \LieA{g}_q \).
	Define the map
	\begin{align}
		\varphi: G \times \LieA{g}^* \times \CotBundle S \supseteq J^{-1}_{G_q^{\T}}(0) &\to \CotBundle (G \times_{G_q} S)
	\shortintertext{by} 
		\dualPair{\varphi(a, \mu, \alpha_s)}{\tangent \iota (a \ldot \xi, Y_s)} &= \kappa(\mu, \xi) + \dualPair{\alpha_s}{Y_s}, \qquad \
	\end{align}
	for \( \xi \in \LieA{g} \) and \( Y_s \in \TBundle_s S \).
	The regular cotangent bundle reduction theorem \parencite[Theorem~6.6.1]{OrtegaRatiu2003} shows that \( \varphi \) is a submersion and that it descends to a symplectomorphism \( \check{\varphi} \) of \( J_{{G_q^{\T}}}^{-1}(0) \slash G_q \) with \( \CotBundle (G \times_{G_q} S) \).

	In order to establish the link with the isomorphism \( \phi \) discussed above (taken in the finite-dimensional case), we define
	\begin{align}
		\psi: G \times (\LieA{m}^* \times \CotBundle S) &\to G \times (\LieA{g}^* \times \CotBundle S),
		\\
		(a, \nu, \alpha_s) &\mapsto (a, \nu + J_{G_q}(\alpha_s), \alpha_s).
	\end{align}
	By construction, \( \psi \) takes values in \( J_{{G_q^{\T}}}^{-1}(0) \) and on this set it has a smooth inverse,
	\begin{equation}
		J_{{G_q^{\T}}}^{-1}(0) \to G \times (\LieA{m}^* \times \CotBundle S), \quad (a, \mu, \alpha_s) \mapsto (a, \restr{\mu}{\LieA{m}}, \alpha_s).
	\end{equation}
	Now, it is an easy exercise in chasing identifications to see that the following diagram commutes
	\begin{equationcd}[label=eq:cotangentBundle:relationWithSchmah]
		J_{{G_q^{\T}}}^{-1}(0)
			\to[rd, "\varphi"]
			\to[d]
			& 
			& G \times (\LieA{m}^* \times \CotBundle S)
			\to[ld]
			\to[ll, "\psi" swap]
			\to[d]
			\\
		J_{{G_q^{\T}}}^{-1}(0) \slash G_q
			\to[r, "\check{\varphi}"]
			& \CotBundle(G \times_{G_q} S)
			& G \times_{G_q} (\LieA{m}^* \times \CotBundle S).
			\to[l, "\phi" swap]
	\end{equationcd}
	Since the regular cotangent bundle reduction theorem does not directly generalize to infinite dimensions\footnote{We do prove a regular reduction theorem below in \cref{prop:cotangentBundle:oneOrbitTypeReducedSpaceCotangentBundle}. However, this result relies on the existence of the normal form and thus cannot be used to construct it.}, the approach of \parencite{Schmah2007} is not available to us in our infinite-dimensional context.
	One can, however, read~\eqref{eq:cotangentBundle:relationWithSchmah} as a proof that the regular reduction of \( \CotBundle(G \times S) \) does coincide with \( \CotBundle (G \times_{G_q} S) \), indeed. 
\end{remark}

\begin{lemma}
	Under the diffeomorphism \( \phi \), the momentum map \( J: \CotBundle (G \times_{G_q} S) \to \LieA{g}^* \) for the lifted \( G \)-action is identified with the map
	\begin{equation}
		(J \circ \phi) (\equivClass{a, (\nu, \alpha_s)}) = \CoAdAction_a (\nu + J_{G_q}(\alpha_s)),
	\end{equation}
	where \( J_{G_q}: \CotBundle S \to \LieA{g}_q^* \) is the momentum map for the lifted \( G_q \)-action on \( S \) and \( \CoAdAction \) denotes the coadjoint action defined by \( \dualPair{\CoAdAction_a \nu}{\xi} = \dualPair{\nu}{\AdAction_{a^{-1}} \xi} \).
	In particular, the condition \( (J \circ \phi) (\equivClass{a, (\nu, \alpha_s)}) = 0 \) is equivalent to \( J_{G_q}(\alpha_s) = 0 \) and \( \nu = 0 \).
\end{lemma}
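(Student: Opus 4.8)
The plan is to reduce everything to the defining relation of the momentum map, $\kappa(J(p),\xi) = \dualPair{p}{\xi\ldot\equivClass{a,s}}$, and to evaluate the fundamental vector field $\xi\ldot\equivClass{a,s}$ in the splitting underlying the diffeomorphism $\phi$. First I would compute the fundamental vector field of the lifted $G$-action, which is left translation on the $G$-factor, $g\cdot\equivClass{a,s}=\equivClass{ga,s}$. Writing $\exp(t\xi)a=a\exp(t\AdAction_{a^{-1}}\xi)$ and differentiating at $t=0$ shows that the fundamental vector field on $G$ at $a$ equals $a\ldot\AdAction_{a^{-1}}\xi$, so that $\xi\ldot\equivClass{a,s}=\tangent_a\iota_s(a\ldot\AdAction_{a^{-1}}\xi)$, the $S$-component being zero.

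Next I would split $\AdAction_{a^{-1}}\xi=\varsigma+\eta$ according to the $\AdAction_{G_q}$-invariant decomposition $\LieA{g}=\LieA{m}\oplus\LieA{g}_q$ of \cref{prop:cotangentBundle:compactGroupReducitive}, with $\varsigma\in\LieA{m}$ and $\eta\in\LieA{g}_q$. The key geometric input is that the $\LieA{g}_q$-part is absorbed into the slice direction: differentiating the $G_q$-orbit curve $t\mapsto\bigl(a\exp(-t\eta),\exp(t\eta)\cdot s\bigr)$, which $\iota$ collapses to the constant point $\equivClass{a,s}$, yields the identity $\tangent_a\iota_s(a\ldot\eta)=\tangent_s\iota_a(\eta\ldot s)$, where $\eta\ldot s$ is the fundamental vector field of the $G_q$-action on $S$. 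Hence $\xi\ldot\equivClass{a,s}=\tangent_a\iota_s(a\ldot\varsigma)+\tangent_s\iota_a(\eta\ldot s)$, which is precisely of the form appearing in the defining formula~\eqref{eq:cotangentBundle:cotTubeDiffeo} for $\phi$. Evaluating $p=\phi(\equivClass{a,(\nu,\alpha_s)})$ on this vector then gives $\dualPair{p}{\xi\ldot\equivClass{a,s}}=\kappa(\nu,\varsigma)+\dualPair{\alpha_s}{\eta\ldot s}=\kappa(\nu,\varsigma)+\kappa(J_{G_q}(\alpha_s),\eta)$, using the defining property of the $G_q$-momentum map $J_{G_q}$ (which exists because $G_q$ is compact, hence finite-dimensional).

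Finally I would compare with the claimed expression. Expanding $\kappa(\CoAdAction_a(\nu+J_{G_q}(\alpha_s)),\xi)=\kappa(\nu+J_{G_q}(\alpha_s),\varsigma+\eta)$ and invoking the weakly complementary decomposition $\LieA{g}^*=\LieA{m}^*\oplus\LieA{g}_q^*$ with $\LieA{m}^*=\LieA{g}_q^0$ and $\LieA{g}_q^*=\LieA{m}^0$ from \cref{prop:cotangentBundle:compactGroupReducitive}, the cross terms $\kappa(\nu,\eta)$ and $\kappa(J_{G_q}(\alpha_s),\varsigma)$ vanish, leaving exactly $\kappa(\nu,\varsigma)+\kappa(J_{G_q}(\alpha_s),\eta)$. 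Since this holds for all $\xi\in\LieA{g}$ and $\kappa$ is non-degenerate, I conclude $J(p)=\CoAdAction_a(\nu+J_{G_q}(\alpha_s))$. The ``in particular'' statement then follows because $\CoAdAction_a$ is a linear isomorphism, so the momentum vanishes iff $\nu+J_{G_q}(\alpha_s)=0$, and this sum lies in the direct sum $\LieA{m}^*\oplus\LieA{g}_q^*$, forcing $\nu=0$ and $J_{G_q}(\alpha_s)=0$ separately. The main difficulty here is bookkeeping rather than analysis: getting the $\AdAction$-twist right in the fundamental vector field and ensuring the off-diagonal pairings vanish, which is exactly where the specific annihilator choice of \cref{prop:cotangentBundle:compactGroupReducitive} is essential; the infinite-dimensionality enters only through the weak-closedness of these complements and the assumed existence of $J$.
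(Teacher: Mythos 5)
Your proposal is correct and follows essentially the same route as the paper's proof: computing the fundamental vector field with the $\AdAction_{a^{-1}}$-twist, absorbing the $\LieA{g}_q$-component into the slice direction via the identity $\tangent_a \iota_s (a \ldot \eta) = \tangent_s \iota_a (\eta \ldot s)$, and evaluating against the defining formula~\eqref{eq:cotangentBundle:cotTubeDiffeo} for $\phi$. Your explicit verification that the cross terms $\kappa(\nu,\eta)$ and $\kappa(J_{G_q}(\alpha_s),\varsigma)$ vanish by the annihilator choice $\LieA{m}^* = \LieA{g}_q^0$, $\LieA{g}_q^* = \LieA{m}^0$ is a step the paper leaves implicit, but it is the same argument.
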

\begin{proof}
	First, we note that the lifted \( G_q \)-action on \( \CotBundle S \) has a momentum map \( J_{G_q}: \CotBundle S \to \LieA{g}_q^* \), because, by properness of the action, \( G_q \) is compact and hence finite-dimensional.
	The canonical \( G \)-action \( g \cdot \equivClass{a,s} = \equivClass{ga,s} \) on \( G \times_{G_q} S \) has the fundamental vector field
	\begin{equation}
	 	\xi \ldot \equivClass{a,s} = \tangent_a \iota_s (\xi \ldot a) = \tangent_a \iota_s \left(a \ldot (\AdAction_a^{-1} \xi)\right), \quad \xi \in \LieA{g}.
	\end{equation}
	Moreover, \( \equivClass{a h, s} = \equivClass{a, h\cdot s} \) for all \( h \in G_q \) implies
	\begin{equation}
		\tangent_a \iota_s (a \ldot \varrho) = \tangent_s \iota_a (\varrho \ldot s), \qquad \varrho \in \LieA{g}_q.
	\end{equation}
	Let \( \beta = \phi(\equivClass{a, (\nu, \alpha_s)}) \in \CotBundle_{\equivClass{a,s}} (G \times_{G_q} S) \).
	Then, for every \( \xi \in \LieA{g} \), the momentum map satisfies 
	\begin{equation}\begin{split}
	 	\kappa(J(\beta), \AdAction_a \xi)
	 		&= \dualPair*{\beta}{(\AdAction_a \xi) \ldot \equivClass{a,s}}
	 		\\
	 		&= \dualPair*{\beta}{\tangent_a \iota_s (a \ldot \xi)}
	 		\\
	 		&= \dualPair*{\beta}{\tangent_a \iota_s \left(a \ldot \xi_{\LieA{g}_q} \right)}
	 			+ \dualPair*{\beta}{\tangent_a \iota_s \left(a \ldot \xi_{\LieA{m}}\right)}
	 		\\
	 		&= \dualPair*{\beta}{\tangent_s \iota_a \left(\xi_{\LieA{g}_q} \ldot s\right)}
	 			+ \dualPair*{\beta}{\tangent_a \iota_s \left(a \ldot \xi_{\LieA{m}}\right)}
	 		\\
	 		&= \dualPair*{\alpha_s}{\xi_{\LieA{g}_q} \ldot s}
	 			+ \kappa(\nu,\xi_{\LieA{m}})
	 		\\
	 		&= \kappa\left(J_{G_q}(\alpha_s), \xi_{\LieA{g}_q}\right)
	 			+ \kappa(\nu, \xi_{\LieA{m}}),
	\end{split}\end{equation}
	where we have decomposed \( \xi = \xi_{\LieA{g}_q} + \xi_{\LieA{m}} \) into \( \xi_{\LieA{g}_q} \in \LieA{g}_q \) and \( \xi_{\LieA{m}} \in \LieA{m} \).
	In the line before the last line we have used~\eqref{eq:cotangentBundle:cotTubeDiffeo}.
	Hence,
	\begin{equation}
		\CoAdAction_{a^{-1}} (J \circ \phi) (\equivClass{a, (\nu, \alpha_s)}) \equiv \CoAdAction_{a^{-1}} J(\beta) = J_{G_q}(\alpha_s) + \nu
	\end{equation}
	and so \( J(\beta) = 0 \) if and only if \( J_{G_q}(\alpha_s) = 0 \) and \( \nu = 0 \).
\end{proof}

Combining the diffeomorphism \( \phi \) with the local tube diffeomorphism
\begin{equation}
 	\cotangent \chi: \CotBundle Q \to \CotBundle (G \times_{G_q} S)
\end{equation} 
yields a convenient normal form for the momentum map of the lifted \( G \)-action on \( \CotBundle Q \).
\begin{thm}[Normal form]\label{prop:cotangentBundle:simpleNormalForm}
	Let \( Q \) be a Fréchet \( G \)-manifold.
	Assume that the \( G \)-action on \( Q \) is proper, that \( \CotBundle Q \) is a Fréchet vector bundle, which is \( G \)-equivariantly isomorphic to \( \TBundle Q \), and that the lifted action on \( \CotBundle Q \), endowed with its canonical symplectic form \( \omega \), has a momentum map \( J \).
	Let \( p \in \CotBundle_q Q \) and assume that the \( G \)-action on \( Q \) admits a slice \( S \) at \( q \) compatible with the cotangent bundle structures.
	Then, the map \( \Phi: G \times_{G_q} (\LieA{m}^* \times \CotBundle S) \to \CotBundle Q \) defined by
	\begin{equation}\label{eq:cotangentBundle:normalFormCotBundle}
		\dualPair{\Phi(\equivClass{a, (\nu, \alpha_s)})}{(\AdAction_a \rho) \ldot (a \cdot s) + a \ldot Y_s} 
			= \kappa(\nu, \rho) + \dualPair{\alpha_s}{Y_s},
	\end{equation}
	for all \( \rho \in \LieA{m} \) and \( Y_s \in \TBundle_s S \), is a diffeomorphism onto an open neighborhood of \( p \) in \( \CotBundle Q \).
	Moreover, \( \Phi \) is \( G \)-equivariant with respect to left translation on the \( G \)-factor and the lifted action on \( \CotBundle Q \).
	Under \( \Phi \), the momentum map \( J: \CotBundle Q \to \LieA{g}^* \) for the lifted \( G \)-action is identified with the map
	\begin{equation}\label{eq:cotangentBundle:normalFormMomentumMap}
		(J \circ \Phi) (\equivClass{a, (\nu, \alpha_s)}) = \CoAdAction_a (\nu + J_{G_q}(\alpha_s)),
	\end{equation}
	where \( J_{G_q}: \CotBundle S \to \LieA{g}_q^* \) is the momentum map for the lifted \( G_q \)-action on \( \CotBundle S \).
\end{thm}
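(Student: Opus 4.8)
The plan is to realize $\Phi$ as the composition
\[
	\Phi \defeq (\cotangent\chi)^{-1} \circ \phi
\]
of the two diffeomorphisms already constructed above: the dualized tube identification $\phi\colon G \times_{G_q} (\LieA{m}^* \times \CotBundle S) \to \CotBundle(G \times_{G_q} S)$ of~\eqref{eq:cotangentBundle:cotTubeDiffeo}, and the inverse of the cotangent lift $\cotangent\chi$ of the tube map $\chi$. Once $\Phi$ is defined this way, the diffeomorphism property, the equivariance, and the momentum-map formula~\eqref{eq:cotangentBundle:normalFormMomentumMap} will all follow by assembling the preceding results; the only genuine computation is to check that this composition satisfies the defining pairing~\eqref{eq:cotangentBundle:normalFormCotBundle}.

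First I would compute the tangent map $\tangent\chi$ on the generic tangent vector $\tangent_a\iota_s(a\ldot\rho) + \tangent_s\iota_a(Y_s)$ from~\eqref{eq:cotangentBundle:identificationTBundleTube}, with $\rho\in\LieA{m}$ and $Y_s\in\TBundle_s S$. As in the proof of the preceding lemma, $\tangent_a\iota_s(a\ldot\rho)$ is the value at $\equivClass{a,s}$ of the fundamental vector field generated by $\AdAction_a\rho$, so the $G$-equivariance of $\chi$ sends it to the fundamental vector field $(\AdAction_a\rho)\ldot(a\cdot s)$ on $Q$, whereas the slice part maps to $a\ldot Y_s$; hence
\[
	\tangent\chi\bigl(\tangent_a\iota_s(a\ldot\rho) + \tangent_s\iota_a(Y_s)\bigr)
		= (\AdAction_a\rho)\ldot(a\cdot s) + a\ldot Y_s .
\]
This is the step that produces the $\AdAction_a$-twist appearing in~\eqref{eq:cotangentBundle:normalFormCotBundle}, and as $\rho$ and $Y_s$ vary these vectors exhaust $\TBundle_{a\cdot s}Q$, so the pairing in~\eqref{eq:cotangentBundle:normalFormCotBundle} determines its argument uniquely. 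Writing $p \defeq (\cotangent\chi)^{-1}(\beta)$ with $\beta = \phi(\equivClass{a,(\nu,\alpha_s)})$ and using that $\cotangent\chi$ is the pullback, so that $\dualPair{p}{\tangent\chi(w)} = \dualPair{\beta}{w}$, the defining relation~\eqref{eq:cotangentBundle:cotTubeDiffeo} of $\phi$ then gives
\[
	\dualPair{p}{(\AdAction_a\rho)\ldot(a\cdot s) + a\ldot Y_s}
		= \dualPair{\beta}{\tangent_a\iota_s(a\ldot\rho) + \tangent_s\iota_a(Y_s)}
		= \kappa(\nu,\rho) + \dualPair{\alpha_s}{Y_s},
\]
which is precisely~\eqref{eq:cotangentBundle:normalFormCotBundle}. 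Thus $\Phi = (\cotangent\chi)^{-1}\circ\phi$ indeed realizes the map of the theorem.

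Next I would dispatch the structural claims. Since $\chi$ is a diffeomorphism onto an open neighborhood $U$ of $q$, condition~\ref{i:cotangentBundle:sliceCompatible:lift} of \cref{defn:cotangentBundle:sliceCompatible} makes $\cotangent\chi$ a diffeomorphism from $\restr{\CotBundle Q}{U}$ onto $\CotBundle(G \times_{G_q} S)$, so $(\cotangent\chi)^{-1}$ is a diffeomorphism onto the open neighborhood $\restr{\CotBundle Q}{U}$ of $p$; composing with the diffeomorphism $\phi$ shows that $\Phi$ is a diffeomorphism onto an open neighborhood of $p$ in $\CotBundle Q$. Equivariance is inherited: $\chi$ is $G$-equivariant, hence so is its cotangent lift $\cotangent\chi$ and its inverse, while $\phi$ is $G$-equivariant by construction, so $\Phi$ intertwines left translation on the $G$-factor with the lifted action on $\CotBundle Q$.

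Finally, for the momentum map I would invoke the Lifting point transformations proposition applied to $\chi$: it shows that $J \circ (\cotangent\chi)^{-1}$ is a momentum map for the lifted $G$-action on $\CotBundle(G \times_{G_q} S)$. Because the lifted actions are cotangent lifts, their momentum maps are the canonical ones of \cref{prop:contangentBundle:existenceMomentumMap} and hence unique, so $J \circ (\cotangent\chi)^{-1}$ coincides with the canonical momentum map on $\CotBundle(G \times_{G_q} S)$ computed in the preceding lemma. Composing with $\phi$ and quoting that lemma yields
\[
	(J \circ \Phi)(\equivClass{a,(\nu,\alpha_s)})
		= \bigl(J \circ (\cotangent\chi)^{-1} \circ \phi\bigr)(\equivClass{a,(\nu,\alpha_s)})
		= \CoAdAction_a\bigl(\nu + J_{G_q}(\alpha_s)\bigr),
\]
which is~\eqref{eq:cotangentBundle:normalFormMomentumMap}. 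I expect no deep obstacle, since the theorem is essentially a corollary of the two lemmas and the Lifting point transformations proposition; the points demanding care are the correct tracking of the $\AdAction_a$-twist in the computation of $\tangent\chi$ and the verification that the two momentum maps on $\CotBundle(G \times_{G_q} S)$ — the canonical one and the one pulled back along $\cotangent\chi$ — genuinely coincide.
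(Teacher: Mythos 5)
Your proposal is correct and takes essentially the same route as the paper's own proof: you define \( \Phi = \cotangent \chi^{-1} \circ \phi \), compute the tangent map of \( \chi \) to obtain the identification \eqref{eq:cotangentBundle:normalFormTBundle} with its \( \AdAction_a \)-twist, dualize to arrive at the defining pairing \eqref{eq:cotangentBundle:normalFormCotBundle}, and deduce \eqref{eq:cotangentBundle:normalFormMomentumMap} from the preceding lemma on \( J \circ \phi \) together with the lifting proposition. The only difference is one of exposition: you spell out the diffeomorphism, equivariance, and momentum-map verifications that the paper compresses into ``the asserted properties follow immediately from the previous discussion.''
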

\begin{proof}
	Let the map \( \Phi = \cotangent \chi^{-1} \circ \phi \) be defined as the composition of the diffeomorphism \( \phi: {G \times_{G_q} (\LieA{m}^* \times \CotBundle S)} \to \CotBundle (G \times_{G_q} S) \) with the local diffeomorphism \( {\cotangent \chi^{-1}: \CotBundle (G \times_{G_q} S) \to \CotBundle Q} \).
	The composition of~\eqref{eq:cotangentBundle:identificationTBundleTube} with \( \tangent \chi \) yields the map
	\begin{equation}
		G \times_{G_q} (\LieA{m} \times \TBundle S) \to \TBundle Q,
		\quad
		\equivClass{a, (\rho, Y_s)} \mapsto \tangent_{a,s} (\chi \circ \iota)(a \ldot \rho, Y_s).
	\end{equation}
	Note that we have
	\begin{equation}
		\chi \circ \iota (a, s) = \chi (\equivClass{a,s}) = a \cdot s.
	\end{equation}
	Thus, the above map reads
	\begin{equation}
		\label{eq:cotangentBundle:normalFormTBundle}
		G \times_{G_q} (\LieA{m} \times \TBundle S) \to \TBundle Q,
		\quad
		\equivClass{a, (\rho, Y_s)} \mapsto (\AdAction_a \rho) \ldot (a \cdot s) + a \ldot Y_s,
	\end{equation}
	which is a diffeomorphism onto its image, the latter being a subbundle of \( \TBundle Q \).
	By dualizing, we obtain the expression~\eqref{eq:cotangentBundle:normalFormCotBundle} for \( \Phi \).
	Note that~\eqref{eq:cotangentBundle:normalFormCotBundle} is a defining equation for \( \Phi \), because the map given in~\eqref{eq:cotangentBundle:normalFormTBundle} is a diffeomorphism onto its image.
	The asserted properties follow immediately from the previous discussion. 
\end{proof}

We should note, however, that the semi-global diffeomorphism \( \Phi \) does not yield a \emph{symplectic slice} for the \( G \)-action.
In fact, it is not even a slice for the lifted action, because we have taken the quotient by the `wrong' stabilizer group, \ie, the model space around \( p \in \CotBundle_q Q \) is of the form \( G \times_{G_q} (\LieA{m}^* \times \CotBundle S) \) instead of \( G \times_{G_p} W \) for some submanifold \( W \) which would be a slice at \( p \) for the cotangent lifted action.
In finite dimensions, much work in the study of singular cotangent bundle reduction is devoted to constructing a bona fide symplectic slice adapted to the cotangent bundle structure (see, \eg, \parencite{Schmah2007,Rodriguez-OlmosTeixido-Roman2014}).
Converting the normal form \( G \times_{G_q} (\LieA{m}^* \times \CotBundle S) \) into a symplectic slice requires a detailed analysis of the Witt--Artin decomposition in the cotangent bundle setting and then extending these infinitesimal results to a local statement using the inverse function theorem.
It is not clear if and how these steps generalize to the infinite-dimensional context as they lead to delicate issues of analytic nature.
It turns out, however, that the simple normal form~\eqref{eq:cotangentBundle:normalFormMomentumMap} of the momentum map we have constructed so far is sufficient for most questions concerning singular cotangent bundle reduction.

Recall that the choice of a complement \( \LieA{m} \) of \( \LieA{g}_q \) in \( \LieA{g} \) yields a canonical identification of \( \TBundle_{\equivClass{e}} G \slash G_q \) with \( \LieA{m} \).
\begin{prop}
	\label{prop:cotangentBundle:normalFormSymplecticForm}
	Under the diffeomorphism \( \Phi \) constructed in \cref{prop:cotangentBundle:simpleNormalForm} and under the identification \( \TBundle_{\equivClass{a, (\nu, \alpha_s)}} (G \times_{G_q} (\LieA{m}^* \times \CotBundle S)) \isomorph \LieA{m} \times \LieA{m}^* \times \TBundle_{\alpha_s} (\CotBundle S) \), the canonical symplectic form \( \omega \) on \( \CotBundle Q \) takes the following form:
	\begin{equation}\begin{split}
		\label{eq:cotangentBundle:normalFormSymplecticForm}
		(\Phi^* \omega &)_{\equivClass{a, (\nu, \alpha_s)}} ((\xi^1, \eta^1, Z^1), (\xi^2, \eta^2, Z^2)) 
			\\
			&= \kappa(\eta^1, \xi^2) - \kappa(\eta^2, \xi^1)
				- \kappa\left(\nu + J_{G_q}(\alpha_s), \commutator{\xi^1}{\xi^2}\right)
				+ \omega^S_{\alpha_s}(Z^1, Z^2),
	\end{split}\end{equation}
	where \( \xi^i \in \LieA{m} \), \( \eta^i \in \LieA{m}^* \) and \( Z^i \in \TBundle_{\alpha_s} (\CotBundle S) \) for \( i = 1,2 \) and \( \omega^S \) is the canonical symplectic form on \( \CotBundle S \).
	In particular, at points with zero momentum the third term on the right-hand side vanishes and \( \Phi^* \omega \) is the direct sum of the canonical symplectic forms on \( \LieA{m} \times \LieA{m}^* \) and \( \CotBundle S \).
\end{prop}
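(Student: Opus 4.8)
The plan is to reduce the statement to a single computation of the canonical one-form on the model space. Since \( \Phi = \cotangent \chi^{-1} \circ \phi \) and, by the lifting-point-transformations proposition together with the lift condition~\ref{i:cotangentBundle:sliceCompatible:lift}, the cotangent lift \( \cotangent \chi \) is a (local) symplectomorphism, we have \( \Phi^* \omega = \phi^* \tilde\omega \), where \( \tilde\omega = \dif \tilde\theta \) is the canonical symplectic form on \( \CotBundle(G \times_{G_q} S) \) and \( \tilde\theta \) its tautological one-form. Hence it suffices to compute \( \phi^* \tilde\theta \) and apply \( \dif \).

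To avoid the quotient by \( G_q \), I would pull back along the projection \( \mathrm{pr}: G \times (\LieA{m}^* \times \CotBundle S) \to G \times_{G_q} (\LieA{m}^* \times \CotBundle S) \) and work with the basic one-form \( \hat\theta \defeq \mathrm{pr}^* \phi^* \tilde\theta \) on the trivial product. Using left-trivialization, a tangent vector at \( (a, \nu, \alpha_s) \) is a triple \( (a \ldot \xi, \eta, Z) \) with \( \xi \in \LieA{g} \), \( \eta \in \LieA{m}^* \) and \( Z \in \TBundle_{\alpha_s}(\CotBundle S) \). Projecting this vector down to \( G \times_{G_q} S \), splitting \( \xi = \xi_{\LieA{g}_q} + \xi_{\LieA{m}} \), and using both \( \tangent_a \iota_s(a \ldot \varrho) = \tangent_s \iota_a(\varrho \ldot s) \) for \( \varrho \in \LieA{g}_q \) and the defining relation~\eqref{eq:cotangentBundle:cotTubeDiffeo}, I obtain
\[
	\hat\theta_{(a, \nu, \alpha_s)}(a \ldot \xi, \eta, Z) = \kappa(\nu, \xi_{\LieA{m}}) + \kappa\bigl(J_{G_q}(\alpha_s), \xi_{\LieA{g}_q}\bigr) + \theta^S_{\alpha_s}(Z),
\]
where \( \theta^S \) is the tautological one-form on \( \CotBundle S \) and I have used \( \dualPair{\alpha_s}{\xi_{\LieA{g}_q} \ldot s} = \kappa(J_{G_q}(\alpha_s), \xi_{\LieA{g}_q}) \).

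The symplectic form is then \( \dif \hat\theta \), which I would evaluate on the horizontal representatives \( (a \ldot \xi^i, \eta^i, Z^i) \) with \( \xi^i \in \LieA{m} \) — this is exactly the identification \( \TBundle_{\equivClass{a, (\nu, \alpha_s)}} \isomorph \LieA{m} \times \LieA{m}^* \times \TBundle_{\alpha_s}(\CotBundle S) \) via horizontal lift. Extending \( \xi^i \) to left-invariant vector fields on \( G \) (so their Lie bracket is \( \commutator{\xi^1}{\xi^2} \)), \( \eta^i \) to constants, and \( Z^i \) arbitrarily, I apply Cartan's formula \( \dif \hat\theta(X,Y) = X(\hat\theta(Y)) - Y(\hat\theta(X)) - \hat\theta(\commutator{X}{Y}) \). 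Because \( \xi^i \in \LieA{m} \) have vanishing \( \LieA{g}_q \)-component, \( \hat\theta(X) \) and \( \hat\theta(Y) \) reduce to \( \kappa(\nu, \xi^i) + \theta^S_{\alpha_s}(Z^i) \); differentiating the \( \kappa(\nu, \xi^i) \) pieces in the \( \nu \)-direction yields \( \kappa(\eta^1, \xi^2) - \kappa(\eta^2, \xi^1) \), while the \( \theta^S \)-pieces from all three terms of Cartan's formula assemble into \( \dif\theta^S(Z^1, Z^2) = \omega^S_{\alpha_s}(Z^1, Z^2) \) (there are no cross terms, since \( \theta^S \) is pulled back from \( \CotBundle S \)). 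The remaining contribution is the bracket term \( -\hat\theta(a \ldot \commutator{\xi^1}{\xi^2}, 0, 0) \); splitting \( \commutator{\xi^1}{\xi^2} \) into its \( \LieA{m} \)- and \( \LieA{g}_q \)-parts and reading \( \nu + J_{G_q}(\alpha_s) \) as an element of \( \LieA{m}^* \oplus \LieA{g}_q^* = \LieA{g}^* \), this equals \( -\kappa(\nu + J_{G_q}(\alpha_s), \commutator{\xi^1}{\xi^2}) \). Collecting the four contributions reproduces~\eqref{eq:cotangentBundle:normalFormSymplecticForm}. Since \( \hat\theta \) is basic, so is \( \dif\hat\theta \), and it descends along \( \mathrm{pr} \) to \( \phi^*\tilde\omega = \Phi^*\omega \); the final assertion is immediate from~\eqref{eq:cotangentBundle:normalFormMomentumMap}, as zero momentum forces \( \nu = 0 \) and \( J_{G_q}(\alpha_s) = 0 \), killing the curvature term.

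The main obstacle I anticipate is precisely this curvature term. One must extend the horizontal vectors \( \xi^i \in \LieA{m} \) to vector fields whose Lie bracket can actually be evaluated, and one has to keep track of the fact that \( \commutator{\xi^1}{\xi^2} \) generally has a non-zero \( \LieA{g}_q \)-component — because \( \LieA{m} \) is merely an \( \AdAction_{G_q} \)-invariant complement, not a subalgebra — and that this component feeds through the \( J_{G_q} \)-term of \( \hat\theta \) to produce exactly the \( J_{G_q}(\alpha_s) \)-part of the magnetic term. Everything else is a routine, if slightly lengthy, unwinding of the identifications.
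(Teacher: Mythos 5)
Your proposal is correct and follows essentially the same route as the paper: both arguments compute the pull-back of the tautological one-form, arriving at the identical expression \( \kappa\left(\nu + J_{G_q}(\alpha_s), \vartheta_a(\cdot)\right) + \theta^S_{\alpha_s}(\cdot) \), and then take its exterior derivative. Your use of Cartan's formula with left-invariant extensions of \( \xi^i \in \LieA{m} \) on the product \( G \times (\LieA{m}^* \times \CotBundle S) \) is only a bookkeeping variant of the paper's application of the Maurer--Cartan equation on the quotient: by restricting to \( \LieA{m} \) from the outset you never generate the cross terms \( \kappa(\tangent_{\alpha_s} J_{G_q}(Z^i), \xi^j_{\LieA{g}_q}) \) and \( \omega^S_{\alpha_s}(Z^i, \xi^j_{\LieA{g}_q} \ldot \alpha_s) \), which the paper first produces for general \( \xi^i \in \LieA{g} \) and then discards when setting the \( \LieA{g}_q \)-components to zero.
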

\begin{proof}
	Let \( \CotBundleProj: \CotBundle Q \to Q \) be the canonical projection.
	Then, \( \CotBundleProj \circ \Phi(\equivClass{a, (\nu, \alpha_s)}) = a \cdot s \).
	Using~the definition~\eqref{eq:cotangentBundle:normalFormCotBundle} of \( \Phi \), for the pull-back of the canonical \( 1 \)-form \( \theta \) we find:
	\begin{equation}\begin{split}
		(\Phi^* \theta)_{\equivClass{a, (\nu, \alpha_s)}}&(\equivClass{a \ldot \xi, (\eta, Z)})
			\\
			&= \dualPair{\Phi(\equivClass{a, (\nu, \alpha_s)})}{\tangent_{\equivClass{a, (\nu, \alpha_s)}} (\CotBundleProj \circ \Phi)(\equivClass{a \ldot \xi, (\eta, Z)})}, 
			\\
			&= \dualPair{\Phi(\equivClass{a, (\nu, \alpha_s)})}{(\AdAction_a \xi) \ldot (a \cdot s) + a \ldot \tangent_{\alpha_s} \CotBundleProj (Z)}
			\\
			&= \dualPair{\Phi(\equivClass{a, (\nu, \alpha_s)})}{(\AdAction_a \xi_{\LieA{m}}) \ldot (a \cdot s) + a \ldot (\xi_{\LieA{g}_q} \ldot s) + a \ldot \tangent_{\alpha_s} \CotBundleProj (Z)}
			\\
			&= \kappa(\nu, \xi_{\LieA{m}}) + \dualPair{\alpha_s}{\xi_{\LieA{g}_q} \ldot s}+ \dualPair{\alpha_s}{\tangent_{\alpha_s} \CotBundleProj (Z)}
			\\
			&= \kappa(\nu, \xi_{\LieA{m}}) + \kappa(J_{G_q}(\alpha_s), \xi_{\LieA{g}_q}) + \theta^S_{\alpha_s} (Z),
	\end{split}\end{equation}
	where \( \xi = \xi_{\LieA{m}} + \xi_{\LieA{g}_q} \in \LieA{g} \), \( \eta \in \LieA{m}^* \) and \( Z \in \TBundle_{\alpha_s} (\CotBundle S) \), and \( \theta^S \) is the canonical \( 1 \)-form on \( \CotBundle S \).
	If we introduce the left Maurer--Cartan form \( \vartheta \in \DiffFormSpace^1(G, \LieA{g}) \) by \( \vartheta_a (a \ldot \xi) = \xi \), then we get
	\begin{equation}
		(\Phi^* \theta)_{\equivClass{a, (\nu, \alpha_s)}} = \kappa\left(\nu + J_{G_q}(\alpha_s), \vartheta_a (\cdot)\right) + \theta^S_{\alpha_s} (\cdot).
	\end{equation}
	The Maurer--Cartan equation \( \dif \vartheta_a = - \frac{1}{2} \wedgeLie{\vartheta_a}{\vartheta_a} \) yields
	\begin{equation}\begin{split}
		(\dif \Phi^* \theta)_{\equivClass{a, (\nu, \alpha_s)}} &(\equivClass{a \ldot \xi^1, (\eta^1, Z^1)}, \equivClass{a \ldot \xi^2, (\eta^2, Z^2)})
			\\
			&= \kappa(\eta^1, \xi^2_{\LieA{m}}) - \kappa(\eta^2, \xi^1_{\LieA{m}})
				+ \kappa(\nu + J_{G_q}(\alpha_s), (\dif \vartheta)_a (a \ldot \xi^1, a \ldot \xi^2))
				\\
				&\quad + \kappa(\tangent_{\alpha_s} J_{G_q} (Z^1), \xi^2_{\LieA{g}_q}) - \kappa(\tangent_{\alpha_s} J_{G_q} (Z^2), \xi^1_{\LieA{g}_q})
				\\
				&\quad + (\dif \theta^S)_{\alpha_s}(Z^1, Z^2)
			\\
			&= \kappa(\eta^1, \xi^2_{\LieA{m}}) - \kappa(\eta^2, \xi^1_{\LieA{m}})
				- \kappa\left(\nu + J_{G_q}(\alpha_s), \commutator{\xi^1}{\xi^2}\right)
				\\
				&\quad 
					+ \omega^S_{\alpha_s}(Z^1, \xi^2_{\LieA{g}_q} \ldot \alpha_s) - \omega^S_{\alpha_s}(Z^2, \xi^1_{\LieA{g}_q} \ldot \alpha_s)
					+ \omega^S_{\alpha_s}(Z^1, Z^2).
	\end{split}\end{equation}
	Now the identification \( \TBundle_{\equivClass{a, (\nu, \alpha_s)}} (G \times_{G_q} (\LieA{m}^* \times \CotBundle S)) \isomorph \LieA{m} \times \LieA{m}^* \times \TBundle_{\alpha_s} (\CotBundle S) \) amounts to setting the \( \LieA{g}_q \)-component of \( \xi^i \) to zero and we thus arrive at the claimed formula~\eqref{eq:cotangentBundle:normalFormSymplecticForm}.
\end{proof}

\subsection{Symplectic stratification}
In finite dimensions, it is well known that the reduced phase space of a Hamiltonian \( G \)-system is stratified by symplectic manifolds.
The general theory of singular symplectic reduction in the infinite-dimensional context is worked out in \parencite{DiezThesis}.
In contrast to the finite-dimensional picture, the general infinite-dimensional reduction scheme needs additional assumptions, because one has to handle weakly symplectic forms, the Darboux theorem fails and a simple inverse function theorem is not available beyond Banach spaces.
The aim of this section is to show that the cotangent bundle structure and the normal form established in the previous section yield a symplectic reduction theorem without relying on the heavy machinery that is used in the general theory in \parencite{DiezThesis}.

As usual, for a \( G \)-manifold \( M \) and a subgroup \( H \subseteq G \), we denote the subset of isotropy type \( H \) by \( M_H \) and the subset of orbit type \( (H) \) by \( M_{(H)} \).
For the definition of the notion of stratification we refer to \cref{sec:calculus:groupActionsSlices}.

\begin{thm}
	\label{prop:cotangentBundle:singularSympRed}
	Let \( Q \) be a Fréchet \( G \)-manifold.
	Assume that the \( G \)-action on \( Q \) is proper, that it admits at every point a slice compatible with the cotangent bundle structures and that the decomposition of \( Q \) into orbit types satisfies the frontier condition.
	Moreover, assume that \( \CotBundle Q \) is a Fréchet vector bundle, which is \( G \)-equivariantly isomorphic to \( \TBundle Q \), and that the lifted action on \( \CotBundle Q \), endowed with its canonical symplectic form \( \omega \), has a momentum map \( J \).
	Then, the following holds.
	\begin{thmenumerate}
		\item
			The set of orbit types of \( P \defeq J^{-1}(0) \) with respect to the lifted \( G \)-action coincides with the set of orbit types for the \( G \)-action on \( Q \). 
	\item
			The reduced phase space \( \check{P} \defeq J^{-1}(0) \slash G \) is stratified into orbit type manifolds \( \check{P}_{(K)} \defeq (J^{-1}(0))_{(K)} \slash G \).
	\item 
			Assume, additionally, that every orbit is symplectically closed, that is, the symplectic double orthogonal \( (\LieA{g} \ldot p)^{\omega \omega} \) coincides with \( \LieA{g} \ldot p \) for all \( p \in P \).
			Then, for every orbit type \( (K) \), the manifold \( \check{P}_{(K)} \) carries a symplectic form \( \check{\omega}_{(K)} \) uniquely determined by
			\begin{equation}
				\pi_{(K)}^* \check{\omega}_{(K)} = \restr{\omega}{P_{(K)}},
			\end{equation}
			where \( \pi_{(K)}: P_{(K)} \to \check{P}_{(K)} \) is the natural projection.
			\qedhere
	\end{thmenumerate}
\end{thm}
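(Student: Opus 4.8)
My overall strategy is to push every question down to the \emph{compact} stabilizer $G_q$ by means of the normal form of \cref{prop:cotangentBundle:simpleNormalForm}, so that the genuinely infinite-dimensional group $G$ enters only through the homogeneous factor $G\slash G_q$. For the first claim, fix $p\in\CotBundle_q Q$. By the normal form, the image of $\Phi$ meets $P$ in $G\times_{G_q}\bigl(\{0\}\times J_{G_q}^{-1}(0)\bigr)$, and the $G$-orbit type of $\Phi(\equivClass{a,(0,\alpha_s)})$ is $a\,(G_q)_{\alpha_s}\,a^{-1}$; hence the orbit types of $P$ near $p$ are exactly the $G_q$-orbit types occurring in $J_{G_q}^{-1}(0)\subseteq\CotBundle S$, and it remains to match these with the orbit types of $S$ (equivalently, of $Q$ near $q$). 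One inclusion is supplied by the $G$-equivariant zero section $Q\hookrightarrow\CotBundle Q$, which lands in $P$ and satisfies $G_{0_q}=G_q$, so every orbit type of $Q$ occurs in $P$. For the converse I would choose a $G_q$-invariant inner product (available since $G_q$ is compact) and use the resulting equivariant identification $\CotBundle S\isomorph\TBundle S$; then $J_{G_q}^{-1}(0)$ is identified with the bundle of vectors normal to the $G_q$-orbits, and the stabilizer of a normal vector $v_s$ is its $(G_q)_s$-isotropy in the normal space at $s$. By the slice theorem for the compact group $G_q$ this isotropy is realized as the stabilizer of a nearby point of $S$, yielding the reverse inclusion and hence equality of the orbit-type sets.

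Granting the first claim, the second is again read off the normal form: near $\equivClass{p}$ the reduced space is modelled on $J_{G_q}^{-1}(0)\slash G_q$, the singular quotient of the cotangent bundle of the slice by the \emph{finite-dimensional} compact group $G_q$. Since $G_q$ acts properly and with slices, the orbit-type subsets of $J_{G_q}^{-1}(0)$ are locally closed submanifolds with smooth quotients; this endows each $\check P_{(K)}$ with a manifold structure making $\pi_{(K)}$ a submersion, and local finiteness of the decomposition follows from properness. The remaining — and genuinely delicate — point is the frontier condition; here I would deduce it for $\check P$ from the assumed frontier condition for $Q$ via the orbit-type correspondence of the first claim together with the local model above (\cf the caveat in \cref{sec:yangMillsHiggs}).

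For the third claim, write $M\defeq\CotBundle Q$ and let $\iota_{(K)}\colon P_{(K)}\hookrightarrow M$ be the inclusion. The first step is to show that $\iota_{(K)}^*\omega$ is basic for $\pi_{(K)}$. Invariance is clear, since $\omega$ is invariant and $P_{(K)}$ is $G$-stable. Horizontality is exactly the simplification gained by reducing at zero: because $J$ vanishes identically on $P_{(K)}$ we have $\TBundle_p P_{(K)}\subseteq\ker\tangent_p J$, and the momentum map relation $\omega_p(\xi\ldot p,Y)=\kappa(\tangent_p J(Y),\xi)$ then gives $\omega_p(\xi\ldot p,Y)=0$ for all $\xi\in\LieA{g}$ and $Y\in\TBundle_p P_{(K)}$. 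As $\LieA{g}\ldot p=\TBundle_p(G\cdot p)$ is the vertical space of $\pi_{(K)}$, the form $\iota_{(K)}^*\omega$ is basic and descends to a unique $\check\omega_{(K)}$ with $\pi_{(K)}^*\check\omega_{(K)}=\iota_{(K)}^*\omega$. Closedness is then formal: $\pi_{(K)}^*\dif\check\omega_{(K)}=\dif\iota_{(K)}^*\omega=0$, and $\pi_{(K)}^*$ is injective on forms because $\pi_{(K)}$ is a surjective submersion.

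The crux, and the step I expect to be the main obstacle, is the (weak) non-degeneracy of $\check\omega_{(K)}$, i.e. that the $\omega$-kernel of $\iota_{(K)}^*\omega$ is exactly $\LieA{g}\ldot p$. Using the standard description of the orbit-type stratum, $\TBundle_p P_{(K)}=\LieA{g}\ldot p+\bigl(\ker\tangent_p J\cap(\TBundle_p M)^K\bigr)$ with $K=G_p$, this reduces to computing the symplectic orthogonal of $V\defeq\ker\tangent_p J\cap(\TBundle_p M)^K$. Two inputs enter: from the momentum map relation $(\LieA{g}\ldot p)^\omega=\ker\tangent_p J$, the symplectically-closed hypothesis $(\LieA{g}\ldot p)^{\omega\omega}=\LieA{g}\ldot p$ gives $(\ker\tangent_p J)^\omega=\LieA{g}\ldot p$; and compactness of $K$ lets me average $\omega$ to show that $(\TBundle_p M)^K$ is weakly symplectic with $\omega$-orthogonal equal to the sum of the non-trivial isotypic components. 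Applying the $K$-averaging projection onto $(\TBundle_p M)^K$ to an element of $V\cap V^\omega$ then forces it into $\LieA{g}^K\ldot p\subseteq\LieA{g}\ldot p$, which together with horizontality gives non-degeneracy. The obstacle is hidden in the identity $V^\omega=(\ker\tangent_p J)^\omega+\bigl((\TBundle_p M)^K\bigr)^\omega$: the rule $(A\cap B)^\omega=A^\omega+B^\omega$ is automatic in finite dimensions but may fail for weakly symplectic Fréchet spaces, and it is precisely the symplectically-closed condition (with the closedness of $\LieA{g}\ldot p$) that I expect to be required to establish it. Alternatively, one may bypass this identity entirely by transporting the computation through \cref{prop:cotangentBundle:normalFormSymplecticForm}, where at zero momentum $\Phi^*\omega$ splits as the canonical form on $\LieA{m}\times\LieA{m}^*$ plus $\omega^S$ on $\CotBundle S$, thereby localizing the non-degeneracy question to singular reduction of $\CotBundle S$ by the compact group $G_q$.
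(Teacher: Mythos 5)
Your overall architecture coincides with the paper's — use the normal form of \cref{prop:cotangentBundle:simpleNormalForm} to push everything down to the linear action of the compact stabilizer \( G_q \), get one inclusion in (i) from the zero section, and descend \( \omega \) in (iii) via the momentum map relation — but three steps have genuine gaps. First, in (i) the identification \( \CotBundle S \isomorph \TBundle S \) cannot be obtained from ``a \( G_q \)-invariant inner product'': in the Fréchet setting a continuous inner product need not exist and, even when it does, it does not identify the model space \( X \) with the chosen dual \( X^* \); \cref{ex:contangentBundle:differentOrbitTypesOnCotangent} shows that the orbit types of a space and its dual can genuinely differ, so without more input your converse inclusion would fail. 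The correct source is the theorem's own hypothesis that \( \CotBundle Q \) is \( G \)-equivariantly isomorphic to \( \TBundle Q \), which induces the needed \( G_q \)-equivariant diffeomorphism \( \CotBundle S \isomorph \TBundle S \); granting that repair, your realization of the isotropy group of a (co)tangent vector as the stabilizer of a nearby point is exactly the paper's rescaling argument (write \( x = r \tilde{q} \) with \( \tilde{q} \in S \), using that \( S \) is an open zero-neighborhood in the linear \( G_q \)-space \( X \) and that stabilizers are constant along rays).

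Second, in (ii) the assertion that ``since \( G_q \) acts properly and with slices, the orbit-type subsets of \( J_{G_q}^{-1}(0) \) are locally closed submanifolds with smooth quotients'' is not automatic: slices give that \( (\CotBundle S)_{(K)} \) is a submanifold, but not that its intersection with the level set \( J_{G_q}^{-1}(0) \) is one. This is precisely the content of the separate linear singular-reduction result (\cref{prop:singularReduction:linear}), whose proof needs the Bifurcation Lemma to exhibit the local model built from \( V = \TBundle_y S \intersect \ker \tangent_y J \); you assert it without argument. Likewise you defer the frontier condition, which the paper actually proves (\cref{prop:cotangentBundle:frontierConditionDerivedFromQ}) by transporting the frontier condition of \( Q \) into the slice through the tube diffeomorphism and again using the rescaling trick — since (ii) claims a stratification, this cannot be left open. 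Third, in (iii) your descent, closedness and uniqueness steps match the paper, and you correctly flag the fatal point of your primary non-degeneracy route: the lattice identity \( (A \intersect B)^\omega = A^\omega + B^\omega \) is unavailable for weakly symplectic Fréchet spaces, and nothing in your averaging argument recovers it. The paper sidesteps it entirely: for any topological complement \( A \) of \( \LieA{g} \ldot p \), the projection \( \tangent_p \pi_{(K)} \) identifies \( A_{G_p} \intersect \ker \tangent_p J \) with \( \TBundle_{\equivClass{p}} \check{P}_{(K)} \), and \cref{prop:bifuractionLemma:symplecticSubspace} — which is exactly where the symplectically-closed hypothesis enters — says the restriction of \( \omega \) to a complement intersected with \( \ker \tangent_p J \) is symplectic. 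Your fallback via the splitting of \( \Phi^* \omega \) at zero momentum (\cref{prop:cotangentBundle:normalFormSymplecticForm}) is viable and is essentially how the paper's linear lemma operates, but as written your proof of (iii) rests on the unproven identity.
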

In the sequel, we will prove this theorem by means of a series of lemmas.

Let us start with the observation that the orbit types of the lifted action are tightly connected to the ones for the action on the base manifold.
The following results extend the determination of orbit types of the lifted action \parencites{Rodriguez-Olmos2006}[Theorem~5]{PerlmutterRodriguez-OlmosSousa-Dias2007} to the infinite-dimensional setting.
\begin{lemma}
	\label{prop:cotangentBundle:orbitTypes}
	Under the assumptions of \cref{prop:cotangentBundle:singularSympRed}, for every orbit type \( (K) \) of \( \CotBundle Q \) there exist a triple \( (H, \tilde{H}, L) \) such that \( (K) \) is represented by \( K = \tilde{H} \intersect L \), where \( \tilde{H} \subseteq H \) are stabilizer subgroups of \( Q \) and \( L \) is a stabilizer subgroup of the \( H \)-action on \( (\LieA{g} \slash \LieA{h})^* \) with \( \LieA{h} \) being the Lie algebra of \( H \).

	Conversely, for every triple \( (H, (\tilde{H}), L) \), where \( H \) is a stabilizer subgroup of \( Q \), \( (\tilde{H}) \) is an orbit type of \( Q \) fulfilling \( (\tilde{H}) \leq (H) \) and \( L \) is a stabilizer subgroup of the \( H \)-action on \( (\LieA{g} \slash \LieA{h})^* \), there exists a representative \( \tilde{H} \) of \( (\tilde{H}) \) and a stabilizer subgroup \( K \) of \( \CotBundle Q \) such that \( K = \tilde{H} \intersect L \).   
\end{lemma}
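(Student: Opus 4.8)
The plan is to read off the orbit types of \( \CotBundle Q \) directly from the semi-global normal form of \cref{prop:cotangentBundle:simpleNormalForm}, always centering the slice at the base point of the covector under consideration. Given \( \beta \in \CotBundle Q \), I would set \( x \defeq \CotBundleProj(\beta) \), \( H \defeq G_x \) and pick a slice \( S \) at \( x \) compatible with the cotangent bundle structures. Since \( \beta \) sits in the central fibre of the model \( G \times_{G_x}(\LieA{m}^* \times \CotBundle S) \), it is represented by a class \( \equivClass{e, (\nu, \alpha)} \) with \( \nu \in \LieA{m}^* \) and \( \alpha \in \CotBundle_x S \), and the \( G \)-equivariance of \( \Phi \) reduces the computation of \( G_\beta \) to that of the stabilizer of this class. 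A short computation with the equivalence relation defining the associated bundle (an element \( g \) fixes \( \equivClass{e,(\nu,\alpha)} \) if and only if \( g \in H \), \( \CoAdAction_g \nu = \nu \) and \( g \) fixes \( \alpha \) under the lifted \( H \)-action on \( \CotBundle S \)) then yields
\begin{equation}
	G_\beta = H_\nu \intersect H_\alpha ,
\end{equation}
where \( H_\nu \defeq \set{g \in H : \CoAdAction_g \nu = \nu} \) and \( H_\alpha \) is the stabilizer of \( \alpha \) for the lifted \( H \)-action on \( \CotBundle S \).

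It then remains to identify the two factors with the objects in the statement. For \( L \defeq H_\nu \) this is immediate: since \( \LieA{m} \) is an \( \AdAction_H \)-invariant complement of \( \LieA{h} \) in \( \LieA{g} \) (\cref{prop:cotangentBundle:compactGroupReducitive}), the projection \( \LieA{m} \to \LieA{g} \slash \LieA{h} \) is an isomorphism of \( H \)-modules, so that \( \LieA{m}^* \isomorph (\LieA{g} \slash \LieA{h})^* \) intertwines the coadjoint action with the \( H \)-action on \( (\LieA{g} \slash \LieA{h})^* \), and \( L \) is a stabilizer of the latter. The delicate point---and the step I expect to be the main obstacle---is to see that \( \tilde H \defeq H_\alpha \) is the stabilizer of a point of \( Q \) (the inclusion \( \tilde H \subseteq H \) being clear, as fixing \( \alpha \) forces fixing \( x \)). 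I would transport \( \alpha \) to a vector \( Y \in \TBundle_x S \) with the same \( H \)-stabilizer via the \( G \)-equivariant isomorphism \( \CotBundle Q \isomorph \TBundle Q \), and then argue that stabilizers of vectors in the linear \( H \)-representation \( \TBundle_x S \) are realized as isotropy groups of points of \( Q \). In finite dimensions this is the content of the equivariant tube theorem, which identifies the slice with a neighbourhood of the origin in \( \TBundle_x S \) through the exponential map of an invariant metric; in the Fréchet setting neither this map nor the inverse function theorem behind it is available, which is precisely where the difficulty lies. My plan is to extract the needed correspondence from the slice theorem of \parencite{DiezSlice} (together with the compactness of \( H \)), using that it models the slice \( H \)-equivariantly on the slice representation: since the linear action commutes with scaling, \( H_Y = H_{tY} \) for \( t \neq 0 \), and for small \( t \) the vector \( tY \) corresponds to a point \( s'' \in S \) with \( G_{s''} = H_Y = \tilde H \). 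This exhibits \( \tilde H \) as a stabilizer of \( Q \) and, since \( \tilde H \subseteq H \), gives \( (\tilde H) \leq (H) \), proving the first assertion with \( K = \tilde H \intersect L \).

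For the converse I would reverse this construction. Given a triple \( (H, (\tilde H), L) \), I would first choose a point \( x \) of orbit type \( (H) \); the frontier condition together with \( (\tilde H) \leq (H) \) gives \( Q_{(H)} \subseteq \overline{Q_{(\tilde H)}} \), so that the orbit type \( (\tilde H) \) occurs in every neighbourhood of \( x \), hence inside the slice \( S \) at \( x \). Picking \( s'' \in S \) with \( G_{s''} \) a representative of \( (\tilde H) \) contained in \( H \), the correspondence above produces a vector \( Y \in \TBundle_x S \), and thus a covector \( \alpha \in \CotBundle_x S \), with \( H_\alpha = \tilde H \); using \( \LieA{m}^* \isomorph (\LieA{g} \slash \LieA{h})^* \) I would choose \( \nu \in \LieA{m}^* \) with \( H_\nu = L \). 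By the stabilizer formula above, the point \( \Phi(\equivClass{e,(\nu,\alpha)}) \in \CotBundle Q \) then has isotropy group \( K = \tilde H \intersect L \), which is the desired conclusion. The only genuine analytic input throughout is the vector-versus-point stabilizer correspondence; everything else is a formal consequence of the normal form and the reductive splitting of \cref{prop:cotangentBundle:compactGroupReducitive}.
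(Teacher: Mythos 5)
Your proposal is correct and follows essentially the same route as the paper's own proof: both compute \( G_\beta = H_\nu \cap H_\alpha \) from the equivariant normal form of \cref{prop:cotangentBundle:simpleNormalForm}, realize \( H_\alpha \) as a point stabilizer by transporting \( \alpha \) to \( \TBundle_q S \) via the equivariant isomorphism \( \CotBundle Q \isomorph \TBundle Q \) and rescaling into the slice (the paper invokes absorbency of neighborhoods of \( 0 \) to find \( r \) with \( x = r\tilde{q} \), you invoke \( H_Y = H_{tY} \) for small \( t \) --- the same argument, with \( G_{\tilde q} \subseteq H \) from the slice properties), and prove the converse via the frontier condition plus a choice of \( \nu \in \LieA{m}^* \isomorph (\LieA{g} \slash \LieA{h})^* \) with \( H_\nu = L \). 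The only cosmetic difference is in the converse, where the paper uses the zero covector over the slice point \( \tilde{q} \) while you pull the covector back to the central fibre \( \CotBundle_q S \); both yield the same stabilizer \( \tilde{H} \cap L \).
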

\begin{proof}
	Let \( p \in \CotBundle_q Q \) and denote the stabilizer of \( p \) and \( q \) by \( K \) and \( H \), respectively.
	Choose an \( \AdAction_H \)-invariant complement \( \LieA{m} \) of \( \LieA{h} \equiv \LieA{g}_q \) in \( \LieA{g} \), which is possible because \( H \) is compact.
	We will first show that there exists \( \tilde{q} \in Q \) (close to \( q \)) and \( \nu \in \LieA{m}^* \) such that
	\begin{equation}
		\label{eq:cotangentBundle:orbitTypes:relation}
		K = G_{\tilde{q}} \intersect H_\nu.
	\end{equation}
	For this purpose, let \( S \) be a slice at \( q \). 
	Denote the model space of \( S \) by \( X \).
	By \cref{prop:cotangentBundle:simpleNormalForm}, there exists \( \nu \in \LieA{m}^* \) and \( \alpha \in \CotBundle_q S \) such that \(  \Phi(\equivClass{e, (\nu, (q, \alpha))}) = p \).
	Since \( \Phi \) is \( G \)-equivariant, the common stabilizer of \( \nu \) and \( \alpha \) under the \( H \)-action on \( \LieA{m}^* \times \CotBundle_q S \) is \( K \), that is, \( K = H_\alpha \intersect H_\nu \).
	By assumption, there exists a \( G \)-equivariant diffeomorphism between \( \TBundle Q \) and \( \CotBundle Q \).
	The latter induces an \( H \)-equivariant diffeomorphism between \( \TBundle S \) and \( \CotBundle S \).
	Thus, there exists \( x \in X \isomorph \TBundle_q S \) whose stabilizer under the \( H \)-action is \( H_\alpha \).
	Since the topology of \( X \) is generated by absorbent sets and \( S \) is (diffeomorphic to) an open neighborhood of \( 0 \) in \( X \), there exists \( r \in \R \) such that \( x = r \tilde{q} \) for some \( \tilde{q} \in S \).
	By linearity of the \( H \)-action, both \( x \) and \( \tilde{q} \) have the same stabilizer \( H_\alpha \).
	As \( \tilde{q} \in S \), \iref{i::slice:SliceDefOnlyStabNotMoveSlice} of \cref{defn:slice:slice} shows that \( G_{\tilde{q}} \subseteq H \) and so \( G_{\tilde{q}} = H_{\tilde{q}} \).
	Hence, we have found \( \tilde{q} \) and \( \nu \) satisfying~\eqref{eq:cotangentBundle:orbitTypes:relation}.
	
	In the converse direction, let \( H \) be a stabilizer subgroup of \( Q \), \( (\tilde{H}) \) be an orbit type of \( Q \) fulfilling \( (\tilde{H}) \leq (H) \) and \( L \) be a stabilizer subgroup of the \( H \)-action on \( (\LieA{g} \slash \LieA{h})^* \).
	Let \( q \in Q \) be such that \( G_q = H \).
	Choose a slice \( S \) at \( q \).
	Since \( (\tilde{H}) \leq (H) \), the frontier condition and \iref{i::slice:SliceDefLocallyProduct} of \cref{defn:slice:slice} imply that there exists \( \tilde{q} \in S \) with \( (H_{\tilde{q}}) = (G_{\tilde{q}}) = (\tilde{H}) \).
	Choose an \( \AdAction_H \)-invariant complement \( \LieA{m} \) of \( \LieA{h} \) in \( \LieA{g} \) and \( \nu \in \LieA{m}^* \) with \( H_\nu = L \).
	Since \( \Phi \) is equivariant, the point \( p = \Phi(\equivClass{e, (\nu, (\tilde{q}, 0))}) \) has stabilizer 
	\begin{equation}
		G_p = H_{\tilde{q}} \intersect H_\nu,
	\end{equation}
	which completes the proof.
\end{proof}

\begin{coro}
	\label{prop:cotangentBundle:orbitTypeLevelSetSameAsBase}
	Under the assumptions of \cref{prop:cotangentBundle:singularSympRed}, the set of orbit types of \( P = J^{-1}(0) \) with respect to the lifted \( G \)-action coincides with the set of orbit types for the \( G \)-action on \( Q \), that is, a subgroup \( K \) is the stabilizer \( K \) of some \( p \in P \) if and only if there exists a point \( \tilde{q} \in Q \) such that \( G_{\tilde{q}} = K \).
\end{coro}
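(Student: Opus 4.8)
The plan is to obtain both implications as a specialization of \cref{prop:cotangentBundle:orbitTypes} to the case of vanishing momentum; no new analysis is needed beyond what was established there, and the one point requiring care is the collapse of the coadjoint-stabilizer factor at zero momentum.

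For the forward implication, suppose $K = G_p$ for some $p \in P = J^{-1}(0)$ sitting in the fibre over $q$, and put $H = G_q$. Running the argument in the proof of \cref{prop:cotangentBundle:orbitTypes} through the normal form $\Phi$ of \cref{prop:cotangentBundle:simpleNormalForm}, I would write $p = \Phi(\equivClass{e,(\nu,\alpha)})$ with $\nu \in \LieA{m}^*$ and $\alpha \in \CotBundle_q S$, and recover the relation $K = G_{\tilde{q}} \intersect H_\nu$ for a suitable $\tilde{q} \in S \subseteq Q$ with $G_{\tilde{q}} \subseteq H$. The decisive observation is that $J(p) = 0$ forces $\nu = 0$: by the normal form \eqref{eq:cotangentBundle:normalFormMomentumMap}, $J(p) = \CoAdAction_e(\nu + J_{G_q}(\alpha)) = 0$ splits into the two conditions $\nu = 0$ and $J_{G_q}(\alpha) = 0$. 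Since the linear $H$-action on $\LieA{m}^*$ fixes the origin, $H_\nu = H_0 = H$, whence $K = G_{\tilde{q}} \intersect H = G_{\tilde{q}}$. Phrased in terms of the triple $(H,\tilde{H},L)$ of \cref{prop:cotangentBundle:orbitTypes}, this is exactly the statement that $\nu = 0$ makes $L = H$, so that $K = \tilde{H} \intersect L$ degenerates to the pure base stabilizer $\tilde{H} = G_{\tilde{q}}$.

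The converse is immediate and does not even require the normal form: given $\tilde{q} \in Q$ with $G_{\tilde{q}} = K$, I would take the zero covector $0_{\tilde{q}} \in \CotBundle_{\tilde{q}} Q$. Because the momentum map is fibrewise linear, $\kappa(J(p),\xi) = \dualPair{p}{\xi \ldot q}$, it vanishes on the zero section, so $0_{\tilde{q}} \in J^{-1}(0) = P$. The lifted action is fibrewise linear and covers the action on $Q$, so the zero section is $G$-equivariant; hence $g \cdot 0_{\tilde{q}} = 0_{g \cdot \tilde{q}}$ equals $0_{\tilde{q}}$ precisely when $g \in G_{\tilde{q}}$, giving $G_{0_{\tilde{q}}} = G_{\tilde{q}} = K$. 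Thus every base stabilizer is realized on $P$, and the main (and essentially only) obstacle is the bookkeeping verification $H_\nu = H$ at $\nu = 0$ that drives the forward direction.
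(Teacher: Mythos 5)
Your argument is correct and is essentially the paper's own proof: the converse direction via the zero section of \( \CotBundle Q \) (which lies in \( J^{-1}(0) \) and reproduces the base stabilizers), and the forward direction by running the proof of \cref{prop:cotangentBundle:orbitTypes} through the normal form, where \( J(p)=0 \) forces \( \nu = 0 \) by~\eqref{eq:cotangentBundle:normalFormMomentumMap} and hence \( K = G_{\tilde{q}} \) from~\eqref{eq:cotangentBundle:orbitTypes:relation}. Your explicit remark that \( H_\nu = H_0 = H \) at zero momentum, combined with \( G_{\tilde{q}} \subseteq H \), is exactly the collapse the paper's terser proof relies on.
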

\begin{proof}
	One direction is clear, as the zero section of \( \CotBundle Q \) has the same orbit types as \( Q \) and as it is contained in the zero level set of \( J \).
	The conclusion in the converse direction follows by similar arguments as in the proof of \cref{prop:cotangentBundle:orbitTypes}.
	Indeed, \( J(p) = 0 \) implies that \( \nu = 0 \) by~\eqref{eq:cotangentBundle:normalFormMomentumMap} and hence \( G_{\tilde{q}} = K \) by~\eqref{eq:cotangentBundle:orbitTypes:relation}.
\end{proof}

\begin{lemma}
	\label{prop:cotangentBundle:orbitTypeStrataAreManifolds}
	Under the assumptions of \cref{prop:cotangentBundle:singularSympRed} the following holds.
	For every orbit type \( (K) \) of \( \CotBundle Q \), the orbit type stratum \( (\CotBundle Q)_{(K)} \) is a submanifold of \( \CotBundle Q \) and the quotient \( (\CotBundle Q)_{(K)} \slash G \) is a Fréchet manifold.
\end{lemma}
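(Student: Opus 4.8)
The plan is to reduce the statement to the orbit type theory of a \emph{compact} group acting on the fibre of the normal form, and then to transfer the result to the associated bundle. Fix \( p \in (\CotBundle Q)_{(K)} \) lying over \( q \in Q \) and set \( H \defeq G_q \), which is compact by properness of the action. Choose a slice \( S \) at \( q \) compatible with the cotangent bundle structures and an \( \AdAction_H \)-invariant complement \( \LieA{m} \) of \( \LieA{g}_q \). By \cref{prop:cotangentBundle:simpleNormalForm}, the map \( \Phi \) is a \( G \)-equivariant diffeomorphism of \( G \times_{G_q} M \), with \( M \defeq \LieA{m}^* \times \CotBundle S \), onto an open \( G \)-invariant neighborhood of \( p \); here \( G \) acts by left translation on the \( G \)-factor, while \( H \) acts on \( M \) by the coadjoint action on \( \LieA{m}^* \) and by the cotangent lift of the slice action on \( \CotBundle S \). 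Since \( \Phi \) is equivariant it preserves orbit type subsets, so it suffices to analyze the \( G \)-orbit type decomposition of \( G \times_{G_q} M \) near \( \equivClass{e, m_0} \defeq \Phi^{-1}(p) \).

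Next I would invoke the orbit type theory for the compact group \( H \) acting on the Fréchet manifold \( M \). As \( H \) is compact, the action is automatically proper and admits a slice at every point, see \parencite{DiezSlice}. Hence the general statements recalled in \cref{Setting} apply to the \( H \)-action on \( M \): for every compact subgroup \( L \subseteq H \) the orbit type subset \( M_{(L)} \) (with respect to the \( H \)-action) is a locally closed submanifold of \( M \), and the quotient \( M_{(L)} \slash H \) carries a smooth Fréchet manifold structure for which the natural projection is a submersion.

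The crux is to match the \( G \)-orbit types of the associated bundle with the \( H \)-orbit types of the fibre. A direct computation of the equivalence relation shows that the \( G \)-stabilizer of \( \equivClass{a, m} \) equals \( a\, H_m\, a^{-1} \), where \( H_m \) is the \( H \)-stabilizer of \( m \); thus the \( G \)-orbit type of \( \equivClass{a,m} \) is the \( G \)-conjugacy class of \( H_m \). The key rigidity observation, consistent with \cref{prop:cotangentBundle:orbitTypes}, is the following. For \( m \) close to \( m_0 \) the slice theorem gives \( h_0 \in H \) with \( L \defeq h_0 H_m h_0^{-1} \subseteq P \), where \( P \defeq H_{m_0} \); if in addition \( \equivClass{a,m} \) has \( G \)-orbit type \( (K) \), then \( H_m \), and hence \( L \), is \( G \)-conjugate to \( P \). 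So \( L \subseteq P \) is \( G \)-conjugate to \( P \); since \( P \) is a compact Lie group, conjugation preserves the dimension and the number of connected components, whence \( L = P \) and therefore \( H_m \) is already \( H \)-conjugate to \( P \). In other words, locally only the single fibre orbit type \( (P) \) contributes to the \( G \)-orbit type \( (K) \), and no merging of distinct fibre orbit types occurs.

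Combining these facts, near \( \equivClass{e, m_0} \) the stratum \( (G \times_{G_q} M)_{(K)} \) coincides with \( G \times_{G_q} M_{(P)} \). Using a local section of \( G \to G \slash G_q \) to trivialize the bundle one sees that this is a locally closed submanifold of \( G \times_{G_q} M \), since \( M_{(P)} \) is a submanifold of \( M \); transporting by \( \Phi \) shows that \( (\CotBundle Q)_{(K)} \) is a submanifold near \( p \). Moreover, the \( G \)-quotient of an associated bundle is the \( H \)-quotient of its fibre, so \( (G \times_{G_q} M_{(P)}) \slash G \isomorph M_{(P)} \slash H \), which is a Fréchet manifold by the second step; hence \( (\CotBundle Q)_{(K)} \slash G \) is locally, and therefore globally, a Fréchet manifold. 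I expect the main obstacle to be the justification of slices for the compact action of \( H \) on the infinite-dimensional cotangent bundle \( \CotBundle S \), together with the bookkeeping required to promote the local charts obtained from \( \Phi \) to a well-defined global manifold structure on the stratum and its quotient.
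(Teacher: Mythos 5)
Your proposal is, in substance, the paper's own proof: localize with the normal form \( \Phi \) of \cref{prop:cotangentBundle:simpleNormalForm}, show that near \( p \) the orbit type stratum of the tube is the associated bundle \( G \times_{G_q} \) of a fibrewise orbit type subset of \( \LieA{m}^* \times \CotBundle S \), reduce to the compact group \( H = G_q \) acting on that fibre, and identify the \( G \)-quotient locally with the \( H \)-quotient of the fibre stratum. Your rigidity step — a compact Lie subgroup that is conjugate to and contained in another compact subgroup must equal it, by comparing dimensions and component counts — is exactly \cref{prop::compactLieSubgroup:conjugatedSubgroupEqual}, which the paper invokes for the same purpose (you spell it out where the paper cites it); it is also what justifies the paper's displayed identity \eqref{eq:cotangentBundle:orbitTypeStrataLocalModel}, i.e.\ your observation that locally only the single fibre orbit type \( (H_{m_0}) \) contributes to \( (K) \). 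The one justification that needs repair is the existence of slices for the \( H \)-action on \( M \): in the Fréchet setting, compactness of \( H \) alone does \emph{not} guarantee slices (Bochner-type linearization breaks down for lack of an inverse function theorem), and the slice theorem of \parencite{DiezSlice} used here is the one for \emph{linear} actions. The paper gets around this precisely because the \( H \)-action on \( \LieA{m}^* \times \CotBundle S \) is linear — by slice property \iref{i:slice:linearSlice}, \( S \) is an \( H \)-invariant open neighborhood of \( 0 \) in \( X \) with linear action, so \( \CotBundle S \) sits \( H \)-equivariantly inside \( X \times X^* \) — whence \cref{prop:slice:sliceTheoremLinearAction} applies. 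You correctly flagged this as the main obstacle; with "compactness implies slices" replaced by "linearity plus compactness via \cref{prop:slice:sliceTheoremLinearAction}", your argument coincides with the paper's.
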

\begin{proof}
	Let \( p \in (\CotBundle Q)_{(K)} \) and denote its base point by \( q \in Q \).
	By \cref{prop:cotangentBundle:simpleNormalForm}, it is enough to show that
	\begin{equation}
		\label{eq:cotangentBundle:orbitTypeStrataLocalModel}
		\left(G \times_{G_q} (\LieA{m}^* \times \CotBundle S)\right)_{(G_p)} = G \times_{G_q} \left( (\LieA{m}^* \times \CotBundle S)_{(G_p)} \right)
	\end{equation}
	is a submanifold of \( G \times_{G_q} (\LieA{m}^* \times \CotBundle S) \).
	In this expression, \( (G_p) \) still denotes the conjugacy class of \( G_p \) in \( G \) and not in \( G_q \).
	Since the action is linear and \( G_q \) compact, \cref{prop:slice:sliceTheoremLinearAction} implies that the \( G_q \)-action on \( \LieA{m}^* \times \CotBundle S \) admits a slice at every point.
	The existence of these slices and \cref{prop::compactLieSubgroup:conjugatedSubgroupEqual} show that \( (\LieA{m}^* \times \CotBundle S)_{(G_p)} \) is a submanifold of \( \LieA{m}^* \times \CotBundle S \) (this follows from a similar argument as in the proof of \cref{prop:slice:orbitTypeSubsetIsSubmanifold}).
	Finally, the \( G \)-quotient is a smooth manifold, because it is locally identified with \( (\LieA{m}^* \times \CotBundle S)_{(G_p)} \slash G_q\).
\end{proof}

As we will see next, the existence of lifted slices will reduce the study of the local structure of the strata in the reduced space to the analysis of the linear action of a compact group on a symplectic vector space.
It is quite remarkable that, in this linear setting, the usual finite-dimensional result about symplectic strata directly generalizes to the infinite-dimensional realm without any further assumptions.
\begin{lemma}
	\label{prop:singularReduction:linear}
	Let \( (Y, \omega) \) be a symplectic Fréchet space and let \( G \) be a compact Lie group that acts linearly and symplectically on \( Y \).
	Then, the \( G \)-action has an unique (up to a constant) equivariant momentum map \( J: Y \to \LieA{g}^* \) given by
	\begin{equation}
		\label{eq:singularReduction:linear:momentumMap}
		\kappa(J(y), \xi) = \frac{1}{2} \omega(y, \xi \ldot y), \quad y \in Y, \xi \in \LieA{g}.
	\end{equation}	
	Moreover, for every orbit type \( (H) \), the subset \( Y_{(H), 0} \defeq Y_{(H)} \intersect J^{-1}(0) \) is a smooth submanifold of \( Y \) and there exists a unique smooth manifold structure on \( \check{Y}_{(H), 0} \defeq Y_{(H), 0} \slash G \) such that the natural projection \( \pi_{(H)}: Y_{(H), 0} \to \check{Y}_{(H), 0} \) is a smooth submersion.
	Furthermore, \( \check{Y}_{(H), 0} \) carries a symplectic form \( \check{\omega}_{(H)} \) uniquely determined by
	\begin{equation+}
		\label{eq:singularReduction:linear:reducedSymplectic}
		\pi_{(H)}^* \check{\omega}_{(H)} = \restr{\omega}{Y_{(H), 0}}.
		\qedhere
	\end{equation+}
\end{lemma}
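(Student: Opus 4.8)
The plan is to reduce everything to a finite-dimensional symplectic representation of the compact group $G$, where the corresponding statements are classical, and then transport them back. First I would establish the momentum map formula \eqref{eq:singularReduction:linear:momentumMap}. For a linear symplectic action the fundamental vector fields are linear, so $\xi \ldot y$ depends linearly on $y$; differentiating the defining relation $\kappa(J(y),\xi) = \frac{1}{2}\omega(y,\xi\ldot y)$ and using that $G$ acts symplectically (so $\omega(\xi\ldot y, z) + \omega(y,\xi\ldot z) = 0$) shows that the quadratic $J$ given by this formula satisfies $\dif J^\xi = \iota_{\xi\ldot\,}\omega$, hence is a genuine momentum map; equivariance follows from the invariance of $\omega$ and the $\mathrm{Ad}$-equivariance of the pairing $\kappa$. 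Uniqueness up to an additive constant is the standard argument that the difference of two momentum maps is a $G$-invariant element of $\LieA{g}^*$, and since we have normalized $J$ to be quadratic (vanishing at the origin) it is unique on the nose.

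Next I would exploit compactness of $G$ to put the problem into finite-dimensional form. The key point is that each fundamental vector field $\xi \ldot y$ and each component $\kappa(J(\cdot),\xi)$ involve $y$ only through a finite-dimensional piece of the orbit structure; more precisely, for a fixed $y$ the orbit $G\cdot y$ is the image of the compact group $G$ under a smooth map, and the relevant local model is governed by the finite-dimensional representation of $G$ on $\TBundle_y Y / (\text{tangent to orbit})$. Using the invariant decomposition available for compact groups (as in \cref{prop:cotangentBundle:compactGroupReducitive}, averaging over Haar measure to produce an $\AdAction_G$-invariant, $G$-invariant complement), I would build a $G$-invariant splitting of $Y$ adapted to the isotropy type $(H)$ near a point $y \in Y_{(H),0}$. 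Because $G$ is compact, a slice for the linear action exists (this is exactly the content invoked via \cref{prop:slice:sliceTheoremLinearAction}), so the local picture of $Y_{(H)}$ reduces to the fixed-point set of $H$ together with its complementary isotypical summand, all finite-dimensional in the directions that matter.

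With this slice in hand, the submanifold claims follow by the same argument used in \cref{prop:cotangentBundle:orbitTypeStrataAreManifolds}: $Y_{(H)}$ is locally the $G$-saturation of the fixed-point manifold $Y^H$, hence a submanifold, and intersecting with the level set $J^{-1}(0)$ cuts out $Y_{(H),0}$ cleanly because on the isotropy-type stratum the relevant components of $J$ are submersive transverse to the orbit directions. The quotient $\check{Y}_{(H),0} = Y_{(H),0}/G$ then inherits a smooth structure with $\pi_{(H)}$ a submersion, since locally it is identified with a slice quotient by the compact group $H$. For the reduced symplectic form, I would verify that $\restr{\omega}{Y_{(H),0}}$ is basic: it is $G$-invariant by hypothesis, and its kernel along $Y_{(H),0}$ is exactly the tangent to the $G$-orbit, which is the standard linear-algebra fact that on $J^{-1}(0)$ the symplectic orthogonal of $\ker\tangent J$ equals the orbit tangent space $\LieA{g}\ldot y$. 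This last point — controlling the kernel of the restricted form and confirming that it coincides precisely with the orbit directions so that $\check{\omega}_{(H)}$ is nondegenerate — is where I expect the real work to lie, since in the Fréchet setting one must be careful that the symplectic orthogonal behaves well; but the saving grace here is that $G$ is compact and acts linearly, so all the troublesome directions are finite-dimensional and the classical Marsden--Weinstein reduction argument for compact groups on a symplectic vector space applies verbatim. The uniqueness of $\check{\omega}_{(H)}$ satisfying \eqref{eq:singularReduction:linear:reducedSymplectic} is then immediate from surjectivity of $\tangent\pi_{(H)}$.
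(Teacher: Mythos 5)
Your ingredients coincide with the paper's: the direct verification of the quadratic momentum map formula, the slice theorem for linear actions of compact groups (\cref{prop:slice:sliceTheoremLinearAction}), the Bifurcation \cref{prop:bifurcationLemma}, and \cref{prop:bifuractionLemma:symplecticSubspace} for nondegeneracy of the reduced form. Your key observation is also correct and well placed: since \( G \) is compact, the orbits \( \LieA{g} \ldot y \) and the target \( \LieA{g}^* \) are finite-dimensional, and a finite-dimensional subspace of a weakly symplectic space is automatically symplectically closed, so the strong version of the bifurcation lemma applies with no additional hypothesis --- this is precisely why the paper can state the lemma without the symplectic-closedness assumption that appears elsewhere. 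The treatment of existence, equivariance and uniqueness of \( J \), and of the basic-form argument for \( \check{\omega}_{(H)} \), is fine.

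The genuine gap is the step where you assert that \( J^{-1}(0) \) ``cuts out \( Y_{(H),0} \) cleanly because \ldots\ the relevant components of \( J \) are submersive transverse to the orbit directions,'' reinforced by the claim that the classical Marsden--Weinstein argument ``applies verbatim.'' In a Fréchet space, surjectivity of \( \tangent_y J \) onto the annihilator of \( \LieA{h} \) does \emph{not} give a manifold structure on the level set: there is no implicit function theorem beyond the Banach setting, and this failure is the entire reason the lemma requires a proof at all (the paper stresses this obstacle repeatedly in \cref{Setting}). The paper's route avoids any implicit function theorem: using the slice \( S \) at \( y \), one exhibits an \emph{explicit} local model, showing that \( Y_{(H),0} \) near a point \( y \in Y_H \intersect J^{-1}(0) \) is modeled on \( G \times V_H \) with \( V = \TBundle_y S \intersect \ker \tangent_y J \), exploiting that \( J \) is a concrete quadratic map into the finite-dimensional \( \LieA{g}^* \) rather than merely a map with surjective differential; \( V_H \) is then the local model of \( \check{Y}_{(H),0} \), and \cref{prop:bifuractionLemma:symplecticSubspace} makes it symplectic. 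Relatedly, ``all the troublesome directions are finite-dimensional'' overstates your reduction: the slice directions, and in particular \( V_H \), remain infinite-dimensional; what is finite-dimensional is the group, its orbits, and the target of \( J \), and it is those three finite-dimensional objects --- not a finite-dimensional local picture --- that carry the argument.
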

\begin{proof}
	We only sketch the main ideas and refer the reader to \parencite{DiezThesis} for the complete proof.
	To prove that~\eqref{eq:singularReduction:linear:momentumMap} defines an equivariant momentum map is a simple exercise which we leave to the reader.
	The proof of the remaining part is based on the fact that a linear action of a compact Lie group on a Fréchet space admits a slice at every point, see \cref{prop:slice:sliceTheoremLinearAction}, and on the Bifurcation \cref{prop:bifurcationLemma}.
	Using these tools, one shows that \( Y_{(H), 0} \) near a point \( y \in Y_H \intersect J^{-1}(0) \) is modeled on \( G \times V_H \), where \( V = \TBundle_y S \intersect \ker \tangent_y J \) and \( S \) is a slice at \( y \).
	Correspondingly, \( V_H \) is the model space of \( \check{Y}_{(H), 0} \) near \( \equivClass{y} \).
	Finally,~\eqref{eq:singularReduction:linear:reducedSymplectic} uniquely defines a closed \( 2 \)-form \( \check{\omega}_{(H)} \) on \( \check{Y}_{(H), 0} \) because \( \pi_{(H)} \) is a submersion.	
	By \cref{prop:bifuractionLemma:symplecticSubspace}, \( V_H \) is a symplectic subspace of \( (Y, \omega) \) and so \( \check{\omega}_{(H)} \) is symplectic.
\end{proof}

With this preparation at hand, we now pass to the cotangent bundle case.
\begin{lemma}
	\label{prop:cotangentBundle:orbitTypeMomentumLevelSetIsManifold}
	Under the assumptions of \cref{prop:cotangentBundle:singularSympRed} the following holds.
	For every orbit type \( (K) \) of the lifted \( G \)-action, the set
	\begin{equation}
		P_{(K)} \defeq (\CotBundle Q)_{(K)} \intersect J^{-1}(0) = (J^{-1}(0))_{(K)}
	\end{equation}
	is a smooth submanifold of \( (\CotBundle Q)_{(K)} \).
	Moreover, there exists a unique smooth manifold structure on
	\begin{equation}
		\check{P}_{(K)} \defeq P_{(K)} \slash G
	\end{equation}
	such that the natural projection \( \pi_{(K)}: P_{(K)} \to \check{P}_{(K)} \) is a smooth submersion.
\end{lemma}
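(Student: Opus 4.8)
The plan is to localize the problem with the normal form of \cref{prop:cotangentBundle:simpleNormalForm} and then to reduce it to the linear symplectic reduction of \cref{prop:singularReduction:linear}. Fix \( p \in P_{(K)} \) with base point \( q \in Q \) and a slice \( S \) at \( q \) compatible with the cotangent bundle structures. Since the diffeomorphism \( \Phi: G \times_{G_q} (\LieA{m}^* \times \CotBundle S) \to \CotBundle Q \) is \( G \)-equivariant and maps onto an open neighborhood of \( p \), it suffices to prove the claim in the local model. By the normal form \eqref{eq:cotangentBundle:normalFormMomentumMap} of the momentum map together with the weak complementarity \( \LieA{g}^* = \LieA{m}^* \oplus \LieA{g}_q^* \) of \cref{prop:cotangentBundle:compactGroupReducitive} and the invertibility of \( \CoAdAction_a \), the equation \( (J \circ \Phi)(\equivClass{a, (\nu, \alpha_s)}) = 0 \) is equivalent to \( \nu = 0 \) and \( J_{G_q}(\alpha_s) = 0 \). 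Hence the zero level set in the model is \( (J \circ \Phi)^{-1}(0) = G \times_{G_q} (\{0\} \times J_{G_q}^{-1}(0)) \).

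Next I would intersect this with the orbit type stratum. A point \( (0, \beta) \) with \( \beta \in J_{G_q}^{-1}(0) \) has \( G_q \)-stabilizer \( (G_q)_\beta \), because \( 0 \in \LieA{m}^* \) is fixed by all of \( G_q \); combining this with the description \eqref{eq:cotangentBundle:orbitTypeStrataLocalModel} of the orbit type stratum yields
\[
	P_{(K)} \isomorph G \times_{G_q}\bigl(\{0\} \times (J_{G_q}^{-1}(0))_{(G_p)}\bigr) \isomorph G \times_{G_q} (J_{G_q}^{-1}(0))_{(G_p)} ,
\]
where on the right the orbit type is taken with respect to the \( G_q \)-action on \( \CotBundle S \). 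Exactly as in the proof of \cref{prop:cotangentBundle:orbitTypeStrataAreManifolds}, the associated bundle is locally trivial, so the claim reduces to the corresponding statement for the \( G_q \)-action on \( \CotBundle S \): it suffices to show that \( (J_{G_q}^{-1}(0))_{(G_p)} \) is a submanifold of \( \CotBundle S \) lying in the orbit type stratum and that its \( G_q \)-quotient carries a smooth structure for which the projection is a submersion, since then \( \check{P}_{(K)} \) is locally modeled on \( (J_{G_q}^{-1}(0))_{(G_p)} \slash G_q \).

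It remains to treat the \( G_q \)-space \( \CotBundle S \). As already used in the proof of \cref{prop:cotangentBundle:orbitTypeStrataAreManifolds}, the linear slice theorem \cref{prop:slice:sliceTheoremLinearAction} identifies \( S \), \( G_q \)-equivariantly, with a \( G_q \)-invariant open neighborhood of the origin in \( X \isomorph \TBundle_q S \) carrying the linear isotropy representation. Since the cotangent bundle and its canonical symplectic form are natural, this identifies \( (\CotBundle S, \omega^S) \), \( G_q \)-equivariantly and symplectically, with a \( G_q \)-invariant open subset of the linear symplectic Fréchet space \( X \times X^* \) with its canonical symplectic form and the linear cotangent-lifted \( G_q \)-action; a short computation shows that the quadratic momentum map \eqref{eq:singularReduction:linear:momentumMap} there equals \( J_{G_q} \), the factor \( \frac{1}{2} \) being absorbed by the symmetry of the pairing. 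Because the constructions in \cref{prop:singularReduction:linear} are built from slices and are therefore local, they restrict verbatim to the invariant open subset \( \CotBundle S \). This gives that \( (J_{G_q}^{-1}(0))_{(G_p)} \) is a smooth submanifold and that its \( G_q \)-quotient is a smooth manifold with submersive projection; the difference between \( G \)-conjugacy and \( G_q \)-conjugacy of the isotropy types is reconciled locally by \cref{prop::compactLieSubgroup:conjugatedSubgroupEqual}, as before. Transporting everything back through \( \Phi \) yields the assertion, and uniqueness of the smooth structure on \( \check{P}_{(K)} \) follows from the submersion property.

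The main obstacle I anticipate lies in the last paragraph: one must verify that \( \CotBundle S \) embeds as an \emph{open} piece of a \emph{linear} symplectic \( G_q \)-space with the correct canonical symplectic form and momentum map, so that \cref{prop:singularReduction:linear}---stated for the full linear space---may legitimately be invoked on an invariant open subset. Everything else amounts to tracking the identifications already supplied by the normal form and by the orbit type analysis of \cref{prop:cotangentBundle:orbitTypes,prop:cotangentBundle:orbitTypeStrataAreManifolds}.
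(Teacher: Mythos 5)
Your proposal is correct and follows essentially the same route as the paper's proof: localize via the normal form \( \Phi \) of \cref{prop:cotangentBundle:simpleNormalForm}, use \eqref{eq:cotangentBundle:normalFormMomentumMap} and \eqref{eq:cotangentBundle:orbitTypeStrataLocalModel} to identify \( P_{(K)} \) locally with \( G \times_{G_q} \bigl( \set{0} \times (J^{-1}_{G_q}(0))_{(K)} \bigr) \), and then invoke \cref{prop:singularReduction:linear} for the compact group \( G_q \) acting on \( \CotBundle S \). Your final paragraph merely makes explicit what the paper leaves implicit --- that \( \CotBundle S \) is a \( G_q \)-invariant open subset of the linear symplectic space \( X \times X^* \) on which the momentum map \eqref{eq:singularReduction:linear:momentumMap} coincides with \( J_{G_q} \), so the local constructions of \cref{prop:singularReduction:linear} restrict --- which is a legitimate and welcome verification, not a deviation.
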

\begin{proof}
	Let \( p \in P_{(K)} \) and let \( q \in Q \) be its base point.
	Under the local diffeomorphism \( \Phi \) established in \cref{prop:cotangentBundle:simpleNormalForm}, the set \( P_{(K)} \) is identified with
	\begin{equation}
		\left(G \times_{G_q} (\LieA{m}^* \times \CotBundle S)\right)_{(K)} \intersect (J \circ \Phi)^{-1}(0) 
			= G \times_{G_q} \left( \set{0} \times (J^{-1}_{G_q}(0))_{(K)} \right).
	\end{equation}
	This equality is a direct consequence of~\eqref{eq:cotangentBundle:orbitTypeStrataLocalModel} and~\eqref{eq:cotangentBundle:normalFormMomentumMap}. 
	By \cref{prop:singularReduction:linear}, the orbit type subset \( (J^{-1}_{G_q}(0))_{(K)} = (\CotBundle S)_{(K)} \intersect J^{-1}_{G_q}(0) \) is a smooth submanifold of \( \CotBundle S \).
	By the same \lcnamecref{prop:singularReduction:linear}, the quotient space \( (J^{-1}_{G_q}(0))_{(K)} \slash G_q \), which is the model space of \( \check{P}_{(K)} \), is a smooth manifold, too.
\end{proof}

\begin{lemma}
	\label{prop:cotangentBundle:frontierConditionDerivedFromQ}
	Under the assumptions of \cref{prop:cotangentBundle:singularSympRed},	the orbit type decompositions of \( \check{Q} \), \( P \) and \( \check{P} \) satisfy the frontier condition.
\end{lemma}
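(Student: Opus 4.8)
The plan is to treat the four decompositions in two groups: the quotients $\check{Q}$ and $\check{P}$ will be reduced to their total spaces $Q$ and $P$ by a general descent principle, while the genuinely new content is the frontier condition for $P = J^{-1}(0)$. For the descent principle I would first record the elementary topological fact that the orbit map $\pi \colon M \to M \slash G$ of a Fréchet $G$-space is an open continuous surjection, and that for such a map $\pi^{-1}(\overline{A}) = \overline{\pi^{-1}(A)}$ for every subset $A$ of $M \slash G$ (the inclusion ``$\subseteq$'' uses openness, ``$\supseteq$'' continuity). Since the orbit type subsets are $G$-saturated, $\pi^{-1}(\check{M}_{(H)}) = M_{(H)}$, and a short implication chase then transports the frontier condition from $M$ to $M \slash G$: if $\check{M}_{(H)} \intersect \overline{\check{M}_{(K)}} \neq \emptyset$, applying $\pi^{-1}$ gives $M_{(H)} \intersect \overline{M_{(K)}} \neq \emptyset$, the frontier condition on $M$ yields $M_{(H)} \subseteq \overline{M_{(K)}} = \pi^{-1}(\overline{\check{M}_{(K)}})$, and applying $\pi$ (using surjectivity) gives $\check{M}_{(H)} \subseteq \overline{\check{M}_{(K)}}$. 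Applied to $M = Q$ (frontier condition assumed) this settles $\check{Q}$; applied to $M = P$ it will reduce $\check{P}$ to $P$.

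For $P$ itself I would argue locally and then globalize. Around a point $p \in P$ over $q \in Q$, the normal form of \cref{prop:cotangentBundle:simpleNormalForm} together with the identification of $J \circ \Phi$ identifies $P$ with $G \times_{G_q} J_{G_q}^{-1}(0)$ and its orbit type pieces with $G \times_{G_q} (J_{G_q}^{-1}(0))_{(K)}$, exactly as in the proof of \cref{prop:cotangentBundle:orbitTypeMomentumLevelSetIsManifold} (recall also that, by \cref{prop:cotangentBundle:orbitTypeLevelSetSameAsBase}, the orbit types of $P$ are those of $Q$). Because the quotient map $G \times J_{G_q}^{-1}(0) \to G \times_{G_q} J_{G_q}^{-1}(0)$ and the $G$-orbit map are open, closures commute with the associated-bundle construction, so $\overline{G \times_{G_q} (J_{G_q}^{-1}(0))_{(K')}} = G \times_{G_q} \overline{(J_{G_q}^{-1}(0))_{(K')}}$, and the local frontier condition for $P$ is equivalent to the frontier condition for the orbit type decomposition of $J_{G_q}^{-1}(0)$ under the compact group $G_q$. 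Here $G_q$ is finite-dimensional and acts linearly and symplectically (after the $G_q$-equivariant identification $\CotBundle S \isomorph \TBundle S$), so \cref{prop:singularReduction:linear} and the Bifurcation \cref{prop:bifurcationLemma} reduce this to the frontier condition for a linear symplectic representation of a compact Lie group, the situation analyzed in \parencite{DiezThesis}. This is where the assumed frontier condition for $Q$ enters, since the base and fiber directions of this cotangent-type representation carry the $G_q$-representation underlying the slice, whose orbit type decomposition is inherited from that of $Q$ through the tube diffeomorphism.

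Finally, I would globalize the local frontier condition by a connectedness argument. Fixing two orbit types $(K)$ and $(K')$, the set $A \defeq P_{(K)} \intersect \overline{P_{(K')}}$ is closed in $P_{(K)}$ because $\overline{P_{(K')}}$ is closed, and it is open in $P_{(K)}$ because the local frontier condition established above shows that every $p \in A$ has a normal-form neighborhood $U$ with $U \intersect P_{(K)} \subseteq \overline{P_{(K')} \intersect U} \subseteq \overline{P_{(K')}}$; here one uses that, since $U$ is open, $\overline{P_{(K')}} \intersect U$ coincides with the closure of $P_{(K')} \intersect U$ taken within $U$. Hence $A$ is clopen in $P_{(K)}$, so, the strata being the connected components of the orbit type subsets (\cref{def:stratification:stratification}), $A \neq \emptyset$ forces $A$ to exhaust the relevant components, which is precisely the frontier condition; the same clopen argument applies verbatim to $Q$ and, after the descent step, yields $\check{P}$. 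The main obstacle is the middle step: unlike in finite dimensions, a compact group acting linearly on an infinite-dimensional symplectic space may exhibit infinitely many orbit types and fail local finiteness, so the frontier condition for $J_{G_q}^{-1}(0)$ is not automatic and must be extracted from the general infinite-dimensional machinery of \parencite{DiezThesis} together with the frontier hypothesis on $Q$. Once this local input is secured, the descent and clopen arguments are routine.
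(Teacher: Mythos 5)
Your descent step for \( \check{Q} \) and \( \check{P} \) is correct and coincides with the paper's (which simply cites \parencite[Theorem~4.6]{DiezSlice}; your open-map identity \( \pi^{-1}(\closureSet{A}) = \closureSet{\pi^{-1}(A)} \) is exactly the fact used there). The genuine gap is the middle step, and you flag it yourself: the frontier condition for the local model is never actually proved. \cref{prop:singularReduction:linear} and the Bifurcation \cref{prop:bifurcationLemma} deliver the manifold structure of the strata of \( J_{G_q}^{-1}(0) \) and the reduced symplectic forms — they say nothing about closures, and the paper does not extract the needed approximation from any general theory of linear symplectic representations. Instead, the paper proves the pointwise characterization \( p \in \closureSet{P_{(K)}} \) if and only if \( (G_p) \geq (K) \), via three concrete moves absent from your sketch: (i) since \( \xi \ldot q = 0 \) for all \( \xi \in \LieA{g}_q \), the momentum map \( J_{G_q} \) vanishes identically on the whole fiber \( \CotBundle_q S \), so the approximating points can be chosen inside that fiber, where the nonlinear momentum constraint is automatically satisfied — this is precisely what makes approximation within \( J^{-1}(0) \) feasible; (ii) the \( G \)-equivariant identification \( \CotBundle Q \isomorph \TBundle Q \) together with scaling by some \( r \in \R \) transports the covector \( \alpha \) to a point \( rX \in S \) with the same stabilizer; (iii) the frontier condition on \( Q \), pushed through the tube diffeomorphism \( \chi^S \), yields the approximation property \( S_{(H)} \subseteq \closureSet{S_{(K)}} \) in the slice, whence \( rX \in \closureSet{(\TBundle_q S)_{(K)}} \). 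Without (i), you would have to approximate \( \alpha \) by orbit-type-\( (K) \) solutions of the momentum constraint, which is exactly the hard problem your proposal leaves open.

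Your globalization by clopen-ness is also flawed, and in the paper's approach unnecessary. \cref{def:stratification:stratification} does not declare strata to be connected components; the lemma asserts the frontier condition for the decomposition into the orbit type subsets \( P_{(K)} \), which need not be connected. A clopen subset of a disconnected \( P_{(K)} \) can be a nonempty proper union of components, so your argument would at best establish the frontier condition for the decomposition into components — a different statement from the one claimed. The pointwise closure characterization above is already global: if \( P_{(K')} \) meets \( \closureSet{P_{(K)}} \), then \( (K') \geq (K) \), and then \emph{every} point of \( P_{(K')} \) lies in \( \closureSet{P_{(K)}} \), with the reverse non-intersection following from antisymmetry of the partial order on orbit types. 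No local-to-global step is needed once that characterization is in hand.
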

\begin{proof}
	The quotient \( \check{Q} \) inherits the frontier condition from \( Q \) by \parencite[Theorem~4.6]{DiezSlice}.
	Similarly, \( \check{P} \) inherits the frontier condition from \( P \).
	Let \( p \in P \).
	We have to show that \( p \in \closureSet{P_{(K)}} \) if and only if \( (G_p) \geq (K) \).
	Denote the base point of \( p \) by \( q \), abbreviate \( H \equiv G_q \) and let \( \Phi \) be the local diffeomorphism constructed in \cref{prop:cotangentBundle:simpleNormalForm}.
	As we have seen in the proof of \cref{prop:cotangentBundle:orbitTypeMomentumLevelSetIsManifold}, \( P_{(K)} \) is locally identified with
	\begin{equation}
		 G \times_{H} \left(\set{0} \times (J^{-1}_{H} (0))_{(K)} \right).
	\end{equation}
	Moreover, \( \closureSet{P_{(K)}} \) is locally identified with
	\begin{equation}
		 G \times_{H} \left(\set{0} \times \closureSet{(J^{-1}_{H} (0))_{(K)}} \right),
	\end{equation}
	because the quotient map \( G \times \CotBundle S \to G \times_H \CotBundle S \) is open and \( f^{-1}(\closureSet{A}) = \closureSet{f^{-1}(A)} \) for every open continuous map \( f: Y \to Z \) and every subset \( A \subseteq X \).
	Write \( p = \Phi(\equivClass{e, (0, \alpha)}) \), where \( \alpha \in \CotBundle_q S \) satisfies \( J_{H} (\alpha) = 0 \).
	Since \( G_p = H_\alpha \), it is enough to show that \( \alpha \) lies in the closure of \( (J^{-1}_{H} (0))_{(K)} \) if and only if \( (H_\alpha) \geq (K) \).

	First, suppose that \( \alpha \in \closureSet{(J^{-1}_{H} (0))_{(K)}} \).
	Since \( H \) is compact, there exists a slice at \( \alpha \) and hence a neighborhood of \( \alpha \) in \( \CotBundle S \) such that every point in this neighborhood has a stabilizer subconjugate to \( H_\alpha \), see \parencite[Proposition~2.4]{DiezSlice}.
	However, by assumption, \( (J^{-1}_{H} (0))_{(K)} \) has to intersect this neighborhood and thus \( (K) \leq (H_\alpha) \).

	For the converse direction, we first need to establish a result about the orbit type decomposition of \( S \).
	Since the orbit type stratification of \( Q \) satisfies the frontier condition, we have \( \bigUnion_{(H) \geq (K)} Q_{(H)} \subseteq \closureSet{Q_{(K)}} \) for every orbit type \( (K) \) of \( Q \).
	We will show now that we get a similar approximation property in the slice.
	For this purpose, let \( (H) > (K) \) be orbit types of the action on \( Q \) and let \( s \in S_{(H)} \).
	For every open neighborhood \( V \) of \( s \) in \( S \), the image \( \chi^S(U \times V) \) under the local slice diffeomorphism \( \chi^S: U \times S \to Q \) is an open neighborhood of \( s \) in \( Q \).
	Since \( Q_{(H)} \subseteq \closureSet{Q_{(K)}} \), the intersection
	\begin{equation}
		\chi^S(U \times V) \intersect Q_{(K)} = \chi^S(U \times V_{(K)})
	\end{equation}
	is non-empty.
	Thus, \( V_{(K)} \) is non-empty and we have shown that \( S_{(H)} \subseteq \closureSet{S_{(K)}} \) for all orbit types \( (H) > (K) \).
	
	Now, suppose that \( (K) \) is an orbit type of \( P \) with \( (K) \leq (H_\alpha) \).
	By \cref{prop:cotangentBundle:orbitTypeLevelSetSameAsBase}, \( (K) \) is also an orbit type of the \( G \)-action on \( Q \).
	We will show that every open neighborhood \( W \) of \( \alpha \) in \( \CotBundle_q S \) has non-empty intersection with \( (J^{-1}_H (0))_{(K)} \). 
	Note that the momentum map \( J_H \) vanishes on the whole fiber \( \CotBundle_q S \), because \( q \) has stabilizer \( H \).
	Thus, it suffices to show that \( \alpha \) lies in the closure of \( (\CotBundle_q S)_{(K)} \).
	Since \( \CotBundle Q \) is \( G \)-equivariantly diffeomorphic to \( \TBundle Q \), there exists an \( H \)-equivariant diffeomorphism of \( \CotBundle_q S \) and \( \TBundle_q S \).
	Let \( X \in \TBundle_q S \) be the image of \( \alpha \) under this diffeomorphism.
	Since \( S \) is (diffeomorphic to) an open subset of \( \TBundle_q S \), there exists a non-zero \( r \in \R \) such that \( r X \in S \).
	Note that scaling by \( r \) is an \( H \)-equivariant diffeomorphism of \( \TBundle_q S \).
	In particular, the stabilizer of \( rX \) coincides with \( H_{\alpha} \).
	Thus, in summary, we have reduced the problem to showing that \( rX \) lies in the closure of \( (\TBundle_q S)_{(K)} \).
	But, as \( (H_{rX}) = (H_\alpha) \geq (K) \), we have
	\begin{equation}
		rX \in \closureSet{S_{(K)}} \subseteq \closureSet{(\TBundle_q S)_{(K)}}\, ,
	\end{equation}
	using the approximation property \( S_{(H_\alpha)} \subseteq \closureSet{S_{(K)}} \).
\end{proof}

\begin{remark}
	In the paper \parencite{PerlmutterRodriguez-OlmosSousa-Dias2007} it was silently taken for granted that the decomposition of \( Q \) into orbit types always satisfies the frontier condition.
	However, in examples, there may be an orbit type whose closure contains some but not all fixed points; and hence the frontier condition is violated in these cases (see \eg, \parencite[Example~17 and Remark~13]{CrainicMestre2017}).
\end{remark}

It remains to show the last point in \cref{prop:cotangentBundle:singularSympRed}.
\begin{lemma}
	Assume that every orbit is symplectically closed, that is, the symplectic double orthogonal \( (\LieA{g} \ldot p)^{\omega \omega} \)coincides with \( \LieA{g} \ldot p \).
	Then, for every orbit type \( (K) \), there exists a symplectic form \( \check{\omega}_{(K)} \) on \( \check{P}_{(K)} \) uniquely determined by
	\begin{equation+}
		\label{eq:cotangentBundle:symplecticStrata:symplecticFormDescends}
		\pi_{(K)}^* \check{\omega}_{(K)} = \restr{\omega}{P_{(K)}}.
		\qedhere
	\end{equation+}
\end{lemma}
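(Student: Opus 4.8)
The plan is to construct $\check{\omega}_{(K)}$ by exhibiting $\restr{\omega}{P_{(K)}}$ as a basic form for the submersion $\pi_{(K)}$, and then to settle non-degeneracy by transporting the problem, via the normal form $\Phi$, to the linear model already treated in \cref{prop:singularReduction:linear}. Throughout I fix $p \in P_{(K)}$ with base point $q$, and I write $\LieA{g}_q$ for the stabilizer algebra and $\LieA{m}$ for an $\AdAction_{G_q}$-invariant complement as in \cref{prop:cotangentBundle:simpleNormalForm}.

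First I would establish existence and uniqueness of $\check{\omega}_{(K)}$ as a closed form. Since $\pi_{(K)}: P_{(K)} \to \check{P}_{(K)}$ is a surjective submersion by \cref{prop:cotangentBundle:orbitTypeMomentumLevelSetIsManifold}, the pullback $\pi_{(K)}^*$ is injective on forms, so a solution of~\eqref{eq:cotangentBundle:symplecticStrata:symplecticFormDescends} is automatically unique and, once it exists, closed (because $\pi_{(K)}^* \dif \check{\omega}_{(K)} = \dif \restr{\omega}{P_{(K)}} = 0$). For existence it suffices to check that $\restr{\omega}{P_{(K)}}$ is $G$-invariant and horizontal. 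Invariance is immediate, as the lifted action preserves both $\omega$ and $P_{(K)}$. For horizontality, note that the fibers of $\pi_{(K)}$ are the $G$-orbits, so the vertical space at $p$ is $\LieA{g} \ldot p$; given $\xi \in \LieA{g}$ and $w \in \TBundle_p P_{(K)}$, the defining relation of the momentum map gives $\omega_p(\xi \ldot p, w) = \kappa(\tangent_p J(w), \xi)$, and this vanishes because $J \equiv 0$ on $P_{(K)} \subseteq J^{-1}(0)$ forces $\tangent_p J(w) = 0$. Hence $\LieA{g} \ldot p$ lies in the radical of $\restr{\omega}{P_{(K)}}$, the form is basic, and it descends to a unique closed $2$-form $\check{\omega}_{(K)}$.

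The main obstacle is non-degeneracy, and here I would not argue abstractly but reduce to the linear slice model. Using \cref{prop:cotangentBundle:simpleNormalForm} I write $p = \Phi(\equivClass{e, (0, \alpha)})$ with $J_{G_q}(\alpha) = 0$; as in the proof of \cref{prop:cotangentBundle:orbitTypeMomentumLevelSetIsManifold}, $P_{(K)}$ is identified near $p$ with $G \times_{G_q} \bigl( \set{0} \times (J_{G_q}^{-1}(0))_{(K)} \bigr)$ and $\check{P}_{(K)}$ with $(J_{G_q}^{-1}(0))_{(K)} \slash G_q$. Inserting $\nu = 0$ and $J_{G_q}(\alpha) = 0$ into the normal form~\eqref{eq:cotangentBundle:normalFormSymplecticForm} of the symplectic structure and restricting to tangent vectors of $P_{(K)}$, whose $\LieA{m}^*$-component vanishes, all terms involving $\LieA{m}$ and $\LieA{m}^*$ drop out, leaving precisely $\restr{\omega^S}{(J_{G_q}^{-1}(0))_{(K)}}$, where $\omega^S$ is the canonical symplectic form of $\CotBundle S$. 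Since $G_q$ is compact and acts linearly on the symplectic Fréchet space $\CotBundle S$, \cref{prop:singularReduction:linear} applies and shows that this restricted form descends to a weakly symplectic form on $(J_{G_q}^{-1}(0))_{(K)} \slash G_q$. Under the identifications above that descended form is exactly $\check{\omega}_{(K)}$, proving non-degeneracy.

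Finally I would spell out the conceptual content and the precise role of the hypothesis. The normal form computes the radical of $\restr{\omega}{P_{(K)}}$ at $p$ as $\LieA{m} \times \set{0} \times (\LieA{g}_q \ldot \alpha)$, and by \cref{prop:singularReduction:linear} this equals the vertical space $\LieA{g} \ldot p$, which is exactly non-degeneracy of $\check{\omega}_{(K)}$. I would emphasise why the normal form cannot be bypassed: the symplectically closed hypothesis supplies the identity $(\LieA{g} \ldot p)^{\omega \omega} = \LieA{g} \ldot p$, equivalently $(\ker \tangent_p J)^\omega = \LieA{g} \ldot p$, but because of the orbit-type refinement one only has $\TBundle_p P_{(K)} \subsetneq \ker \tangent_p J = (\LieA{g} \ldot p)^\omega$ in general, so membership in the radical of $\restr{\omega}{P_{(K)}}$ yields $\omega$-orthogonality to $\TBundle_p P_{(K)}$ but not to all of $\ker \tangent_p J$, and the double-orthogonal identity cannot be invoked directly. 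Passing through the normal form replaces this difficulty by the corresponding question for the linear $G_q$-action on $\CotBundle S$, where the orbit-type decomposition is governed by fixed-point subspaces and \cref{prop:singularReduction:linear} disposes of both the non-degeneracy and the symplectically closed condition at once. This reduction to the linear slice model is the step I expect to carry the real weight of the argument.
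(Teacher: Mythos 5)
Your proof is correct, but the decisive non-degeneracy step takes a genuinely different route from the paper's. The first half coincides in substance: the paper also obtains horizontality of \( \restr{\omega}{P_{(K)}} \) from \( \TBundle_p P_{(K)} \subseteq \ker \tangent_p J = (\LieA{g} \ldot p)^\omega \) (it cites the Bifurcation \cref{prop:bifurcationLemma}, which is your momentum-map computation repackaged) and gets uniqueness and closedness from \( \pi_{(K)} \) being a surjective submersion. For non-degeneracy, however, the paper stays intrinsic and chart-free: it chooses a topological complement \( A \) of \( \LieA{g} \ldot p \) in \( \TBundle_p (\CotBundle Q) \), identifies \( \TBundle_{\equivClass{p}} \check{P}_{(K)} \) with \( A_{G_p} \intersect \ker \tangent_p J \) via \( \tangent_p \pi_{(K)} \), and invokes \cref{prop:bifuractionLemma:symplecticSubspace} --- this is exactly where the symplectically closed hypothesis enters its argument. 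You instead transport the problem through \( \Phi \) and \cref{prop:cotangentBundle:normalFormSymplecticForm} into the linear \( G_q \)-model and quote \cref{prop:singularReduction:linear}; your reduction of \( \Phi^* \omega \) on the local model of \( P_{(K)} \) to \( \restr{\omega^S}{(J_{G_q}^{-1}(0))_{(K)}} \) checks out, and the matching of the two descended forms follows since all projections involved are surjective submersions. What each approach buys: the paper's version is shorter and avoids coordinates, but genuinely consumes the hypothesis; yours makes visible that, once \cref{prop:singularReduction:linear} is granted, the global symplectically-closed assumption is not needed for this particular lemma --- in the compact linear model closedness is automatic, because \( \LieA{g}_q \ldot \alpha \) is finite-dimensional and finite-dimensional subspaces of a weakly symplectic space are symplectically closed, which is also why \cref{prop:singularReduction:linear} carries no such hypothesis. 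The caveats on your side are that \cref{prop:singularReduction:linear} is only sketched in the paper (its full proof is deferred to the thesis), so your argument outsources the analytic weight there, and that you apply it to \( \CotBundle S \), which is not literally a vector space; this is harmless, since the \( G_q \)-action and \( \omega^S \) extend linearly to \( X \times X^* \), and the paper itself uses the lemma in precisely this way in the proof of \cref{prop:cotangentBundle:orbitTypeMomentumLevelSetIsManifold}.
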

\begin{proof}
	Let \( p \in P_{(K)} \).
	By the Bifurcation \cref{prop:bifurcationLemma}, \( (\LieA{g} \ldot p)^\omega = \ker \tangent_p J \).
	Since \( \TBundle_p P_{(K)} \subseteq \ker \tangent_p J \), we have \( \omega(\xi \ldot p, Y) = 0 \) for all \( \xi \in \LieA{g}, Y \in \TBundle_p P_{(K)} \).
	Thus, the restriction of \( \omega \) to \( P_{(K)} \) is a \( G \)-invariant form horizontal with respect to the \( G \)-orbits and, therefore, it descends to a \( 2 \)-form \( \check{\omega}_{(K)} \) on \( \check{P}_{(K)} \).
	By construction, \( \check{\omega}_{(K)} \) satisfies~\eqref{eq:cotangentBundle:symplecticStrata:symplecticFormDescends}.
	Moreover, it is uniquely determined by this equation, because \( \pi_{(K)} \) is a surjective submersion.
	Taking the exterior differential of~\eqref{eq:cotangentBundle:symplecticStrata:symplecticFormDescends} shows that \( \check{\omega}_{(K)} \) is closed.
	It remains to show that \( \check{\omega}_{(K)} \) is non-degenerate.
	For every choice of a topological complement \( A \) of \( \LieA{g} \ldot p \) in \( \TBundle_p (\CotBundle Q) \), the projection \( \tangent_p \pi_{(K)} \) yields an isomorphism of \( A_{G_p} \intersect \ker \tangent_p J \) with \( \TBundle_{\equivClass{p}} \check{P}_{(K)} \).
	Under this isomorphism, \( \check{\omega}_{(K)} \) coincides with the restriction of \( \omega \) to \( A_{G_p} \intersect \ker \tangent_p J \).
	\Cref{prop:bifuractionLemma:symplecticSubspace} shows that this restricted form is symplectic and thus \( \check{\omega}_{(K)} \) is non-degenerate.
\end{proof}

For the special case, when the \( G \)-action on \( Q \) has only one orbit type\footnote{This assumption includes, of course, also the case of a free action.}, we obtain the infinite-dimensional counterpart to the well-known cotangent bundle reduction theorem for one orbit type \parencite[Theorem~1]{EmmrichRoemer1990}.
\todo{Added footnote and changed proof to reflect that there is no canonical choice for the cotangent bundle of \( Q \slash G \).}
\begin{thm}
	\label{prop:cotangentBundle:oneOrbitTypeReducedSpaceCotangentBundle}
	In the setting of \cref{prop:cotangentBundle:singularSympRed}, assume additionally that the \( G \)-action on \( Q \) has only one orbit type.
	Then, \( \check{Q} \equiv Q \slash G \) is a smooth manifold, and \( \check{P} \equiv J^{-1}(0) \slash G \) is symplectomorphic\footnote{As there is no canonical choice of the cotangent bundle \( \CotBundle \check{Q} \), `symplectomorphic' should be understood in the sense that there exists a non-degenerate pairing of \( \check{P} \) with \( \TBundle \check{Q} \) and that the reduced symplectic form on \( \check{P} \) coincides with the canonical symplectic structure defined by this pairing as in \cref{Cot-Bundle}.} to \( \CotBundle \check{Q} \) with its canonical symplectic structure. 	
\end{thm}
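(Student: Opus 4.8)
The plan is to construct directly the cotangent bundle structure on $\check{P}$ and to identify its canonical symplectic form with the reduced one, thereby avoiding any appeal to a global slice for the lifted action (which is not available). First I would record that $\check{Q} = Q \slash G$ is a smooth manifold: since there is a single orbit type, $Q = Q_{(H)}$, and the statement quoted in \cref{Setting} equips $\check{Q}$ with a manifold structure for which $\pi_Q \colon Q \to \check{Q}$ is a smooth submersion. The decisive simplification afforded by the one-orbit-type hypothesis is that, for a slice $S$ at $q$ with $H \defeq G_q$, every $s \in S$ satisfies $G_s \subseteq H$ by \iref{i::slice:SliceDefOnlyStabNotMoveSlice} of \cref{defn:slice:slice}, while $(G_s) = (H)$ forces $G_s = H$, a compact group conjugate to a subgroup of itself being equal to it. Hence $H$ acts trivially on $S$, so that $J_{G_q} \equiv 0$ on $\CotBundle S$ and $G \times_{G_q} \CotBundle S \isomorph (G \slash H) \times \CotBundle S$.

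Next I would define the pairing exhibiting $\check{P}$ as a cotangent bundle of $\check{Q}$. The projection $\CotBundleProj \colon \CotBundle Q \to Q$ is $G$-equivariant, restricts to $P \to Q$, and descends to a smooth bundle projection $\check{\CotBundleProj} \colon \check{P} \to \check{Q}$ (here $\check{P}$ is a manifold by \cref{prop:cotangentBundle:orbitTypeMomentumLevelSetIsManifold}). For $\equivClass{p} \in \check{P}$ lying over $\equivClass{q}$ and $\check{Y} \in \TBundle_{\equivClass{q}} \check{Q}$, I set $\dualPair{\equivClass{p}}{\check{Y}} \defeq \dualPair{p}{Y}$ for any lift $Y \in \TBundle_q Q$ of $\check{Y}$ under $\tangent_q \pi_Q$. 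This is well defined because $\ker \tangent_q \pi_Q = \LieA{g} \ldot q$ and the condition $J(p) = 0$ means precisely $\dualPair{p}{\LieA{g} \ldot q} = 0$; it is constant along the $G$-orbit by invariance of the pairing together with the $G$-invariance of $\pi_Q$.

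The main point is non-degeneracy, and here the trivial slice action does all the work. Using the normal form $\Phi$ of \cref{prop:cotangentBundle:simpleNormalForm} and~\eqref{eq:cotangentBundle:normalFormMomentumMap}, on $J^{-1}(0)$ the condition $\nu = 0$ identifies $P$ locally with $G \times_H \CotBundle S$, hence $\check{P}$ with $\CotBundle S$, while $\check{Q}$ is identified with $S$ and $\TBundle \check{Q}$ with $\TBundle S$; the lift of $Y_s \in \TBundle_s S$ is $Y_s$ itself, since $\LieA{h} \ldot s = 0$ makes $\TBundle_s S$ a horizontal complement to $\LieA{g} \ldot s = \LieA{m} \ldot s$. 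Substituting $a = e$, $\nu = 0$ and $\rho = 0$ into the defining relation~\eqref{eq:cotangentBundle:normalFormCotBundle} yields $\dualPair{\equivClass{p}}{\check{Y}} = \dualPair{\alpha_s}{Y_s}$, so the pairing becomes the tautological pairing of $\CotBundle S$ with $\TBundle S$, which is non-degenerate by the construction of $\CotBundle S$. Thus $\check{\CotBundleProj} \colon \check{P} \to \check{Q}$ together with this pairing is a cotangent bundle in the sense of \cref{sec:cotangentBundle:definition}, and I let $\check{\theta}$ and $\check{\omega} = \dif \check{\theta}$ denote the associated canonical forms.

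Finally I would match the symplectic forms. Writing $\pi_P \colon P \to \check{P}$ for the projection and using $\pi_Q \circ \CotBundleProj = \check{\CotBundleProj} \circ \pi_P$, a short computation with the defining formula $\theta_p(X) = \dualPair{p}{\tangent_p \CotBundleProj(X)}$ and the definition of the pairing gives $\pi_P^* \check{\theta} = \restr{\theta}{P}$, whence $\pi_P^* \check{\omega} = \restr{\omega}{P}$. Since $\restr{\omega}{P}$ descends to $\check{P}$, its horizontality following from $(\LieA{g} \ldot p)^\omega = \ker \tangent_p J$ by the Bifurcation \cref{prop:bifurcationLemma} together with $\TBundle_p P \subseteq \ker \tangent_p J$, and since $\pi_P$ is a surjective submersion, the reduced symplectic form furnished by \cref{prop:cotangentBundle:singularSympRed} coincides with the canonical form $\check{\omega}$. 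I expect the only genuinely delicate step to be the non-degeneracy of the pairing in the Fréchet setting; the one-orbit-type hypothesis is exactly what trivialises the slice action and reduces this to the non-degeneracy of the given pairing on $\CotBundle S \times_S \TBundle S$.
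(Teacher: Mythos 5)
Your proposal is correct and follows essentially the same route as the paper's proof: exploiting that the one-orbit-type hypothesis forces $G_s = H$ on the slice (so $H$ acts trivially, $J_{G_q} \equiv 0$, and locally $\check{P} \isomorph \CotBundle S$ over $\check{Q} \isomorph S$), defining the pairing of $\check{P}$ with $\TBundle \check{Q}$ via lifts with well-definedness coming from $J(p) = 0$, and verifying smoothness and non-degeneracy in these slice coordinates. The only cosmetic difference is the last step, where you identify the forms by the direct intrinsic computation $\pi_P^* \check{\theta} = \restr{\theta}{P}$, while the paper instead invokes the normal-form expression of the symplectic form from \cref{prop:cotangentBundle:normalFormSymplecticForm}; both reduce to the canonical structure on $\CotBundle S$.
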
  
\begin{proof}
	Let \( (H) \) denote the orbit type of \( Q \).
	By assumption, we have \( Q_{(H)} = Q \) and thus the existence of slices ensures that the quotient \( \check{Q} = Q \slash G \) is a smooth manifold, see \cref{prop:slice:orbitTypeSubsetIsSubmanifold}.
	\Cref{prop:cotangentBundle:singularSympRed} entails that every point of \( J^{-1}(0) \) has orbit type \( (H) \) and that, moreover, the reduced space \( \check{P} = J^{-1}(0) \slash G \) is a smooth manifold and carries a closed \( 2 \)-form\footnote{A priori, \( \check{\omega} \) may be degenerate as we have not assumed that the orbits be symplectically closed.} \( \check{\omega} \).
	Let \( \pi: Q \to \check{Q} \) denote the natural projection.
	In finite dimensions, the proof proceeds by showing that the map \( \CotBundle \check{Q} \to \check{P} \) defined by \( \alpha_{\equivClass{q}} \mapsto \equivClass{\pi^* \alpha_{\equivClass{q}}} \) is a symplectomorphism.
	As there is no canonical choice of the cotangent bundle \( \CotBundle \check{Q} \) in our infinite-dimensional setting, we need to dualize the argument.
	First, we claim that \( \check{P} \) is a vector bundle over \( \check{Q} \).
	Let \( q \in Q \) with \( G_q = H \) and choose a slice \( S \) at \( q \).
	In slice coordinates, we have \( Q \isomorph G \times_H S \) and so \( \check{Q} \isomorph S \), because the assumption that \( Q \) has only the orbit type \( (H) \) implies \( S = S_H \).
	On the other hand, in the proof of \cref{prop:cotangentBundle:orbitTypeMomentumLevelSetIsManifold} we have seen that \( \check{P} \) is locally identified with \( J^{-1}_H (0)_{(H)} \slash H \), where \( J_H: \CotBundle S \to \LieA{h}^* \) is the momentum map for the \( H \)-action on \( \CotBundle S \).
	Now, \( S = S_H \) implies that \( J^{-1}_H (0)_{(H)} \slash H \) coincides with \( \CotBundle S \), and that the projection \( \check{P} \to \check{Q} \) corresponds to the natural projection \( \CotBundle S \to S \).
	This shows that \( \check{P} \) is a vector bundle over \( \check{Q} \), indeed.
	Second, there is a natural pairing of \( \check{P} \) and \( \TBundle \check{Q} \) defined by
	\begin{equation}
		\dualPair{\equivClass{\beta}}{Y_{\equivClass{q}}} \defeq \dualPair{\beta_q}{\tilde{Y}_q},
	\end{equation}
	where \( q \in Q \) with \( \pi(q) = \equivClass{q} \), \( \beta_q \in J^{-1}(0) \intersect \CotBundle_q Q \), and \( \tilde{Y}_q \in \TBundle_q Q \) satisfies \( \tangent_q \pi (\tilde{Y}_q) = Y_{\equivClass{q}} \).
	The fact that \( \beta_q \in J^{-1}(0) \) ensures that this pairing is well-defined.
	Moreover, in the local coordinates discussed above, the pairing correspond to the natural pairing of \( \CotBundle S \) with \( \TBundle S \), which shows that it is smooth and non-degenerate.
	Thus, \( \check{P} \) serves as a model for the cotangent bundle of \( \check{Q} \).
	Finally, by \cref{prop:cotangentBundle:normalFormSymplecticForm}, under these identifications, the canonical symplectic form defined by the pairing (\cf \cref{Cot-Bundle}) and the reduced symplectic form \( \check{\omega} \) both are equal to the canonical symplectic form on \( \CotBundle S \).
\end{proof}

\subsection{Secondary stratification}
Examples of singular cotangent bundle reduction (such as the ones coming from lattice gauge theory \parencite{FischerRudolphSchmidt2007}) show that the natural projection from the reduced phase space \( \check{P} = \CotBundle Q \sslash_0 G \) to the reduced configuration space \( \check{Q} = Q \slash G \) is not a morphism of stratified spaces, \ie, there exist strata in \( \check{P} \) that project onto different strata in \( \check{Q} \).
In the finite-dimensional context, \textcite{PerlmutterRodriguez-OlmosSousa-Dias2007} have refined the symplectic stratification of \( \check{P} \) in such a way that the projection \( \check{P} \to  \check{Q} \) becomes a morphism of stratified spaces.
This so-called secondary stratification has also the advantage of identifying certain strata in the reduced phase space as cotangent bundles.
In this section, we construct this refined stratification in our infinite-dimensional setting.
In particular, we show that the secondary strata are submanifolds of the symplectic strata in \( \hat{P} \) (for finite-dimensional manifolds this is shown in \parencite[Theorem~7]{PerlmutterRodriguez-OlmosSousa-Dias2007}, but the proof there does not directly translate to the infinite-dimensional setting).
Moreover, we study how the secondary strata behave with respect to the ambient symplectic structure and investigate the frontier condition.
This discussion culminates in \cref{prop:cotangentBundle:singularCotangentBundleRed}.
To get a feeling for the secondary stratification, the reader might find it instructive to consult the example of the harmonic oscillator discussed in \cref{ex:cotangentBundle:harmonicOss} while the general theory is developed.
In this subsection, we continue in the setting of \cref{prop:cotangentBundle:singularSympRed}.

In \parencite{PerlmutterRodriguez-OlmosSousa-Dias2007}, a secondary stratification of the reduced phase space \( \check{P} = \CotBundle Q \sslash_0 G \) has been introduced.
Inspired by this construction, we introduce the following refined decomposition of \( \CotBundle Q \).
For orbit types \( (K) \) and \( (H) \) of \( \CotBundle Q \) and \( Q \), respectively, consider the subset
\begin{equation}
	\seam{(\CotBundle Q)}^{(K)}_{(H)} \defeq \set{p \in \CotBundle_q Q \given q \in Q_{(H)}, p \in (\CotBundle Q)_{(K)}}
\end{equation}
of the orbit type stratum \( (\CotBundle Q)_{(K)} \).
Since the projection \( \CotBundle Q \to Q \) is \( G \)-equivariant, \( \seam{(\CotBundle Q)}^{(K)}_{(H)} \) is non-empty only if \( (K) \leq (H) \).
Moreover, the union of \( \seam{(\CotBundle Q)}^{(K)}_{(H)} \) over all orbit types \( (H) \) fulfilling this condition yields the orbit type stratum \( (\CotBundle Q)_{(K)} \).
Note that the stabilizer \( G_p \) of \( p \in \CotBundle_q Q \) under the lifted \( G \)-action on \( \CotBundle Q \) coincides with the stabilizer \( [G_q]_p \) of \( p \) under the dual isotropy action of \( G_q \) on the fiber \( \CotBundle_q Q \).
Whence, we equivalently have
\begin{equation}
	\seam{(\CotBundle Q)}^{(K)}_{(H)} \defeq \set{p \in \CotBundle_q Q \given q \in Q_{(H)}, [G_q]_p \sim K}.
\end{equation}

\begin{lemma}
	\label{prop:cotangentBundle:secondaryStrataAreManifolds}
	Under the assumptions of \cref{prop:cotangentBundle:singularSympRed} the following holds.
	Let \( (K) \) be an orbit type of \( \CotBundle Q \).
	Then, for every orbit type \( (H) \) of \( Q \) fulfilling \( (H) \geq (K) \), the sets \( \seam{(\CotBundle Q)}^{(K)}_{(H)} \) and \( \seam{(\CotBundle Q)}^{(K)}_{(H)} \slash G \) are submanifolds of \( (\CotBundle Q)_{(K)} \) and \( (\CotBundle Q)_{(K)} \slash G \), respectively.
\end{lemma}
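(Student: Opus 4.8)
The plan is to mimic the strategy behind \cref{prop:cotangentBundle:orbitTypeStrataAreManifolds}: pass to the semi-global normal form of \cref{prop:cotangentBundle:simpleNormalForm} and reduce everything to a statement about the linear action of the compact stabilizer on a Fréchet space. First I would fix \( p \in \seam{(\CotBundle Q)}^{(K)}_{(H)} \) with base point \( q \). Since \( p \) lies over \( Q_{(H)} \), its base point satisfies \( (G_q) = (H) \), and after choosing a representative I may assume \( G_q = H \). Let \( S \) be a slice at \( q \) with model space \( X \); because \( H \) is compact, I may take the \( H \)-action on \( X \) to be linear and, exactly as in the proof of \cref{prop:cotangentBundle:orbitTypeStrataAreManifolds}, regard \( V \defeq \LieA{m}^* \times \CotBundle S \) as an \( H \)-invariant open subset of the \( H \)-representation \( \bar V \defeq \LieA{m}^* \times X \times X^* \). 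Under \( \Phi \), a neighbourhood of \( p \) in \( \CotBundle Q \) is identified with \( G \times_{G_q} V \), and the orbit type stratum \( (\CotBundle Q)_{(K)} \) with \( G \times_{G_q} V_{(K)} \), by~\eqref{eq:cotangentBundle:orbitTypeStrataLocalModel}.

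The decisive observation is that, in these coordinates, the base-point condition \( q' \in Q_{(H)} \) becomes a \emph{linear} condition. Indeed, by \iref{i::slice:SliceDefOnlyStabNotMoveSlice} of \cref{defn:slice:slice}, a point \( s \in S \) has \( G \)-stabilizer equal to its \( H \)-stabilizer \( H_s \subseteq H \); hence its base point has orbit type \( (H) \) if and only if \( H_s \) is \( G \)-conjugate to \( H \). As \( H_s \subseteq H \) are compact, \cref{prop::compactLieSubgroup:conjugatedSubgroupEqual} forces \( H_s = H \), that is, \( s \) lies in the fixed-point space \( X^H \). Consequently the local model of \( \seam{(\CotBundle Q)}^{(K)}_{(H)} \) is \( G \times_{G_q} W_{(K)} \), where \( W \defeq V \intersect \bar W \) is the open subset cut out inside \( V \) by the \( H \)-subrepresentation \( \bar W \defeq \LieA{m}^* \times X^H \times X^* \), and \( W_{(K)} = W \intersect V_{(K)} \) is its orbit-type-\( (K) \) part (the stabilizer of a point being intrinsic, it is immaterial whether the orbit type is computed in \( W \) or in \( V \)).

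It then remains to show that \( W_{(K)} \) is a submanifold of \( V_{(K)} \); the corresponding statement for the quotients follows by dividing out \( G_q \), just as in \cref{prop:cotangentBundle:orbitTypeStrataAreManifolds}, where \( (\CotBundle Q)_{(K)} \slash G \) is locally identified with \( V_{(K)} \slash G_q \). I would establish this by applying the linear slice theorem \cref{prop:slice:sliceTheoremLinearAction} to the \( H \)-representation \( \bar V \) at a point \( w \in W_{(K)} \) with exact isotropy \( K' \): a slice is furnished by an \( H_w \)-invariant complement of the orbit direction, and near \( w \) both \( V_{(K)} \) and \( W_{(K)} \) are described through their intersections with the \( K' \)-fixed-point spaces \( \bar V^{K'} \) and \( \bar W^{K'} = \bar W \intersect \bar V^{K'} \). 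Since \( \bar W \) is a subrepresentation, \( \bar W^{K'} \) is a closed linear subspace of \( \bar V^{K'} \) which, by averaging over \( K' \) as in \cref{prop:cotangentBundle:compactGroupReducitive}, admits an invariant complement; hence \( W_{(K)} \) is locally modelled on a linear subspace of the model of \( V_{(K)} \) and is therefore a submanifold.

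The main obstacle is precisely this last step — proving that the inclusion \( W_{(K)} \hookrightarrow V_{(K)} \) of orbit-type strata is a submanifold — for this is where the finite-dimensional argument of \textcite[Theorem~7]{PerlmutterRodriguez-OlmosSousa-Dias2007}, based on the cotangent-bundle Witt--Artin decomposition and a symplectic slice, does not carry over. The infinite-dimensional remedy I propose is to lean entirely on the linear slice theorem for the compact group \( H \), which supplies mutually compatible local models for the two strata through the fixed-point spaces of their isotropy subgroups. The point requiring care in the Fréchet framework is the existence of the \( K' \)-invariant complement splitting \( \bar W^{K'} \) inside \( \bar V^{K'} \); this is where finite-dimensionality of the acting group and the availability of Haar averaging are essential, and it should be verified explicitly rather than taken for granted as in finite dimensions.
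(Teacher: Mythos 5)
Your proposal opens exactly as the paper's proof does: pass to the normal form \( \Phi \) of \cref{prop:cotangentBundle:simpleNormalForm}, identify \( \CotBundle S \) with \( S \times X^* \), and use \cref{prop::compactLieSubgroup:conjugatedSubgroupEqual} to convert the base-point condition into \( s \in S_{G_q} = S \intersect X^H \). You diverge at the final step, and here the paper's route is both shorter and stronger. The paper does not re-slice pointwise inside the linear model; it observes that for \( s \in S_{G_q} \) the whole group \( G_q \) fixes \( s \), so the \( G_q \)-stabilizer of \( (\nu, s, \alpha) \) equals that of \( (\nu, \alpha) \), and the orbit type condition therefore \emph{decouples} from the slice coordinate: the seam is locally \( G \times_{G_q} \bigl( (\LieA{m}^* \times X^*)_{(G_p)} \times S_{G_q} \bigr) \). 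Submanifold-ness is then immediate, since \( (\LieA{m}^* \times X^*)_{(G_p)} \) is a submanifold by the linear argument already used in \cref{prop:cotangentBundle:orbitTypeStrataAreManifolds}, and \( S_{G_q} \) is a closed submanifold of \( S \) by \iref{i::slice:SliceDefPartialSliceSubmanifold} combined with \cref{prop::compactLieSubgroup:conjugatedSubgroupEqual}. Crucially, this product structure also settles the quotient statement in one line — \( G_q \) acts only on the \( (\LieA{m}^* \times X^*) \)-factor, so the local model of \( \seam{(\CotBundle Q)}^{(K)}_{(H)} \slash G \) is \( (\LieA{m}^* \times X^*)_{(G_p)} \slash G_q \times S_{G_q} \) — whereas your sketch leaves the quotient claim essentially undeveloped ("dividing out \( G_q \)" still requires a tube-type argument, which your fixed-point-space local models would have to supply a second time). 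Your pointwise comparison of \( \bar V^{K'} \) and \( \bar W^{K'} \) can be made to work for the first assertion, but you have missed the decoupling observation that is the actual content of the paper's proof.

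Your diagnosis of the "main obstacle" is also off in two ways. First, no invariant complement is needed: the paper's notion of submanifold (\cref{Frechet}) only asks for a chart mapping the subset onto a \emph{closed} subspace, and \( X^H \), being the fixed-point set of a continuous linear action, is automatically closed in \( X^{K'} \); the split condition is nowhere claimed in the lemma. Second, the proposed averaging argument is vacuous and, as stated, unsound: \( K' \) acts trivially on its own fixed-point space \( \bar V^{K'} \), so averaging over \( K' \) changes nothing, and to average a projection one must first have a continuous projection onto \( \bar W^{K'} \), which in a Fréchet space need not exist — closed subspaces of Fréchet spaces are in general uncomplemented, and \cref{prop:cotangentBundle:compactGroupReducitive} succeeded only because \( \LieA{h} \) is finite-dimensional. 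Fortunately this step can simply be deleted; with closedness in place of complementation, and with the product decomposition handling the quotient, your argument becomes a valid (if more roundabout) variant of the paper's proof.
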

We call the sets \( \seam{(\CotBundle Q)}^{(K)}_{(H)} \) the \emphDef{secondary strata} and the decomposition of \( \CotBundle Q \) into these secondary strata is referred to as the \emphDef{secondary orbit type stratification}.
As far as we know, the submanifold structure of \( \seam{(\CotBundle Q)}^{(K)}_{(H)} \) and \( \seam{(\CotBundle Q)}^{(K)}_{(H)} \slash G \) is novel even for the finite-dimensional case.
\begin{proof}
	Let \( p \in \seam{(\CotBundle Q)}^{(K)}_{(H)} \) and denote its base point by \( q \in Q \).
	Without loss of generality we may assume that \( G_p = K \) and \( G_q = H \) with \( G_p \subseteq G_q \).
	By \cref{prop:cotangentBundle:simpleNormalForm}, it is enough to show that the corresponding subset
	\begin{equation}
		\seam{\left(G \times_{G_q} (\LieA{m}^* \times \CotBundle S)\right)}^{(G_p)}_{(G_q)} \subseteq G \times_{G_q} (\LieA{m}^* \times \CotBundle S)
	\end{equation}
	is a submanifold.
	By definition, \( S \) is diffeomorphic to an open subset of a Fréchet space \( X \) and, hence, we may identify \( \CotBundle S \) with \( S \times X^* \).
	By \cref{prop::compactLieSubgroup:conjugatedSubgroupEqual}, for every point \( s \) in the slice, \( G_s \) is conjugate to \( G_q \) if and only if \( G_s = G_q \).
	We thus find
	\begin{equation}
		\seam{\left(G \times_{G_q} (\LieA{m}^* \times \CotBundle S)\right)}^{(G_p)}_{(G_q)} = G \times_{G_q} \left( (\LieA{m}^* \times X^*)_{(G_p)} \times S_{G_q} \right).
	\end{equation}
	In this expression, \( (G_p) \) clearly denotes the conjugacy class of \( G_p \) in \( G \) and not in \( G_q \).
	Now, since \( G_q \) is compact, \( (\LieA{m}^* \times X^*)_{(G_p)} \) is a submanifold of \( \LieA{m}^* \times X^* \) (the proof follows by the same arguments as in the proof of \cref{prop:cotangentBundle:orbitTypeStrataAreManifolds}).
	Finally, the \( G \)-quotient is again a smooth manifold, because it is locally identified with \( (\LieA{m}^* \times X^*)_{(G_p)} \slash G_q \times S_{G_q} \).
\end{proof}
For the study of the interaction of the secondary orbit type stratification with the momentum map geometry we need the following basic result about linear cotangent bundle reduction.
\begin{lemma}
	\label{prop:cotangentBundle:seamManifoldInLinear}
	Let \( \dualPairDot: X^* \times X \to \R \) be a dual pair of Fréchet spaces and let \( G \) be a compact Lie group acting linearly on \( X \) and, by duality, also on \( X^* \).
	Then, the lifted \( G \)-action on \( \CotBundle X \) has an equivariant momentum map \( J: \CotBundle X \to \LieA{g}^* \).
	Moreover, for every orbit type \( (K) \),
	\begin{equation}
		\label{eq:contangentBundle:seamManifoldInLinear:iso}
		\seam{(\CotBundle X)}^{(K)}_{(G)} \intersect J^{-1}(0) \isomorph X_G \times X^*_{(K)}
	\end{equation}
	is a submanifold of \( \CotBundle X \).
\end{lemma}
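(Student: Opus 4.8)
The plan is to reduce the entire statement to the linear reduction result \cref{prop:singularReduction:linear}, applied to the symplectic Fréchet space \( Y \defeq \CotBundle X \isomorph X \times X^* \) carrying the canonical symplectic form \( \omega \). First I would observe that the lifted \( G \)-action on \( Y \) is linear and preserves \( \omega \), since \( \omega \) is assembled from the \( G \)-invariant pairing \( \dualPairDot \). Because \( G \) is compact, its Lie algebra is finite-dimensional, so every linear functional on \( \LieA{g} \) is representable in \( \LieA{g}^* \) and no existence obstruction arises; \cref{prop:singularReduction:linear} thus provides the equivariant momentum map \( J \). A one-line computation, using \( \dualPair{\xi \ldot p}{x} = - \dualPair{p}{\xi \ldot x} \), shows that \( \frac12 \omega\bigl((x,p), \xi \ldot (x,p)\bigr) = \dualPair{p}{\xi \ldot x} \), so that \( J \) is identified with the cotangent-lift momentum map \( \kappa(J(x,p), \xi) = \dualPair{p}{\xi \ldot x} \). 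This disposes of the first assertion.

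For the identification \eqref{eq:contangentBundle:seamManifoldInLinear:iso}, the decisive point is that the seam forces the base point to be \( G \)-fixed. By definition, \( \seam{(\CotBundle X)}^{(K)}_{(G)} \) consists of those \( (x,p) \) with \( x \in X_{(G)} \); as \( G \) is the full group, \( X_{(G)} = X_G \) is the fixed-point set, whence \( \xi \ldot x = 0 \) for every \( \xi \in \LieA{g} \). Two consequences follow immediately: first, \( J(x,p) = \dualPair{p}{\xi \ldot x} = 0 \) automatically, so \( \seam{(\CotBundle X)}^{(K)}_{(G)} \subseteq J^{-1}(0) \) and intersecting with \( J^{-1}(0) \) is vacuous; second, the lifted action degenerates to \( g \cdot (x,p) = (x, g \cdot p) \) on these points, so \( G_{(x,p)} = G_p \) and the orbit-type condition \( (G_{(x,p)}) = (K) \) is equivalent to \( p \in X^*_{(K)} \). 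Under the identification \( \CotBundle X \isomorph X \times X^* \) this yields the set equality \( \seam{(\CotBundle X)}^{(K)}_{(G)} \intersect J^{-1}(0) = X_G \times X^*_{(K)} \).

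Finally, I would check that \( X_G \times X^*_{(K)} \) is a submanifold of \( \CotBundle X \). The factor \( X_G \) is a closed \( G \)-invariant linear subspace; averaging the identity over \( G \) with respect to the normalized Haar measure produces a continuous \( G \)-equivariant projection \( X \to X_G \), so \( X_G \) is complemented and hence a submanifold of \( X \). The factor \( X^*_{(K)} \) is an orbit-type subset for the linear action of the compact group \( G \) on \( X^* \), which admits a slice at every point by \cref{prop:slice:sliceTheoremLinearAction}; invoking \cref{prop::compactLieSubgroup:conjugatedSubgroupEqual} and arguing as in the proof of \cref{prop:cotangentBundle:orbitTypeStrataAreManifolds} (\cf \cref{prop:slice:orbitTypeSubsetIsSubmanifold}), it is a submanifold of \( X^* \). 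A product of submanifolds being a submanifold, \( X_G \times X^*_{(K)} \) is a submanifold of \( X \times X^* \isomorph \CotBundle X \).

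I expect the only genuinely analytic step to be the verification that the two factors are \emph{split} submanifolds in the Fréchet sense: the complementability of \( X_G \) relies on Haar averaging, and that of \( X^*_{(K)} \) on the linear slice theorem for compact group actions. Everything else is bookkeeping, the conceptual heart being the observation that the seam already lies inside the zero level set, which trivializes the reduction in the \( X_G \)-directions.
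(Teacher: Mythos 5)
Your proposal is correct and follows essentially the same route as the paper's proof: the momentum map is the cotangent-lift formula \( \kappa(J(x,\alpha),\xi) = \dualPair{\alpha}{\xi \ldot x} \) (which exists because \( \LieA{g} \) is finite-dimensional), the seam condition forces the base point into \( X_{(G)} = X_G \) so that the intersection with \( J^{-1}(0) \) is automatic and the stabilizer reduces to \( G_p \), and the submanifold property of \( X^*_{(K)} \) follows from \cref{prop:slice:sliceTheoremLinearAction} and \cref{prop:slice:orbitTypeSubsetIsSubmanifold}. Your two embellishments --- deriving \( J \) from \cref{prop:singularReduction:linear} via \( \frac12\omega(y,\xi \ldot y) \) instead of writing it down directly, and the Haar-averaging argument that \( X_G \) is a complemented closed subspace --- are correct but inessential refinements of the same argument.
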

\begin{proof}
	Under the identification \( \CotBundle X \isomorph X \times X^* \), the canonical \( 1 \)-form takes the form
	\begin{equation}
		\theta_{x, \alpha}(y, \beta) = \dualPair{\alpha}{y}, \qquad x, y \in X, \alpha, \beta \in X^*.
	\end{equation}
	Since \( G \) is compact and, hence, finite-dimensional, the linear action has a momentum map \( J \) defined by
	\begin{equation}
		\kappa(J(x, \alpha), \xi) = \dualPair{\alpha}{\xi \ldot x}, \qquad \xi \in \LieA{g}.
	\end{equation}
	Note that \( J(x, \alpha) = 0 \) if \( x \in X_G \).
	Since, by definition of the secondary strata,
	\begin{equation}
		\seam{(\CotBundle X)}^{(K)}_{(G)} \isomorph X_{(G)} \times X^*_{(K)} = X_G \times X^*_{(K)}
	\end{equation}
	holds, we obtain
	\begin{equation}
		\seam{(\CotBundle X)}^{(K)}_{(G)} \intersect J^{-1}(0) \isomorph X_G \times X^*_{(K)}.
	\end{equation}
	Since \( G \) is compact, the action on \( X^* \) admits a slice at every point according to \cref{prop:slice:sliceTheoremLinearAction} and thus the orbit type manifold \( X^*_{(K)} \) is a submanifold of \( X^* \), see \cref{prop:slice:orbitTypeSubsetIsSubmanifold}.
	Therefore, \( \seam{(\CotBundle X)}^{(K)}_{(G)} \intersect J^{-1}(0) \) is a submanifold of \( \CotBundle X \).
\end{proof}

We now return to the general non-linear setting.
Given two orbit types \( (K) \leq (H) \), following \parencite{PerlmutterRodriguez-OlmosSousa-Dias2007}, we call the set
\begin{equation}
	\seam{P}^{(K)}_{(H)} \defeq \seam{(\CotBundle Q)}^{(K)}_{(H)} \intersect J^{-1}(0)
\end{equation}
a \emphDef{preseam} and the quotient \( \seam{\check{P}}^{(K)}_{(H)} \defeq \seam{P}^{(K)}_{(H)} \slash G \) a \emphDef{seam}.
\begin{lemma}
	\label{prop:cotangentBundle:seamsAreManifolds}
	Under the assumptions of \cref{prop:cotangentBundle:singularSympRed} the following holds.
	For every pair of orbit types \( (H) \geq (K) \), the preseam \( \seam{P}^{(K)}_{(H)} \) is a smooth submanifold of \( \CotBundle Q \) and the seam \( \seam{\check{P}}^{(K)}_{(H)} \) is a smooth submanifold of \( \check{P}_{(K)} \) and of \( (\CotBundle Q)_{(K)} \slash G \).
	Moreover, \( \seam{\check{P}}^{(K)}_{(H)} \) is a smooth fiber bundle over \( \check{Q}_{(H)} \).
\end{lemma}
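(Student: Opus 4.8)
The plan is to follow the strategy used for the previous lemmas: transport the statement into the local normal form of \cref{prop:cotangentBundle:simpleNormalForm} and reduce it there to the linear assertion of \cref{prop:cotangentBundle:seamManifoldInLinear}. Fix \( p \in \seam{P}^{(K)}_{(H)} \) with base point \( q \in Q \); without loss of generality \( G_p = K \subseteq G_q = H \). Let \( S \) be a slice at \( q \), viewed as an open subset of its model space \( X \isomorph \TBundle_q S \), and identify \( \CotBundle S \isomorph S \times X^* \) as in the proof of \cref{prop:cotangentBundle:secondaryStrataAreManifolds}. Under \( \Phi \) the momentum map is given by~\eqref{eq:cotangentBundle:normalFormMomentumMap}, so \( J \circ \Phi = 0 \) forces \( \nu = 0 \) together with \( J_{G_q}(\alpha_s) = 0 \). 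The decisive simplification is that on the secondary stratum the base point lies in \( Q_{(H)} \), which by \cref{prop::compactLieSubgroup:conjugatedSubgroupEqual} amounts to \( s \in S_{G_q} \); for such \( s \) the generators \( \xi \ldot s \) vanish for every \( \xi \in \LieA{h} \), so that \( J_{G_q}(\alpha_s) = 0 \) holds automatically by the momentum map formula of \cref{prop:cotangentBundle:seamManifoldInLinear}. Hence, in the local model, the condition \( J = 0 \) collapses to \( \nu = 0 \), and the preseam is identified with \( G \times_{G_q}\bigl(\set{0} \times (S_{G_q} \times X^*_{(K)})\bigr) \).

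First I would prove that the preseam is a submanifold of \( \CotBundle Q \). By the reduction above it suffices to check that \( S_{G_q} \times X^*_{(K)} \) is a submanifold of \( \CotBundle S \isomorph S \times X^* \), and this is exactly \cref{prop:cotangentBundle:seamManifoldInLinear} (here \( S_{G_q} \) is an open subset of the closed linear subspace \( X_{G_q} \), while \( X^*_{(K)} \) is a submanifold of \( X^* \) by \cref{prop:slice:sliceTheoremLinearAction,prop:slice:orbitTypeSubsetIsSubmanifold}). Forming the associated bundle and transporting through \( \Phi \) then exhibits \( \seam{P}^{(K)}_{(H)} \) as a submanifold of \( \CotBundle Q \). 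Since every point of the local model has \( G_q \)-stabilizer \( H_\alpha \) of orbit type \( (K) \) and lies in \( J^{-1}(0) \), the preseam is contained in \( P_{(K)} = (\CotBundle Q)_{(K)} \intersect J^{-1}(0) \) and carries the single orbit type \( (K) \).

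Next I would pass to the quotients. Because the preseam has the single orbit type \( (K) \), just like \( P_{(K)} \), the restriction to \( \seam{P}^{(K)}_{(H)} \) of the submersion \( \pi_{(K)} \colon P_{(K)} \to \check{P}_{(K)} \) from \cref{prop:cotangentBundle:orbitTypeMomentumLevelSetIsManifold} is again a submersion onto the seam, modeled locally on \( G \times_{G_q}(S_{G_q} \times X^*_{(K)}) \to (S_{G_q} \times X^*_{(K)})\slash G_q \). The inclusion \( S_{G_q} \times X^*_{(K)} \hookrightarrow (J_{G_q}^{-1}(0))_{(K)} \) is an equivariant embedding of manifolds of a single orbit type and thus respects the associated \( G_q \)-fibrations over their quotients; it therefore descends to a submanifold inclusion, showing that \( \seam{\check{P}}^{(K)}_{(H)} \) is a submanifold of \( \check{P}_{(K)} \). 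Applying the same reasoning to the single-orbit-type inclusion \( P_{(K)} \hookrightarrow (\CotBundle Q)_{(K)} \) shows that \( \check{P}_{(K)} \), and hence the seam, is a submanifold of \( (\CotBundle Q)_{(K)}\slash G \) as well.

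Finally, for the fiber bundle claim, the \( G \)-equivariant base-point projection \( \seam{P}^{(K)}_{(H)} \to Q_{(H)} \) descends to a smooth map \( \seam{\check{P}}^{(K)}_{(H)} \to \check{Q}_{(H)} \), the restriction of the reduced projection \( \check{P} \to \check{Q} \). In slice coordinates \( \check{Q}_{(H)} \isomorph S_{G_q} \) (with \( G_q \) acting trivially on \( S_{G_q} \)) and the seam is \( S_{G_q} \times (X^*_{(K)}\slash G_q) \), so this map is locally the trivial bundle with fiber \( X^*_{(K)}\slash G_q \). The fiber is the reduced isotropy-representation datum attached to \( q \), which is locally constant along \( \check{Q}_{(H)} \), so the local trivializations patch into a locally trivial fiber bundle. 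I expect the main obstacle to lie in the quotient bookkeeping — verifying that the single-orbit-type submanifold inclusions descend to submanifold inclusions of the quotients and that the local product trivializations are mutually compatible — since in the Fréchet setting no inverse function theorem is available and one must rely throughout on the explicit associated-bundle models provided by \( \Phi \) and by the linear slice theorem for the compact group \( G_q \).
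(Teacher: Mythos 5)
Your proposal is correct and takes essentially the same route as the paper: the paper likewise transports the preseam through \( \Phi \) into the local normal form, uses the linear result of \cref{prop:cotangentBundle:seamManifoldInLinear} (momentum map vanishing automatically over the fixed-point set \( S_{G_q} \)) to identify it with \( G \times_{G_q} \left( \set{0} \times S_{G_q} \times X^*_{(G_p)} \right) \), reads off the seam as locally \( S_{G_q} \times (X^*_{(G_p)} \slash G_q) \), and observes that the projection to \( \check{Q}_{(H)} \) becomes the first-factor projection, hence a locally trivial submersion. Your extra bookkeeping on why the single-orbit-type inclusions descend to submanifold inclusions of the quotients merely makes explicit what the paper leaves implicit.
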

In the finite-dimensional setting, the smooth structure of \( \seam{\check{P}}^{(K)}_{(H)} \) and its fibration over \( \check{Q}_{(H)} \) has been established in \parencite[Theorem~7]{PerlmutterRodriguez-OlmosSousa-Dias2007}.
To the best of our knowledge, the submanifold structure of \( \seam{P}^{(K)}_{(H)} \) is novel even for the finite-dimensional case.
\begin{proof}
	Let \( p \in \seam{(\CotBundle Q)}^{(K)}_{(H)} \) and denote its base point by \( q \in Q \).
	Let \( S \) be a slice at \( q \) and let \( X \) be the model space of \( S \).
	Using the local diffeomorphism \( \Phi \) of \cref{prop:cotangentBundle:simpleNormalForm} and the isomorphism of \cref{eq:contangentBundle:seamManifoldInLinear:iso}, in a neighborhood of \( p \) we can identify the preseam with the submanifold
	\begin{equation}\begin{split}
		\seam{\left(G \times_{G_q} (\LieA{m}^* \times \CotBundle S)\right)}^{(G_p)}_{(G_q)} \intersect J^{-1}(0) 
			&= G \times_{G_q} \left( \set{0} \times \seam{J^{-1}_{G_q}(0)}^{(G_p)}_{(G_q)} \right)
			\\
			&\isomorph G \times_{G_q} \left( \set{0} \times S_{G_q} \times X^*_{(G_p)} \right)
	\end{split}\end{equation}
	of \( G \times_{G_q} (\LieA{m}^* \times \CotBundle S) \).
	Similarly, the seam \( \seam{\check{P}}^{(K)}_{(H)} \) has locally the same structure as the smooth manifold \( S_{G_q} \times (X^*_{(G_p)} \slash G_q) \).
	Under these identifications, the quotient map \( \seam{\check{P}}^{(K)}_{(H)} \to \check{Q}_{(H)} \) corresponds to the projection onto the first factor and is thus a locally trivial submersion.
\end{proof}

We now come to the interaction of the seams with the symplectic geometry.
For this purpose, denote by \( \check{\omega}^{(K)}_{(H)} \) the restriction of \( \check{\omega}_{(K)} \) to \( \seam{\check{P}}^{(K)}_{(H)} \subseteq \check{P}_{(K)} \).
The injection \( \TBundle (Q_{(H)}) \to \restr{(\TBundle Q)}{Q_{(H)}} \) induces a surjective map \( \pr: \restr{(\CotBundle Q)}{Q_{(H)}} \to \CotBundle (Q_{(H)}) \) and, thereby, a map \( \bar{\pi}: \seam{P}^{(K)}_{(H)} \to \CotBundle (Q_{(H)}) \).
Let \( \bar{\omega}_{(H)} \) denote the canonical symplectic form on \( \CotBundle (Q_{(H)}) \).
With this notation, we can give a characterization of \( \check{\omega}^{(K)}_{(H)} \) similar to the one for the reduced symplectic form.
\begin{lemma}
	\label{prop:cotangentBundle:reducedFormOnSeam}
	The restriction \( \check{\omega}^{(K)}_{(H)} \) of \( \check{\omega}_{(K)} \) to \( \seam{\check{P}}^{(K)}_{(H)} \subseteq \check{P}_{(K)} \) is uniquely characterized by 
	\begin{equation}
		\left(\pi^{(K)}_{(H)}\right)^* \check{\omega}^{(K)}_{(H)} = \bar{\pi}^* \, \bar{\omega}_{(H)},
	\end{equation}
	where \( \pi^{(K)}_{(H)}: \seam{P}^{(K)}_{(H)} \to \seam{\check{P}}^{(K)}_{(H)} \) is the canonical projection.
\end{lemma}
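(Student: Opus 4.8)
The plan is to reduce the asserted identity to the level of the tautological $1$-forms on the two cotangent bundles involved, where it becomes an almost immediate consequence of the definition of the projection $\pr$.

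First I would dispose of uniqueness: by \cref{prop:cotangentBundle:seamsAreManifolds} the map $\pi^{(K)}_{(H)}$ is a surjective submersion, so any $2$-form on $\seam{\check{P}}^{(K)}_{(H)}$ is determined by its pull-back along $\pi^{(K)}_{(H)}$. For existence, I would first rewrite the left-hand side in terms of the ambient symplectic form $\omega$. Denoting by $j: \seam{P}^{(K)}_{(H)} \hookrightarrow P_{(K)}$ and $\bar{j}: \seam{\check{P}}^{(K)}_{(H)} \hookrightarrow \check{P}_{(K)}$ the inclusions, the relation $\pi_{(K)} \circ j = \bar{j} \circ \pi^{(K)}_{(H)}$ combined with the defining property $\pi_{(K)}^* \check{\omega}_{(K)} = \restr{\omega}{P_{(K)}}$ and $\bar{j}^* \check{\omega}_{(K)} = \check{\omega}^{(K)}_{(H)}$ yields
\begin{equation}
	\bigl(\pi^{(K)}_{(H)}\bigr)^* \check{\omega}^{(K)}_{(H)} = \restr{\omega}{\seam{P}^{(K)}_{(H)}}.
\end{equation}
It thus suffices to establish $\restr{\omega}{\seam{P}^{(K)}_{(H)}} = \bar{\pi}^* \, \bar{\omega}_{(H)}$.

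The heart of the matter is then to verify the corresponding statement one degree lower, namely $\restr{\theta}{\seam{P}^{(K)}_{(H)}} = \bar{\pi}^* \, \bar{\theta}$, where $\theta$ and $\bar{\theta}$ are the tautological $1$-forms on $\CotBundle Q$ and $\CotBundle (Q_{(H)})$. By construction $\bar{\pi}$ covers the identity on $Q_{(H)}$, so that $\bar{\CotBundleProj} \circ \bar{\pi} = \CotBundleProj$ on the preseam, where $\bar{\CotBundleProj}$ is the projection of $\CotBundle (Q_{(H)})$. Since $\seam{P}^{(K)}_{(H)}$ sits inside the fibers over $Q_{(H)}$, the differential $\tangent_p \CotBundleProj$ sends $\TBundle_p \seam{P}^{(K)}_{(H)}$ into $\TBundle_q(Q_{(H)})$; on this subspace the covector $p$ and its restriction $\pr(p)$ coincide by the very definition of $\pr$. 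Hence, for $X \in \TBundle_p \seam{P}^{(K)}_{(H)}$,
\begin{equation}
	(\bar{\pi}^* \bar{\theta})_p(X) = \dualPair{\pr(p)}{\tangent_p \CotBundleProj(X)} = \dualPair{p}{\tangent_p \CotBundleProj(X)} = \theta_p(X),
\end{equation}
and taking the exterior derivative, which commutes with pull-back and restriction, would give the desired equality of $2$-forms.

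I expect the one genuinely geometric step to be the middle one: confirming that tangent vectors to the preseam project under $\tangent \CotBundleProj$ into $\TBundle(Q_{(H)})$, which is exactly what forces only the restricted covector $\pr(p)$ to enter the pairing and makes the two tautological forms agree. Everything surrounding it is formal naturality of pull-backs together with the already-established defining property of the reduced form.
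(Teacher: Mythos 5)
Your proposal is correct and follows essentially the same route as the paper: both reduce the claim by chasing the commutative diagrams relating \( \pi^{(K)}_{(H)} \), \( \pi_{(K)} \) and \( \bar{\pi} \) to the identity of tautological \( 1 \)-forms \( \bar{\pi}^* \bar{\theta}_{(H)} = \restr{\theta}{\seam{P}^{(K)}_{(H)}} \), and then apply \( \dif \). The only cosmetic difference is that you carry out on the preseam the pointwise pairing computation (using that \( \tangent \CotBundleProj \) maps tangent vectors of the preseam into \( \TBundle (Q_{(H)}) \), so \( p \) and \( \pr(p) \) pair identically) which the paper performs on all of \( \restr{(\CotBundle Q)}{Q_{(H)}} \) and dismisses as a straightforward calculation.
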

\begin{proof}
	We first note that the restriction of \( \omega \) to \( \restr{(\CotBundle Q)}{Q_{(H)}} \) coincides with the pull-back \( \pr^* \, \bar{\omega}_{(H)} \).
	In fact, the commutative diagram
	\begin{equationcd}
		\CotBundle (Q_{(H)}) \to[d]
			& \restr{(\CotBundle Q)}{Q_{(H)}} \to[l, swap, "\pr"] \to[r] \to[d]
			& \CotBundle Q \to[d]
		\\
		Q_{(H)}
			& Q_{(H)} \to[l, swap, "\id"] \to[r]
			& Q
	\end{equationcd}
	and a straightforward calculation show that the pull-back \( \pr^* \, \bar{\theta}_{(H)} \) of the canonical \( 1 \)-form on \( \CotBundle (Q_{(H)}) \) coincides with the restriction of \( \theta \) to \( \restr{(\CotBundle Q)}{Q_{(H)}} \).
	Now the claim follows by chasing along the following commutative diagram
	\begin{equationcd}
		\CotBundle (Q_{(H)})
			& \restr{(\CotBundle Q)}{Q_{(H)}} \to[l, swap, "\pr"] \to[r]
			& \CotBundle Q
		\\
			& \seam{P}^{(K)}_{(H)} \to[lu, swap, "\bar{\pi}"] \to[u] \to[r] \to[d, "\pi^{(K)}_{(H)}"]
			& P_{(K)} \to[d] \to[u]
		\\
			& \seam{\check{P}}^{(K)}_{(H)} \to[r]
			& \check{P}_{(K)}
	\end{equationcd}
\end{proof}

The construction above provides additional insight into the structure of the seam \( \seam{\check{P}}^{(H)}_{(H)} \).
To see this, note that \( \bar{\pi} \) takes values in the zero level set of the momentum map
\begin{equation}
	\bar{J}_{(H)}: \CotBundle (Q_{(H)}) \to \LieA{g}^* \, .
\end{equation}
Moreover, \( \bar{\pi} \) is \( G \)-equivariant and thus descends to a map 
\begin{equation}
	\check{\bar{\pi}}: \seam{\check{P}}^{(K)}_{(H)} \to \bar{J}_{(H)}^{-1}(0) \slash G.
\end{equation}
By \cref{prop:cotangentBundle:oneOrbitTypeReducedSpaceCotangentBundle}, the target space \( \bar{J}_{(H)}^{-1}(0)_{(H)} \slash G \) is symplectomorphic to \( \CotBundle (\check{Q}_{(H)}) \).
\begin{prop}
	\label{prop:cotangentBundle:restrictionSymplecticForm}
	Under the assumptions of \cref{prop:cotangentBundle:singularSympRed}, for every orbit type \( (H) \), the restriction of \( \check{\omega}_{(H)} \) to \( \seam{\check{P}}^{(H)}_{(H)} \) is symplectic and \( \check{\bar{\pi}} \) is a symplectomorphism\footnote{As in \cref{prop:cotangentBundle:oneOrbitTypeReducedSpaceCotangentBundle}, `symplectomorphic' should be understood in the sense that there exists a non-degenerate pairing of \( \seam{\check{P}}^{(H)}_{(H)} \) with \( \TBundle (\check{Q}_{(H)}) \) and that the reduced symplectic form on \( \seam{\check{P}}^{(H)}_{(H)} \) coincides with the canonical symplectic structure defined by this pairing as in \cref{Cot-Bundle}.} between \( \seam{\check{P}}^{(H)}_{(H)} \) and \( \CotBundle (\check{Q}_{(H)}) \) with its canonical symplectic structure\footnote{One might expect that \( \seam{(\CotBundle Q)}^{(H)}_{(H)} \) is symplectomorphic to \( \CotBundle (Q_{(H)}) \). However, simple examples like \( G = \SOGroup(n) \) acting on \( \R^n \) show that this is not the case.}.
\end{prop}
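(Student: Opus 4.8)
The plan is to establish the proposition in two stages: first showing that \( \check{\bar{\pi}} \) pulls back the canonical symplectic form on \( \CotBundle(\check{Q}_{(H)}) \) to \( \check{\omega}^{(H)}_{(H)} \), and then showing that \( \check{\bar{\pi}} \) is a diffeomorphism. Together these immediately give that \( \check{\omega}^{(H)}_{(H)} \) is symplectic and that \( \check{\bar{\pi}} \) is a symplectomorphism. To set up the first stage, I would record that \( \bar{\pi} \) takes values in \( \bar{J}_{(H)}^{-1}(0) \) and that, since \( Q_{(H)} \) carries only the single orbit type \( (H) \), \cref{prop:cotangentBundle:orbitTypeLevelSetSameAsBase} applied to the \( G \)-manifold \( Q_{(H)} \) shows \( \bar{J}_{(H)}^{-1}(0) = \bar{J}_{(H)}^{-1}(0)_{(H)} \). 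Hence \cref{prop:cotangentBundle:oneOrbitTypeReducedSpaceCotangentBundle} applies and provides a reduction projection \( \bar{\rho}: \bar{J}_{(H)}^{-1}(0) \to \bar{J}_{(H)}^{-1}(0) \slash G \isomorph \CotBundle(\check{Q}_{(H)}) \) whose reduced symplectic form, the canonical form \( \hat{\omega} \) on \( \CotBundle(\check{Q}_{(H)}) \), satisfies \( \bar{\rho}^* \hat{\omega} = \restr{\bar{\omega}_{(H)}}{\bar{J}_{(H)}^{-1}(0)} \).

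The form identity then follows by chasing the commuting square \( \check{\bar{\pi}} \circ \pi^{(H)}_{(H)} = \bar{\rho} \circ \bar{\pi} \), which holds because \( \check{\bar{\pi}} \) is by definition the descent of \( \bar{\pi} \). Pulling \( \hat{\omega} \) back along both sides and using \cref{prop:cotangentBundle:reducedFormOnSeam}, I compute
\begin{equation*}
	\bigl(\pi^{(H)}_{(H)}\bigr)^* \check{\bar{\pi}}^* \hat{\omega}
		= \bar{\pi}^* \bar{\rho}^* \hat{\omega}
		= \bar{\pi}^* \restr{\bar{\omega}_{(H)}}{\bar{J}_{(H)}^{-1}(0)}
		= \bar{\pi}^* \bar{\omega}_{(H)}
		= \bigl(\pi^{(H)}_{(H)}\bigr)^* \check{\omega}^{(H)}_{(H)}.
\end{equation*}
Since \( \pi^{(H)}_{(H)} \) is a surjective submersion (\cref{prop:cotangentBundle:seamsAreManifolds}), the induced map on forms is injective, so I conclude \( \check{\bar{\pi}}^* \hat{\omega} = \check{\omega}^{(H)}_{(H)} \). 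In particular, once \( \check{\bar{\pi}} \) is known to be a diffeomorphism, \( \check{\omega}^{(H)}_{(H)} \) is the pullback of the symplectic form \( \hat{\omega} \) under a diffeomorphism, hence symplectic, and \( \check{\bar{\pi}} \) is a symplectomorphism.

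It remains to show that \( \check{\bar{\pi}} \) is a diffeomorphism, and this is where the infinite-dimensional setting forbids the usual appeal to the inverse function theorem; instead I would argue entirely within the slice coordinates of \cref{prop:cotangentBundle:simpleNormalForm}. As in the proofs of \cref{prop:cotangentBundle:seamsAreManifolds,prop:cotangentBundle:seamManifoldInLinear}, near a point \( p \) with base point \( q \) and \( G_p = G_q = H \), the seam \( \seam{\check{P}}^{(H)}_{(H)} \) is locally modeled on \( S_H \times X^*_H \), where \( X \) is the model space of the slice \( S \). Using the \( H \)-equivariant isomorphism \( \CotBundle S \isomorph S \times X^* \) and the identity \( X^*_H \isomorph (X_H)^* \) (valid since \( H \) is compact, so the invariant projection splits \( X = X_H \oplus X' \)), the target \( \CotBundle(\check{Q}_{(H)}) \isomorph \CotBundle S_H \isomorph S_H \times X^*_H \) is modeled on the same space. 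I would then check, by tracing the definition of \( \bar{\pi} \) as the fiberwise restriction of covectors to \( \TBundle(Q_{(H)}) \) followed by reduction --- noting that on the orbit directions the covector vanishes because \( \nu = 0 \) on the seam, while on the \( S_H \)-directions it is given by the fiber component --- that \( \check{\bar{\pi}} \) is precisely the natural identification of these two local models; in particular it is a bundle map over the identity of \( \check{Q}_{(H)} \isomorph S_H \) that is fiberwise the identity of \( X^*_H \). Being a fiber-preserving map over the identity that is a fiberwise linear isomorphism, \( \check{\bar{\pi}} \) is a local diffeomorphism, and since it is moreover globally a bijective bundle map over \( \check{Q}_{(H)} \), it is a global diffeomorphism. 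The main obstacle is thus not the symplectic bookkeeping --- which the characterization in \cref{prop:cotangentBundle:reducedFormOnSeam} together with the one-orbit-type reduction theorem render essentially automatic --- but rather confirming that \( \check{\bar{\pi}} \) is a diffeomorphism through an explicit coordinate computation, since no soft functional-analytic argument is available in the Fréchet category.
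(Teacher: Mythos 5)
Your proposal is correct and follows essentially the same route as the paper: the paper likewise identifies both \( \seam{\check{P}}^{(H)}_{(H)} \) and \( \CotBundle (\check{Q}_{(H)}) \) locally with \( \CotBundle (S_H) \isomorph S_H \times X^*_H \) in slice coordinates, observes that \( \check{\bar{\pi}} \) is the identity in these coordinates (hence a diffeomorphism), and then invokes \cref{prop:cotangentBundle:reducedFormOnSeam} to conclude that \( \check{\bar{\pi}} \) intertwines \( \check{\omega}^{(H)}_{(H)} \) with the canonical form, so the former is symplectic. Your explicit pullback chase through the square \( \check{\bar{\pi}} \circ \pi^{(H)}_{(H)} = \bar{\rho} \circ \bar{\pi} \) is merely a spelled-out version of the paper's one-line appeal to that lemma.
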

We call \( \seam{\check{P}}^{(H)}_{(H)} \) the \emphDef{principal seam}.
\begin{proof}
	As we have seen in the proof of \cref{prop:cotangentBundle:seamsAreManifolds}, the seam \( \seam{\check{P}}^{(H)}_{(H)} \) is locally identified with \( S_H \times (X^*_{(H)} \slash H) = S_H \times X^*_H \isomorph \CotBundle (S_H) \).
	On the other hand, \( \CotBundle (\check{Q}_{(H)}) \) is locally equivalent to \( \CotBundle (S_H) \), see the proof of \cref{prop:cotangentBundle:oneOrbitTypeReducedSpaceCotangentBundle}.
	It is straightforward to see that, in these coordinates, \( \check{\bar{\pi}} \) is the identity map on \( \CotBundle (S_H) \) and hence a diffeomorphism.
	Finally, by \cref{prop:cotangentBundle:reducedFormOnSeam}, \( \check{\bar{\pi}} \) intertwines the closed \( 2 \)-form \( \check{\omega}^{(H)}_{(H)} \) with the canonical symplectic form on \( \CotBundle (\check{Q}_{(H)}) \).
	Since \( \check{\bar{\pi}} \) is a diffeomorphism, \( \check{\omega}^{(H)}_{(H)} \) is symplectic.
\end{proof}

In finite dimensions, one can show that the seams \( \seam{\check{P}}^{(K)}_{(H)} \) are coisotropic with respect to the reduced symplectic form \( \check{\omega}_{(K)} \) on \( \check{P}_{(K)} \), see \parencite[Corollary~9]{PerlmutterRodriguez-OlmosSousa-Dias2007}.
The proof, however, relies on counting dimensions and thus does not generalize to the infinite-dimensional setting.
A different idea to show that the seams are coisotropic is to use a Witt--Artin decomposition adapted to the cotangent bundle case.
In the finite-dimensional context, such a decomposition was established in \parencite{PerlmutterRodriguez-OlmosSousa-Dias2005}, but its extension to infinite dimensions is not immediate and will be left to future work.

Let us now discuss the frontier condition.
For this purpose, we endow the set of pairs of orbit types \( ((K), (H)) \) satisfying \( (H) \geq (K) \) with the partial ordering
\begin{equation}
 	((K), (H)) \leq ((K'), (H')) \quad \text{if and only if } \quad (K) \leq (K') \text{ and } (H) \leq (H').
\end{equation}
\begin{lemma}
	In the setting of \cref{prop:cotangentBundle:singularSympRed}, for every pair of orbit types \( (K) \) and \( (H) \) fulfilling \( (H) \geq (K) \), we have
	\begin{equation}
		\closureSet{\seam{\check{P}}^{(K)}_{(H)}} = \bigUnion_{((K'), (H')) \geq ((K), (H))} \seam{\check{P}}^{(K')}_{(H')}\, ,
	\end{equation}
	so that the decomposition of \( \check{P} \) into seams satisfies the frontier condition. 
\end{lemma}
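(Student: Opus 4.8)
The plan is to reduce the frontier condition for seams to the already-established frontier conditions for the orbit type decompositions of $Q$ (equivalently $\check{Q}$) and of $P$. The core observation is that the seam $\seam{\check{P}}^{(K)}_{(H)}$ is indexed by \emph{two} pieces of data—the orbit type $(K)$ coming from the cotangent fiber and the base orbit type $(H)$ coming from $Q$—so its closure must be controlled by the closures of \emph{both} decompositions simultaneously. Concretely, I would use the local model established in the proof of \cref{prop:cotangentBundle:seamsAreManifolds}: near a point $p$ with base point $q$, stabilizer $G_p = K \subseteq G_q = H$, and slice $S$ with model space $X$, the preseam is identified with
\begin{equation}
	\seam{P}^{(K)}_{(H)} \isomorph G \times_{H} \left( \set{0} \times S_{H} \times X^*_{(K)} \right),
\end{equation}
and the seam $\seam{\check{P}}^{(K)}_{(H)}$ with $S_H \times (X^*_{(K)} \slash H)$. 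This factorization splits the problem into a ``base'' part governed by $S_H \subseteq S$ and a ``fiber'' part governed by $X^*_{(K)} \subseteq X^*$.

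First I would prove the inclusion $\bigUnion_{((K'), (H')) \geq ((K), (H))} \seam{\check{P}}^{(K')}_{(H')} \subseteq \closureSet{\seam{\check{P}}^{(K)}_{(H)}}$, which is the substantive direction. Working in the local model, a point of $\seam{\check{P}}^{(K')}_{(H')}$ near $\equivClass{p}$ corresponds to a pair $(s', \equivClass{\alpha'})$ with $s' \in S_{H'}$ and $\alpha' \in X^*_{(K')}$. Since $(H') \geq (H)$, the frontier condition for $Q$—transported to the slice via the approximation property $S_{(H')} \subseteq \closureSet{S_{(H)}}$ established in the proof of \cref{prop:cotangentBundle:frontierConditionDerivedFromQ}, together with \cref{prop::compactLieSubgroup:conjugatedSubgroupEqual} which lets me replace $S_{(H)}$ by $S_H$ on the slice—shows that $s'$ lies in $\closureSet{S_H}$. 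Similarly, since $(K') \geq (K)$ and $H$ is compact, the frontier condition for the \emph{linear} $H$-action on $X^*$ (which holds because a slice exists at every point by \cref{prop:slice:sliceTheoremLinearAction}, so orbit type closures are controlled as in \cref{prop:cotangentBundle:frontierConditionDerivedFromQ}) gives $\alpha' \in \closureSet{X^*_{(K)}}$. Combining these two approximation statements in the product $S_H \times (X^*_{(K)} \slash H)$ places $(s', \equivClass{\alpha'})$ in the closure of $\seam{\check{P}}^{(K)}_{(H)}$.

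For the reverse inclusion $\closureSet{\seam{\check{P}}^{(K)}_{(H)}} \subseteq \bigUnion_{((K'), (H')) \geq ((K), (H))} \seam{\check{P}}^{(K')}_{(H')}$, I would take a limit point $\equivClass{p'}$ of the seam, with base orbit type $(H')$ and total orbit type $(K')$, and show $(H') \geq (H)$ and $(K') \geq (K)$. The base inequality follows because the projection $\check{P} \to \check{Q}$ is continuous and $\equivClass{q'} \in \closureSet{\check{Q}_{(H)}}$, whence $(H') \geq (H)$ by the frontier condition for $\check{Q}$. For the fiber inequality, since $H$ is compact there is a slice at the relevant point of $X^*$, so a neighborhood of $\alpha'$ consists of points with stabilizer subconjugate to $H_{\alpha'}$ (as in \parencite[Proposition~2.4]{DiezSlice}); because this neighborhood must meet $X^*_{(K)}$, we get $(K) \leq (K')$. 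The main obstacle is the bookkeeping needed to ensure that the two frontier approximations—one in the base slice $S_H$ and one in the linear fiber $X^*_{(K)}$—can be realized \emph{simultaneously} by a single sequence (or net) of points in the product, rather than separately; this is where the clean product structure $S_H \times (X^*_{(K)} \slash H)$ of the local model is essential, since it decouples the two limiting processes and lets me take the limit in each factor independently. Once the local statement is in hand, the global frontier condition follows because the quotient map is open and the orbit type decomposition is $G$-invariant, exactly as in the proof of \cref{prop:cotangentBundle:frontierConditionDerivedFromQ}.
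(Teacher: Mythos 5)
Your overall skeleton matches the paper's (easy direction via continuity of \( \check{P} \to \check{Q} \) and the frontier conditions from \cref{prop:cotangentBundle:frontierConditionDerivedFromQ}; hard direction via the normal form plus separate approximations in base and fiber), but the hard direction as written has two genuine gaps. First, a centering problem: you run the argument in the chart of \cref{prop:cotangentBundle:seamsAreManifolds} around a point \( p \) with \( G_p = K \subseteq G_q = H \) and then speak of points of \( \seam{\check{P}}^{(K')}_{(H')} \) near \( \equivClass{p} \). But by the slice property \iref{i::slice:SliceDefOnlyStabNotMoveSlice} every point in that tube has stabilizer subconjugate to \( H \), and \cref{prop::compactLieSubgroup:conjugatedSubgroupEqual} then excludes orbit types strictly bigger than \( (H) \); in particular \( S_{H'} \) is empty in this chart unless \( (H') = (H) \), so the bigger seam is simply invisible where you placed yourself. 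The approximation must be organized the other way around, in a normal-form chart centered at a point of the \emph{more} singular seam (as the paper does). In that chart the smaller seam is the set of pairs \( (s, \alpha) \) with \( (G_s) = (H) \) and \emph{joint} stabilizer \( G_s \intersect G_\alpha \) of type \( (K) \), subject to \( J_{G_{q'}}(s, \alpha) = 0 \); this is not a product — the condition on \( \alpha \) depends on \( s \), and the momentum constraint couples the factors (it is automatic only over the fixed-point set \( S_{G_{q'}} \), which is where the product model of \cref{prop:cotangentBundle:seamsAreManifolds} lives). So the claim that the product structure \( S_H \times (X^*_{(K)} \slash H) \) "decouples the two limiting processes" — the pivot of your plan — is not available at the point where the limit actually has to be taken.

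Second, your fiber step asserts that the linear \( H \)-action on \( X^* \) satisfies the frontier condition "because a slice exists at every point". Slice existence (\cref{prop:slice:sliceTheoremLinearAction}) yields the submanifold structure of orbit type subsets but \emph{not} the frontier condition; the paper's remark following \cref{prop:cotangentBundle:frontierConditionDerivedFromQ} warns about precisely this, citing examples where the closure of a stratum contains some but not all fixed points — and fixed points are exactly the delicate locus for a linear action. In the paper the fiberwise approximation is not assumed but \emph{derived} from the frontier condition on \( Q \): carry \( \alpha \) through the \( G \)-equivariant isomorphism \( \CotBundle Q \isomorph \TBundle Q \) to a tangent vector, rescale it into the slice (using that the slice is an absorbent neighborhood in its model space, which preserves stabilizers), and then invoke the approximation property \( S_{(H'')} \subseteq \closureSet{S_{(K)}} \) inherited from \( Q \) — the mechanism of the proof of \cref{prop:cotangentBundle:frontierConditionDerivedFromQ}, which the seam lemma explicitly reuses via the frontier condition for \( \check{P} \). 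To repair your proof: recenter at the singular point \( q' \) with \( G_{q'} = H' \); produce \( s \) with \( (G_s) = (H) \) from the \( Q \)-frontier condition transported into the slice (as you correctly do); produce \( \alpha \in J^{-1}_{G_{q'}}(0) \) of the right type by the transport-and-rescale argument just described rather than by an unproven linear frontier condition; and then address the joint-stabilizer and momentum-constraint compatibilities that your decoupling claim was meant to finesse.
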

\begin{proof}
	Since the orbit type decomposition of \( Q \) satisfies the frontier condition, the orbit type decomposition of \( P \) and \( \check{P} \) share this property by \cref{prop:cotangentBundle:frontierConditionDerivedFromQ}.
	Let \( \seam{\check{P}}^{(K')}_{(H')} \) be a seam that has a non-empty intersection with the closure of \( \seam{\check{P}}^{(K)}_{(H)} \).
	In particular, \( \seam{\check{P}}^{(K')}_{(H')} \) intersects the closure of \( \check{P}_{(K)} \) and thus \( (K') \geq (K) \) as the orbit type decomposition satisfies the frontier condition.
	Since the canonical projection \( \check{P} \to \check{Q} \) is continuous, a similar argument shows that \( (H') \geq (H) \).

	For the converse direction, let \( \seam{\check{P}}^{(K)}_{(H)} \) and \( \seam{\check{P}}^{(K')}_{(H')} \) be seams with \( (K') \geq (K) \) and \( (H') \geq (H) \).
	We have to show that \( \seam{\check{P}}^{(K')}_{(H')} \) lies in the closure of \( \seam{\check{P}}^{(K)}_{(H)} \).
	Let \( \equivClass{p} \in \seam{\check{P}}^{(K')}_{(H')} \) and choose \( p \in \seam{P}^{(K')}_{(H')} \).
	Denote the base point of \( p \) by \( q \).
	We will show that every neighborhood of \( p \) in \( P \) has a non-trivial intersection with the preseam \( \seam{P}^{(K)}_{(H)} \).
	Since this is a local question, it is enough to consider it in a tubular neighborhood of the form \( G \times_{G_q} (\LieA{m}^* \times \CotBundle S) \).
	That is, it is enough to show that every neighborhood of \( \restr{p}{\TBundle_q S} \in \CotBundle_q S \) in \( J^{-1}_{G_q}(0) \) contains a point \( (s, \alpha) \in S \times X^* \isomorph \CotBundle S \) with \( (G_s) = (H) \) and \( (G_\alpha) = (K) \).
	The existence of such a point follows from the fact that \( P \) and \( Q \) satisfy the frontier condition.
	Indeed, since \( (H) \leq (H') = (G_q) \), the frontier condition implies that every neighborhood of \( q \) in \( Q \) contains a point \( q' \) such that \( (G_{q'}) = (H) \).
	Without loss of generality, we may assume that we work in slice coordinates and that there are, thus, \( g \in G \) and \( s \in S \) with \( q' = g \cdot s \).
	By equivariance of the stabilizer, we have \( (G_s) = (G_{q'}) = (H) \).
	The construction of \( \alpha \in J^{-1}_{G_q}(0) \) with \( (G_\alpha) = (K) \) follows from similar arguments using the frontier condition for \( \check{P} \).
\end{proof}

\begin{coro}
	Under the assumptions of \cref{prop:cotangentBundle:singularSympRed}, for every pair of orbit types \( (K) \) and \( (H) \) with \( (H) \geq (K) \), the closure of \( \seam{\check{P}}^{(K)}_{(H)} \) in \( \check{P}_{(K)} \) is the union of \( \seam{\check{P}}^{(K)}_{(H')} \) over all orbit types \( (H') \geq (H) \).
	In particular, the decomposition of \( \check{P}_{(K)} \) into seams satisfies the frontier condition. 
\end{coro}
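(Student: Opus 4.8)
The plan is to deduce the corollary directly from the preceding lemma by relating the closure taken in the full reduced phase space \( \check{P} \) to the closure taken relative to the subspace \( \check{P}_{(K)} \). That lemma identifies \( \closureSet{\seam{\check{P}}^{(K)}_{(H)}} \), with closure understood in \( \check{P} \), as the union of all seams \( \seam{\check{P}}^{(K')}_{(H')} \) ranging over pairs with \( (K') \geq (K) \) and \( (H') \geq (H) \). Since \( \seam{\check{P}}^{(K)}_{(H)} \subseteq \check{P}_{(K)} \) holds by the very definition of the seams, its closure in the subspace \( \check{P}_{(K)} \) is obtained by intersecting this ambient closure with \( \check{P}_{(K)} \), which is a standard property of the subspace topology. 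Thus the whole proof reduces to computing this intersection.

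First I would record the structural fact that every seam \( \seam{\check{P}}^{(K')}_{(H')} \) is contained in the orbit type stratum \( \check{P}_{(K')} \), and that distinct orbit type strata are pairwise disjoint, because each point carries a unique orbit type. Intersecting the ambient closure with \( \check{P}_{(K)} \) therefore annihilates every term indexed by \( (K') > (K) \) and leaves untouched exactly those indexed by \( (K') = (K) \). What survives is the union of the seams \( \seam{\check{P}}^{(K)}_{(H')} \) over all \( (H') \geq (H) \), which is precisely the description asserted in the statement. The frontier clause then follows at once: the identity just established shows that the closure of each seam \( \seam{\check{P}}^{(K)}_{(H)} \) relative to \( \check{P}_{(K)} \) is a union of seams of the decomposition, and since \( \check{P}_{(K)} \) is itself decomposed into the seams \( \seam{\check{P}}^{(K)}_{(H')} \) with \( (H') \geq (K) \), this is exactly the frontier condition for that decomposition.

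The argument carries no genuine analytic difficulty, and I expect the only point requiring care to be the bookkeeping between absolute and relative closures; this is dispatched cleanly by the disjointness of orbit type strata. In this sense the corollary is a purely topological consequence of the preceding lemma, and the task is essentially to verify that passing to the subspace \( \check{P}_{(K)} \) selects precisely the first-index-preserving part of the global frontier relation.
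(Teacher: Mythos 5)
Your argument is correct and coincides with the one the paper intends: the corollary is stated without proof precisely because it follows from the preceding lemma by taking relative closures, which is exactly what you do. Your bookkeeping --- writing the closure in \( \check{P}_{(K)} \) as the ambient closure intersected with \( \check{P}_{(K)} \), and using the pairwise disjointness of the orbit type strata (together with antisymmetry of the partial order on orbit types, guaranteed by properness) to discard all seams with first index \( (K') \neq (K) \) --- is the complete and intended justification.
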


Let us summarize.
\begin{thm}[Singular cotangent bundle reduction at zero]
	\label{prop:cotangentBundle:singularCotangentBundleRed}
	Let \( Q \) be a Fréchet \( G \)-manifold.
	Assume that the \( G \)-action on \( Q \) is proper, that it admits at every point a slice compatible with the cotangent bundle structures and that the decomposition of \( Q \) into orbit types satisfies the frontier condition.
	Moreover, assume that \( \CotBundle Q \) is a Fréchet vector bundle, which is \( G \)-equivariantly isomorphic to \( \TBundle Q \), and that the lifted action on \( \CotBundle Q \), endowed with its canonical symplectic form \( \omega \), has a momentum map \( J \).
	Assume, additionally, that every orbit is symplectically closed, that is, the symplectic double orthogonal \( (\LieA{g} \ldot p)^{\omega \omega} \) coincides with \( \LieA{g} \ldot p \) for all \( p \in J^{-1}(0) \).
	Then the following holds:
	\begin{thmenumerate}
		\item
			The set of orbit types of \( J^{-1}(0) \) with respect to the lifted \( G \)-action coincides with the set of orbit types for the \( G \)-action on \( Q \). 
		\item
			The reduced phase space \( \check{P} = J^{-1}(0) \slash G \) is stratified into orbit type subsets \( \check{P}_{(K)} = (J^{-1}(0))_{(K)} \slash G \).
			For every orbit type \( (K) \), the set \( \check{P}_{(K)} \) is a smooth manifold and carries a symplectic form \( \check{\omega}_{(K)} \).
		\item
			Every symplectic stratum \( \check{P}_{(K)} \) is further stratified as
			\begin{equation}
			 	\check{P}_{(K)} = \bigDisjUnion_{(H) \geq (K)} \seam{\check{P}}^{(K)}_{(H)} \,,
			\end{equation}
			where each seam \( \seam{\check{P}}^{(K)}_{(H)} \) is a smooth fiber bundle over \( \check{Q}_{(H)} \).
		\item
			For every orbit type \( (H) \), the principal seam \( \seam{\check{P}}^{(H)}_{(H)} \) endowed with the restriction of the symplectic form \( \check{\omega}_{(H)} \) is symplectomorphic to \( \CotBundle (\check{Q}_{(H)}) \) endowed with its canonical symplectic structure.
		\item
			The decomposition
			\begin{equation}
				\check{P} = \bigDisjUnion_{(H) \geq (K)} \seam{\check{P}}^{(K)}_{(H)}
			\end{equation}
			is a stratification of \( \check{P} \) called the \emph{secondary stratification}.
			Moreover, the projection \( \CotBundle Q \to Q \) induces a stratified surjective submersion \( \check{P} \to \check{Q} \) with respect to the secondary stratification of \( \check{P} \) and the orbit type stratification of \( \check{Q} \).
			\qedhere
	\end{thmenumerate}
\end{thm}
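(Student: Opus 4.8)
The plan is to assemble the theorem from the chain of lemmas established throughout this subsection; no genuinely new argument is required, as each of the five assertions has been prepared by a dedicated earlier result. It therefore remains only to match them up and to confirm that the separate pieces combine into a bona fide stratification in the sense of \cref{def:stratification:stratification} and into a stratified map.

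I would begin by dispatching the first four assertions by direct citation. The first assertion, that the orbit types of \( J^{-1}(0) \) coincide with those of \( Q \), together with the second, that \( \check{P} \) is stratified into smooth symplectic manifolds \( \check{P}_{(K)} \), is precisely the content of \cref{prop:cotangentBundle:singularSympRed}, whose third part supplies the symplectic form \( \check{\omega}_{(K)} \) under the hypothesis that every orbit is symplectically closed. The third assertion, that each symplectic stratum decomposes into seams \( \seam{\check{P}}^{(K)}_{(H)} \) fibering over \( \check{Q}_{(H)} \), is \cref{prop:cotangentBundle:seamsAreManifolds}. The fourth assertion, identifying the principal seam \( \seam{\check{P}}^{(H)}_{(H)} \) symplectically with \( \CotBundle(\check{Q}_{(H)}) \), is \cref{prop:cotangentBundle:restrictionSymplecticForm}.

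The remaining work is concentrated in the fifth assertion. To conclude that the decomposition of \( \check{P} \) into seams is a stratification, I would verify the axioms of \cref{def:stratification:stratification} one by one: each seam is a locally closed submanifold by \cref{prop:cotangentBundle:seamsAreManifolds}, and the frontier condition is exactly the content of the preceding lemma, which computes \( \closureSet{\seam{\check{P}}^{(K)}_{(H)}} \) as the union of the seams \( \seam{\check{P}}^{(K')}_{(H')} \) indexed by \( ((K'),(H')) \geq ((K),(H)) \). For the induced map, the \( G \)-equivariant projection \( \CotBundle Q \to Q \) descends to a map \( \check{P} \to \check{Q} \), which is surjective because the zero section of \( \CotBundle Q \) lies in \( J^{-1}(0) \) and already covers all of \( Q \); that this map carries each seam into the single stratum \( \check{Q}_{(H)} \) by a submersion is precisely the fiber bundle statement of \cref{prop:cotangentBundle:seamsAreManifolds}, whence \( \check{P} \to \check{Q} \) is a stratified surjective submersion.

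I do not expect a serious obstacle, precisely because the hard analysis has already been absorbed into \cref{prop:cotangentBundle:simpleNormalForm} and into the frontier condition lemmas, where the finite-dimensional dimension-counting of \parencite{PerlmutterRodriguez-OlmosSousa-Dias2007} had to be replaced by slice-theoretic arguments. If any single step demands care, it is confirming that the secondary decomposition meets \emph{every} axiom of a stratification and not the frontier condition alone — in particular local finiteness, the axiom most at risk in infinite dimensions. Here I would again argue in the semi-global model of \cref{prop:cotangentBundle:simpleNormalForm}: near any point the seams meeting a neighborhood are controlled by the orbit types of the compact isotropy group \( G_q \) acting linearly on \( \LieA{m}^* \times \CotBundle S \), so that local finiteness descends from the linear slice theorem \cref{prop:slice:sliceTheoremLinearAction} together with the finite-dimensionality of \( G_q \).
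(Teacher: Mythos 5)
Your proposal is correct and takes essentially the same route as the paper: there the theorem is introduced with ``Let us summarize'' and carries no separate proof, being exactly the assembly of \cref{prop:cotangentBundle:singularSympRed} for (i)--(ii), \cref{prop:cotangentBundle:seamsAreManifolds} for (iii), \cref{prop:cotangentBundle:restrictionSymplecticForm} for (iv), and the frontier-condition lemma for seams together with the fibration and zero-section observations for (v), just as you cite them. Your closing worry about local finiteness is harmless but unnecessary, since \cref{def:stratification:stratification} requires only locally closed smooth pieces and the frontier condition, both of which are already supplied by the quoted lemmas.
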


\subsection{Dynamics}
Let us now pass from the kinematic picture presented so far to dynamics.

In finite dimensions, every Hamiltonian \( h \) on a symplectic manifold \( (M, \omega) \) induces a Hamiltonian flow.
This no longer holds in an infinite-dimensional context.
For one thing, the symplectic form on \( M \) is in general only weakly symplectic so that
\begin{equation}
	\omega^\flat: \TBundle M \to \CotBundle M
\end{equation}
is only injective and not surjective.
Hence, a Hamiltonian vector field \( X_h \) associated to the \( 1 \)-form \( \dif h \) may not exist.
Even if \( X_h \) exists, it may not have a unique flow.
The construction of a flow requires the solution of an ordinary differential equation on \( M \), which a priori is not guaranteed to exists and to be unique in infinite dimensions.
For example, in the gauge theory context studied in \cref{sec:yangMillsHiggs} below, existence and uniqueness of the Hamiltonian flow is equivalent to the well-posedness of the Cauchy problem for the Yang--Mills-Higgs theory.

Let us return to the setting of \cref{prop:cotangentBundle:singularCotangentBundleRed}.
\begin{prop}
	Let \( h \) be a \( G \)-invariant Hamiltonian on \( \CotBundle Q \).
	Assume that the associated Hamiltonian vector field \( X_h \) exists and that it has a unique flow \( \flow^h_t \).
	Let \( (K) \) be an orbit type.
	Then,
	\begin{thmenumerate}
		\item
			the flow \( \flow^h_t \) is \( G \)-equivariant and leaves \( P_{(K)} \) invariant and, hence, it projects to a flow \( \check{\flow}^h_t \) on \( \check{P}_{(K)} \),
		\item
			the projected flow \( \check{\flow}^h_t \) is Hamiltonian with respect to the smooth function \( \check{h}_{(K)} \) on \( \check{P}_{(K)} \) defined by
			\begin{equation+}
				\pi_{(K)}^* \check{h}_{(K)} = \restr{h}{P_{(K)}}.
				\qedhere
			\end{equation+}
	\end{thmenumerate}
\end{prop}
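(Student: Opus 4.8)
The plan is to establish everything at the level of the vector field $X_h$ and then push it down to the quotient, using throughout that $P_{(K)}$ is a submanifold and $\pi_{(K)}$ a surjective submersion (both from \cref{prop:cotangentBundle:orbitTypeMomentumLevelSetIsManifold}). For the equivariance in~(i), write the lifted action of $g \in G$ on $\CotBundle Q$ as $\Psi_g$. By \cref{prop:contangentBundle:existenceMomentumMap} the form $\omega$ is $G$-invariant, and $h$ is invariant by hypothesis, so $\Psi_g$ preserves both $\omega$ and $h$; hence $\Psi_{g*}X_h$ is again a Hamiltonian vector field for $h$, and since $\omega$ is (weakly) non-degenerate such a field is unique, giving $\Psi_{g*}X_h = X_h$. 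Consequently $\Psi_g \circ \flow^h_t \circ \Psi_g^{-1}$ is the flow of $X_h$ and, by the assumed uniqueness of the flow, equals $\flow^h_t$; thus $\flow^h_t$ is $G$-equivariant.

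Next I would show that $\flow^h_t$ leaves $P_{(K)} = (\CotBundle Q)_{(K)} \cap J^{-1}(0)$ invariant. Preservation of the orbit type stratum is formal: for $p$ with stabiliser $G_p$ and $g \in G_p$, equivariance gives $\Psi_g(\flow^h_t(p)) = \flow^h_t(\Psi_g(p)) = \flow^h_t(p)$, so $G_p \subseteq G_{\flow^h_t(p)}$, and applying this to the inverse flow yields equality of stabilisers. Preservation of $J^{-1}(0)$ is Noether's theorem: for $\xi \in \LieA{g}$ the fundamental vector field $\xi \ldot \cdot$ satisfies $\omega(\xi \ldot \cdot, \,\cdot\,) = \dif\kappa(J(\cdot),\xi)$, whence, along the flow,
\[
	\frac{\dif}{\dif t}\,\kappa\bigl(J(\flow^h_t(p)),\xi\bigr)
		= \omega\bigl(\xi \ldot \cdot,\, X_h\bigr)
		= -\,\dif h\,(\xi \ldot \cdot)
		= 0 \,,
\]
the last equality by $G$-invariance of $h$. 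As $\kappa$ is non-degenerate, $J$ is constant along the flow, so $J^{-1}(0)$ and hence $P_{(K)}$ are invariant. Since $P_{(K)}$ is a submanifold and $\flow^h_t$ restricts to it, $X_h$ is tangent to $P_{(K)}$; together with equivariance, $\flow^h_t$ descends along the submersion $\pi_{(K)}$ to a smooth flow $\check\flow^h_t$ on $\check P_{(K)}$ with $\pi_{(K)} \circ \flow^h_t = \check\flow^h_t \circ \pi_{(K)}$.

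For~(ii), $G$-invariance of $\restr{h}{P_{(K)}}$ ensures it descends to a smooth $\check h_{(K)}$ with $\pi_{(K)}^*\check h_{(K)} = \restr{h}{P_{(K)}}$, again because $\pi_{(K)}$ is a submersion. Let $\check X$ be the generator of $\check\flow^h_t$; differentiating the intertwining relation at $t=0$ gives $\check X_{\equivClass{p}} = \tangent_p\pi_{(K)}(X_h)$. I would then check $\check\omega_{(K)}(\check X,\,\cdot\,) = \dif\check h_{(K)}$ pointwise: for $p \in P_{(K)}$ and $v \in \TBundle_{\equivClass{p}}\check P_{(K)}$ choose a lift $\tilde v \in \TBundle_p P_{(K)}$ with $\tangent_p\pi_{(K)}(\tilde v) = v$, and use $\pi_{(K)}^*\check\omega_{(K)} = \restr{\omega}{P_{(K)}}$, tangency of $X_h$, and $\omega(X_h,\,\cdot\,) = \dif h$:
\[
	\check\omega_{(K)}\bigl(\check X,\, v\bigr)
		= \omega\bigl(X_h,\, \tilde v\bigr)
		= \dif h\,(\tilde v)
		= \dif\bigl(\pi_{(K)}^*\check h_{(K)}\bigr)(\tilde v)
		= \dif\check h_{(K)}(v) \,.
\]
Hence $\check X$ is the Hamiltonian vector field of $\check h_{(K)}$ for $\check\omega_{(K)}$, and $\check\flow^h_t$ is its flow.

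The main obstacle is conceptual rather than computational and lies in the analytic fragility of the Fréchet setting: because $\omega$ is only weakly non-degenerate I must argue directly with $X_h$, whose existence and flow are taken as hypotheses, rather than invert $\omega^\flat$; and the conservation of $J$ has to be read off the \emph{infinitesimal} momentum-map relation, since there is no global pairing argument available. The one point demanding care is the inference ``$\flow^h_t$ preserves $P_{(K)}$'' $\Rightarrow$ ``$X_h$ is tangent to $P_{(K)}$'', which relies on $P_{(K)}$ being a locally closed submanifold (\cref{prop:cotangentBundle:orbitTypeMomentumLevelSetIsManifold}), so that the smooth curve $t \mapsto \flow^h_t(p)$ lying in $P_{(K)}$ has velocity in $\TBundle_p P_{(K)}$. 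Everything else reduces to the submersivity of $\pi_{(K)}$ and the already-established descent $\pi_{(K)}^*\check\omega_{(K)} = \restr{\omega}{P_{(K)}}$.
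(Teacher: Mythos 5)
Your proposal is correct and follows essentially the same route as the paper's proof: $G$-equivariance of $\flow^h_t$ from invariance of $X_h$ together with uniqueness of the flow, invariance of $P_{(K)}$ via Noether's theorem combined with preservation of stabilizers, and descent of both the flow and the Hamiltonian along the surjective submersion $\pi_{(K)}$. The only difference is one of detail: you prove the conservation of $J$ infinitesimally and verify the identity $\check{\omega}_{(K)}(\check{X}, \cdot) = \dif \check{h}_{(K)}$ explicitly via lifts (correctly flagging the tangency of $X_h$ to the locally closed submanifold $P_{(K)}$ as the delicate point), whereas the paper cites \parencite{DiezRatiuAutomorphisms} for the former and dismisses the latter as a routine calculation.
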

\begin{proof}
	Since \( h \) is \( G \)-invariant, the associated Hamiltonian vector field \( X_h \) is invariant, too.
	The calculation
	\begin{equation}
		\difFracAt{}{t}{t} \flow^h_t (g \cdot m) = (X_h)_{g \cdot m} = g \ldot (X_h)_m = \difFracAt{}{t}{t} g \cdot \flow^h_t (m)
	\end{equation}
	shows that the flow \( \flow^h_t \) is \( G \)-equivariant (since, by assumption, it exists and is unique).
	Moreover, the Noether theorem also holds in our infinite-dimensional setting (see \parencite{DiezRatiuAutomorphisms}) and thus the flow \( \flow^h_t \) leaves \( P_{(K)} \) invariant.
	Hence \( \flow^h_t \)  projects onto a flow \( \check{\flow}^h_t \) on \( \check{P}_{(K)} \).
	Denote the induced vector field on \( \check{P}_{(K)} \) by \( \check{X}_{(K)} \).

	Since \( h \) is \( G \)-invariant and \( \pi_{(K)} \) a surjective submersion, \( h \) descends to a smooth function \( \check{h}_{(K)} \) on \( \check{P}_{(K)} \).
	That \( \check{X}_{(K)} \) is  Hamiltonian with respect to \( \check{h}_{(K)} \),  indeed, is verified by a routine calculation.
\end{proof}

The interaction of dynamics with the secondary stratification is more complicated.
The seams are in general not preserved by the Hamiltonian flow.
The following example suggests that, for each orbit type \( (K) \), the singular seams \( \seam{\check{P}}^{(K)}_{(H)} \) with \( (H) > (K) \) stitch together the dynamics in the cotangent bundle \( \seam{\check{P}}^{(K)}_{(K)} \isomorph \CotBundle (\check{Q}_{(K)}) \).
\begin{example}
	\label{ex:cotangentBundle:harmonicOss}
	Consider a two-dimensional isotropic harmonic oscillator, whose coordinates are \( q = (q_1, q_2) \) and the corresponding momenta are \( p = (p_1, p_2) \).
	Consequently, the phase space is \( \CotBundle \R^2 \) and the Hamiltonian of the system is given by
	\begin{equation}
		H(q, p) = \frac{1}{2}\norm{p}^2 + \frac{1}{2}\norm{q}^2.
	\end{equation}
	Note that \( \UGroup(1) \) acting by rotation in the \(( q_1, q_2) \)-plane is a symmetry of \( H \).
	The angular momentum
	\begin{equation}
		J(q,p) = q_1 p_2 - q_2 p_1
	\end{equation}
	is the momentum map for the lift of this action to \( \CotBundle \R^2 \).
	Hence, \( J(q,p) = 0 \) if and only if \( q \) and \( p \) are parallel.
	The \( \UGroup(1) \)-action on \( Q \) is free except at the origin, which has stabilizer \( \UGroup(1) \).
	Consequently, the secondary stratification of \( P = J^{-1}(0) \) is
	\begin{equation}
		P = \underbrace{\set{(0,0)}}_{\seam{P}^{\UGroup(1)}_{\UGroup(1)}}
			\union 
			\underbrace{\set{(0, p \neq 0)}}_{\seam{P}^{\set{e}}_{\UGroup(1)}}
			\union
			\underbrace{\set{(q \neq 0, p) \given q \parallel p}}_{\seam{P}^{\set{e}}_{\set{e}}}.
	\end{equation}
	In order to identify the reduced phase space, consider the map
	\begin{equation}
		\label{eq:cotangentBundle:harmonicOss:Kmap}
		K: \CotBundle \R^2 \to \R^3, \quad (q,p) 
		\mapsto 
			\Vector{E_+ \\ E_- \\ H} 
			\defeq 
			\Vector{\frac{1}{2}\norm{p}^2 - \frac{1}{2}\norm{q}^2 \\ q \cdot p \\ \frac{1}{2}\norm{p}^2 + \frac{1}{2}\norm{q}^2}.
	\end{equation}
	The reason for this notation will become clear in a moment.
	On the way, we note that the combination of \( K \) and \( J \) yields the momentum map for the \( \UGroup(2) \)-symmetry\footnote{
	To be more precise, to arrive at the momentum map of \textcite[\nopp I.3.3]{CushmanBates1997} one has to exchange the coordinates \( q_2 \) and \( p_1 \).
	The \( \UGroup(2) \) symmetry is a good starting point for the qualitative discussion of the dynamics using the energy-momentum map; a topic we will not further develop here.}, see \parencite[\nopp I.3.3]{CushmanBates1997}.
	The Hamiltonian \( H \) is clearly non-negative and a direct calculation shows that \( H^2 - J^2 = E_+^2 + E_-^2 \).
	Hence, the image of \( P = J^{-1}(0) \) under \( K \) is the upper cone (with origin included), see \cref{fig:cotangentBundle:harmonicOss:integralCurves}.
	\begin{figure}[th]
		\centering
		\begin{subfigure}[t]{.5\textwidth}
			\centering
			\begin{tikzpicture}[scale=1]
				\draw[->] (-0.05, 0) -- (2.6, 0) node[right] {$\bar{q}$};
				\draw[->] (0, -2.4) -- (0, 2.6) node[above] {$\bar{p}$};

				\begin{scope}
					\clip (0.05,-2.35) rectangle (2.35,2.35);
					\foreach \hzero in {0.3,0.5,0.8,1.5,1.9,2.3,2.7,3.1,3.5,3.9}{
						\draw[lightgray,postaction={
								decorate,
								decoration={
									markings,
									mark=between positions 0.42 and 0.58 step 0.16 with \arrow{stealth};
								}}] 
							plot[domain=-1.5:1.5, samples=50, smooth] 
							({\hzero*cos(\x r)^2},{- tan(\x r)});
					}

					\pgfmathsetmacro{\hzero}{1.2}
					\draw[postaction={
							decorate,
							decoration={
								markings,
								mark=between positions 0.47 and 0.54 step 0.06 with \arrow{stealth};
							}}] 
							plot[domain=-1.5:1.5, samples=50] 
							({\hzero*cos(\x r)^2},{- tan(\x r)});
				\end{scope}
			\end{tikzpicture}
			\caption{Integral curves of \( Y_{\bar{H}} \) in \( \CotBundle \R_{> 0} \).}
		\end{subfigure}%
		\begin{subfigure}[t]{.4\textwidth}
			\centering
			\begin{tikzpicture}[scale=0.8]
				\foreach \hzero in {0.8,1.5,3.1,3.9,4.5}{
					\draw[lightgray,postaction={
							decorate,
							decoration={
								markings,
								mark=between positions 0.2 and 0.65 step 0.45 with \arrow{stealth};
							}}] 
						(0,-\hzero) ellipse [x radius={-\hzero / 3 + 2}, y radius={0.3 - \hzero / 20}];
				}

				\pgfmathsetmacro{\hzero}{2.3}
				\draw[postaction={
						decorate,
						decoration={
							markings,
							mark=between positions 0.2 and 0.65 step 0.45 with \arrow{stealth};
						}}] 
					(0,-\hzero) ellipse [x radius={-\hzero / 3 + 2}, y radius={0.3 - \hzero / 20}];

				\draw (0,0) ellipse [x radius=2, y radius=0.3];
				\draw (2,0) -- (0,-6) -- (-2,0);

				\node (generic) at (3.5, 0) {$\seam{\check{P}}^{\set{e}}_{\set{e}}$};
				\node (seam) at (2.2, -3) {$\seam{\check{P}}^{\set{e}}_{\UGroup(1)}$};
				\node (singular) at (1.5, -5.8) {$\seam{\check{P}}^{\UGroup(1)}_{\UGroup(1)}$};
				\draw[->] (generic) to [in=40, out=160] (1,-0.5);
				\draw[->] (seam) to [in=-25, out=180] (1.1,-2.8);
				\draw[->] (singular) to [in=-25, out=190] (0.1,-6);
			\end{tikzpicture}
			\caption{Integral curves of \( X_{H} \) in \( \check{P} \).}
		\end{subfigure}
		\caption{Comparison of the Hamiltonian flows in \( \CotBundle \R_{>0} \) and \( \check{P} \).}
		\label{fig:cotangentBundle:harmonicOss:integralCurves}
	\end{figure}
	Moreover, \( K \) is \( \UGroup(1) \)-invariant and descends to a homeomorphism
	\begin{equation}
		\check{K}: \check{P} \to C
	\end{equation}
	of the reduced phase space \( \check{P} = J^{-1}(0) \slash \UGroup(1) \) with the upper cone \( C \subseteq \R^3 \).
	The image of \( \seam{\check{P}}^{\UGroup(1)}_{\UGroup(1)} \) under \( \check{K} \) is the origin and the seam \( \seam{\check{P}}^{\set{e}}_{\UGroup(1)} \) gets mapped onto the line \( L \) determined by \( E_- = 0 \), \( H = E_+ \) and \( H > 0 \).
	The remaining part of the cone corresponds to \( \seam{\check{P}}^{\set{e}}_{\set{e}} \).
	
	We now pass to the symplectic structure.
	The Poisson brackets of the components \( E_\pm \) and \( H \) of \( K \) (viewed as real-valued functions on \( \CotBundle \R^2 \)) are given by
	\begin{equation}
		\label{eq:cotangentBundle:harmonicOss:commutatorSL}
		\poisson{H}{E_\pm} = \mp \, 2 E_\mp \qquad \poisson{E_+}{E_-} = 2 H.
	\end{equation}
	These relations are identical with the commutation relations of \( \SLAlgebra(2, \R) \).
	Hence, \( K \) is a Poisson map from \( \CotBundle \R^2 \) to \( \R^3 \isomorph \SLAlgebra(2, \R) \), where the latter space carries the usual Lie--Poisson structure, \ie, the one given by the bivector field
	\begin{equation}
		\label{eq:cotangentBundle:harmonicOss:poissonBivector}
		\Pi = - 2 E_- \, \partial_H \wedge \partial_{E_+} + 2 E_+ \, \partial_H \wedge \partial_{E_-} + 2 H \, \partial_{E_+} \wedge \partial_{E_-}.
	\end{equation}
	The symplectic top stratum \( \check{P}_{\set{e}} \), \ie the cone without the origin, is a coadjoint orbit of \( \SLGroup(2, \R) \) and thus carries the Kostant--Kirillov--Souriau symplectic form.
	As the singular stratum \( \check{P}_{\UGroup(1)} \) is zero-dimensional, its symplectic form vanishes.

	Recall from \cref{prop:cotangentBundle:restrictionSymplecticForm} the construction of the symplectomorphism \( \check{\bar{\pi}} \) between \( \seam{\check{P}}^{\set{e}}_{\set{e}} \) and \( \CotBundle (\check{Q}_{\set{e}}) \).
	Moreover, the map \( \equivClass{q} \mapsto \frac{1}{2} \norm{q}^2 \) identifies \( \check{Q}_{\set{e}} \) with \( \R_{> 0} \) and thus yields a symplectomorphism of \( \CotBundle (\check{Q}_{\set{e}}) \) with \( \CotBundle \R_{>0} \isomorph \R_{>0} \times \R \).
	A straightforward calculation shows that the combined symplectomorphism 
	\begin{equation}
		\psi: \seam{\check{P}}^{\set{e}}_{\set{e}} \to \CotBundle (\check{Q}_{\set{e}}) \to \CotBundle \R_{>0}
	\end{equation}
	is given by
	\begin{equation}
		\psi(\equivClass{q, p}) = \left(\frac{1}{2} \norm{q}^2, \frac{q \cdot p}{\norm{q}^2}\right) = \left(\frac{1}{2} (H - E_+), \frac{E_-}{H - E_+}\right).
	\end{equation}
	\todo{The canonical one form on the open cone has thus the form 
	\(
		\bar{\theta} = \frac{1}{2} \frac{E_-}{H - E_+} \dif(H - E_+) = \frac{E_-}{2} \dif \ln(H - E_+).
	\)
	It appears that the symplectic form has a log-divergence at \( H = E_- \).
	However, this cannot be the case as the whole cone is a symplectic manifold.
	How to resolve this paradox?
	}
	Consider the map
	\begin{equation}
		I: \CotBundle \R_{> 0} \to \R^3, \quad 
			(\bar{q}, \bar{p}) \mapsto \Vector{\bar{q} (\bar{p}^2 - 1) \\ 2 \bar{q} \bar{p} \\ \bar{q} (\bar{p}^2 + 1)}. 
	\end{equation}
	The image of \( I \) is the upper cone \( C \) without the line \( L \) and the following diagram commutes
	\begin{equationcd}
		\seam{\check{P}}^{\set{e}}_{\set{e}} \to[r, "\check{K}"] \to[d, "\check{\bar{\pi}}"] \to[rd, "\psi"]
			& C \setminus L
		\\
		\CotBundle (\check{Q}_{\set{e}}) \to[r]
			& \CotBundle \R_{> 0} \, .\to[u, "I"]
	\end{equationcd}
	Moreover, a direct calculation shows that the components of \( I \) again satisfy the commutation relations~\eqref{eq:cotangentBundle:harmonicOss:commutatorSL} and hence \( I \) is a Poisson immersion of \( \CotBundle \R_{> 0} \) into \( (\SLAlgebra(2, \R), \Pi) \).
	To summarize the kinematic picture, we have decomposed the reduced phase space into the symplectic strata \( \check{P}_{\UGroup(1)} \) and \( \check{P}_{\set{e}} \).
	The symplectic stratum \( \check{P}_{\set{e}} \) further decomposes into the cotangent bundle \( \CotBundle \R_{> 0} \) and the line \( L \).
	This decomposition is in accordance with \cref{prop:cotangentBundle:singularCotangentBundleRed}.

	Let us now discuss the dynamics.
	Using~\eqref{eq:cotangentBundle:harmonicOss:poissonBivector}, the Hamiltonian vector field \( X_H = \dif H \contr \Pi \) on \( \R^3 \) generated by \( H \) is given by
	\begin{equation}
		X_H = -2 E_- \, \partial_{E_+} + 2 E_+ \, \partial_{E_-}.
	\end{equation}
	Hence, the time evolution is given by rotation in the \( (E_+ , E_-) \)-plane with \( H = \const \).
	In particular, the flow periodically hits the line \( L \), \ie, the seam \( \seam{\check{P}}^{\set{e}}_{\UGroup(1)} \).
	It is interesting to compare this behavior to the Hamiltonian flow on \( \CotBundle \R_{> 0} \) generated by 
	\begin{equation}
		\bar{H} \defeq I^* H = \bar{q} (\bar{p}^2 + 1).
	\end{equation}
	The associated Hamiltonian vector field has the form
	\begin{equation}
	 	Y_{\bar{H}} = 2 \bar{q} \bar{p} \, \difp_{\bar{q}} - (\bar{p}^2 + 1) \, \difp_{\bar{p}}
	\end{equation}
	and, hence, the integral curves are given by
	\begin{equation}
		\bar{q}(t) = \bar{H}_0 \cos^2 (t + t_0), \qquad \bar{p}(t) = - \tan(t + t_0),
	\end{equation}
	where \( \bar{H}_0 \) and \( t_0 \) are determined by the initial conditions.
	Under the map \( I \), they read
	\begin{equation}
		t \mapsto \bar{H}_0 \Vector{-\cos\bigl(2 (t + t_0)\bigr) \\ -\sin\bigl(2 (t + t_0)\bigr) \\ 1}.
	\end{equation}
	Note that in \( \CotBundle \R_{> 0} \) the flow is not defined at times \( t_c = \frac{\pi}{2} + k \pi - t_0 \) with \( k \in \N \).
	At these times, \( \bar{p} \) explodes, \ie, the trajectory periodically tries to quickly leave the configuration space \( \R_{>0} \).
	On the other hand, the flow under \( I \) continuously extends to \( t \in \R \).
	In other words, the map \( I \) plays the role of regularizing the dynamics in \( \CotBundle \R_{>0} \).
	In this sense, the singular seam \( \seam{\check{P}}^{\set{e}}_{\UGroup(1)} \) stitches together the singular solution in \( \seam{\check{P}}^{\set{e}}_{\set{e}} \isomorph \CotBundle \R_{> 0} \) to a nice periodic flow.
	See \cref{fig:cotangentBundle:harmonicOss:integralCurves} for a visual comparison of the flows in \( \CotBundle \R_{>0} \) and \( \check{P} \).
\end{example}


\section{Yang--Mills--Higgs theory}
\label{sec:yangMillsHiggs}


In the sequel, we will investigate the stratified structure of the reduced phase space of the Yang--Mills--Higgs theory.
This important class of gauge field theories plays a pivotal role in physics as well as in mathematics.
First, it provides deep insights into the Standard Model of particle physics.
The role of the Higgs field is to partially break the gauge symmetry and endow some of the gauge bosons with a mass.
Second, the major interest in understanding the classical Yang--Mills--Higgs equations is due to the existence of instantons and magnetic monopole solutions (see \eg \parencite[Chapter~7]{RudolphSchmidt2014} and references therein).
Third, it inspired a wealth of deep results in geometric analysis starting from the fundamental work of \textcite{Hitchin1987,Taubes1985,Taubes1982a} and others.

For the basics we refer to \parencite{DiezRudolphClebsch}, where we have studied the Hamiltonian structure of this model based on the \( (3+1) \)-splitting of the configuration space and a geometric constraint analysis.
In order to fix the notation and the conventions, let us briefly recall the main points.

Let \( (M, g) \) be a \( 3 \)-dimensional compact oriented manifold without boundary endowed with a time-dependent Riemannian metric, which plays the role of a Cauchy surface\footnote{To be more precise, \( M \) should be viewed as the compactification of a Cauchy surface. The choice of compactification corresponds to certain boundary conditions on the fields at spatial infinity.} in the \(( 3+1 )\)-splitting.
Denote the induced volume form by \( \vol_g \).
The geometry underlying Yang--Mills-Higgs theory is that of a principal \( G \)-bundle \( P \to M \), where \( G \) is a connected compact Lie group.
A connection in \( P \) is a splitting of the tangent bundle \( \TBundle P = \VBundle P \oplus \HBundle P \) into the canonical vertical distribution \( \VBundle P \) and a horizontal distribution \( \HBundle P \).
Recall that \( \VBundle P \) is spanned by the fundamental vector fields \( p \mapsto \xi \ldot p \) for \( \xi \in \LieA{g} \).
In particular, \( \VBundle P \slash G \) is isomorphic to the adjoint bundle \( \AdBundle P = P \times_G \LieA{g} \).
The covariant derivative with respect to a connection \( A \) is denoted by \( \dif_A \) and the curvature of \( A \) is written as \( F_A \).
A bosonic matter field is a smooth section \( \varphi \) of the associated vector bundle \( F = P \times_G \FibreBundleModel{F} \), where the typical fiber \( \FibreBundleModel{F} \) carries a unitary \( G \)-representation.
Thus, the space of configurations \( \SectionSpaceAbb{Q} \) of Yang--Mills--Higgs theory consists of pairs \( (A, \varphi) \).
It obviously is the product of the infinite-dimensional affine Fréchet space \( \ConnSpace \) of smooth connections and the Fréchet space \( \SectionSpaceAbb{F} \) of smooth sections of \( F \).
Let \(  \FibreBundleModel{V}: \FibreBundleModel{F} \to \R \) be a \( G \)-invariant function and denote the induced function on \( F \) by \( V \) (the Higgs potential).

We assume that \( \FibreBundleModel{F} \) carries a \( G \)-invariant Riemannian metric, which provides an identification of \( F \) with its dual bundle \( F^* \).
In order to underline that the Hodge dual is defined in terms of a linear functional on the space of differential forms, we nonetheless use the convention that the Hodge dual of a vector-valued differential form \( \alpha \in \DiffFormSpace^k(M, F) \) is the \emph{dual-valued} differential form \( \hodgeStar \alpha \in \DiffFormSpace^{3-k}(M, F^*) \).
Moreover, we use the diamond product\footnote{This is the natural extension to differential forms of the diamond product \( \diamond: F \times F^* \to \LieA{g}^* \) that plays an important role in the study of Lie--Poisson systems.}
\begin{equation}
	\diamond: \DiffFormSpace^k(M, F) \times \DiffFormSpace^{3-r-k}(M, F^*) \to \DiffFormSpace^{3-r}(M, \CoAdBundle P)\, ,
\end{equation}
which is defined by
\begin{equation}
	\wedgeDual{\xi}{(\alpha \diamond \beta)} = \wedgeDual{(\xi \wedgeldot \alpha)}{\beta} \in \DiffFormSpace^{\dim M}(M) \quad \text{for all } \xi \in \DiffFormSpace^r(M, \AdBundle P),
\end{equation}
where \( \wedgeldot: \DiffFormSpace^r(M, \AdBundle P) \times \DiffFormSpace^k(M, F) \to \DiffFormSpace^{r+k}(M, F) \) is the natural operation obtained by combining the Lie algebra action \( \LieA{g} \times \FibreBundleModel{F} \to \FibreBundleModel{F}, (\xi, f) \mapsto \xi \ldot f \) with the wedge product operation.
Moreover, for \( \alpha \in \DiffFormSpace^k(M, F) \) and \( \beta \in \DiffFormSpace^{3-k}(M, F^*) \), we have denoted by \( \wedgeDual{\alpha}{\beta} \) the real-valued top-form which arises from combining the wedge product with the natural pairing \( \dualPairDot: F \times F^* \to \R \).

Since \( \SectionSpaceAbb{Q} \) is an affine space, its tangent bundle is trivial with fiber \( \DiffFormSpace^1(M, \AdBundle P) \times \sSectionSpace(F) \).
We will denote points in \( \TBundle \SectionSpaceAbb{Q} \) by tuples \( (A, \alpha, \varphi, \zeta) \) with \( \alpha \in \DiffFormSpace^1(M, \AdBundle P) \) and \( \zeta \in \sSectionSpace(F) \).
A natural choice for the cotangent bundle $\CotBundle \SectionSpaceAbb{Q}$ is the trivial bundle over $\SectionSpaceAbb{Q}$ with fiber \( 	\DiffFormSpace^2(M, \CoAdBundle P) \times \DiffFormSpace^3(M, F^*)  \).
We denote elements of this fiber by pairs \( (D, \Pi) \).
Then, the natural pairing with \( \TBundle \SectionSpaceAbb{Q} \) is given by integration over \( M \), 
\begin{equation}
	\dualPair*{(D, \Pi)}{(\alpha, \zeta)} = \int_M \wedgeDual{D}{\alpha} + \int_M \wedgeDual{\Pi}{\zeta}.
\end{equation}

The equations of motion are derived from their \( 4 \)-dimensional covariant counterpart in the temporal gauge, see \parencite{DiezRudolphClebsch}.
In this \( (3+1) \) formulation, \( M \) is a Cauchy surface and the Lorentzian metric on \( \R \times M \) is of the form \( - \ell(t)^2 \dif t^2 + g(t) \), where \( \ell \) is the lapse function, that is, \( \ell(t) \in \sFunctionSpace(M) \).
With this notation at hand, the evolutionary form of the Yang--Mills--Higgs equations is given by:
\begin{subequations}\label{eq:yangMillsHiggs:equationsOfMotion}\begin{align+}
	\partial_t D &= - \dif_A (\ell \hodgeStar F_A) - \ell \varphi \diamond \hodgeStar (\dif_A \varphi),
	\label{eq:yangMillsHiggs:ampere}
	\\
	\partial_t A &= \ell \hodgeStar D,
	\label{eq:yangMillsHiggs:Dmomentum}
	\\
	\partial_t \Pi &= \dif_A (\ell \hodgeStar \dif_A \varphi) - \ell \, V'(\varphi) \vol_g,
	\label{eq:yangMillsHiggs:higgs}
	\\
	\partial_t \varphi &= \ell \hodgeStar \Pi,
	\label{eq:yangMillsHiggs:pimomentum}
	\\
	\dif_A D &+ \varphi \diamond \Pi = 0 \, .
	\label{eq:yangMillsHiggs:GaussConstraint}
\end{align+}\end{subequations}
By \parencite[Theorem~3.4]{DiezRudolphClebsch}, the evolution \cref{eq:yangMillsHiggs:ampere,eq:yangMillsHiggs:Dmomentum,eq:yangMillsHiggs:higgs,eq:yangMillsHiggs:pimomentum} are Hamiltonian with respect to 
\begin{equation}\label{eq:yangMillsHiggs:hamiltonian}\begin{split}
	\SectionSpaceAbb{H}(A, D, \varphi, \Pi)
		&= \int_M \frac{\ell}{2} \Bigl(\wedgeDual{D}{\hodgeStar D} +  \wedgeDual{F_A}{\hodgeStar F_A}
		+ \wedgeDual{\Pi}{\hodgeStar \Pi} + \wedgeDual{\dif_A \varphi}{\hodgeStar \dif_A \varphi}
		\\
		&\qquad+ 2 \, V(\varphi) \vol_g \Bigr).
\end{split}\end{equation}
Moreover, \cref{eq:yangMillsHiggs:GaussConstraint} is the Gauß constraint.
In terms of the cotangent bundle geometry, it has the following interpretation, \cf \parencite{DiezRudolphClebsch,Sniatycki1999,ArmsMarsdenEtAl1981}.
On \( \SectionSpaceAbb{Q} = \ConnSpace \times \SectionSpaceAbb{F} \) we have a left action of the group \( \GauGroup = \sSectionSpace(P \times_G G) \) of local gauge transformations, 
\begin{equation}
\label{LocGTr}
	A \mapsto \AdAction_{\lambda} A + \lambda \dif \lambda^{-1},
	\quad
	\varphi \mapsto \lambda \cdot \varphi,
\end{equation}
for \( \lambda \in \GauGroup \).
The Hamiltonian \( \SectionMapAbb{H} \) is invariant under the lift of the action to \( \CotBundle \SectionSpaceAbb{Q} \).
A straightforward calculation shows that 
\begin{equation}\label{eq:yangMillsHiggs:momentumMap}
	\SectionMapAbb{J}(A, D, \varphi, \Pi) = \dif_A D + \varphi \diamond \Pi
\end{equation}
is the momentum map for the lifted action with respect to the natural choice \( \GauAlgebra^* = \DiffFormSpace^3(M, \CoAdBundle P) \), see \parencite[Equation~3.10]{DiezRudolphClebsch}.
Hence, the Gauß constraint~\eqref{eq:yangMillsHiggs:GaussConstraint} is equivalent to the momentum map constraint \( \SectionMapAbb{J} = 0 \).
\begin{remark}
	In \parencite{DiezRudolphClebsch}, we have accomplished a unification of the Hamiltonian evolution \cref{eq:yangMillsHiggs:ampere,eq:yangMillsHiggs:Dmomentum,eq:yangMillsHiggs:higgs,eq:yangMillsHiggs:pimomentum} with the constraint~\eqref{eq:yangMillsHiggs:GaussConstraint} by developing a novel variational principle (called the Clebsch--Lagrange principle).
	Besides the variation of configuration variables, the latter includes also the variation of the symmetry generators of the system.
	Here, these generators coincide with the time-component of the gauge potential of the \( 4 \)-dimensional theory.
	In this language, the choice of the temporal gauge has the interpretation of being the first step in symplectic reduction by stages.
	In the sequel, we discuss the reduction of the remaining symmetry of the Cauchy problem.
	A version of the reduction by stages theorem thus shows that the reduced phase space we obtain coincides with the reduced phase space of the \( 4 \)-dimensional theory.
\end{remark}

Note that the action of gauge transformations on \( \SectionSpaceAbb{Q} \) is usually not free.
Hence, the model under consideration fits into the general setting of infinite-dimensional \emph{singular} cotangent bundle reduction as discussed in \cref{sec:cotangentBundleReduction}.
We now show that all assumptions made in the general discussion are met for the Yang--Mills--Higgs system:
\begin{enumerate}
	\item 
		\( \SectionSpaceAbb{Q} \) is a Fréchet manifold, because it is an affine space modeled on the Fréchet vector space \( \DiffFormSpace^1(M, \AdBundle P) \times \sSectionSpace(F) \).
	\item
		\( \GauGroup \) is a Fréchet Lie group, because it is realized as the space of smooth sections of the group bundle \( P \times_G G \), see \parencite{CirelliMania1985} for details.
	\item
		The cotangent bundle \( \CotBundle \SectionSpaceAbb{Q} = \SectionSpaceAbb{Q} \times \DiffFormSpace^2(M, \CoAdBundle P) \times \DiffFormSpace^3(M, F^*) \) is clearly a Fréchet vector bundle.
		The Hodge operator yields a fiber-preserving \( \GauGroup \)-equivariant isomorphism between \( \TBundle \SectionSpaceAbb{Q} \) and \( \CotBundle \SectionSpaceAbb{Q} \).
	\item 
		The \( \GauGroup \)-action on \( \SectionSpaceAbb{Q} \) is affine and thus smooth.
		Moreover, it is proper, see \parencite{Diez2013,RudolphSchmidtEtAl2002}.
	\item 
		The \( \GauGroup \)-action on \( \SectionSpaceAbb{Q} \) admits a slice at every point.
		First, the slice \( \SectionSpaceAbb{S}_{A_0} \) at \( A_0 \in \ConnSpace \) is given by the Coulomb gauge condition.
		That is\footnote{Here, as usual, \( \dif^*_{A} \alpha \defeq (-1)^k \hodgeStar \dif_{A} \hodgeStar \alpha \) for a \( k \)-form \( \alpha \).},
		\begin{equation}
			\SectionSpaceAbb{S}_{A_0} \defeq \set{A \in \SectionSpaceAbb{U} \given \dif_{A_0}^* (A - A_0) = 0},
		\end{equation}
		where \( \SectionSpaceAbb{U} \) is an open neighborhood of \( A_0 \) in \( \ConnSpace \).
		In order to see that \( \SectionSpaceAbb{S} \) is a slice, indeed, one uses the Nash--Moser inverse function theorem \parencite{Hamilton1982}, which amongst other things relies on the fact that \( \SectionSpaceAbb{Q} \) and \( \GauGroup \) are in fact tame Fréchet.
		The details can be found in \parencite{Diez2013,AbbatiCirelliEtAl1989}.
		
		Note that this slice for the \( \GauGroup \)-action on \( \ConnSpace \) fixes the gauge transformations up to elements of the stabilizer \( \GauGroup_{A_0} \) of \( A_0 \).
		Thus, we are left with the \( \GauGroup_{A_0} \)-action on \( \SectionSpaceAbb{F} \).
		This is a linear action of a finite-dimensional compact group on a Fréchet space and hence has a slice \( \SectionSpaceAbb{S}_{\varphi_0} \) at every point \( \varphi_0 \in \SectionSpaceAbb{F} \), see \cref{prop:slice:sliceTheoremLinearAction}.
		By \cref{prop:slice:sliceForProduct}, the product \( \SectionSpaceAbb{S}_{A_0} \times \SectionSpaceAbb{S}_{\varphi_0} \) is a slice at \( (A_0, \varphi_0) \) for the \( \GauGroup \)-action on \( \SectionSpaceAbb{Q} \).
		This slice is compatible with the cotangent bundle structures in the sense of \cref{defn:cotangentBundle:sliceCompatible} as will be discussed in \cref{sec:yangMillsHiggs:normalForm}. 
	\item
		That every orbit of \( \GauGroup \) is symplectically closed will be shown in \cref{prop:yangMillsHiggs:reduction:orbitsSymplecticallyClosed} below.
		Thus, in particular, the strong version of the Bifurcation \cref{prop:bifurcationLemma} holds.
\end{enumerate}
As a consequence, \cref{prop:cotangentBundle:simpleNormalForm} holds.
For \cref{prop:cotangentBundle:singularSympRed,prop:cotangentBundle:singularCotangentBundleRed} to hold we assume, additionally, that the frontier condition for the decomposition of \( \SectionSpaceAbb{Q} \) into gauge orbit types is satisfied.
The orbit type decomposition of \( \ConnSpace \) satisfies the frontier condition \parencite[Theorem~4.3.5]{KondrackiRogulski1986}.
However, including matter fields is a rather delicate issue.
As we will see, the stabilizer of the Higgs field is given in terms of a series of intersections of stabilizer groups of the \( G \)-action on \( \FibreBundleModel{F} \) and this intersection is hard to analyze in full generality.
For the Glashow--Weinberg--Salam model the frontier condition can be verified by direct inspection, see \cref{prop:yangMillsHiggs:GWS:approximationHiggs}.

Subsequently, we analyze the content of these theorems for the case under consideration.

\subsection{Normal form}
\label{sec:yangMillsHiggs:normalForm}

As for the classical Hodge--de Rham complex, elliptic theory\footnote{These Hodge-type decompositions are usually derived in a Sobolev context but elliptic regularity implies that the decompositions below hold in the \( \sFunctionSpace \)-setting.} gives topological isomorphisms (\cf \parencite[Theorem~6.1.9]{RudolphSchmidt2014}):
\begin{subequations}
\begin{align+}
	\label{eq:yangMillsHiggs:decomposeLie}
	\DiffFormSpace^0(M, \AdBundle P) &= \img \dif^*_{A} \oplus \ker \dif_{A},
	\\
	\label{eq:yangMillsHiggs:decomposeTangent}
	\DiffFormSpace^1(M, \AdBundle P) &= \img \dif_{A} \oplus \ker \dif^*_{A},
\intertext{where \( \dif^*_{A} \) is the codifferential. By applying the Hodge star operator, we obtain the dual decompositions:}
	\label{eq:yangMillsHiggs:decomposeLieDual}
	\DiffFormSpace^3(M, \CoAdBundle P) &= \img \dif_{A} \oplus \ker \dif^*_{A},
	\\
	\DiffFormSpace^2(M, \CoAdBundle P) &= \img \dif^*_{A} \oplus \ker \dif_{A}.
\end{align+}
\end{subequations}
For clarity of presentation, we first derive the normal form of \cref{prop:cotangentBundle:simpleNormalForm} for \( \CotBundle \ConnSpace \) and thereby ignore the Higgs part for a moment (see \cref{rem:yangMillsHiggs:normalFormWithHiggs}).
By the above decompositions, we have a splitting of \( \TBundle \ConnSpace \),
\begin{equation}
	\TBundle_{A} \ConnSpace 
		\isomorph \DiffFormSpace^1(M, \AdBundle P) 
		= \img \dif_{A} \oplus \ker \dif^*_{A},
\end{equation}
into the canonical vertical distribution \( \img \dif_{A} \) and the \( L^2 \)-orthogonal complement \( \ker \dif^*_{A} \).
This decomposition is basic for the study of the geometry of the stratification, see \parencite[Sections~8.3 and~8.4]{RudolphSchmidt2014}.
As one expects from Hodge theory, these decompositions satisfy the annihilation relations
\begin{align}
	\left(\img \dif^*_{A}\right)^\polar &= \ker \dif^*_{A},
	\qquad
	\left(\ker \dif_{A}\right)^\polar = \img \dif_{A},
	\\
	\left(\img \dif_{A}\right)^\polar &= \ker \dif_{A},
	\qquad
	\left(\ker \dif^*_{A}\right)^\polar = \img \dif^*_{A},
\end{align}
with respect to the natural  pairing between \( \DiffFormSpace^k(M, \AdBundle P) \) and \( \DiffFormSpace^{3-k}(M, \CoAdBundle P) \).
Thus, for the spaces involved in \cref{prop:cotangentBundle:simpleNormalForm} we obtain:
\begin{alignat}{2}
	\GauAlgebra_{A_0} &= \ker \dif_{A_0}: \DiffFormSpace^0(M, \AdBundle P) \to \DiffFormSpace^1(M, \AdBundle P),
	\\
	\GauAlgebra_{A_0}^* &= \ker \dif^*_{A_0}: \DiffFormSpace^3(M, \CoAdBundle P) \to \DiffFormSpace^2(M, \CoAdBundle P),
	\\
	\LieA{m} &= \img \dif^*_{A_0}: \DiffFormSpace^1(M, \AdBundle P) \to \DiffFormSpace^0(M, \AdBundle P),
	\\
	\LieA{m}^* &= \img \dif_{A_0}: \DiffFormSpace^2(M, \CoAdBundle P) \to \DiffFormSpace^3(M, \CoAdBundle P),
	\\
	\TBundle_A \SectionSpaceAbb{S}_{A_0} &= \ker \dif^*_{A_0}: \DiffFormSpace^1(M, \AdBundle P) \to \DiffFormSpace^0(M, \AdBundle P),
	\\
	\CotBundle_A \SectionSpaceAbb{S}_{A_0} &= \ker \dif_{A_0}: \DiffFormSpace^2(M, \CoAdBundle P) \to \DiffFormSpace^3(M, \CoAdBundle P).
\end{alignat}
In the sequel, we use \( \parallel \) or \( \perp \) to denote objects that are parallel or orthogonal to the gauge orbits, respectively.
Accordingly, we can write every \( E \in \TBundle_A \ConnSpace \) as \( E = \AdAction_\lambda (E^\perp + E^\parallel) \) with \( E^\perp \in \TBundle_A \SectionSpaceAbb{S}_{A_0} \) and \( E^\parallel = - \dif_{A} \chi \) for some \( \chi \in \LieA{m} \).
As a consequence of the above identifications, the local diffeomorphism~\eqref{eq:cotangentBundle:normalFormTBundle} takes the form
\begin{equation}
	\label{eq:yangMillsHiggs:normalFormTBundle}
	\GauGroup \times_{\GauGroup_{A_0}} (\LieA{m} \times \TBundle \SectionSpaceAbb{S}_{A_0}) \to \TBundle \ConnSpace,
	\quad
	\equivClass{\lambda, (\chi, A, E^\perp)} \mapsto \left(\lambda \cdot A, \AdAction_\lambda(E^\perp - \dif_{A} \chi)\right).
\end{equation}
In order to determine the dual map \( \Phi \), we introduce the Faddeev--Popov operator \( \laplace_{A A_0} \) for every \( A \in \SectionSpaceAbb{S}_{A_0} \) as the composition
\begin{equationcd}
	\img \dif_{A_0} \toInject{r} 
		& \DiffFormSpace^3(M, \CoAdBundle P) \to[r, "\dif_{A} \dif^*_{A_0}"] 
		& \DiffFormSpace^3(M, \CoAdBundle P) \to[r, "\pr"]
		& \img \dif_{A_0}.
\end{equationcd}
By possibly shrinking the slice \( \SectionSpaceAbb{S}_{A_0} \), we may assume that \( \laplace_{A A_0} \) is an invertible operator on \( \img \dif_{A_0} \) for all \( A \in \SectionSpaceAbb{S}_{A_0} \), see \parencite[p.~655]{RudolphSchmidt2014} or \parencite[p.~22]{DiezHuebschmann2017}.
In particular, we have \( \dif_{A} \dif^*_{A_0} \circ \laplace^{-1}_{A A_0} = \id_{\img \dif_{A_0}} \).
Thus, the operator
\begin{equation}
	T_{A}: \CotBundle_A \SectionSpaceAbb{S}_{A_0} \times \LieA{m}^* \to \DiffFormSpace^2(M, \CoAdBundle P),
	\quad
	(D^\perp, \nu) \mapsto D^\perp + \dif^*_{A_0} \laplace^{-1}_{A A_0} \nu
\end{equation}
satisfies 
\begin{equation}
	\pr_{\ker \dif_{A}} \circ T_{A} (D^\perp, \nu) = D^\perp,
	\qquad
	\pr_{\img \dif_{A}} \circ \dif_{A} T_{A} (D^\perp, \nu) = \nu.
\end{equation}
Thus, \( T_A \) is an isomorphism of Fréchet spaces for every \( A \in \SectionSpaceAbb{S}_{A_0} \).
Moreover, we find
\begin{equation}
	\dualPair{E^\perp - \dif_{A} \chi}{T_{A}(D^\perp, \nu)}
		= \dualPair{E^\perp}{T_{A}(D^\perp, \nu)} + \dualPair{\chi}{\dif_{A} T_{A}(D^\perp, \nu)}
		= \dualPair{E^\perp}{D^\perp} + \dualPair{\chi}{\nu}.
\end{equation}
In summary, here, the local diffeomorphism \( \Phi \) defined in~\eqref{eq:cotangentBundle:normalFormCotBundle}  has the form
\begin{equation}
	\label{eq:yangMillsHiggs:normalFormCotBundle}
	\GauGroup \times_{\GauGroup_{A_0}} (\LieA{m}^* \times \CotBundle \SectionSpaceAbb{S}_{A_0}) \to \CotBundle \ConnSpace,
	\quad
	\equivClass{\lambda, (\nu, A, D^\perp)} \mapsto \left(\lambda \cdot A, \CoAdAction_\lambda(D^\perp + \dif^*_{A_0} \laplace^{-1}_{A A_0} \nu)\right).
\end{equation}
Thus, every \( D \in \CotBundle_A \ConnSpace \) decomposes as \( D = \CoAdAction_\lambda(D^\perp + D^\parallel) \) with \( D^\perp \in \CotBundle_A \SectionSpaceAbb{S}_{A_0} \) and \( D^\parallel = \dif^*_{A_0} \laplace^{-1}_{A A_0} \nu \) for some \( \nu \in \LieA{m}^* \).
That is, \( \nu = \dif_{A_0} D^\parallel \).
\begin{remark}
	As we have seen, for every \( A \in \SectionSpaceAbb{S}_{A_0} \), the map
	\begin{equation}
		\LieA{m} \times \TBundle_A \SectionSpaceAbb{S}_{A_0} \to \TBundle_A \ConnSpace,
		\quad
		(\chi, E^\perp) \mapsto E^\perp - \dif_{A} \chi
	\end{equation}
	has the adjoint \( T_A \).
	Since \( T_A \) is an isomorphism, we see that the slice \( \SectionSpaceAbb{S} \) is compatible with the cotangent bundle structures.
\end{remark}
In these local coordinates, the momentum map \( \SectionMapAbb{J} \) of~\eqref{eq:yangMillsHiggs:momentumMap} is expressed as follows, (\cf~\eqref{eq:cotangentBundle:normalFormMomentumMap}):
\begin{equation}
	\label{eq:yangMillsHiggs:normalFormMomentumMap}
	\SectionMapAbb{J}(\Phi(\equivClass{\lambda, (\nu, A, D^\perp)}), \varphi, \Pi) = \CoAdAction_\lambda(\dif_{A} D^\perp + \nu) + \varphi \diamond \Pi,
\end{equation}
where \( A \in \SectionSpaceAbb{S}_{A_0} \) and \( D^\perp \in \ker \dif_{A_0} \).
We denote the matter charge density \( \varphi \diamond \Pi \) by \( \rho \) and decompose it with respect to~\eqref{eq:yangMillsHiggs:decomposeLieDual},
\begin{equation}
	\rho = \CoAdAction_\lambda (\rho^\parallel + \rho^\perp),
\end{equation}
where \( \rho^\parallel \in \img \dif_{A_0} \) and \( \rho^\perp \in \ker \dif^*_{A_0} \).
Note that \( \dif_{A} D^\perp \in \ker \dif^*_{A_0} \), because by upper semi-continuity of the kernel of a semi-Fredholm operator \parencite[Corollary~19.1.6]{Hoermander2007} we have
\begin{equation}
	\ker \dif^*_{A_0} \dif_{A} \subseteq \ker \dif^*_{A_0} \dif_{A_0} = \ker \dif_{A_0}.
\end{equation}
Thus, with respect to the decomposition \( \GauAlgebra^* = \LieA{m}^* \oplus \GauAlgebra_{A_0}^* \), the Gauß constraint \( \SectionMapAbb{J} \circ \Phi = 0 \) is equivalent to the following equations:
\begin{subequations}\label{eq:yangMillsHiggs:decompositionGauss}\begin{align}
	\label{eq:yangMillsHiggs:decompositionGauss:linear}
	\nu + \rho^\parallel &= 0 \, ,
	\\
	\label{eq:yangMillsHiggs:decompositionGauss:nonlinear}
	\dif_{A} D^\perp + \rho^\perp &= 0 \, .
\end{align}\end{subequations}
In summary, we have decomposed the non-linear Gauß constraint into a linear equation and finitely-many non-linear equations according to the fact that the stabilizer \( \GauAlgebra_{A_0} \) is finite-dimensional.
In spirit, this splitting is similar to the Kuranishi method or the Lyapunov--Schmidt construction, where one also chooses convenient coordinates to reduce a non-linear equation to a non-linear equation in finite dimensions.
However, in contrast to these procedures, we find the coordinates by exploiting the cotangent bundle geometry of the problem and do not directly use the inverse function theorem.
Note that the construction of the local diffeomorphism \( \Phi \) involves the non-local operator \( \laplace^{-1}_{A A_0} \).
Hence, the reconstruction of the solution \( (A, D) \) of the Gauß constraint from a solution \( (\nu, A, D^\perp) \) of~\eqref{eq:yangMillsHiggs:decompositionGauss} is a non-local and non-linear operation.
In the physics literature, one usually only considers the case of trivial stabilizer \( \GauAlgebra_{A_0} \) (\ie, irreducible connections, as we will see below).
In this case, the above construction reduces the Gauß constraint to the linear equation~\eqref{eq:yangMillsHiggs:decompositionGauss:linear}.
However, if one wants to include non-generic configurations, that is, connections with a non-trivial stabilizer,~\eqref{eq:yangMillsHiggs:decompositionGauss:nonlinear} must be taken into account as well.

On the quantum level, our observations suggest that the standard quantization methods, which only take the linear constraint~\eqref{eq:yangMillsHiggs:decompositionGauss:linear} into account, fail in the neighborhood of reducible connections and need to be supplemented by the non-linear constraint~\eqref{eq:yangMillsHiggs:decompositionGauss:nonlinear}.
For a quantization program for \emph{lattice} gauge theories, where non-generic gauge orbit strata are included, we refer to \parencite[Chapter~9]{RudolphSchmidt2014} and to \parencite{HuebschmannRudolphEtAl2009} for a case study.

\begin{remark}
	\label{rem:yangMillsHiggs:normalFormWithHiggs}
	In the above construction of the normal form, we have considered only \( \CotBundle \ConnSpace \) and, thereby, we have ignored the Higgs part.
By passing to the normal form  of the full cotangent bundle \( \CotBundle \SectionSpaceAbb{Q} \) according to \cref{prop:cotangentBundle:simpleNormalForm}, we note that the stabilizer \( \GauAlgebra_{A_0} \) further decomposes into the stabilizer \( \GauAlgebra_{(A_0, \varphi_0)} = \GauAlgebra_{A_0} \intersect \GauAlgebra_{\varphi_0} \) of \( (A_0, \varphi_0) \in \SectionSpaceAbb{Q} \) and some complement \( \LieA{r} \).
	Accordingly, the non-linear part~\eqref{eq:yangMillsHiggs:decompositionGauss:nonlinear} of the Gauß law further decomposes into a linear equation in \( \LieA{r}^* \) and a non-linear equation in \( \GauAlgebra_{(A_0, \varphi_0)}^* \).
\end{remark}

\subsection{Orbit types}
\label{sec:yangMillsHiggs:orbitTypes}

Next, let us consider \cref{prop:cotangentBundle:singularSympRed}.
By point (i) of this theorem, the set of orbit types of \( \SectionSpaceAbb{P} = \SectionMapAbb{J}^{-1}(0) \) with respect to the lifted \(\GauGroup \)-action coincides with the set of orbit types for the \(\GauGroup \)-action on \( \SectionSpaceAbb{Q} \).
We will now determine these orbit types. 
As above, first we limit our attention to the case \( \SectionSpaceAbb{Q} = \ConnSpace \).
As already mentioned in the introduction, the classification of orbit types for the \( \GauGroup \)-action on \( \ConnSpace \) was accomplished in \parencite{RudolphSchmidtEtAl2002b,RudolphSchmidtEtAl2002a,RudolphSchmidtEtAl2002,HertschRudolphSchmidt2010,HertschRudolphSchmidt2011}.

\newcommand{\lpnt}{p_0}
Choose a point \( \lpnt \in P \).
Evaluating gauge transformations at \( \lpnt \) yields a Lie group homomorphism \( \ev_{\lpnt}: \GauGroup \to G \).
A gauge transformation \( \lambda \in \GauGroup \) leaves the connection \( A \) invariant if and only if \( \lambda \) is constant on every \( A \)-horizontal curve, that is, if it is constant on the holonomy bundle \( P_A \) of \( A \) (based at \( \lpnt \)).
By \( G \)-equivariance, \( \lambda \in \GauGroup_A \) is completely determined by its value at some point in \( P_A \) and thus the evaluation map \( \ev_{\lpnt} \) restricts to an isomorphism of Lie groups between the stabilizer \( \GauGroup_A \) and the centralizer \( \centralizer_G (\HolGroup_A) \) of the holonomy group of \( A \) based at \( \lpnt \) (\cf \parencite[Theorem~2.1]{RudolphSchmidtEtAl2002}).
We usually suppress the evaluation map in our notation and view \( \GauGroup_A \) directly as a Lie subgroup of \( G \).
Recall that a subgroup that can be written as a centralizer is called a Howe subgroup.
In particular, 
\begin{equation}
	H = \centralizer_G (\GauGroup_A) = \centralizer^2_G (\HolGroup_A) 
\end{equation}
is a Howe subgroup of \( G \).
Consider the \( H \)-principal bundle
\begin{equation}
	P_H 
		\defeq P_A \times_{\HolGroup_A} H 
		\equiv P_A \cdot H 
		\subseteq P 
\end{equation}
associated to the holonomy bundle \( P_A \).
The bundle \( P_H \) consists of all \( p \in P \) obeying \( \lambda(p) = \lambda(\lpnt) \) for every \( \lambda \in \GauGroup_A \).
Conversely, the stabilizer subgroup can be recovered from \( P_H \) as the subgroup
\begin{equation}
	\GauGroup_A = \set{\lambda \in \GauGroup \given \restr{\lambda}{P_H} = \const}.
\end{equation}
A bundle reduction of $P$ to a Howe subgroup is called a Howe subbundle. 
A bundle reduction $ Q$ of $P$ is called holonomy-induced if there exists a connected reduction $\tilde Q \subseteq P$ to a subgroup $\tilde H$ such that $Q = \tilde Q \cdot \centralizer^2_G (\tilde H) $.
The following proposition (\cf \parencite[Theorem~3.3]{RudolphSchmidtEtAl2002}) is basic for the classification procedure. 
\begin{prop}
	\label{prop:yangMillsHiggs:orbitTypesIsomorphicToHoweSubbundles}
	Let $P$ be a principal $G$-bundle over $M$.
	Let $M$ be compact with $\dim M \geq 2$.
	Then, the mapping
	\begin{equation}
		\equivClass{\GauGroup_A} \mapsto \equivClass{P_A \cdot \centralizer^2_G (\HolGroup_A)}
	\end{equation}
	from the set of gauge orbit types to the set of isomorphism classes of holonomy-induced Howe subbundles of $P$ (factorized by the action of $G$) is a bijection.  
\end{prop}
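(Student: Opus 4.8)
The plan is to exhibit an explicit inverse to the stated map, so that the bijection follows from the two maps being mutually inverse. The candidate inverse is the reconstruction assignment
\( Q \mapsto \equivClass{\set{\lambda \in \GauGroup \given \restr{\lambda}{Q} = \const}} \)
recorded just above the proposition. I would organize the argument into three parts: well-definedness of the forward map on classes, the identity ``forward then reconstruction'', and the identity ``reconstruction then forward'' (that is, surjectivity). Since a two-sided inverse forces a bijection, establishing both composition identities suffices.

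First I would check well-definedness. For a connection \( A \), the group \( H = \centralizer_G(\GauGroup_A) = \centralizer^2_G(\HolGroup_A) \) is a double centralizer, hence a Howe subgroup, and \( P_H = P_A \cdot H \) is a reduction of \( P \) to \( H \). That \( P_H \) is holonomy-induced is immediate from the definition with the witness data \( \tilde Q = P_A \) and \( \tilde H = \HolGroup_A \): the holonomy bundle \( P_A \) is a connected reduction to \( \HolGroup_A \), and \( P_H = P_A \cdot \centralizer^2_G(\HolGroup_A) \) by construction. To see that the \( G \)-class of \( P_H \) depends only on the conjugacy class of \( \GauGroup_A \) in \( \GauGroup \), I would track the two sources of ambiguity: replacing \( A \) by a gauge-equivalent connection \( g \cdot A \) transforms \( P_A \) into \( g \cdot P_A \) and conjugates \( \HolGroup_A \) by \( \ev_{p_0}(g) \in G \), moving \( P_H \) within its \( G \)-equivalence class; and changing the base point \( p_0 \) conjugates \( \HolGroup_A \) in \( G \), which is exactly what is absorbed by factoring out the \( G \)-action on Howe subbundles.

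For the converse composition I would invoke the two relations already established, namely \( P_H = \set{p \in P \given \lambda(p) = \lambda(p_0) \text{ for all } \lambda \in \GauGroup_A} \) and \( \GauGroup_A = \set{\lambda \in \GauGroup \given \restr{\lambda}{P_H} = \const} \). The second relation says directly that reconstruction applied to \( P_H \) returns \( \GauGroup_A \), so ``forward then reconstruction'' is the identity on gauge orbit types; in particular the forward map is injective. It remains to note that reconstruction is itself well-defined on \( G \)-equivalence classes and lands in genuine gauge orbit types, which holds because a \( G \)-equivalence of Howe subbundles conjugates the associated stabilizer subgroups.

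The main obstacle is the remaining identity, equivalently surjectivity. Given a holonomy-induced Howe subbundle \( Q = \tilde Q \cdot \centralizer^2_G(\tilde H) \) with \( \tilde Q \) a connected reduction to \( \tilde H \), I must produce a connection \( A \) whose holonomy data reproduces \( Q \). The key input is the realization theorem: every connected reduction \( \tilde Q \) of \( P \) to a subgroup \( \tilde H \) arises as the holonomy bundle \( P_A \) of some connection \( A \), with \( \HolGroup_A = \tilde H \). This is where the hypothesis \( \dim M \geq 2 \) enters, since it provides enough room to generate the full holonomy group from suitable curvature contributions and loops representing the relevant based homotopy classes. Granting this, one gets \( P_A = \tilde Q \) and \( \HolGroup_A = \tilde H \), whence \( P_A \cdot \centralizer^2_G(\HolGroup_A) = \tilde Q \cdot \centralizer^2_G(\tilde H) = Q \), so \( Q \) lies in the image and ``reconstruction then forward'' is the identity. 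Combining the three parts shows the forward map and reconstruction are mutually inverse, which gives the asserted bijection. The realization theorem is the genuinely nontrivial step, and it is the one I expect to require the most care.
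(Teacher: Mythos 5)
The paper does not prove this proposition itself: it is taken over from \parencite[Theorem~3.3]{RudolphSchmidtEtAl2002}, with only a remark that the Sobolev-space proof given there persists in the smooth category. Your three-step argument --- well-definedness of the forward map, injectivity via the reconstruction formula \( \GauGroup_A = \set{\lambda \in \GauGroup \given \restr{\lambda}{P_H} = \const} \) recorded just before the proposition, and surjectivity via the realization of every connected reduction \( \tilde{Q} \subseteq P \) as the holonomy bundle of some connection when \( \dim M \geq 2 \) (a classical fact: a connected principal bundle over a connected base of dimension at least two admits a connection whose holonomy bundle is the whole bundle; apply this to \( \tilde{Q} \) and extend the resulting connection to \( P \), noting that \( P_A = \tilde{Q} \) forces \( \HolGroup_A = \tilde{H} \) even when \( \tilde{H} \) is disconnected) --- is precisely the strategy of that reference, so your proposal is correct and matches the intended proof.
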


\begin{remark}
	In \parencite{RudolphSchmidtEtAl2002}, this theorem is proved in the context of Sobolev spaces, but the result clearly holds true in the smooth category as well.
\end{remark}

\begin{remark}
	By \parencite[Theorem~6.2]{RudolphSchmidtEtAl2002b}, every Howe subbundle of a principal \( \SUGroup(n) \)-bundle is automatically holonomy induced.
	However, for other classical groups this is not always the case, see the counterexample after Theorem~6.2 in \parencite{RudolphSchmidtEtAl2002b}.
\end{remark}

Thus, to enumerate the gauge orbit types for a given principal \( G \)-bundle $P \to M$, one has to work through the following program:
\begin{enumerate}
	\item Classify the Howe subgroups up to conjugacy. 
	\item Classify the Howe subbundles up to isomorphy. 
	\item
		Extract the Howe subbundles which are holonomy-induced. 
	\item
		Factorize by the principal action. 
	\item Determine the natural partial ordering.
\end{enumerate}

For the convenience of the reader, we recall the result for the case of $G= \SUGroup(n)$ as accomplished in \parencite{RudolphSchmidtEtAl2002b}, see also \parencite{RudolphSchmidtEtAl2002a} for the discussion of the partial ordering.
The classification for the other classical Lie groups was presented in \parencite{HertschRudolphSchmidt2010,HertschRudolphSchmidt2011}. 
\begin{thm}
	\label{prop:yangMillsHiggs:classificationOrbitTypesSU}
	Let $P$ be a principal $\SUGroup(n)$-bundle over a compact manifold $M$ of dimension $2,3$, or $4$.
	The gauge orbit types of the space of connections on $P$ are in one-to-one correspondence with symbols $\equivClass{(I;\alpha, \xi)}$, where
	\begin{thmenumerate}
		\item
			$I = \bigl((k_1, \ldots, k_r), (m_1, \ldots, m_r)\bigr)$ is a pair of sequences of positive integers obeying
			\begin{equation}
				\sum_{i=1}^r k_i m_i = n \, ,
			\end{equation}
		\item
			$\alpha = (\alpha_1, \ldots, \alpha_r)$ is a sequence of elements $\alpha_i = 1 + \alpha_{i,1} + \alpha_{i,2} + \dotsb \in \sCohomology^\ast (M, \Z)$ with \( \alpha_{i,j} \in \sCohomology^{2j}(M, \Z) \) representing admissible values of the Chern classes \( \chernClass_j \) of $\UGroup(k_i)$-bundles over $M$,
		\item 
			$\xi \in \sCohomology^1 (M, {\mathbbm Z}_d)$, where $d$ is the greatest common divisor of $(m_1, \ldots, m_r)$.
	\end{thmenumerate}
	The cohomology elements $\alpha_i$ and $\xi$ are subject to the relations 
	\begin{equation}
		\sum_{i=1}^r \frac{m_i}{d} \alpha_{i, 1} = \beta_d (\xi), \quad \alpha_1^{m_1} \cup \ldots \cup \alpha_r^{m_r}  = \chernClass(P),
	\end{equation}
	 where $\chernClass(P)$ is the total Chern class of $P$ and 	$\beta_d : \sCohomology^1(M, \Z_d) \to \sCohomology^2(M, \Z)$ is the Bockstein homomorphism associated with the short exact sequence of coefficient groups $0 \to \Z \to \Z \to \Z_d \to 0$.
	 For any permutation $\sigma$ of $\{1, \ldots r\}$, the symbols  $\equivClass{(I;\alpha, \xi)}$ and $\equivClass{(\sigma I; \sigma \alpha, \xi)}$ have to be identified. 
\end{thm}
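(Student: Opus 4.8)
The plan is to follow the five-step program stated above. By \cref{prop:yangMillsHiggs:orbitTypesIsomorphicToHoweSubbundles} it suffices to classify the holonomy-induced Howe subbundles of $P$ up to isomorphy and the $\SUGroup(n)$-action, and then to match this classification with the symbols $\equivClass{(I; \alpha, \xi)}$.

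\emph{Howe subgroups (Step 1).} A Howe subgroup of $\UGroup(n)$ is the centralizer of a set of matrices and is therefore governed by the isotypic decomposition of the defining representation $\mathbb{C}^n \isomorph \bigoplus_{i=1}^r \mathbb{C}^{k_i} \otimes \mathbb{C}^{m_i}$, where the $\mathbb{C}^{m_i}$ carry pairwise inequivalent irreducibles with multiplicity spaces $\mathbb{C}^{k_i}$. Up to conjugacy the centralizer is the block group $J_I \defeq \prod_{i=1}^r \UGroup(k_i)$, embedded by $(A_i) \mapsto \bigoplus_i (A_i \otimes \id_{m_i})$, subject to $\sum_i k_i m_i = n$. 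The Howe subgroup of $\SUGroup(n)$ is then $J \defeq \set{(A_i) \in J_I \given \prod_i (\det A_i)^{m_i} = 1}$, determined up to conjugacy by the pair of sequences $I = ((k_i),(m_i))$. This yields the datum $I$.

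\emph{Howe subbundles (Step 2).} I would classify the reductions of $P$ to $J$ over $M$ with $\dim M \leq 4$ by obstruction theory. Extending the structure group along $J \hookrightarrow J_I$ produces $\UGroup(k_i)$-bundles $Q_i$; since $\dim M \leq 4$, each $Q_i$ is classified by its (admissible) total Chern class $\alpha_i = \chernClass(Q_i) \in \sCohomology^*(M, \Z)$. The delicate point is the component group: the homomorphism $\phi \colon J_I \to \UGroup(1)$, $(A_i) \mapsto \prod_i (\det A_i)^{m_i}$, is surjective with kernel $J$ and acts on $\pi_1$ as $(n_i) \mapsto \sum_i m_i n_i$, with image $d\,\Z$ where $d = \gcd(m_1, \ldots, m_r)$. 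The homotopy sequence of $J \to J_I \to \UGroup(1)$ then gives $\pi_0(J) \isomorph \Z_d$, the component being detected by $(A_i) \mapsto \prod_i (\det A_i)^{m_i/d}$. Hence a $J$-bundle carries, besides the $\alpha_i$, a class $\xi \in \sCohomology^1(M, \Z_d)$ coming from $J \to \pi_0(J) = \Z_d$. Composing the component map with $\Z_d \hookrightarrow \UGroup(1)$ identifies the line bundle $\bigotimes_i (\det Q_i)^{\otimes m_i/d}$ with the $\UGroup(1)$-bundle induced from $\xi$, whose first Chern class is $\beta_d(\xi)$; comparing first Chern classes gives $\sum_i (m_i/d)\, \alpha_{i,1} = \beta_d(\xi)$. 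Finally, the requirement that the $J$-bundle embed into $P$ is that the associated $\UGroup(n)$-bundle $\bigoplus_i Q_i \otimes \mathbb{C}^{m_i}$ have total Chern class $\prod_i \alpha_i^{m_i} = \chernClass(P)$, which is the second relation.

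\emph{Remaining steps and the main obstacle.} For $\SUGroup(n)$ every Howe subbundle is holonomy-induced, so Step 3 is vacuous and the hypothesis in \cref{prop:yangMillsHiggs:orbitTypesIsomorphicToHoweSubbundles} imposes no further constraint. Factorizing by the principal $\SUGroup(n)$-action (Step 4) permutes blocks of equal data, which is realized by suitably normalized permutation matrices and reorders $(I, \alpha)$ by a permutation $\sigma$ while leaving the global class $\xi$ fixed; this accounts for the identification $\equivClass{(I;\alpha,\xi)} = \equivClass{(\sigma I; \sigma\alpha, \xi)}$. The partial ordering (Step 5) is not part of the present statement. I expect the main obstacle to be Step 2: proving that over a manifold of dimension at most $4$ the reductions to the \emph{disconnected} group $J$ are completely captured by the low-degree classes $(\alpha_i, \xi)$ subject to the Bockstein relation, with no surviving higher obstruction. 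This is exactly where the dimension restriction $\dim M \leq 4$ is essential, and it requires a careful Postnikov/obstruction-theoretic analysis of $BJ$.
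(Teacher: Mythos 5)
First, a point of comparison: the paper does not prove this theorem at all — it is explicitly recalled from \parencite{RudolphSchmidtEtAl2002b} ``for the convenience of the reader,'' so there is no in-paper proof to measure your proposal against. What you have written is, in outline, precisely the strategy of that cited original: Step 1 (Howe subgroups of \( \SUGroup(n) \) via the isotypic block decomposition, giving the data \( I \)), the homotopy exact sequence of \( J \to J_I \to \UGroup(1) \) yielding \( \pi_0(J) \isomorph \Z_d \), the Bockstein relation from comparing \( \bigl(\bigotimes_i (\det Q_i)^{m_i/d}\bigr) \) with the \( \UGroup(1) \)-bundle induced from \( \xi \), the reduction condition \( \prod_i \alpha_i^{m_i} = \chernClass(P) \) via the Whitney formula, and the permutation identification from factorizing by the principal action. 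These individual computations are correct.

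However, as a proof the proposal has a genuine gap, and you name it yourself: Step 2 is a plan, not an argument. The heart of the theorem is that the assignment of \( (\alpha, \xi) \) subject to exactly the two stated relations is a \emph{complete and faithful} invariant of \( J \)-bundles over \( M \) — both injectivity (equal symbols imply isomorphic reductions) and realizability (every admissible symbol occurs). In \parencite{RudolphSchmidtEtAl2002b} this occupies the bulk of the work: one must build the Postnikov tower of \( BJ \) through dimension \( 5 \), where \( \pi_1(BJ) = \Z_d \) is nontrivial, so the obstruction theory is twisted by the \( \pi_1 \)-action on higher homotopy groups, and one must verify that no obstructions beyond the displayed relations survive under the hypothesis \( \dim M \leq 4 \). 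Deriving necessary relations, as you do, does not show they are sufficient. A second, smaller gap: your claim that Step 3 is vacuous because ``for \( \SUGroup(n) \) every Howe subbundle is holonomy-induced'' is itself a nontrivial theorem (Theorem~6.2 of \parencite{RudolphSchmidtEtAl2002b}, quoted in the present paper), which fails for other classical groups, so it must be proved or cited rather than asserted; without it, \cref{prop:yangMillsHiggs:orbitTypesIsomorphicToHoweSubbundles} would only give an injection of orbit types into your symbols, not a bijection.
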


It is now straightforward to include the matter fields \( \varphi \) and to pass, thereby, from \( \ConnSpace \) to \( \SectionSpaceAbb{Q} \).
Indeed, a gauge transformation \( \lambda \) leaves \( \varphi \) invariant if and only if \( \lambda(p) \in G_{\varphi(p)} \) for all \( p \in P \), where \( G_{\varphi(p)} \) is the stabilizer of \( \varphi(p) \in \FibreBundleModel{F} \) under the \( G \)-action.
The equivariance properties \( \lambda(p \cdot g) = g^{-1} \lambda(p) g \) and 
\begin{equation}
	G_{\varphi(p \cdot g)} = G_{g^{-1} \cdot \varphi(p)} = g^{-1} G_{\varphi(p)} g
\end{equation}
show that it is actually enough to test \( \lambda(p) \in G_{\varphi(p)} \) only for one point per fiber.
In particular, it suffices to let \( p \) range over points in \( P_A \).
Since a gauge transformation \( \lambda \) in the stabilizer of \( A \) is necessarily constant on \( P_A \), the evaluation map \( \ev_{\lpnt} \) restricts to an isomorphism of Lie groups
\begin{equation}
	\label{eq:yangMillsHiggs:stabilizerAPhi}
 	\GauGroup_{A, \varphi}
 		= \GauGroup_{A} \intersect \GauGroup_{\varphi}
 		\isomorph \centralizer_G (\HolGroup_A) \intersect \bigIntersection_{p \in P_A} G_{\varphi(p)} \,.
\end{equation}
To summarize, by point (i) of \cref{prop:cotangentBundle:singularSympRed}, we have completely determined the orbit types of \( \SectionSpaceAbb{P} = \SectionMapAbb{J}^{-1}(0) \subseteq \CotBundle \SectionMapAbb{Q} \) with respect to the lifted \(\GauGroup \)-action.

Finally, let us outline two procedures to determine the orbit types of \( \CotBundle \SectionMapAbb{Q} \).
First, using the equivariant identification \( \CotBundle \SectionMapAbb{Q} \isomorph \TBundle \SectionMapAbb{Q} \) and following \parencite{Hertsch2008}, we obtain
\begin{equation}
	\GauGroup_{A, D} \isomorph \centralizer_G(\HolGroup_A \union \HolGroup_{A + D}).
\end{equation}
Thus, by an analogue of \cref{prop:yangMillsHiggs:orbitTypesIsomorphicToHoweSubbundles}, the orbit types of \( \TBundle \ConnSpace \) are in one-to-one correspondence with the isomorphism classes of bundles
\begin{equation}
	P_A \cdot \centralizer_G^2(\HolGroup_A \union \HolGroup_{A + D}).
\end{equation}
We note that these bundles are Howe but in general they are not holonomy-induced (\ie of the form \( Q \cdot \centralizer_G^2(H) \) where \( Q \) is a connected \( H \)-bundle).
This is in accordance with the general fact that \( \CotBundle \SectionMapAbb{Q} \) may have more orbit types than \( \SectionMapAbb{Q} \).

Second, more explicitly, a point \( (A, D) \in \ConnSpace \times \Omega^2(M, \CoAdBundle P) \) in the cotangent bundle \( \CotBundle \ConnSpace \) has stabilizer
\begin{equation}
	\GauGroup_{A, D} = \GauGroup_{A} \intersect \GauGroup_D,
\end{equation}
where \( \GauGroup_{A} \) and \( \GauGroup_D \) denote the stabilizers of \( A \) and \( D \) under the action of the gauge group, respectively.
	
To analyze \( \GauGroup_{A, D} \), for a moment, consider the stabilizer of a general \( k \)-form \( \alpha \) with values in \( F = P \times_G \FibreBundleModel{F} \), where \( \FibreBundleModel{F} \) carries a \( G \)-representation.
For \( \alpha \in \DiffFormSpace^k(M, F) \) and \( p \in P \), let \( V_\alpha (p) \) be the subspace of \( \FibreBundleModel{F} \) spanned by all elements of the form \( \alpha_p (X_1, \dotsc, X_k) \), where \( X_i \in \TBundle_p P \).
Similar arguments as above for \( \alpha = \varphi \) yield the following identification
\begin{equation}
	\label{eq:yangMillsHiggs:stabConnAlpha}
	\GauGroup_{A, \alpha}
		\isomorph \centralizer_G (\HolGroup_A) \intersect \bigIntersection_{\substack{p \in P_A \\ \mathclap{f \in V_\alpha (p)}}} G_f \,.
\end{equation}
Now let us return to the stabilizer of \( D \).
Denote \( K \equiv \HolGroup_A \).
Since the adjoint action of \( \centralizer_G (K) \) on the Lie algebra \( \LieA{k} \) of \( K \) is trivial, \( \LieA{k} \subseteq \LieA{g} \) is \( \AdAction_{\centralizer_G (K)} \)-invariant and thus has a \( \centralizer_G (K) \)-invariant complement \( \LieA{p} \) in \( \LieA{g} \).
For \( \nu \in \LieA{k}^* \), we have 
\begin{equation}
	\dualPair{\CoAdAction_g \nu}{\xi} 
		= \dualPair{\nu}{\AdAction^{-1}_g \xi}
		= \dualPair{\nu}{\xi}
\end{equation}
for every \( g \in \centralizer_G (K) \) and \( \xi \in \LieA{k} \).
Hence, \( G_\nu \subseteq \centralizer_G (K) \) for every \( \nu \in \LieA{k}^* \).
In summary, we have shown:
\begin{prop}
	Let \( A \in \ConnSpace \) and \( D \in \CotBundle_A \ConnSpace \).
	Then,
	\begin{equation}
		\label{eq:yangMillsHiggs:stabAD}
		\GauGroup_{A, D}
			\isomorph \centralizer_G (\HolGroup_A) \intersect \bigIntersection_{\substack{p \in P_A \\ \mathclap{\nu \in V_{D}(p) \intersect \LieA{p}^*}}} G_\nu \,,
	\end{equation}
	where \( V_D(p) \) is the subspace of \( \LieA{g}^* \) spanned by all elements of the form \( D_p (X_1, X_2) \) for \( X_i \in \TBundle_p P \).
\end{prop}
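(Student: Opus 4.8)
The plan is to reduce the statement to the stabilizer formula~\eqref{eq:yangMillsHiggs:stabConnAlpha} established just above and then to trim the resulting intersection using the invariant splitting $\LieA{g} = \LieA{k} \oplus \LieA{p}$. First, a gauge transformation fixes the pair \( (A, D) \) if and only if it fixes \( A \) and \( D \) separately, so that \( \GauGroup_{A,D} = \GauGroup_A \intersect \GauGroup_D \). The cotangent vector \( D \) is a \( 2 \)-form with values in the associated bundle \( \CoAdBundle P = P \times_G \LieA{g}^* \), on which \( G \) acts by the coadjoint representation; hence \( D \) is a special case of the \( F \)-valued \( k \)-form \( \alpha \) treated in~\eqref{eq:yangMillsHiggs:stabConnAlpha}, with \( \FibreBundleModel{F} = \LieA{g}^* \) and \( k = 2 \), and the stabilizer \( G_f \) of \( f \in \FibreBundleModel{F} \) becomes the coadjoint stabilizer \( G_\nu \) of \( \nu \in \LieA{g}^* \). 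Applying that formula verbatim yields
\begin{equation}
	\GauGroup_{A,D} \isomorph \centralizer_G(\HolGroup_A) \intersect \bigIntersection_{\substack{p \in P_A \\ \nu \in V_D(p)}} G_\nu,
\end{equation}
which is already the claimed expression except that the inner intersection runs over all of \( V_D(p) \) rather than over \( V_D(p) \intersect \LieA{p}^* \).

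Second, I would discard the redundant directions. Writing \( K \equiv \HolGroup_A \) and using the \( \AdAction_{\centralizer_G(K)} \)-invariant complement \( \LieA{p} \) of \( \LieA{k} \), dualize \( \LieA{g} = \LieA{k} \oplus \LieA{p} \) to the \( \CoAdAction_{\centralizer_G(K)} \)-invariant splitting \( \LieA{g}^* = \LieA{k}^* \oplus \LieA{p}^* \), where \( \LieA{k}^* \) is the annihilator of \( \LieA{p} \) and \( \LieA{p}^* \) the annihilator of \( \LieA{k} \). The computation preceding the proposition shows \( \centralizer_G(K) \subseteq G_\nu \) for every \( \nu \in \LieA{k}^* \). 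Consequently, for \( g \in \centralizer_G(K) \) and \( \nu = \nu_{\LieA{k}} + \nu_{\LieA{p}} \in V_D(p) \) we have \( \CoAdAction_g \nu = \nu_{\LieA{k}} + \CoAdAction_g \nu_{\LieA{p}} \), so that \( g \in G_\nu \) if and only if \( g \in G_{\nu_{\LieA{p}}} \). Thus, after intersecting with \( \centralizer_G(K) \), only the \( \LieA{p}^* \)-components of the spanning values \( D_p(X_1, X_2) \) impose a constraint, and the inner intersection collapses to one over the \( \LieA{p}^* \)-part of \( V_D(p) \), giving the asserted formula.

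The step I expect to be the main obstacle is precisely this last collapse: one must verify that the constraint from the \( \LieA{p}^* \)-components is indexed by \( V_D(p) \intersect \LieA{p}^* \) itself and not merely by the projection \( \pr_{\LieA{p}^*}(V_D(p)) \). Since \( \bigIntersection_{\nu \in W} G_\nu \) depends only on the span of \( W \), the two agree exactly when \( V_D(p) \) respects the splitting, i.e. \( V_D(p) = (V_D(p) \intersect \LieA{k}^*) \oplus (V_D(p) \intersect \LieA{p}^*) \). I would establish this compatibility by exploiting that, as \( p \) ranges over the holonomy bundle \( P_A \), the equivariance \( V_D(p \cdot h) = \CoAdAction_{h^{-1}} V_D(p) \) for \( h \in K \) makes the union \( \bigUnion_{p \in P_A} V_D(p) \) a \( \CoAdAction_K \)-invariant subset, whose span is therefore a \( \CoAdAction_K \)-invariant, hence splitting-adapted, subspace of \( \LieA{g}^* \); intersecting with \( \centralizer_G(K) \), whose elements commute with \( \CoAdAction_K \), then produces the same subgroup as the \( \LieA{p}^* \)-part. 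The remaining assertions—that the evaluation map is a group isomorphism and that \( G_\nu \supseteq \centralizer_G(K) \) for \( \nu \in \LieA{k}^* \)—are routine and were already prepared in the discussion leading up to the statement.
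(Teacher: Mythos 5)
Your first two steps reproduce the paper's own argument exactly: the paper likewise writes \( \GauGroup_{A,D} = \GauGroup_A \intersect \GauGroup_D \), specializes \eqref{eq:yangMillsHiggs:stabConnAlpha} to \( \alpha = D \) with \( \FibreBundleModel{F} = \LieA{g}^* \) carrying the coadjoint representation, and then invokes the \( \AdAction_{\centralizer_G(K)} \)-invariant splitting \( \LieA{g} = \LieA{k} \oplus \LieA{p} \) together with the computation showing that every \( g \in \centralizer_G(K) \) fixes each \( \nu \in \LieA{k}^* \). You read that inclusion correctly as \( \centralizer_G(K) \subseteq G_\nu \); the paper's text asserts the reverse inclusion, which is evidently a slip, since the displayed computation proves yours and only yours is of use here. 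You are also right to flag the point at which the paper's discussion simply stops: within \( \centralizer_G(K) \), fixing \( V_D(p) \) pointwise is equivalent to fixing \( \pr_{\LieA{p}^*}(V_D(p)) \) pointwise, and the passage to the index set \( V_D(p) \intersect \LieA{p}^* \) is made in the paper without comment.

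Your proposed repair of this last step, however, does not work: a \( \CoAdAction_K \)-invariant subspace of \( \LieA{g}^* \) need \emph{not} be adapted to the splitting \( \LieA{k}^* \oplus \LieA{p}^* \). There are two problems. First, \( \LieA{p} \) was only chosen \( \centralizer_G(K) \)-invariant, so the dual splitting need not even be \( \CoAdAction_K \)-stable; this can be fixed by taking \( \LieA{p} \) orthogonal with respect to an \( \AdAction_G \)-invariant inner product. But second, even then, invariance forces adaptedness only when \( \LieA{k}^* \) and \( \LieA{p}^* \) share no isotypic \( K \)-components, and this fails whenever \( \LieA{p} \) contains nonzero \( \AdAction_K \)-fixed vectors. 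Concretely, take \( G = \SUGroup(3) \), \( M = S^3 \), \( P \) trivial, and \( A \) induced by a non-flat \( \UGroup(1) \)-connection along the diagonal embedding \( \E^{\I\theta} \mapsto (\E^{\I\theta}, \E^{\I\theta}, \E^{-2\I\theta}) \), so that \( K = \HolGroup_A \isomorph \UGroup(1) \) and \( \centralizer_G(K) \isomorph \UGroup(2) \). Choose \( \nu = \nu_{\LieA{k}} + \nu_{\LieA{p}} \) with \( \nu_{\LieA{k}} \neq 0 \) and \( \nu_{\LieA{p}} \neq 0 \) dual to an \( \SUAlgebra(2) \)-direction of the centralizer; both components are \( K \)-fixed, so \( D = \omega \otimes s \) is well-defined, where \( \omega \) is a nowhere-vanishing \( 2 \)-form and \( s \) is the equivariant extension of the constant map \( \nu \) on \( P_A \). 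Then each \( V_D(p) = \R\, \nu \) is \( \CoAdAction_K \)-invariant but not adapted, \( V_D(p) \intersect \LieA{p}^* = 0 \) for every \( p \in P_A \), and yet any \( g \in \centralizer_G(K) \) moving \( \nu_{\LieA{p}} \) fails to stabilize \( D \); so the group produced by your argument and the literal right-hand side of \eqref{eq:yangMillsHiggs:stabAD} differ. Moreover, even adaptedness of the global span \( \mathrm{span} \bigUnion_{p \in P_A} V_D(p) \) — all your equivariance argument could possibly deliver — would not suffice, because the intersection in \eqref{eq:yangMillsHiggs:stabAD} is indexed per point: taking \( s = \nu_{\LieA{k}} + f \, \nu_{\LieA{p}} \) with non-constant \( f \in \sFunctionSpace(M, \R) \) makes the global span adapted while still \( V_D(p) \intersect \LieA{p}^* = 0 \) at every point. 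What your second step actually proves — and what I take to be the intended content — is \eqref{eq:yangMillsHiggs:stabAD} with \( V_D(p) \intersect \LieA{p}^* \) replaced by \( \pr_{\LieA{p}^*}(V_D(p)) \); this is also the version the paper uses in practice, since in the Glashow--Weinberg--Salam analysis the stabilizer condition is expressed through the projected component \( D_{\LieA{p}} \) of \( D \).
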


\subsection{Description of orbit types after symmetry breaking}
Now, let us pass to a description of the model after symmetry breaking.
As usual in the physics literature, let us assume that \( \varphi \) takes values in one fixed orbit type \( \FibreBundleModel{F}_{(K)} \) of the \( G \)-action for some stabilizer subgroup \( K \).
Assume, moreover, that the bundle \( \FibreBundleModel{F}_{(K)} \to \check{\FibreBundleModel{F}}_{(K)} = \FibreBundleModel{F}_{(K)} \slash G \) is trivial and choose a smooth section \( f_0: \check{\FibreBundleModel{F}}_{(K)} \to \FibreBundleModel{F}_{(K)} \) which takes values in the subset \( \FibreBundleModel{F}_K \) of isotropy type \( K \).
With respect to this choice, we can write the Higgs field \( \varphi \), viewed as a \( G \)-equivariant map \( P \to \FibreBundleModel{F} \), as
\begin{equation}
	\label{eq:yangMillsHiggs:decompositionHiggsField}
	\varphi(p) = \phi(p) \cdot f_0 (\eta(p)), \quad p \in P,
\end{equation}
where \( \phi: P \to G \) and \( \eta: P \to \check{\FibreBundleModel{F}}_{(K)} \) are smooth maps.
Since \( f_0 \) takes values in \( \FibreBundleModel{F}_K \), the map \( \phi \) is uniquely defined when viewed as a map with values in \( G \slash K \).
Moreover, by \( G \)-equivariance of \( \varphi \), we identify \( \phi \) as a section of \( P \times_G G \slash K \) and \( \eta \) as a smooth map \( M \to \check{\FibreBundleModel{F}}_{(K)} \).
It is straightforward to see that the decomposition~\eqref{eq:yangMillsHiggs:decompositionHiggsField} depends smoothly on \( \varphi \) and hence establishes a diffeomorphism
\begin{equation}
	\label{eq:yangMillsHiggs:decompositionHiggsFieldSpace}
	\SectionSpaceAbb{F} \to \sSectionSpace(P \times_G G \slash K) \times \sFunctionSpace(M, \check{\FibreBundleModel{F}}_{(K)}),
	\qquad \varphi \mapsto (\phi, \eta)
\end{equation}
of Fréchet manifolds.
In geometric terms, \( \phi \) yields a reduction of \( P \) to the principal \( K \)-bundle
\begin{equation}
	\hat{P} \defeq \set{p \in P \given \phi(p) = \equivClass{e}}.
\end{equation}
Next, recall the following geometric version of the Higgs mechanism (\cf \parencite[Proposition~7.3.4]{RudolphSchmidt2014}):
Since \( K \) is compact, there exists an \( \AdAction_K \)-invariant decomposition \( \LieA{g} = \LieA{k} \oplus \LieA{p} \).
Accordingly, the restriction of \( A \) to the \( K \)-bundle \( \hat{P} \) splits into\footnote{In the physics language, \( \hat{A} \) is the reduced gauge field and \( \tau \) is the intermediate vector boson. The surviving Higgs field is \( \eta \), or rather \( \eta \) shifted by the Higgs vacuum.} 
\begin{equation}
	\label{eq:yangMillsHiggs:fieldsAfterSymBreak}
	\restr{A}{\hat{P}} = \hat{A} + \tau,
\end{equation}
where \( \hat{A} \) and \( \tau \) take values in \( \LieA{k} \) and \( \LieA{p} \), respectively.
It is an easy exercise to verify that \( \hat{A} \) is a principal \( K \)-connection in \( \hat{P} \) and \( \tau \) a horizontal \( 1 \)-form of type \( \AdBundle_K \LieA{p} \) on \( \hat{P} \), \ie, \( \tau \in \DiffFormSpace^1(M, \hat{P} \times_K \LieA{p}) \).
Let us combine this decomposition of \( A \) with the diffeomorphism of~\eqref{eq:yangMillsHiggs:decompositionHiggsFieldSpace}.
For this purpose, consider the smooth Fréchet bundle \( \SectionSpaceAbb{E} \to \sSectionSpace(P \times_G G \slash K) \), whose fiber over \( \phi \) is \( \ConnSpace(\hat{P}) \times \DiffFormSpace^1(M, \hat{P} \times_K \LieA{p}) \).
The bundle \( \SectionSpaceAbb{E} \) carries a natural action\footnote{Note that the action of \( \GauGroup \) on \( \SectionSpaceAbb{E} \) mixes the variables \( \hat{A} \) and \( \tau \), because the decomposition \( \LieA{g} = \LieA{k} \oplus \LieA{p} \) is not \( \AdAction_G \)-invariant.} of \( \GauGroup \) as \( \hat{P} \) is a subbundle of \( P \).
To summarize we obtain the following.
\begin{prop}
	\label{prop:yangMillsHiggs:configSpaceSymmBreaking}
	The map \( (A, \varphi) \mapsto (\phi, \hat{A}, \tau, \eta) \) defines a \( \GauGroup \)-equivariant diffeomorphism of \( \SectionSpaceAbb{Q} \) with \( \SectionSpaceAbb{E} \times \sFunctionSpace(M, \check{\FibreBundleModel{F}}_{(K)}) \).
	In particular, we get an isomorphism of stratified spaces between the gauge orbit space \( \check{\SectionSpaceAbb{Q}} \) and \( \SectionSpaceAbb{E} \slash \GauGroup \times \sFunctionSpace(M, \check{\FibreBundleModel{F}}_{(K)}) \).
\end{prop}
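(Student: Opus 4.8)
The plan is to build the asserted diffeomorphism as a composition of the two structures that are already in place: the decomposition~\eqref{eq:yangMillsHiggs:decompositionHiggsFieldSpace} of the Higgs field and the geometric Higgs mechanism~\eqref{eq:yangMillsHiggs:fieldsAfterSymBreak}. First I would use~\eqref{eq:yangMillsHiggs:decompositionHiggsFieldSpace} to replace the Higgs factor, writing $\SectionSpaceAbb{Q} \isomorph \ConnSpace \times \sSectionSpace(P \times_G G \slash K) \times \sFunctionSpace(M, \check{\FibreBundleModel{F}}_{(K)})$ via $(A, \varphi) \mapsto (A, \phi, \eta)$. Since $\eta$ records nothing but the $G$-orbit of $\varphi$, which is manifestly gauge invariant, while $\phi$ transforms under the natural action of $\GauGroup$ on sections of $P \times_G G \slash K$, this first identification is $\GauGroup$-equivariant with $\GauGroup$ acting trivially on the $\eta$-factor; this observation is also the key to the final splitting of the quotient.

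Next, for fixed $\phi$ --- equivalently, for the induced reduction $\hat{P} \subseteq P$ to the structure group $K$ --- I would invoke the splitting~\eqref{eq:yangMillsHiggs:fieldsAfterSymBreak}. Restriction of a connection $A$ on $P$ to $\hat{P}$, followed by projection along the $\AdAction_K$-invariant decomposition $\LieA{g} = \LieA{k} \oplus \LieA{p}$, yields the pair $(\hat{A}, \tau) \in \ConnSpace(\hat{P}) \times \DiffFormSpace^1(M, \hat{P} \times_K \LieA{p})$. The crucial point is that this map is an \emph{affine isomorphism}: its linear part is induced by the bundle splitting $\restr{(\AdBundle P)}{\hat{P}} \isomorph (\hat{P} \times_K \LieA{k}) \oplus (\hat{P} \times_K \LieA{p})$, and its inverse extends the $\LieA{g}$-valued form $\hat{A} + \tau$ from $\hat{P}$ to a connection on $P$ by $G$-equivariance, which is possible and unique precisely because $\hat{P}$ is a reduction of $P$. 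Performing this fiberwise over $\sSectionSpace(P \times_G G \slash K)$ identifies $\ConnSpace \times \sSectionSpace(P \times_G G \slash K)$ with the total space of $\SectionSpaceAbb{E}$, and together with the untouched $\eta$-factor this produces the map $(A, \varphi) \mapsto (\phi, \hat{A}, \tau, \eta)$.

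The main obstacle I anticipate lies in passing from the fiberwise statement to a genuine bundle isomorphism: one has to check that the reconstruction $(\hat{A}, \tau) \mapsto A$ depends smoothly on $\phi$, so that $\SectionSpaceAbb{E}$ is indeed a smooth Fréchet bundle and the assembled map is a fiber-preserving diffeomorphism covering the identity on $\sSectionSpace(P \times_G G \slash K)$. Closely related is the bookkeeping of equivariance: because $\LieA{g} = \LieA{k} \oplus \LieA{p}$ is not $\AdAction_G$-invariant, a gauge transformation $\lambda$ carries $\hat{P}_\phi$ to $\hat{P}_{\lambda \cdot \phi}$ and acts on $(\hat{A}, \tau)$ by a transformation mixing the two components, as noted in the footnote following the definition of $\SectionSpaceAbb{E}$. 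I would verify that this induced action coincides with the natural $\GauGroup$-action on $\SectionSpaceAbb{E}$, so that the full map $(A, \varphi) \mapsto (\phi, \hat{A}, \tau, \eta)$ is $\GauGroup$-equivariant.

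Finally, for the ``in particular'' claim I would pass to quotients. The equivariant diffeomorphism descends to a homeomorphism $\check{\SectionSpaceAbb{Q}} \isomorph \bigl(\SectionSpaceAbb{E} \times \sFunctionSpace(M, \check{\FibreBundleModel{F}}_{(K)})\bigr) \slash \GauGroup$; since $\GauGroup$ acts trivially on the second factor, the latter equals $(\SectionSpaceAbb{E} \slash \GauGroup) \times \sFunctionSpace(M, \check{\FibreBundleModel{F}}_{(K)})$. Because an equivariant diffeomorphism carries stabilizers to stabilizers, it maps orbit type strata to orbit type strata, so this is an isomorphism of stratified spaces, as claimed.
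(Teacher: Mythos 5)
Your proposal is correct and follows essentially the same route as the paper, which establishes the proposition as a summary of the preceding construction: the decomposition~\eqref{eq:yangMillsHiggs:decompositionHiggsFieldSpace} of the Higgs field into \( (\phi, \eta) \), the splitting \( \restr{A}{\hat{P}} = \hat{A} + \tau \) of~\eqref{eq:yangMillsHiggs:fieldsAfterSymBreak} with unique equivariant reconstruction from the reduction \( \hat{P} \), and the assembly into the Fréchet bundle \( \SectionSpaceAbb{E} \) with the mixing of \( \hat{A} \) and \( \tau \) under \( \GauGroup \) noted exactly as in the paper's footnote. Your added care about smooth dependence on \( \phi \) and the triviality of the \( \GauGroup \)-action on the \( \eta \)-factor only makes explicit what the paper leaves implicit.
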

When the bundle \( \hat{P} \) is non-trivial, the field \( \phi \) representing \( \hat{P} \) carries topological data, which may be encoded in various ways, \eg in terms of Chern classes.

Using this proposition, the gauge orbit types of \( \SectionSpaceAbb{Q} \) may be characterized in the following more explicit way. 
First, note that \( \eta \) does not contribute to the orbit type structure.
Next, recall that \( \lambda \in \GauGroup \) preserves \( \varphi \) if and only if \( \lambda(p) \cdot \varphi(p) = \varphi(p) \) for all \( p \in P \).
Thus, every \( \lambda \in \GauGroup_\varphi \) restricts to a \( K \)-gauge transformation on \( \hat{P} \).
In fact, a moment's reflection shows that every \( K \)-gauge transformation on \( \hat{P} \) can be obtained in that way (use \( P \times_G K \isomorph \hat{P} \times_K K \)). 
This yields the following.
\begin{lemma}
	\label{prop:yangMillsHiggs:stabPhi}
	For every \( \varphi \) with associated \( K \)-bundle \( \hat{P} \), we have
	\begin{equation}
		\GauGroup_\varphi = \hat{\GauGroup},
	\end{equation}
	where \( \hat{\GauGroup} \) denotes the group of gauge transformations of \( \hat{P} \).
\end{lemma}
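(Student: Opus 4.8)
The plan is to reduce everything to a pointwise statement about stabilizers and then upgrade it to an identification of Fréchet Lie groups by exhibiting \( \GauGroup_\varphi \) and \( \hat{\GauGroup} \) as the smooth section spaces of one and the same group bundle. Recall that a gauge transformation \( \lambda \in \GauGroup = \sSectionSpace(P \times_G G) \) fixes \( \varphi \) precisely when \( \lambda(p) \cdot \varphi(p) = \varphi(p) \) for all \( p \in P \), that is, when \( \lambda(p) \in G_{\varphi(p)} \). Thus \( \GauGroup_\varphi \) is exactly the space of sections of the \emph{stabilizer subbundle} \( \mathrm{Stab}(\varphi) \subseteq P \times_G G \) whose fiber over the base point of \( p \) consists of the classes \( \equivClass{p, g} \) with \( g \in G_{\varphi(p)} \); this is a genuine subbundle of the group bundle because \( G_{\varphi(p \cdot g)} = g^{-1} G_{\varphi(p)} g \) is compatible with the conjugation gluing. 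The whole lemma then amounts to identifying \( \mathrm{Stab}(\varphi) \) with the group bundle \( \hat{P} \times_K K \) of \( \hat{P} \).

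First I would pin down the stabilizer of \( \varphi \) along \( \hat{P} \). From the decomposition \eqref{eq:yangMillsHiggs:decompositionHiggsField} and the definition \( \hat{P} = \set{p \in P \given \phi(p) = \equivClass{e}} \), one reads off \( \varphi(\hat{p}) = f_0(\eta(\hat{p})) \) for every \( \hat{p} \in \hat{P} \); since \( f_0 \) takes values in \( \FibreBundleModel{F}_K \) by construction, this gives \( G_{\varphi(\hat{p})} = K \) \emph{exactly} (not merely up to conjugacy). Consequently the map \( \equivClass{\hat{p}, k} \mapsto \equivClass{\hat{p}, k} \), viewing \( \hat{p} \in P \) and \( k \in K \subseteq G \), is a well-defined bundle embedding \( \hat{P} \times_K K \to P \times_G G \) (this is the identification \( P \times_G K \isomorph \hat{P} \times_K K \)) whose image is precisely \( \mathrm{Stab}(\varphi) \). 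Indeed, writing an arbitrary \( p = \hat{p} \cdot g \) via \( P = \hat{P} \cdot G \), the image over the base point is \( \set{\equivClass{\hat{p}, k} \given k \in K} = \set{\equivClass{p, g'} \given g' \in g^{-1} K g = G_{\varphi(p)}} = \mathrm{Stab}(\varphi) \) in that fiber.

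Having this bundle isomorphism, the two section spaces \( \GauGroup_\varphi = \sSectionSpace(\mathrm{Stab}(\varphi)) \) and \( \hat{\GauGroup} = \sSectionSpace(\hat{P} \times_K K) \) are identified, and the identification is concretely realized by restriction \( \lambda \mapsto \restr{\lambda}{\hat{P}} \), with inverse the equivariant extension \( \hat{\lambda}(\hat{p}) \mapsto \lambda(\hat{p} \cdot g) \defeq g^{-1} \hat{\lambda}(\hat{p}) g \). I would check that restriction indeed lands in \( \hat{\GauGroup} \) — the values lie in \( K \) by the previous step, and the adjoint equivariance \( \lambda(\hat{p} k) = k^{-1} \lambda(\hat{p}) k \) is just the \( G \)-equivariance of \( \lambda \) restricted to \( k \in K \) together with the \( K \)-invariance of \( \hat{P} \) — and that the extension is well defined (independent of the choice of \( g \) with \( p = \hat{p} g \), by the \( K \)-equivariance of \( \hat{\lambda} \)), \( G \)-equivariant, and \( \varphi \)-preserving.

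The main obstacle is not the algebra, which is routine once the pointwise stabilizer is identified, but the \emph{smoothness} of the extension map and the fact that the correspondence is an isomorphism of \emph{Fréchet Lie groups} rather than merely of abstract groups. This is exactly what the group-bundle reformulation buys us: since restriction and extension are induced by a smooth isomorphism of the finite-dimensional group bundles \( \mathrm{Stab}(\varphi) \isomorph \hat{P} \times_K K \), the induced maps on smooth section spaces are automatically smooth and mutually inverse, and no separate gluing or inverse-function-theorem argument is needed. The only point requiring a little care is smoothness of the bundle isomorphism itself; this is controlled by the slice theorem for the compact group \( G \) acting on \( \FibreBundleModel{F} \), which guarantees that \( \FibreBundleModel{F}_{(K)} \) is a smooth submanifold and that the data \( (\phi, \eta) \) depend smoothly on \( \varphi \), as already recorded in \eqref{eq:yangMillsHiggs:decompositionHiggsFieldSpace}.
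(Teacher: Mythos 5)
Your proof is correct and follows essentially the same route as the paper: the paper likewise observes that \( \lambda \in \GauGroup_\varphi \) iff \( \lambda(p) \in G_{\varphi(p)} \) for all \( p \), so that \( \lambda \) restricts to a \( K \)-gauge transformation on \( \hat{P} \), and handles the converse via the identification \( P \times_G K \isomorph \hat{P} \times_K K \) — exactly the bundle isomorphism you construct. You merely spell out what the paper leaves to ``a moment's reflection'' (the explicit equivariant extension, its well-definedness, and smoothness via the group-bundle reformulation), which is a faithful and careful elaboration rather than a different argument.
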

Let \( (A, \varphi) \in \SectionSpaceAbb{Q} \).
By equivariance and~\eqref{eq:yangMillsHiggs:stabilizerAPhi}, we find\footnote{In the sequel, we assume that the base point \( p_0 \) defining \( P_A \) lies in \( \hat{P} \), which can be always assured by translation with a constant \( g \in G \).}
\begin{equation}
	\label{eq:yangMillsHiggs:stabilizerAPhiOneOrbitType}
	\GauGroup_{A, \varphi} 
		\isomorph \centralizer_G (\HolGroup_A) \intersect \bigIntersection_{p \in P_A} \phi(p) K \phi(p)^{-1}.
\end{equation}
It is interesting to compare the gauge orbit types \( \GauGroup_{A, \varphi} \) to the orbit types of the theory after symmetry reduction.
Using \cref{prop:yangMillsHiggs:configSpaceSymmBreaking}, we have
\begin{equation}
	\label{eq:yangMillsHiggs:stabAfterSymBreak}
	\GauGroup_{A, \varphi}
		= \hat{\GauGroup}_{\hat{A}, \tau}
		= \hat{\GauGroup}_{\hat{A}} \intersect \hat{\GauGroup}_{\tau}, 
\end{equation}
where \( \hat{\GauGroup}_{\tau} \) denotes the stabilizer of \( \tau \) under the \( \hat{\GauGroup} \)-action on \( \DiffFormSpace^1(M, \hat{P} \times_K \LieA{p}) \).
Indeed, by \cref{prop:yangMillsHiggs:stabPhi}, \( \GauGroup_{\varphi} \) is isomorphic to \( \hat{\GauGroup} \) and it is straightforward to see that a gauge transformation \( \lambda \in \GauGroup_{\varphi} \) leaves \( A \) invariant if and only if \( \restr{\lambda}{\hat{P}} \in \hat{\GauGroup} \) leaves \( \restr{A}{\hat{P}} = \hat{A} + \tau \) invariant. 
Furthermore, using~\eqref{eq:yangMillsHiggs:stabConnAlpha}, we obtain a more explicit description of \( \hat{\GauGroup}_{\tau} \).
Hence, in summary, we get the following.
\begin{prop}
	\label{prop:yangMillsHiggs:stabAPhi}
	The stabilizer of \( (A, \varphi) \) under the \( \GauGroup \)-action on \( \SectionSpaceAbb{Q} \) is given by
	\begin{equation+}
		\label{eq:yangMillsHiggs:stabAPhi}
		\GauGroup_{A, \varphi}
			\isomorph 
				\centralizer_K (\HolGroup_{\hat{A}})
				\intersect 
				\bigIntersection_{\substack{p \in P_{\hat{A}} \\ \mathclap{\xi \in V_\tau (p)}}} G_\xi \,.
			\qedhere
	\end{equation+}
\end{prop}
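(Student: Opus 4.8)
The plan is to recognize the right-hand side of~\eqref{eq:yangMillsHiggs:stabAPhi} as the output of the general stabilizer formula~\eqref{eq:yangMillsHiggs:stabConnAlpha} applied to the reduced $K$-gauge theory living on $\hat{P}$, and then to reconcile the two sets of conventions. First I would invoke~\eqref{eq:yangMillsHiggs:stabAfterSymBreak}, which already identifies $\GauGroup_{A, \varphi}$ with the joint stabilizer $\hat{\GauGroup}_{\hat{A}, \tau}$ of the pair $(\hat{A}, \tau)$ under the gauge group $\hat{\GauGroup}$ of the reduced bundle $\hat{P}$. By~\eqref{eq:yangMillsHiggs:fieldsAfterSymBreak} and the discussion following it, $\hat{A}$ is a principal $K$-connection in $\hat{P}$ and $\tau$ is a horizontal $1$-form of type $\AdBundle_K \LieA{p}$, that is, $\tau \in \DiffFormSpace^1(M, \hat{P} \times_K \LieA{p})$. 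Hence $(\hat{A}, \tau)$ is exactly a configuration of Yang--Mills theory with structure group $K$ on $\hat{P}$, whose matter field $\tau$ takes values in the associated bundle with typical fiber $\LieA{p}$, the latter carrying the restriction of the adjoint representation (well defined because $\LieA{p}$ is $\AdAction_K$-invariant).

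Next I would apply~\eqref{eq:yangMillsHiggs:stabConnAlpha} verbatim to this $K$-theory, taking $\FibreBundleModel{F} = \LieA{p}$ and $\alpha = \tau$. This directly produces
\begin{equation}
	\hat{\GauGroup}_{\hat{A}, \tau}
		\isomorph
			\centralizer_K (\HolGroup_{\hat{A}})
			\intersect
			\bigIntersection_{\substack{p \in P_{\hat{A}} \\ \mathclap{\xi \in V_\tau (p)}}} K_\xi \,,
\end{equation}
where $P_{\hat{A}}$ is the holonomy bundle of $\hat{A}$ in $\hat{P}$ and $K_\xi$ is the stabilizer of $\xi \in \LieA{p}$ under $\AdAction_K$. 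The final cosmetic step is to note that $\centralizer_K(\HolGroup_{\hat{A}}) \subseteq K$ and $K_\xi = G_\xi \intersect K$, so that intersecting with $K$ is redundant inside the centralizer and $K_\xi$ may be replaced by $G_\xi$, yielding~\eqref{eq:yangMillsHiggs:stabAPhi}.

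The main obstacle is the justification that one may actually treat $\tau$ as a genuine matter field of the reduced $K$-theory, i.e.\ that $\hat{\GauGroup}$ acts on $\tau$ purely by the pointwise adjoint action on $\LieA{p}$. This is delicate because, as the footnote to \cref{prop:yangMillsHiggs:configSpaceSymmBreaking} stresses, the full $\GauGroup$-action on $\SectionSpaceAbb{E}$ mixes $\hat{A}$ and $\tau$. I expect the key point to be that this mixing disappears upon restriction to the $K$-gauge subgroup $\hat{\GauGroup}$, precisely because the splitting $\LieA{g} = \LieA{k} \oplus \LieA{p}$ is $\AdAction_K$-invariant; with this invariance in hand $\hat{A}$ transforms as a $K$-connection and $\tau$ as an honest section, so that~\eqref{eq:yangMillsHiggs:stabConnAlpha} is applicable. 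Verifying this equivariance, together with the bookkeeping of holonomy groups and bundles computed inside $\hat{P}$ rather than $P$ (and the standing assumption that the base point $p_0$ lies in $\hat{P}$), is where the actual care is required.
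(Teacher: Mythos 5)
Your proposal is correct and follows essentially the same route as the paper: the paper likewise combines~\eqref{eq:yangMillsHiggs:stabAfterSymBreak} with an application of the general formula~\eqref{eq:yangMillsHiggs:stabConnAlpha} to the reduced \( K \)-theory on \( \hat{P} \), where \( \tau \) is treated as a matter field with values in \( \hat{P} \times_K \LieA{p} \). Your additional care about the \( \AdAction_K \)-invariance of the splitting \( \LieA{g} = \LieA{k} \oplus \LieA{p} \) (so that the \( \hat{\GauGroup} \)-action does not mix \( \hat{A} \) and \( \tau \)) and the identification \( K_\xi = G_\xi \intersect K \) makes explicit precisely the points the paper leaves as ``straightforward to see''.
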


Finally, let us consider a special case which is important for instance in the theory of magnetic monopoles \parencite[Chapter~7]{RudolphSchmidt2014}.
It is defined by the additional assumption that \( \phi \) be covariantly constant with respect to \( A \).
Then, \( A \) reduces to a connection \( \hat{A} \) on \( \hat{P} \) and so \( \tau = 0 \).
Thus, in this case,~\eqref{eq:yangMillsHiggs:stabAPhi} simplifies to
\begin{equation}
	\label{eq:yangMillsHiggs:stabWithMatterCovConst}
	\GauGroup_{A, \varphi}
		\isomorph \hat{\GauGroup}_{\hat{A}}
		\isomorph \centralizer_K (\HolGroup_{\hat{A}}).
\end{equation}
As an immediate consequence, we obtain the following analogue of \cref{prop:yangMillsHiggs:orbitTypesIsomorphicToHoweSubbundles}.
\begin{prop}
	The orbit types of the action of \( \GauGroup \) on \( \SectionSpaceAbb{Q} \) at points \( (A, \varphi) \) with \( \dif_A \phi = 0 \) are in one-to-one correspondence with isomorphism classes of holonomy-induced Howe subbundles of \( K \)-reductions \( \hat{P} \) of \( P \) via the map
	\begin{equation+}
		\equivClass{\GauGroup_{A, \varphi}} \mapsto \equivClass{\hat{P}_{\hat{A}} \cdot \centralizer_K^2(\HolGroup_{\hat{A}})}.
		\qedhere
	\end{equation+}
\end{prop}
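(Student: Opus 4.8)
The plan is to reduce the statement to the fundamental classification of gauge orbit types in \cref{prop:yangMillsHiggs:orbitTypesIsomorphicToHoweSubbundles}, applied now to the reduced principal \( K \)-bundle \( \hat{P} \) rather than to the original \( G \)-bundle \( P \). The key observation is that the covariant constancy assumption \( \dif_A \phi = 0 \) collapses the field content after symmetry breaking, so that the only datum relevant for the orbit type is the pair \( (\hat{P}, \hat{A}) \) consisting of a \( K \)-reduction of \( P \) together with a connection on it. Thus, once this reduction is made explicit, the assertion becomes a verbatim transcription of the base theorem with \( (G, P) \) replaced by \( (K, \hat{P}) \).

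First I would record that, for a point \( (A, \varphi) \) of the stated form, \cref{prop:yangMillsHiggs:configSpaceSymmBreaking} supplies the \( \GauGroup \)-equivariant decomposition \( (A, \varphi) \mapsto (\phi, \hat{A}, \tau, \eta) \), and that the hypothesis \( \dif_A \phi = 0 \) forces \( \tau = 0 \), while \( \eta \) never contributes to the orbit type. Hence the \( \GauGroup \)-orbit type of \( (A, \varphi) \) is completely determined by the isomorphism class of the \( K \)-reduction \( \hat{P} \) carried by \( \phi \) together with the \( \hat{\GauGroup} \)-orbit type of \( \hat{A} \) in \( \ConnSpace(\hat{P}) \). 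This is precisely the content of the stabilizer identity~\eqref{eq:yangMillsHiggs:stabWithMatterCovConst}, which (via \cref{prop:yangMillsHiggs:stabPhi}) gives \( \GauGroup_{A, \varphi} \isomorph \hat{\GauGroup}_{\hat{A}} \isomorph \centralizer_K(\HolGroup_{\hat{A}}) \).

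With this reduction in hand, the second step is to invoke \cref{prop:yangMillsHiggs:orbitTypesIsomorphicToHoweSubbundles} for \( \hat{P} \): since \( M \) is compact with \( \dim M \geq 2 \) and \( K \) is compact, that theorem provides a bijection between the \( \hat{\GauGroup} \)-orbit types on \( \ConnSpace(\hat{P}) \) and the isomorphism classes (modulo the \( K \)-action) of holonomy-induced Howe subbundles \( \hat{P}_{\hat{A}} \cdot \centralizer_K^2(\HolGroup_{\hat{A}}) \) of \( \hat{P} \). Composing this bijection with the identification from the previous step yields exactly the asserted correspondence \( \equivClass{\GauGroup_{A,\varphi}} \mapsto \equivClass{\hat{P}_{\hat{A}} \cdot \centralizer_K^2(\HolGroup_{\hat{A}})} \).

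The main point requiring care is the bookkeeping of the ambient \( K \)-reduction \( \hat{P} \): as \( \varphi \) ranges over the stratum \( \FibreBundleModel{F}_{(K)} \), the section \( \phi \) may represent topologically inequivalent \( K \)-reductions, so the target set must be read as ranging over Howe subbundles of \emph{all} \( K \)-reductions of \( P \), each remembered together with its ambient \( \hat{P} \). I would verify that \( \GauGroup \)-gauge equivalence of two points \( (A, \varphi) \) and \( (A', \varphi') \) corresponds exactly to the simultaneous isomorphy of the reductions \( \hat{P}, \hat{P}' \) and of the induced Howe subbundles inside them; this follows from the equivariance of the decomposition in \cref{prop:yangMillsHiggs:configSpaceSymmBreaking} together with the injectivity and surjectivity already contained in \cref{prop:yangMillsHiggs:orbitTypesIsomorphicToHoweSubbundles}. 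A secondary technical subtlety, which I would flag in passing, is that the stabilizer \( K \) need not be connected, so one must confirm that the classification theorem still applies in this mildly more general setting.
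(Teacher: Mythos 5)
Your proposal follows essentially the same route as the paper's own proof: it reduces the statement via the stabilizer identity~\eqref{eq:yangMillsHiggs:stabWithMatterCovConst}, which gives \( \GauGroup_{A,\varphi} \isomorph \hat{\GauGroup}_{\hat{A}} \isomorph \centralizer_K(\HolGroup_{\hat{A}}) \), and then applies \cref{prop:yangMillsHiggs:orbitTypesIsomorphicToHoweSubbundles} to the reduced \( K \)-bundle \( \hat{P} \). The extra care you flag—tracking the varying \( K \)-reductions \( \hat{P} \) as \( \varphi \) ranges over the stratum, and the possible disconnectedness of \( K \)—goes beyond the paper's two-line argument but is sound and does not alter the approach.
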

\begin{proof}
	By~\eqref{eq:yangMillsHiggs:stabWithMatterCovConst}, the stabilizer \( \GauGroup_{A, \varphi} \) of a point \( (A, \varphi) \in \SectionSpaceAbb{Q} \) satisfying \( \dif_A \phi = 0 \) is isomorphic to the stabilizer of \( \hat{A} \) under \( \hat{\GauGroup} \).
	Now recall from \cref{prop:yangMillsHiggs:orbitTypesIsomorphicToHoweSubbundles} that orbit types of connections on \( \hat{P} \) are in bijective correspondence with isomorphism classes of holonomy-induced Howe subbundles of \( \hat{P} \).
\end{proof}

\subsection{Reduced symplectic structure}
\label{sec:yangMillsHiggs:symplectic}

Finally, let us come to point (iii) of \cref{prop:cotangentBundle:singularSympRed}.
The symplectic structure \( \Omega \) on \( \CotBundle \SectionSpaceAbb{Q} \) is defined by
\begin{equation}\begin{split}
	\Omega_{A, \varphi, D, \Pi}&\left((\diF A_1, \diF \varphi_1, \diF D_1, \diF \Pi_1), (\diF A_2, \diF \varphi_2, \diF D_2, \diF \Pi_2)\right) 
		\\
		&= \int_M \wedgeDual{\diF D_1}{\diF A_2} - \wedgeDual{\diF D_2}{\diF A_1} + \wedgeDual{\diF \Pi_1}{\diF \varphi_2} - \wedgeDual{\diF \Pi_2}{\diF \varphi_1},
\end{split}\end{equation}
where \( (\diF A_j, \diF D_j, \diF \varphi_j, \diF \Pi_j) \in \TBundle_{(A, D, \varphi, \Pi)} (\CotBundle Q) \) for \( j=1,2 \).
\begin{lemma}
	\label{prop:yangMillsHiggs:reduction:orbitsSymplecticallyClosed}
	The orbit \( \GauAlgebra \ldot (A, D, \varphi, \Pi) \) is symplectically closed in \( \CotBundle \SectionSpaceAbb{Q} \) for every \( (A, D, \varphi, \Pi) \in \CotBundle \SectionSpaceAbb{Q} \), that is,
	\begin{equation+}
		\left(\GauAlgebra \ldot (A, D, \varphi, \Pi)\right)^{\Omega \Omega} = \GauAlgebra \ldot (A, D, \varphi, \Pi).
		\qedhere
	\end{equation+}
\end{lemma}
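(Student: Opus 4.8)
The plan is to recast the double-orthogonal identity as a solvability statement for the linearised momentum map and to settle it with Hodge theory and elliptic regularity. Write \( p \defeq (A, D, \varphi, \Pi) \) and \( W \defeq \GauAlgebra \ldot p \subseteq \TBundle_p \CotBundle \SectionSpaceAbb{Q} \). The inclusion \( W \subseteq W^{\Omega\Omega} \) holds for purely algebraic reasons, since every \( w \in W \) is \( \Omega \)-orthogonal to all of \( W^\Omega \). Thus only \( W^{\Omega\Omega} \subseteq W \) needs to be shown. By the Bifurcation \cref{prop:bifurcationLemma}, \( W^\Omega = \ker \tangent_p \SectionMapAbb{J} \), whence \( W^{\Omega\Omega} = (\ker \tangent_p \SectionMapAbb{J})^\Omega \); the claim thus reduces to proving that any \( Y \in \TBundle_p \CotBundle \SectionSpaceAbb{Q} \) with \( \Omega(Y, Z) = 0 \) for all \( Z \in \ker \tangent_p \SectionMapAbb{J} \) already lies in \( W \). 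The one structural identity I will use repeatedly is the linearised momentum map relation \( \Omega(\xi \ldot p, Z) = \kappa(\tangent_p \SectionMapAbb{J}(Z), \xi) \), valid for all \( \xi \in \GauAlgebra \) and all \( Z \).

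First I would prove that \( \tangent_p \SectionMapAbb{J} \) has closed range. Differentiating \( \SectionMapAbb{J} = \dif_A D + \varphi \diamond \Pi \) shows that the only term involving a derivative of the variations is \( \dif_A \diF D \); the contributions of \( \diF A \), \( \diF \varphi \) and \( \diF \Pi \) (namely the bracket term obtained by varying \( \dif_A D \) in \( \diF A \), together with \( \diF \varphi \diamond \Pi \) and \( \varphi \diamond \diF \Pi \)) are all of order zero. Since \( \diF D \) ranges over all of \( \DiffFormSpace^2(M, \CoAdBundle P) \), the image of \( \tangent_p \SectionMapAbb{J} \) contains \( \img \dif_A \). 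By the Hodge decomposition~\eqref{eq:yangMillsHiggs:decomposeLieDual}, \( \img \dif_A \) is closed in \( \GauAlgebra^* = \DiffFormSpace^3(M, \CoAdBundle P) \) and its complement \( \ker \dif^*_A \) is finite-dimensional of dimension \( \dim \GauGroup_A \), because \( \GauGroup_A \) is compact. A subspace lying between a closed subspace of finite codimension and the ambient space is itself closed, so \( \img \tangent_p \SectionMapAbb{J} \) is closed.

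Now take \( Y \) as above. Its associated functional \( \Omega(Y, \cdot) \) is continuous and vanishes on \( \ker \tangent_p \SectionMapAbb{J} \). Because \( \tangent_p \SectionMapAbb{J} \) has closed range, the open mapping theorem for Fréchet spaces makes it a homomorphism onto its image, and \( \Omega(Y, \cdot) \) therefore factors through \( \tangent_p \SectionMapAbb{J} \): passing to the quotient \( \TBundle_p \CotBundle \SectionSpaceAbb{Q} / \ker \tangent_p \SectionMapAbb{J} \isomorph \img \tangent_p \SectionMapAbb{J} \) and extending by Hahn--Banach yields a continuous functional \( m \) on \( \GauAlgebra^* \) with \( \Omega(Y, Z) = m(\tangent_p \SectionMapAbb{J}(Z)) \) for all \( Z \). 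Representing \( m \) through the pairing \( \kappa \) by an element \( \mu \) of (the distributional completion of) \( \GauAlgebra \) gives \( \Omega(Y, Z) = \kappa(\tangent_p \SectionMapAbb{J}(Z), \mu) = \Omega(\mu \ldot p, Z) \) for every \( Z \), where the last equality is the momentum map relation. Matching the connection components on both sides yields \( \diF A = - \dif_A \mu \) as distributions; applying \( \dif^*_A \) turns this into \( \dif^*_A \dif_A \mu = - \dif^*_A \diF A \), an elliptic equation with smooth right-hand side, so elliptic regularity forces \( \mu \) to be smooth, i.e. \( \mu \in \GauAlgebra \). For this genuine \( \mu \) one then has \( \Omega(Y - \mu \ldot p, Z) = 0 \) for all \( Z \), and weak non-degeneracy of the canonical form \( \Omega \) gives \( Y = \mu \ldot p \in W \), completing the argument.

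I expect the regularity step to be the main obstacle: the factorisation naturally produces only a distributional \( \mu \), and keeping the whole construction inside the smooth Fréchet category rests on the ellipticity of the Faddeev--Popov operator \( \dif^*_A \dif_A \) on \( \GauAlgebra \), which is invertible modulo the finite-dimensional kernel \( \ker \dif_A \). A secondary point requiring care is to confirm that the zeroth-order coupling terms \( \diF \varphi \diamond \Pi \), \( \varphi \diamond \diF \Pi \) and the bracket contribution of \( \diF A \) do not interfere with this elliptic estimate, and that the closed-range and open-mapping arguments are legitimate in the tame Fréchet setting underlying \( \SectionSpaceAbb{Q} \) and \( \GauGroup \). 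By contrast, the closed-range property itself comes almost for free, being a direct consequence of the finite-dimensionality of \( \ker \dif_A \) via the Hodge decomposition.
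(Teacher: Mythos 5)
Your argument is correct, but it follows a genuinely different route from the paper. The paper never passes through the kernel of the linearized momentum map: it introduces the almost complex structure \( \SectionSpaceAbb{j}(\diF A, \diF D, \diF \varphi, \diF \Pi) = (-\hodgeStar \diF D, \hodgeStar \diF A, -\hodgeStar \diF \Pi, \hodgeStar \diF \varphi) \), which intertwines \( \Omega \) with the \( \LTwoFunctionSpace \)-scalar product, so that \( W^{\Omega\Omega} = W^{\perp\perp} \); the bipolar theorem then reduces everything to showing that the orbit \( \GauAlgebra \ldot (A, D, \varphi, \Pi) \) is weakly \( \LTwoFunctionSpace \)-closed, which is proved by a hands-on separation argument. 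That argument has to confront a point your route sidesteps: in the \( D \)-, \( \varphi \)- and \( \Pi \)-directions the infinitesimal action is a multiplication operator whose orbit alone is \emph{not} closed (it can be dense); only the diagonal orbit is closed, because the \( A \)-component \( \dif_A \xi \) pins down \( \xi \) up to the finite-dimensional stabilizer \( \GauAlgebra_A \), whose orbit through \( (D, \varphi, \Pi) \) is automatically closed. You instead use the weak Bifurcation \cref{prop:bifurcationLemma} to rewrite \( W^{\Omega\Omega} = (\ker \tangent_p \SectionMapAbb{J})^{\Omega} \) and prove the strong bifurcation conclusion directly: closed range of \( \tangent_p \SectionMapAbb{J} \) (your finite-codimension argument via \( \img \dif_A \) and \( \ker \dif^*_A \isomorph \hodgeStar \GauAlgebra_A \) is sound), open mapping plus Hahn--Banach to produce a distributional \( \mu \), and elliptic regularity of \( \dif^*_A \dif_A \) to land back in \( \GauAlgebra \). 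The two steps you flag as delicate do go through: the extension of the relation \( \Omega(\mu \ldot p, Z) = \kappa(\tangent_p \SectionMapAbb{J}(Z), \mu) \) to distributional \( \mu \) is legitimate because for fixed smooth \( Z \) both sides are pairings of \( \mu \) against fixed smooth sections (weak-\( * \) continuity plus density of \( \GauAlgebra \) in its distributional completion), and the componentwise matching before regularity is valid since \( \Omega \) is an integration pairing, so vanishing against all smooth test directions forces distributional equality in each slot; the zeroth-order terms \( \diF \varphi \diamond \Pi \), \( \varphi \diamond \diF \Pi \) and the bracket term are harmless because only the \( A \)-component of the action differentiates \( \mu \). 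What each approach buys: the paper stays entirely inside \( \LTwoFunctionSpace \)-duality and the bipolar theorem, never leaving the smooth category and making the geometric closedness mechanism explicit, while your proof is heavier (distributions, Hahn--Banach, elliptic regularity, in the spirit of the Arms--Marsden constraint analysis) but yields more: the closed-range property of the linearized Gauß constraint and the identity \( (\ker \tangent_p \SectionMapAbb{J})^{\Omega} = \GauAlgebra \ldot p \) as standalone byproducts.
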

\begin{proof}
	First, note that
	\begin{equation}
		\SectionSpaceAbb{j}(\diF A, \diF D, \diF \varphi, \diF \Pi) 
			\defeq (- \hodgeStar \diF D, \hodgeStar \diF A, - \hodgeStar \diF \Pi, \hodgeStar \diF \varphi)
	\end{equation}
	defines an almost complex structure \( \SectionSpaceAbb{j} \) on \( \CotBundle Q \), which intertwines the symplectic structure \( \Omega \) with the \( \LTwoFunctionSpace \)-scalar product.
	Clearly, for every vector subspace \( W \subseteq \TBundle_{(A, D, \varphi, \Pi)} (\CotBundle Q) \) we have \( W^\Omega = \SectionSpaceAbb{j} W^\perp \), where \( W^\perp \) is the \( \LTwoFunctionSpace \)-orthogonal complement.
	Hence,
	\begin{equation}
		W^{\Omega \Omega} 
			= \SectionSpaceAbb{j} ( \SectionSpaceAbb{j} W^\perp)^\perp
			= W^{\perp \perp}
	\end{equation}
	and it remains to show that \( W^{\perp \perp} = W \) for \( W = \GauAlgebra \ldot (A, D, \varphi, \Pi) \).
	By the bipolar theorem \parencite[Theorem~4.1.5]{Schaefer1971}, this holds if and only if the orbit is closed with respect to the weak \( \LTwoFunctionSpace \)-topology.
	Note that the infinitesimal action has the character of a multiplication operator in the \( D \)- and \( \varphi \)-direction and hence in these components the orbit is not closed (it may actually be dense).
	Nonetheless, the `diagonal' orbit \( \GauAlgebra \ldot (A, D, \varphi, \Pi) \) is closed as we will show now.

	Let \( (\diF A, \diF D, \diF \varphi, \diF \Pi) \) be an element of \( \TBundle_{(A, D, \varphi, \Pi)} (\CotBundle Q) \) that does not lie on the \( \GauAlgebra \)-orbit.
	It will be convenient to write \( \alpha \equiv \diF A \) and abbreviate \( (\diF D, \diF \varphi, \diF \Pi) \) by \( \beta \).
	We have to construct an \( \LTwoFunctionSpace \)-open neighborhood \( \SectionSpaceAbb{U} \) of \( (\alpha, \beta) \) in \( \TBundle_{(A, D, \varphi, \Pi)} (\CotBundle \SectionSpaceAbb{Q}) \) that is disjoint from the \( \GauAlgebra \)-orbit.
	First, suppose that \( \alpha \) cannot be written as \( \dif_A \xi \) for some \( \xi \in \GauAlgebra \).
	Since the decomposition~\eqref{eq:yangMillsHiggs:decomposeTangent} is orthogonal with respect to the \( \LTwoFunctionSpace \)-scalar product, the orbit \( \GauAlgebra \ldot A \) is closed in \( \TBundle_A \ConnSpace \) with respect to the weak \( \LTwoFunctionSpace \)-topology.
	Thus, there exists an \( \LTwoFunctionSpace \)-open neighborhood \( \SectionSpaceAbb{U}_\alpha \) of \( \alpha \) in \( \DiffFormSpace^1(M, \AdBundle P) \) which is disjoint from \( \GauAlgebra \ldot A \).
	Then, 
	\begin{equation}
		\SectionSpaceAbb{U} \defeq \SectionSpaceAbb{U}_\alpha \times \DiffFormSpace^2(M, \CoAdBundle P) \times \sSectionSpace(F) \times \DiffFormSpace^3(M, F^*)
	\end{equation}
	is an \( \LTwoFunctionSpace \)-open neighborhood of \( (\alpha, \beta) \) in \( \TBundle_{(A, D, \varphi, \Pi)} (\CotBundle Q) \) which does not intersect the \( \GauAlgebra \)-orbit.
	Now suppose that \( \alpha = \dif_A \xi \) for some \( \xi \in \GauAlgebra \).
	Then, \( \xi \) is uniquely determined up to an element of the finite-dimensional stabilizer \( \GauAlgebra_A \).
	Since the orbit \( \GauAlgebra_A \ldot (D, \varphi, \Pi) \) is finite-dimensional, it is automatically closed.
	Hence, there exists an \( \LTwoFunctionSpace \)-open neighborhood \( \SectionSpaceAbb{U}_\beta \) of \( \beta \) and an \( \LTwoFunctionSpace \)-open subset \( \SectionSpaceAbb{V} \) containing the orbit \( \GauAlgebra_A \ldot (D, \varphi, \Pi) \) such that \( \SectionSpaceAbb{U}_\beta \) and \( \SectionSpaceAbb{V} \) have empty intersection.
	Let \( \SectionSpaceAbb{W} \subseteq \GauAlgebra \) denote the inverse image of \( \SectionSpaceAbb{V} \) under the action \( \xi \to \xi \ldot (D, \varphi, \Pi) \).
	Note that \( \GauAlgebra_A \subseteq \SectionSpaceAbb{W} \).
	Now,
	\begin{equation}
		\SectionSpaceAbb{U}_\alpha \defeq \dif_A \SectionSpaceAbb{W} \oplus \ker \dif^*_A
	\end{equation}
	is an \( \LTwoFunctionSpace \)-open neighborhood of \( \alpha \) in \( \DiffFormSpace^1(M, \AdBundle P) \).
	Suppose \( \alpha' \in \SectionSpaceAbb{U}_\alpha \) and \( \beta' \in \SectionSpaceAbb{U}_\beta \) are of the form \( \alpha' = \dif_A \xi' \) and \( \beta' = \xi' \ldot \beta \) for some \( \xi' \in \GauAlgebra \).
	Then, \( \xi' \in \SectionSpaceAbb{W} \) and thus \( \beta' = \xi' \ldot \beta \in \SectionSpaceAbb{V} \).
	However, by assumption, \( \beta' \) is an element of \( \SectionSpaceAbb{U}_\beta \).
	Since the latter is disjoint from \( \SectionSpaceAbb{V} \), we have constructed a contradiction.
	In summary, \( \SectionSpaceAbb{U} \defeq \SectionSpaceAbb{U}_\alpha \times \SectionSpaceAbb{U}_\beta \) is an \( \LTwoFunctionSpace \)-open neighborhood of \( (\alpha, \beta) \) in \( \TBundle_{(A, D, \varphi, \Pi)} (\CotBundle Q) \), which has empty intersection with the \( \GauAlgebra \)-orbit.
\end{proof}
By this lemma, the strata of the reduced phase space inherit a symplectic form from \( (\CotBundle \SectionSpaceAbb{Q}, \Omega) \), according to point (iii) of \cref{prop:cotangentBundle:singularSympRed}.
Moreover, we have shown that \cref{prop:cotangentBundle:singularCotangentBundleRed} holds for Yang--Mills--Higgs theory and thus, in particular, every symplectic stratum further decomposes into seams and a cotangent bundle.
This secondary stratification will be further analyzed below in the concrete example of the Glashow--Weinberg--Salam model.

\section{Example: Glashow--Weinberg--Salam model}
\label{GWS-model}
We now specialize the discussion to the Higgs sector of the standard model of electroweak interactions.
As in the general setting, the configurations of this model are pairs \( (A, \varphi) \) consisting of a connection in a principal bundle \( P \) and a section of an associated vector bundle \( P \times_G \FibreBundleModel{F} \).
For this model, the principal bundle \( P \) is an \( \SUGroup(2) \times \UGroup(1) \)-bundle over \( M \) and the associated bundle \( F \) has typical fiber \( \FibreBundleModel{F} =\C^2 \) carrying the following representation of \( G = \SUGroup(2) \times \UGroup(1) \):
\begin{equation}
	\label{eq:yangMillsHiggs:GWS:representation}
	\rho_{a, \vartheta} (z_1, z_2) = a \cdot \E^{\frac{\I}{2} \vartheta} \cdot (z_1, z_2), \qquad a \in \SUGroup(2), \vartheta \in [0, 4\pi).
\end{equation}
The Higgs potential \( \FibreBundleModel{V}: \C^2 \to \R \) has the form
\begin{equation}
	\FibreBundleModel{V}(f) = \lambda \left(\norm{f}^2 - \frac{\nu^2}{2} \right)^2
\end{equation}
for given \( \lambda > 0 \) and non-zero \( \nu \in \R \).

It is straightforward to see that under the representation \( \rho \) the origin is a fixed point and that all other points have a stabilizer conjugate to
\begin{equation}
	\label{eq:yangMillsHiggs:GWS:stabilizer}
	K \defeq \set*{\left( \Matrix{\E^{\frac{\I}{2} \vartheta} & 0 \\ 0 & \E^{-\frac{\I}{2} \vartheta}}, \E^{\I \vartheta} \right) \given \vartheta \in [0, 4\pi)},
\end{equation}
which is isomorphic to \( \UGroup(1) \) and plays the role of the electromagnetic gauge group after symmetry breaking.
As common in the physics literature, we assume that \( \varphi \) vanishes nowhere, that is, we ignore the singular orbit type in \( \FibreBundleModel{F} \).
The generic orbits in \( \FibreBundleModel{F} \) are three-spheres centered at the origin and hence the quotient \( \check{\FibreBundleModel{F}}_{(K)} = \FibreBundleModel{F}_{(K)} \slash G \) is diffeomorphic to \( \R_{>0} \).
All the points \( \frac{r}{\sqrt{2}} (0, \nu) \) for \( r \in \R_{>0} \) have stabilizer \( K \).
Hence, the map \( f_0: r \mapsto \frac{r}{\sqrt{2}} (0, \nu) \) is a smooth section of \( \FibreBundleModel{F}_{(K)} \to \check{\FibreBundleModel{F}}_{(K)} \) taking values in \( \FibreBundleModel{F}_K \).
Accordingly, the decomposition~\eqref{eq:yangMillsHiggs:decompositionHiggsField} of \( \varphi \) simplifies here to
\begin{equation}
	\label{eq:yangMillsHiggs:GWS:decompositionHiggsField}
	\varphi = \frac{\eta}{\sqrt{2}} \, \phi \cdot \Vector{0 \\ \nu},
\end{equation}
where \( \phi \in \sSectionSpace(P \times_G G \slash K) \) and \( \eta \in \sFunctionSpace(M, \R_{> 0}) \).
Note that, in this presentation, \( V(\varphi) = \frac{\lambda \nu^2}{2} (\eta^2 - 1)^2 \).

\subsection{Orbit types of \texorpdfstring{$ \SectionSpaceAbb{Q} $}{Q}}
\label{sec:yangMillsHiggs:GWS:orbitTypesQ}
According to the general program, we have to determine the Howe subgroups of \( \SUGroup(2) \times \UGroup(1) \).
For that purpose, we use the following elementary result, whose proof is a simple calculation that we leave to the reader.
\begin{lemma}
	Let \( G \) be a group and \( L \) be an abelian group.
	A subgroup \( H \) of \( G \times L \) is Howe if and only if there exists a subgroup \( H'_G \) of \( G \) such that 
	\begin{equation+}
		\label{eq:yangMillsHiggs:howeInDirectProduct}
		H = \centralizer_G (H'_G) \times L.
		\qedhere
	\end{equation+}
\end{lemma}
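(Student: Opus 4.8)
The plan is to reduce the entire characterization to a single computation of centralizers in a direct product, exploiting that the abelian factor \( L \) sits centrally in \( G \times L \). First I would record the elementary but decisive observation that for \emph{any} subset \( A \subseteq G \times L \) one has
\begin{equation}
	\centralizer_{G \times L}(A) = \centralizer_G(\pi_G(A)) \times L,
\end{equation}
where \( \pi_G \) denotes the projection onto \( G \). This follows by writing \( (g, \ell)(a, b) = (ga, \ell b) \) and \( (a, b)(g, \ell) = (ag, b\ell) \) for a generator \( (a, b) \in A \): equality of the two products is equivalent to \( ga = ag \) together with \( \ell b = b\ell \), and the second condition is automatic because \( L \) is abelian. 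Hence the \( L \)-component of a centralizing element is entirely unconstrained, while its \( G \)-component is forced to centralize \( \pi_G(A) \).

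For the forward implication, suppose \( H \) is Howe, say \( H = \centralizer_{G \times L}(A) \) for some subset \( A \). Applying the displayed identity immediately gives \( H = \centralizer_G(\pi_G(A)) \times L \). The only subtlety is that \( \pi_G(A) \) is a subset, not necessarily a subgroup; I would dispose of this by passing to the subgroup \( H'_G \defeq \langle \pi_G(A) \rangle \) it generates and invoking the standard fact that the centralizer of a set coincides with the centralizer of the subgroup it generates, so that \( \centralizer_G(\pi_G(A)) = \centralizer_G(H'_G) \). This produces \( H = \centralizer_G(H'_G) \times L \) with \( H'_G \subseteq G \) an honest subgroup, as required.

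For the converse, given any subgroup \( H'_G \subseteq G \), I would exhibit an explicit centralized set by taking \( A = H'_G \times \{ e \} \). The same computation then yields \( \centralizer_{G \times L}(H'_G \times \{ e \}) = \centralizer_G(H'_G) \times L = H \), so \( H \) is realized as a centralizer and is therefore Howe. This direction is immediate once the key identity is in place.

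There is essentially no analytic or structural obstacle here: the whole content is the centralizer splitting for direct products, and the only point deserving a word of care is the passage from the possibly non-closed projected set \( \pi_G(A) \) to the genuine subgroup it generates. To keep the bookkeeping transparent, it is also worth noting explicitly that \( \{ e \} \times L \) lies in the center of \( G \times L \) and is thus contained in \emph{every} centralizer, which is exactly why the \( L \)-factor always appears in full in \( H \).
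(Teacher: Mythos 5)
Your proof is correct, and since the paper explicitly leaves this lemma's proof to the reader as ``a simple calculation,'' your argument is precisely the intended one: the splitting \( \centralizer_{G \times L}(A) = \centralizer_G(\pi_G(A)) \times L \), forced by the centrality of \( \set{e} \times L \), together with the converse via \( A = H'_G \times \set{e} \), is the whole content. Your extra care in replacing \( \pi_G(A) \) by the subgroup it generates is sound, and in fact unnecessary when Howe subgroups are defined as centralizers of subgroups, since the projection of a subgroup of \( G \times L \) to \( G \) is already a subgroup.
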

According to this lemma, we first have to determine the Howe subgroups of \( \SUGroup(2) \).
Clearly, each Howe subgroup of \( \SUGroup(2) \) is conjugate to the centralizer \( \Z_2 \), to \( \SUGroup(2) \) itself or to \( \UGroup(1) \) (seen as a subgroup via the embedding \( \alpha \mapsto \smallMatrix{\alpha & 0 \\ 0 & \bar{\alpha}} \)).
This corresponds to the trivial group or \( \Z_2 \), \( \SUGroup(2) \) and \( \UGroup(1) \) as holonomy groups, respectively, see \cref{table:gaugeTheory:gaugeOrbitTypes:su}.
Correspondingly, the Howe subgroups of \( \SUGroup(2) \times \UGroup(1) \) are conjugate to \( \SUGroup(2) \times \UGroup(1) \), \( \UGroup(1) \times \UGroup(1) \) or \( \Z_2 \times \UGroup(1) \).
\begin{table}[tbp]
	\centering
	\begin{tabular}{l l l}
		\toprule
			\( \HolGroup_A \) &
			\( \GauGroup_A \) &
			\( H \)
			\\
		\midrule
			\( \set{e} \), \( \Z_2 \) &
			\( \SUGroup(2) \) &
			\( \Z_2 \)
			\\
			\( \UGroup(1) \) & 
			\( \UGroup(1) \) & 
			\( \UGroup(1) \)
			\\
			\( \SUGroup(2) \) &
			\( \Z_2 \) &
			\( \SUGroup(2) \)
			\\
		\bottomrule
	\end{tabular}
	\caption{List of all possible holonomy groups for \( \SUGroup(2) \) up to conjugacy with the corresponding stabilizer \( \GauGroup_A = \centralizer_G (\HolGroup_A) \) and the Howe subgroup \( H = \centralizer^2_G (\HolGroup_A) \).}
	\label{table:gaugeTheory:gaugeOrbitTypes:su}
\end{table}

Even in the case when~\eqref{eq:yangMillsHiggs:howeInDirectProduct} uniquely determines \( H'_G \), there is still room for different subgroups \( H' \) of \( G \times L \) with \( H = \centralizer_{G \times L} (H') \).
Indeed, recall that by Goursat's lemma subgroups of \( G \times L \) are in bijection with quintuples \( (G_1, G_2, L_1, L_2, \theta) \), where \( G_2 \normalSubgroupEq G_1 \subseteq G \), \( L_2 \normalSubgroupEq L_1 \subseteq L \) and \( \theta: G_1 \slash G_2 \to L_1 \slash L_2 \) is an isomorphism.
Such a tuple yields the subgroup
\begin{equation}
	H' = \set{(g, l) \in G_1 \times L_1 \given \theta(g \, G_2) = l \, L_2} \subseteq G \times L.
\end{equation}
Note that the projection of \( H' \) to the \( G \)-factor coincides with \( G_1 \).
Hence the knowledge of \( H'_G \) just determines \( G_1 = H'_G \).
We now determine the possible choices for the other elements \( (G_2, L_1, L_2, \theta) \) that generate the Howe subgroups of \( \SUGroup(2) \times \UGroup(1) \), as summarized in \cref{table:yangMillsHiggs:GWS:orbitTypes}.
\todoResearch{\( H' \) is of course the holonomy group. What is the significants of the elements in the tuple \( (G_1, G_2, L_1, L_2, \theta) \)? Given \( H' \) they are defined as follows. \( G_1 = \pr_G (H') \) and \( G_2 = \set{g \in G (g, e) \in H'} \); similarly for \( L_1 \) and \( L_2 \). The isomorphism \( \theta \) is defined by \( \theta(g \, G_2) = l \, L_2 \) for \( (g, l) \in H' \).}
\begin{itemize}
	\item 
		The Howe subgroup \( \SUGroup(2) \) is generated by \( G_1 = \set{e} \) or \( G_1 = \Z_2 \).
		In the first case, we hence have \( G_2 = \set{e} \) and so \( L_1 = L_2 \) with \( \theta \) being trivial.
		In particular, \( L_1 = L_2 \) has to be either \( \set{e} \), \( \UGroup(1) \) or the cyclic group \( \Z_p \) for some \( p \in \N \), since these are the only subgroups of \( \UGroup(1) \).
		For \( G_1 = \Z_2 \) there are two cases:
		First we may choose \( G_2 = \Z_2 \), which then again requires \( L_1 = L_2 \).
		Secondly, also \( G_2 = \set{e} \) is possible.
		Then \( L_1 \slash L_2 \) has to be isomorphic to \( \Z_2 \), which is only possible\footnote{Note that \( \Z_p \slash \Z_q \) is isomorphic to \( \Z_{p \slash q} \).} if \( L_1 = \Z_{2p} \) and \( L_2 = \Z_{p} \) for some \( p \in \N \).
	\item
		The Howe subgroup \( \UGroup(1) \) is generated by \( G_1 = \UGroup(1) \).
		There are two non-trivial choices for \( G_2 \).
		First, \( G_2 = \UGroup(1) \) leads again to \( L_1 = L_2 \).
		The second choice \( G_2 = \Z_q \) for some \( q \in \N \) enforces \( L_1 = \UGroup(1) \) and \( L_2 = \Z_p \).
		Since the map \( z \mapsto z^q \) induces an isomorphism of \( \UGroup(1) \slash \Z_q \) with \( \UGroup(1) \) and the only automorphisms of \( \UGroup(1) \) are of the form \( z \mapsto z^k \) for some \( k \in \N \), isomorphisms \( \UGroup(1) \slash \Z_q \to \UGroup(1) \slash \Z_p \) are necessarily induced by maps of the form \( z \mapsto z^{\frac{k q}{p}} \) for some \( k \in \N \).
	\item 
		The Howe subgroup \( \Z_2 \) is generated by \( G_1 = \SUGroup(2) \) giving rise to a plethora of possible choices for \( G_2 \), \( L_1 \) and \( L_2 \).
		Fortunately, this case will be of no further interest below, so we do not need to dive into details.
\end{itemize}
\begin{table}[tbp]
	\centering
	\begin{tabular}{l  l l l l l}
		\toprule
			&
			\multicolumn{5}{c}{\( H' \)}
			\\
			\cmidrule(r){2-6}
			\( H \) &
			\( G_1 \) &
			\( G_2 \) &
			\( L_1 \) &
			\( L_2 \) &
			\( \theta \)
			\\
		\midrule
			\( \SUGroup(2) \times \UGroup(1) \) &
			\( \set{e} \) &
			\( \set{e} \) &
			\( \UGroup(1) \) &
			\( \UGroup(1) \) &
			trivial
			\\
			&
			\( \set{e} \) &
			\( \set{e} \) &
			\( \Z_p \) &
			\( \Z_p \) &
			trivial
			\\
			&
			\( \set{e} \) &
			\( \set{e} \) &
			\( \set{e} \) &
			\( \set{e} \) &
			trivial
			\\
			&
			\( \Z_2 \) &
			\( \Z_2 \) &
			\( \UGroup(1) \) &
			\( \UGroup(1) \) &
			trivial
			\\
			&
			\( \Z_2 \) &
			\( \Z_2 \) &
			\( \Z_p \) &
			\( \Z_p \) &
			trivial
			\\
			&
			\( \Z_2 \) &
			\( \Z_2 \) &
			\( \set{e} \) &
			\( \set{e} \) &
			trivial
			\\
			&
			\( \Z_2 \) &
			\( \set{e} \) &
			\( \Z_{2p} \) &
			\( \Z_p \) &
			\( \id_{\Z_2} \)
			\\
			\( \UGroup(1) \times \UGroup(1) \) &
			\( \UGroup(1) \) &
			\( \UGroup(1) \) &
			\( \UGroup(1) \) &
			\( \UGroup(1) \) &
			trivial
			\\
			&
			\( \UGroup(1) \) &
			\( \UGroup(1) \) &
			\( \Z_p \) &
			\( \Z_p \) &
			trivial
			\\
			&
			\( \UGroup(1) \) &
			\( \UGroup(1) \) &
			\( \set{e} \) &
			\( \set{e} \) &
			trivial
			\\
			&
			\( \UGroup(1) \) &
			\( \Z_q \) &
			\( \UGroup(1) \) &
			\( \Z_p \) &
			\( z \mapsto z^{\frac{kq}{p}} \) for some \( k \in \N \)
			\\
			&
			\( \UGroup(1) \) &
			\( \set{e} \) &
			\( \UGroup(1) \) &
			\( \set{e} \) &
			\( z \mapsto z^k \) for some \( k \in \N \)
			\\
			\( \Z_2 \times \UGroup(1) \) &
			\multicolumn{5}{c}{many choices}
			\\
		\bottomrule
	\end{tabular}
	\caption{List of all Howe subgroups \( H \) of \( \SUGroup(2) \times \UGroup(1) \) up to conjugacy with the corresponding generator \( H' \) satisfying \( \centralizer_{\SUGroup(2) \times \UGroup(1)} (H') = H \).}
	\label{table:yangMillsHiggs:GWS:orbitTypes}
\end{table}
According to \cref{prop:yangMillsHiggs:orbitTypesIsomorphicToHoweSubbundles}, we have accomplished the algebraic part of the classification of stabilizer subgroups \( \GauGroup_A \).
Depending on the topologies of \( M \) and \( P \), some of the Howe subgroups \( H \) may possibly not occur in the final classification.

According to~\eqref{eq:yangMillsHiggs:stabilizerAPhiOneOrbitType}, in order to calculate the stabilizer \( \GauGroup_{A,\varphi} \), we need to determine which conjugates of \( K \) intersect \( K \) again.
Writing \( a \in \SUGroup(2) \) as \( a = \smallMatrix{\alpha & -\bar{\beta} \\ \beta & \bar{\alpha}} \) with \( \alpha, \beta \in \C \) such that \( \abs{\alpha}^2 + \abs{\beta}^2 = 1 \), we find
\begin{equation}
	a \Matrix{\E^{\frac{\I \vartheta}{2}} & 0 \\ 0 & \E^{-\frac{\I \vartheta}{2}}} a^{-1} 
		= \Matrix{
			\abs{\alpha}^2 \E^{\frac{\I \vartheta}{2}} + \abs{\beta}^2 \E^{- \frac{\I \vartheta}{2}} 
			& - \alpha \bar{\beta} \left(\E^{- \frac{\I \vartheta}{2}} - \E^{\frac{\I \vartheta}{2}}\right)
			\\
			\bar{\alpha} \beta \left(\E^{\frac{\I \vartheta}{2}} - \E^{-\frac{\I \vartheta}{2}}\right)
			& \abs{\alpha}^2 \E^{-\frac{\I \vartheta}{2}} + \abs{\beta}^2 \E^{\frac{\I \vartheta}{2}}
			}.
\end{equation}
Hence \( \left( a \smallMatrix{\E^{\frac{\I \vartheta}{2}} & 0 \\ 0 & \E^{-\frac{\I \vartheta}{2}}} a^{-1}, \E^{\I \vartheta} \right) \) is an element of \( K \) if and only if either \( \beta = 0 \) or \( \E^{\I \vartheta} = 1 \).
Thus,
\begin{equation}
	\bigIntersection_{p \in P} \phi(p) K \phi(p)^{-1}
		= 
		\begin{cases}
			K 		& \text{if } \beta(p) = 0 \text{ for all } p \in P_A,
			\\
			\Z_2	& \text{otherwise},
		\end{cases}	
\end{equation}
where \( \Z_2 \) is viewed as the subgroup
\begin{equation}
	\Z_2 \isomorph \set*{\left(\Matrix{1 & 0 \\ 0 & 1}, 1 \right), \left(\Matrix{-1 & 0 \\ 0 & -1}, 1 \right)}
\end{equation}
of \( K \).
Thus, we obtain the following.
\begin{prop}
	\label{prop:yangMillsHiggs:GWS:stabilizerConnHiggs}
	The common stabilizer \( \GauGroup_{A, \varphi} \) of \( (A, \varphi) \) is either \( K \) if \( \beta(p) = 0 \) for all \( p \in P_A \) and the stabilizer of \( A \) is \( \SUGroup(2) \times \UGroup(1) \) or \( \UGroup(1) \times \UGroup(1) \); or otherwise it is \( \Z_2 \).
\end{prop}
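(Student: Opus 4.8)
The plan is to feed the conjugation calculation carried out immediately before the proposition into the general stabilizer formula~\eqref{eq:yangMillsHiggs:stabilizerAPhiOneOrbitType},
\[
	\GauGroup_{A, \varphi}
		\isomorph \centralizer_G(\HolGroup_A) \intersect \bigIntersection_{p \in P_A} \phi(p) K \phi(p)^{-1},
\]
and to evaluate the two factors on the right separately before intersecting them. Throughout I would use the base point convention attached to~\eqref{eq:yangMillsHiggs:stabilizerAPhiOneOrbitType}, namely that the point \( p_0 \) defining \( P_A \) lies in \( \hat{P} \), so that \( \phi(p_0) \in K \) and hence \( \phi(p_0) K \phi(p_0)^{-1} = K \); in particular the conjugate at \( p_0 \) is always one of the terms in the intersection.

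First I would pin down \( \bigIntersection_{p \in P_A} \phi(p) K \phi(p)^{-1} \). Writing the \( \SUGroup(2) \)-component of \( \phi(p) \) as \( \smallMatrix{\alpha & -\bar\beta \\ \beta & \bar\alpha} \), the preceding computation shows that the conjugate of the \( K \)-element with parameter \( \vartheta \) lies again in \( K \) precisely when \( \beta(p) = 0 \) or \( \E^{\I\vartheta} = 1 \). The elements with \( \E^{\I\vartheta} = 1 \) form the subgroup \( \Z_2 = \set{(\smallMatrix{1 & 0 \\ 0 & 1}, 1), (\smallMatrix{-1 & 0 \\ 0 & -1}, 1)} \), and since \( \smallMatrix{-1 & 0 \\ 0 & -1} \) is central in \( \SUGroup(2) \), this \( \Z_2 \) is fixed by every conjugation and therefore contained in \( \phi(p) K \phi(p)^{-1} \) for all \( p \). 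Combined with \( \phi(p_0) K \phi(p_0)^{-1} = K \), this forces the intersection to lie between \( \Z_2 \) and \( K \). I would then conclude that it equals \( K \) when \( \beta(p) = 0 \) for all \( p \in P_A \) (all conjugates coincide with \( K \)), and equals \( \Z_2 \) as soon as \( \beta(p_*) \neq 0 \) for some \( p_* \), since then \( K \intersect \phi(p_*) K \phi(p_*)^{-1} = \Z_2 \) already cuts the intersection down to \( \Z_2 \).

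Next I would intersect with \( \centralizer_G(\HolGroup_A) = \GauGroup_A \). In the branch \( \beta \equiv 0 \) on \( P_A \), the decomposition~\eqref{eq:yangMillsHiggs:GWS:decompositionHiggsField} shows that \( \varphi \) points in the fixed complex line \( \C \cdot \smallMatrix{0 \\ 1} \) along \( P_A \); equivariance of \( \varphi \) under the \( \HolGroup_A \)-action on the fibre then forces \( \HolGroup_A \) to preserve this line, i.e. to have diagonal \( \SUGroup(2) \)-part, so that \( \HolGroup_A \) is contained in the abelian group \( T \times \UGroup(1) \) with \( T \subseteq \SUGroup(2) \) the diagonal torus containing \( K \). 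Hence \( K \) commutes with \( \HolGroup_A \) and lies in \( \GauGroup_A \), giving \( \GauGroup_{A, \varphi} = \GauGroup_A \intersect K = K \); moreover, because \( \HolGroup_A \) then sits inside a torus, \( \GauGroup_A \) is necessarily \( \SUGroup(2) \times \UGroup(1) \) or \( \UGroup(1) \times \UGroup(1) \) and never \( \Z_2 \times \UGroup(1) \). In the complementary branch \( \beta \not\equiv 0 \), the intersection over \( p \) is \( \Z_2 \), and since \( \Z_2 \subseteq Z(G) \subseteq \centralizer_G(\HolGroup_A) \) for every holonomy group, one obtains \( \GauGroup_{A, \varphi} = \Z_2 \). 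Assembling the two branches reproduces the stated dichotomy.

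The group-theoretic bookkeeping is routine, so there is no hard analytic step. The one point carrying genuine content — and the part I expect to require the most care to phrase correctly — is the observation in the third paragraph that \( \beta \equiv 0 \) on \( P_A \) is not merely a condition on the Higgs field but actually reduces the holonomy \( \HolGroup_A \) to the diagonal torus, thereby guaranteeing \( K \subseteq \GauGroup_A \) and excluding the Howe subgroup \( \Z_2 \times \UGroup(1) \). This is precisely why the two hypotheses in the proposition, although stated separately, are in fact coupled, and getting the logic of this case distinction right is the crux of the argument.
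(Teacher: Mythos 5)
Your overall route coincides with the paper's: the proposition is obtained there by feeding the displayed conjugation computation into~\eqref{eq:yangMillsHiggs:stabilizerAPhiOneOrbitType} and then intersecting with the possible stabilizers \( \GauGroup_A = \centralizer_G(\HolGroup_A) \) coming from the Howe subgroup classification. Your third paragraph, however, proves strictly more than the paper records: the observation that \( \beta \equiv 0 \) on \( P_A \) forces \( \HolGroup_A \) to preserve the line \( \C \cdot \smallMatrix{0 \\ 1} \) is correct --- since \( \varphi \) is nowhere zero, a single nonzero value spans the line and linearity of \( \rho_{h^{-1}} \) does the rest --- and under the base-point convention \( p_0 \in \hat{P} \) it yields \( \HolGroup_A \subseteq T \times \UGroup(1) \) with \( T \) the diagonal torus, hence \( K \subseteq \GauGroup_A \) and \( \GauGroup_A \in \set{\SUGroup(2) \times \UGroup(1), \UGroup(1) \times \UGroup(1)} \) literally, not merely up to conjugacy. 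So the second hypothesis of the proposition is indeed automatic; the paper treats the two conditions as independent and does not state this coupling.

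There is one step that fails as written: the claim \( K \intersect \phi(p_*) K \phi(p_*)^{-1} = \Z_2 \) for an \emph{arbitrary} \( p_* \) with \( \beta(p_*) \neq 0 \). Take \( \alpha(p_*) = 0 \), \ie \( a = \smallMatrix{0 & -\bar{\beta} \\ \beta & 0} \). Then the displayed conjugate of the \( K \)-element with parameter \( \vartheta \) is \( \bigl(\smallMatrix{\E^{-\I \vartheta/2} & 0 \\ 0 & \E^{\I \vartheta/2}}, \E^{\I\vartheta}\bigr) \), which is the \( K \)-element with parameter \( -\vartheta \) as far as the \( \SUGroup(2) \)-part is concerned; its \( \UGroup(1) \)-part matches precisely when \( \E^{2\I\vartheta} = 1 \). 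Thus for \( \vartheta = \pi, 3\pi \) the conjugates \( \bigl(\pm\smallMatrix{\I & 0 \\ 0 & -\I}, -1\bigr) \) lie in \( K \) although \( \beta \neq 0 \) and \( \E^{\I\vartheta} \neq 1 \), and \( K \intersect \phi(p_*) K \phi(p_*)^{-1} \isomorph \Z_4 \), not \( \Z_2 \). The criterion you quote inherits this blind spot from the paper, whose ``if and only if'' tacitly assumes \( \alpha\beta \neq 0 \) (it compares the conjugate only against the \( K \)-element with the same parameter \( \vartheta \), overlooking \( \vartheta + 2\pi \)). The repair requires one extra idea, which is why the step deserves flagging: \( P_A \) is path-connected (every point is joined to \( p_0 \) by a horizontal curve) and \( \beta(p_0) = 0 \) by your base-point convention, so if \( \beta \not\equiv 0 \), the intermediate value theorem applied to the continuous function \( \abs{\beta} \) along a path gives a point \( p' \) with \( 0 < \abs{\beta(p')} < 1 \), \ie \( \alpha(p')\beta(p') \neq 0 \); there the off-diagonal entries of the displayed conjugate vanish only for \( \E^{\I\vartheta} = 1 \), so \( K \intersect \phi(p') K \phi(p')^{-1} = \Z_2 \), which cuts the total intersection down to \( \Z_2 \). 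Your centralizer step survives unchanged, since \( \Z_2 \) is central; note that intersecting with \( \centralizer_G(\HolGroup_A) \) cannot substitute for this repair, because when \( \HolGroup_A \subseteq T \times \UGroup(1) \) the spurious element \( \bigl(\smallMatrix{\I & 0 \\ 0 & -\I}, -1\bigr) \) does lie in the centralizer.
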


\newcommand{\weinbergAngle}{\theta_{\textup{W}}}
We now describe the structure of the orbit types in terms of the fields \( (\hat{A}, \tau) \) after symmetry breaking, see~\eqref{eq:yangMillsHiggs:fieldsAfterSymBreak}.
For this purpose, choose the following basis of \( \LieA{g} = \SUAlgebra(2) \times \UAlgebra(1) \): 
\begin{equation}
	\set*{t_a = \frac{\I}{2} \sigma_a, \I},
\end{equation}
where \( \sigma_a \) are the Pauli matrices.
In terms of these generators, the induced representation of \( \LieA{g} \), which will also be denoted by \( \rho \), is determined by
\begin{equation}
	\label{eq:yangMillsHiggs:GWS:representationAlgebra}
	\rho_{t_a} = t_a, \qquad \rho_{\I} = \frac{\I}{2} \one.
\end{equation}
Let \( t_\pm = t_3 \pm \I \).
Then, the Lie algebra \( \LieA{k} \) of \( K \) is spanned by \( t_+ \) and the complement \( \LieA{p} \) is spanned by \( \{ t_1, t_2, t_- \} \).
With respect to the basis \( \{t_a, \I \} \), we expand the connection as
\begin{equation}
	\label{eq:yangMillsHiggs:GWS:connectionInComponents}
	A = g W^a t_a + \I g' B,
\end{equation}
where we have introduced the coupling constants \( g \) and \( g' \).
Thus, passing to the basis \( \set{t_1, t_2, t_\pm} \) yields the decomposition \( A = \hat{A} + \tau \), where
\begin{subequations}
	\begin{align+}
		\hat{A} & = \frac{1}{2} (g W^3 + g' B) t_+,
		\\
		\tau & = g W^1 t_1 + g W^2 t_2 + \frac{1}{2} (g W^3 - g' B) t_-.
	\end{align+}
\end{subequations}
As common in the physics literature, it is convenient to introduce the fields
\begin{subequations}
	\label{eq:yangMillsHiggs:GWS:WpmZAgamma}
	\begin{align+}
		W_\pm &= \frac{1}{\sqrt{2}} (W^1 \mp \I W^2),
		\\
		\Vector{Z \\ A_\gamma} &= \frac{1}{\sqrt{g^2 + g'^2}} \Matrix{g & - g' \\ g' & g} \Vector{W^3 \\ B}.
	\end{align+}
\end{subequations}
Physically, \( A_\gamma \) is the electromagnetic field whose quanta are massless photons, \( Z \) describes the neutral but massive \( Z \)-boson, and \( W_\pm \) are massive charged bosons.
The \( Z \) and \( W_\pm \)-bosons mediate the weak interaction, while the photon corresponds to the remaining \( K \isomorph \UGroup(1) \) symmetry of electromagnetism.
We denote the elementary charge \( e \) by \( e = \frac{g g'}{\sqrt{g^2 + g'^2}} \) and the Weinberg angle \( \weinbergAngle \) by \( \tan \weinbergAngle = \frac{g'}{g} \).
Then, 
\begin{subequations}
	\label{eq:yangMillsHiggs:GWS:hatAandTauInComponents}
	\begin{align+}
		\hat{A} &= \left(e  A_\gamma + \frac{g \cos \weinbergAngle - g' \sin \weinbergAngle}{2} Z \right) t_+,
		\\
		 \tau &= g W_+ t + g W_- \bar{t} + \frac{g g'}{2 e} Z t_-
	\end{align+}
\end{subequations}
with \( t \defeq \frac{1}{\sqrt{2}} (t_1 + \I t_2) \) and \( \bar{t} \defeq \frac{1}{\sqrt{2}} (t_1 - \I t_2) \).
For later use, let us record the commutation relations of the new basis:
\begin{subequations}
	\label{eq:yangMillsHiggs:GWS:commutationTBarTTPM}
	\begin{align+}
		\commutator{t}{\bar{t}\,} &= \I t_3,
		&
		\commutator{t}{t_\pm} &= -\I t,
		\\
		\commutator{t_+}{t_-} &= 0,
		&
		\commutator{\bar{t}}{t_\pm} &= \I \bar{t}.
	\end{align+}
\end{subequations}

By~\eqref{eq:yangMillsHiggs:stabAfterSymBreak}, we have \( \GauGroup_{A, \varphi} = \hat{\GauGroup}_{\hat{A}, \tau} \).
Since \( K \) is abelian, the stabilizer \( \hat{\GauGroup}_{\hat{A}} \) is isomorphic to \( K \).
Thus it remains to determine which gauge transformations leave \( \tau \) invariant.
For this purpose, let \( k = \left( \smallMatrix{\E^{\frac{\I}{2} \vartheta} & 0 \\ 0 & \E^{-\frac{\I}{2} \vartheta}}, \E^{\I \vartheta} \right) \in K \).
A straightforward calculation shows that \( k \) acts on the basis \( \set{t_a, \I} \) as follows:
\begin{subequations}
	\label{eq:yangMillsHiggs:GWS:adjointActionOfK}
	\begin{align+}
		\AdAction_k t_1 &= \cos \vartheta \, t_1 - \sin \vartheta \, t_2,
		\\
		\AdAction_k t_2 &= \sin \vartheta \, t_1 + \cos \vartheta \, t_2,
		\\
		\AdAction_k t_3 &= t_3,
		\\
		\AdAction_k \I &= \I.
	\end{align+}
\end{subequations}
That is, \( k \) acts as a rotation in the \( (t_1, t_2) \)-plane and acts trivially on \( t_\pm \).
Thus, by~\cref{prop:yangMillsHiggs:stabAPhi},
\begin{equation}
	\GauGroup_{A, \varphi}
		= \hat{\GauGroup}_{\hat{A}} \intersect \hat{\GauGroup}_{\tau}
		\isomorph
		\begin{cases}
			K		& \text{if } \tau \text{ is proportional to } t_- ,
			\\
			\Z_2	& \text{otherwise}.
		\end{cases}
\end{equation}
In other words, \( (\hat{A}, \tau) \) has non-trivial stabilizer if and only if \( W^1 = 0 = W^2 \), \ie, if the \( W \)-boson vanishes leaving the \( Z \)-boson as the only non-trivial intermediate vector boson.
Hence, on the non-generic stratum only one intermediate boson, namely the \( Z \)-boson, survives.

Finally, we show that the decomposition of \( \SectionSpaceAbb{Q} \) defines a stratification.
\begin{prop}
	\label{prop:yangMillsHiggs:GWS:approximationHiggs}
	The decomposition of \( \SectionSpaceAbb{Q} \) into orbit types satisfies the frontier condition.
\end{prop}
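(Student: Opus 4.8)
The plan is to exploit the fact, established in \cref{prop:yangMillsHiggs:GWS:stabilizerConnHiggs}, that the \( \GauGroup \)-action on \( \SectionSpaceAbb{Q} \) has exactly two orbit types: the generic type \( (\Z_2) \), and the non-generic type \( (K) \), which is realized precisely when the \( W \)-boson field vanishes. Since \( \Z_2 \subseteq K \), these satisfy \( (\Z_2) \leq (K) \), with \( (K) \) the top type. The existence of slices at every point (and the resulting upper semicontinuity of stabilizers, \cf \parencite[Proposition~2.4]{DiezSlice}) shows that every point in a neighborhood of a given \( (A, \varphi) \) has stabilizer subconjugate to \( \GauGroup_{A, \varphi} \). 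Consequently \( \SectionSpaceAbb{Q}_{(\Z_2)} \) is open, while \( \SectionSpaceAbb{Q}_{(K)} \) is closed, because any limit of \( (K) \)-type points must have a stabilizer \( \geq (K) \), hence again of type \( (K) \). With only two strata present, the only nontrivial instance of the frontier condition that remains is therefore the inclusion \( \SectionSpaceAbb{Q}_{(K)} \subseteq \closureSet{\SectionSpaceAbb{Q}_{(\Z_2)}} \); in other words, it suffices to prove that the generic stratum is dense.

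To establish this density I would pass to the fields obtained after symmetry breaking. By \cref{prop:yangMillsHiggs:configSpaceSymmBreaking} the assignment \( (A, \varphi) \mapsto (\phi, \hat{A}, \tau, \eta) \) is a \( \GauGroup \)-equivariant diffeomorphism, and by \cref{prop:yangMillsHiggs:stabAPhi} together with the explicit computation of the \( K \)-action on the basis \( \set{t_1, t_2, t_\pm} \) in~\eqref{eq:yangMillsHiggs:GWS:adjointActionOfK}, the orbit type is governed solely by \( \tau \): one has \( \GauGroup_{A, \varphi} \isomorph K \) if and only if \( \tau \) is proportional to \( t_- \) (equivalently \( W_\pm = 0 \)), and \( \GauGroup_{A, \varphi} \isomorph \Z_2 \) otherwise. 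Given \( (A, \varphi) \in \SectionSpaceAbb{Q}_{(K)} \), with associated data \( (\phi, \hat{A}, \tau, \eta) \) and \( \tau \) proportional to \( t_- \), I would perturb only the \( \LieA{p} \)-valued one-form \( \tau \), leaving \( \phi \), \( \hat{A} \) and \( \eta \) (and hence \( \varphi \)) fixed. Choosing a one-form \( \sigma \in \DiffFormSpace^1(M, \hat{P} \times_K \LieA{p}) \) whose component in the \( (t_1, t_2) \)-subbundle does not vanish identically, the perturbed configuration corresponding to \( \tau + s \sigma \) has non-vanishing \( W \)-boson field for every \( s \neq 0 \) and therefore lies in \( \SectionSpaceAbb{Q}_{(\Z_2)} \). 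Letting \( s \to 0 \) yields a family in \( \SectionSpaceAbb{Q}_{(\Z_2)} \) converging in the Fréchet topology to \( (A, \varphi) \), which gives the desired inclusion.

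The only point requiring care — and the place where I expect the main, though mild, obstacle to lie — is the existence of such a perturbing form \( \sigma \): one must verify that the rank-two \( W \)-boson bundle \( \hat{P} \times_K \R^2 \subseteq \hat{P} \times_K \LieA{p} \) admits a one-form that is non-zero somewhere on \( M \). This is harmless, since it suffices to take a bump one-form supported in a trivializing chart multiplied by a local frame, so no global topological nonvanishing is needed; a single non-zero value already breaks the stabilizer down to \( \Z_2 \). Everything else is a routine continuity check in the model \( \SectionSpaceAbb{E} \times \sFunctionSpace(M, \check{\FibreBundleModel{F}}_{(K)}) \). Combining the density just shown with the fact that \( \SectionSpaceAbb{Q}_{(K)} \) is closed and \( \SectionSpaceAbb{Q}_{(\Z_2)} \) is open then verifies each instance of the frontier condition, completing the proof.
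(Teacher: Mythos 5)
Your proposal is correct, but it reaches the decisive density statement by a genuinely different route than the paper. The skeleton is shared: with only the two orbit types \( (\Z_2) \leq (K) \), upper semicontinuity of stabilizers coming from the slices makes \( \SectionSpaceAbb{Q}_{(\Z_2)} \) open and \( \SectionSpaceAbb{Q}_{(K)} \) closed, so the frontier condition reduces to the inclusion \( \SectionSpaceAbb{Q}_{(K)} \subseteq \closureSet{\SectionSpaceAbb{Q}_{(\Z_2)}} \) — a reduction the paper performs tacitly and you make explicit. The difference is how the approximating generic configurations are produced. The paper keeps \( \varphi \) fixed and perturbs \( A \) as a \( G \)-connection by invoking the Kondracki--Rogulski approximation theorem \parencite[Theorem~4.3.5]{KondrackiRogulski1986}: since a \( (K) \)-type pair forces \( \GauGroup_A \) to be conjugate to \( \SUGroup(2) \times \UGroup(1) \) or \( \UGroup(1) \times \UGroup(1) \) by \cref{prop:yangMillsHiggs:GWS:stabilizerConnHiggs}, that theorem yields \( A_i \to A \) with \( \GauGroup_{A_i} \) conjugate to \( \Z_2 \times \UGroup(1) \), whence \( \GauGroup_{A_i, \varphi} \isomorph (\Z_2 \times \UGroup(1)) \intersect K \isomorph \Z_2 \). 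You instead pass to the symmetry-broken variables of \cref{prop:yangMillsHiggs:configSpaceSymmBreaking}, where the orbit type is governed solely by \( \tau \) via the dichotomy ``\( K \) if \( \tau \propto t_- \), else \( \Z_2 \)'', and switch on the \( W \)-field by the explicit affine perturbation \( \tau \mapsto \tau + s\sigma \) with a bump one-form \( \sigma \); this likewise changes only the connection while fixing \( \varphi \), but the mechanism is elementary and self-contained, avoiding the nontrivial approximation theorem for the pure gauge-orbit stratification altogether. Your flagged obstacle — existence of \( \sigma \) — is indeed harmless for exactly the reason you give: a compactly supported local section suffices, and a single non-zero value of the \( (t_1, t_2) \)-component already breaks the stabilizer to \( \Z_2 \). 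What each approach buys: yours is more explicit (it even produces a smooth path into the generic stratum rather than merely a sequence, and makes transparent where genericity is created), while the paper's argument is the one that scales — for other structure groups or representations with several orbit types, where no clean \( W_\pm = 0 \) characterization of the singular set is available, the Kondracki--Rogulski theorem provides approximation between comparable connection strata in general, which is why the paper routes through it.
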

\begin{proof}
	As we have seen above, the \( \GauGroup \)-action on \( \SectionSpaceAbb{Q} \) has only the orbit types \( (K) \) and \( (\Z_2) \).
	The frontier condition thus requires that every pair \( (A, \varphi) \in \SectionSpaceAbb{Q} \) with orbit type \( (K) \) can be approximated by a sequence \( (A_i, \varphi_i) \) with orbit type \( (\Z_2) \).
	By \cref{prop:yangMillsHiggs:GWS:stabilizerConnHiggs}, a pair \( (A, \varphi) \) has a stabilizer conjugate to \( K \) only when \( A \) has a stabilizer conjugate to \( \SUGroup(2) \times \UGroup(1) \) or \( \UGroup(1) \times \UGroup(1) \).
	In both cases, the approximation theorem \parencite[Theorem~4.3.5]{KondrackiRogulski1986} shows that there is a converging sequence \( A_i \to A \) of connections \( A_i \) with stabilizer conjugate to \( \Z_2 \times \UGroup(1) \).
	By \cref{prop:yangMillsHiggs:stabPhi}, the pair \( (A_i, \varphi) \) has stabilizer conjugate to \( (\Z_2 \times \UGroup(1)) \intersect K \isomorph \Z_2 \) and converges to \( (A, \varphi) \) by construction.
\end{proof}

\subsection{Orbit types of \texorpdfstring{$ \CotBundle \SectionSpaceAbb{Q} $}{T* Q}}
Next, we determine the secondary stratification of the cotangent bundle.
For this purpose, we endow \( \LieA{g} \) with the \( \AdAction_G \)-invariant scalar product given as the product of (minus) the Killing form on \( \SUAlgebra(2) \) and the usual scalar product on \( \UAlgebra(1) \).
The normalization is such that the generators \( \set{t_a, \I} \) form an orthonormal basis.
In the sequel, we will always use this scalar product to identify \( \LieA{g}^* \) with \( \LieA{g} \).
Recall the definition~\eqref{eq:yangMillsHiggs:GWS:stabilizer} of the electromagnetic gauge group \( K \subseteq G \).
Now, let \( D \in \DiffFormSpace^2 (M, \CoAdBundle P) \). 
We decompose \( D \) according to \( \LieA{g} = \LieA{k} \oplus \LieA{p} \) into \( \restr{D}{\hat{P}} = D_{\LieA{k}} + D_{\LieA{p}} \), with \( D_{\LieA{k}} \in \DiffFormSpace^2(M, \CoAdBundle \hat{P}) \isomorph \DiffFormSpace^2(M, \LieA{k}^*) \) and \( D_{\LieA{p}} \in \DiffFormSpace^2(M, \hat{P} \times_K \LieA{p}^*) \).
Recall from the discussion above that \( \hat{\GauGroup}_{\hat{A}} \) is isomorphic to \( K \), viewed as constant gauge transformations in \( \hat{P} \).
Then, similarly to the above reasoning, using~\eqref{eq:yangMillsHiggs:GWS:adjointActionOfK} and the fact that \( K \) is abelian, we find
\begin{equation}
	 \hat{\GauGroup}_D \intersect \hat{\GauGroup}_{\hat{A}}
		=  \hat{\GauGroup}_{D_{\LieA{p}}} \intersect \hat{\GauGroup}_{\hat{A}}
		=
		\begin{cases}
			K		& \text{if } D_{\LieA{p}} \text{ is proportional to } t_- ,
			\\
			\Z_2	& \text{otherwise}.
		\end{cases}
\end{equation}
We now turn to the stabilizer of \( \Pi \in \SectionSpaceAbb{F}^* \), which we view as \( \Pi \in \sSectionSpace(P \times_G \C^2) \) using the volume form.
As we have seen above, a point \( f \in \C^2 \) has stabilizer \( K \) if and only if it is of the form
\begin{equation}
	f = \frac{r}{\sqrt{2}} \Matrix{\E^{\I \alpha} & 0 \\ 0 & \E^{- \I \alpha}} \cdot \Vector{0 \\ \nu}
		= \frac{r}{\sqrt{2}} \Vector{0 \\ \E^{- \I \alpha} \nu}
\end{equation}
for some \( r \in \R_{>0} \) and \( \E^{\I \alpha} \in \UGroup(1) \), that is, if its first component \( f_1 \) vanishes.
Thus, we have
\begin{equation}
	\hat{\GauGroup}_\Pi \intersect \hat{\GauGroup}_{\hat{A}}
		=
		\begin{cases}
			K		& \text{if } \Pi_1 = 0 \text{ on } \hat{P},
			\\
			\Z_2	& \text{otherwise}.
		\end{cases}
\end{equation}
Hence, in summary, we find
\begin{prop}
	\label{prop:yangMillsHiggs:GWS:orbitTypesCotBundle}
	The stabilizer of \( (A, \varphi, D, \Pi) \in \CotBundle \SectionSpaceAbb{Q} \) is conjugate to \( K \) if the following conditions are met on \( \hat{P} \) (otherwise it is conjugate to \( \Z_2 \)):
	\begin{thmenumerate}
		\item
			\( \tau \) is proportional to \( t_- \), \ie \( W_\pm = 0 \),
		\item
			\( D_{\LieA{p}} \) is proportional to \( t_- \) and
		\item
			the first component of \( \Pi \) vanishes. 
			\qedhere
	\end{thmenumerate}
\end{prop}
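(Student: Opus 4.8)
The plan is to obtain the dichotomy by assembling the three stabilizer computations carried out just above, together with the observation that the central subgroup $\Z_2 \isomorph \set{(I, 1), (-I, 1)}$ lies in every stabilizer. Since the full stabilizer is the intersection
\begin{equation}
	\GauGroup_{A, \varphi, D, \Pi} = \GauGroup_{A, \varphi} \intersect \GauGroup_D \intersect \GauGroup_\Pi,
\end{equation}
it is in particular a subgroup of the configuration-space stabilizer $\GauGroup_{A, \varphi}$. By \cref{prop:yangMillsHiggs:GWS:stabilizerConnHiggs}, equivalently by the symmetry-broken identity \eqref{eq:yangMillsHiggs:stabAfterSymBreak}, the latter is conjugate to $K$ precisely when $\tau$ is proportional to $t_-$ — that is, when $W_\pm = 0$ — and to $\Z_2$ otherwise. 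Thus if condition~(i) fails, then already $\GauGroup_{A, \varphi}$, and hence $\GauGroup_{A, \varphi, D, \Pi}$, is conjugate to $\Z_2$, and there is nothing more to prove.

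It remains to treat the case $\GauGroup_{A, \varphi} \isomorph K$. Here I would identify $\hat{\GauGroup}_{\hat A} \isomorph K$ with the constant gauge transformations of the reduced bundle $\hat P$ and analyse the residual $K$-action on the cotangent data. The point I want to exploit is that this action decouples across the three objects $\tau$, $D_{\LieA p}$ and $\Pi$. Indeed, by \eqref{eq:yangMillsHiggs:GWS:adjointActionOfK} the element of $K$ with parameter $\vartheta$ rotates the $(t_1, t_2)$-plane by $\vartheta$ and fixes $t_\pm$; hence it acts trivially on the $\LieA k$-component $D_{\LieA k}$ and rotates the $\LieA p$-components, so that $K$ fixes $D$ if and only if $D_{\LieA p}$ is proportional to $t_-$. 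The same rotation shows that $K$ fixes $\tau$ if and only if $\tau$ is proportional to $t_-$ (already guaranteed in this case). Finally, $K$ acts on $\C^2$ by $(z_1, z_2) \mapsto (\E^{\I \vartheta} z_1, z_2)$, so it rotates the first component of $\Pi$ and fixes the second, whence $K$ fixes $\Pi$ if and only if its first component $\Pi_1$ vanishes on $\hat P$. In each of the three cases the stabilizer of the individual object is either all of $K$ or exactly $\Z_2$, the latter occurring precisely because $\E^{\I \vartheta} = 1$ (equivalently, rotation by $\vartheta$ being trivial) forces $\vartheta \in \set{0, 2\pi}$.

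Putting the pieces together, I would conclude that in the case $\GauGroup_{A, \varphi} \isomorph K$ the full stabilizer $K \intersect \GauGroup_D \intersect \GauGroup_\Pi$ equals $K$ exactly when conditions~(ii) and~(iii) hold as well, and reduces to $\Z_2$ the moment either fails. Combined with the first paragraph this yields the stated dichotomy. The one point requiring care — and the closest thing to an obstacle — is the bookkeeping that shows the residual $K$-action genuinely decouples: one must check that the $\LieA k$-part of $D$ and the second component of $\Pi$ impose no further constraint, since they are automatically $K$-fixed, and that $\Z_2$, generated by the central element $(-I, 1)$ acting trivially on both $\LieA g$ and $\C^2$, is contained in every stabilizer, so that no subgroup strictly between $\Z_2$ and $K$ can arise. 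Everything else is a direct transcription of the adjoint-action formulae \eqref{eq:yangMillsHiggs:GWS:adjointActionOfK} and the explicit representation of $K$.
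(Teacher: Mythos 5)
Your proof is correct and follows essentially the same route as the paper's: there, too, the proposition is obtained by intersecting \( \GauGroup_{A, \varphi} \) (computed via \cref{prop:yangMillsHiggs:GWS:stabilizerConnHiggs} and~\eqref{eq:yangMillsHiggs:stabAfterSymBreak}) with the stabilizers of \( D \) and \( \Pi \) on \( \hat{P} \), using the decomposition \( \restr{D}{\hat{P}} = D_{\LieA{k}} + D_{\LieA{p}} \), the adjoint-action formulae~\eqref{eq:yangMillsHiggs:GWS:adjointActionOfK}, and the explicit \( K \)-action \( (z_1, z_2) \mapsto (\E^{\I \vartheta} z_1, z_2) \) on \( \C^2 \), each individual stabilizer being either \( K \) or \( \Z_2 \). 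Your only addition is to make explicit that the central \( \Z_2 \) lies in every stabilizer, so no intermediate subgroup can occur — a point the paper leaves implicit in its appeal to \( K \) being abelian.
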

We note that \( \CotBundle \SectionSpaceAbb{Q} \) thus has the same orbit types as \( \SectionSpaceAbb{Q} \), \cf \cref{prop:yangMillsHiggs:GWS:stabilizerConnHiggs}.
\begin{remark}
	More generally, instead of~\eqref{eq:yangMillsHiggs:GWS:representation} we could consider the representation
	\begin{equation}
		\rho^Y_{a, \vartheta} (z_1, z_2) = a \cdot \E^{\I Y \vartheta} \cdot (z_1, z_2), \qquad a \in \SUGroup(2), \vartheta \in [0, 4\pi),
	\end{equation}
	which changes the weak hypercharge of the Higgs field from \( \frac{1}{2} \) to \( Y \in \Q \).
	In this case, the stabilizer group \( K \) is replaced by
	\begin{equation}
		K^Y \defeq \set*{\left( \Matrix{\E^{\I Y \vartheta} & 0 \\ 0 & \E^{-\I Y \vartheta}}, \E^{\I \vartheta} \right) \given \vartheta \in [0, 4\pi)}.
	\end{equation}
	Moreover, the generic orbit type is no longer \( \Z_2 \) but the subgroup of \( \Z_2 \times \UGroup(1) \) generated by the elements
	\begin{equation}
		\left( \Matrix{\E^{\I \pi n} & 0 \\ 0 & \E^{-\I \pi n}}, \E^{\I \frac{\pi n}{Y}} \right), \qquad n \in \Z.
	\end{equation}
	We see, in particular, that the orbit type stratification of the configuration space \( \SectionSpaceAbb{Q} \) depends on the weak hypercharge of the Higgs field.
\end{remark}
Note that there might be topological obstructions related to the conditions in \cref{prop:yangMillsHiggs:GWS:orbitTypesCotBundle}.
To accomplish the full classification of gauge orbit types we should derive a theorem analogous to \cref{prop:yangMillsHiggs:classificationOrbitTypesSU} for the case of \( G = \SUGroup(2) \times \UGroup(1) \).
This classification clearly depends on the choice of \( M \).
\begin{remark}
	\label{rem:yangMillsHiggs:GWS:SThree}
	Let us consider the special case \( M = S^3 \).
	By the standard principal fiber bundle classification theorem, all \( G \)-bundles over \( S^3 \) are trivial, because \( \pi_2(G) = 0 \).
	That is, \( P \) is trivial in that case.
	The same applies to any subbundle of \( P \) and hence to any holonomy-induced Howe subbundle.
	As a consequence, the classification problem for that base manifold boils down to the algebraic problem we just solved.
	More complicated cases will be studied elsewhere.
\end{remark}

\subsection{Momentum map and reduced phase space}
Let us determine the expression of the momentum map given by~\eqref{eq:yangMillsHiggs:momentumMap},
\begin{equation}
	\SectionSpaceAbb{J}: \CotBundle \SectionSpaceAbb{Q} \to \GauAlgebra^*,
	\qquad
	(A, D, \varphi, \Pi) \mapsto \dif_A D + \varphi \diamond \Pi \vol_g.
\end{equation}
First, we calculate the second summand.
For this purpose, we consider the momentum map \( \FibreBundleModel{J}: \C^2 \times \C^2 \to \LieA{g}^* \) for the lifted \( G \)-action on \( \CotBundle \C^2 \).
The latter is determined by the equations
\begin{align}
	\label{eq:yangMillsHiggs:GWS:momentumMap:t}
	\dualPair{\FibreBundleModel{J}(z, v)}{t_a} &= \dualPair{v}{\rho_{t_a} z} 
	= \Re(v^* t_a z) \, ,
	\\
	\label{eq:yangMillsHiggs:GWS:momentumMap:i}
	\dualPair{\FibreBundleModel{J}(z, v)}{\I} &= \dualPair{v}{\rho_{\I} z} = - \frac{1}{2} \Im(v^* z) \, ,
\end{align}
and hence, in vector form, it is given by
\begin{equation}
	\FibreBundleModel{J}(z,v) = -\frac{1}{2} \Im \Vector{v_1^* z_2 + v_2^* z_1 \\ 
	\I v_2^* z_1 - \I v_1^*z_2 \\ v_1^*z_1 - v_2^*z_2 \\ v_1^* z_1 + v_2^*z_2}.
\end{equation}
Thus, we have
\begin{equation}\begin{split}
	\restr{(\varphi \diamond \Pi)}{\hat{P}} 
		&= \FibreBundleModel{J}\left(\frac{\eta}{\sqrt{2}} (0, \nu), \Pi\right)
		= - \frac{\eta \nu}{2 \sqrt{2}} \Im \Vector{\Pi_1^* \\ - \I \Pi_1^* \\ - \Pi_2^* \\ \Pi_2^*}
		\\
		&= - \frac{\eta \nu}{2} \left(\frac{\I \Pi_1}{2}  t - \frac{\I \Pi_1^*}{2} \bar{t} + \frac{\Im \Pi_2}{\sqrt{2}}  t_- \right).
\end{split}\end{equation}
If \( \Pi \) lies in the singular orbit type \( (K) \), then its first component \( \Pi_1 \) vanishes and thus the current \( \varphi \diamond \Pi \) is proportional to \( t_- \) in this case.

Next, let us expand \( \restr{D}{\hat{P}} \in \DiffFormSpace^2(M, \hat{P} \times_K \LieA{g}^*) \) in the basis \( \set{t, \bar{t}, t_-, t_+} \):
\begin{equation}
	\label{eq:yangMillsHiggs:GWS:Ddecomposition}
	\restr{D}{\hat{P}} = \underbrace{\frac{D_-}{g} t + \frac{D_+}{g} \bar{t} + \left( \frac{e}{gg'} D_Z - \frac{g \cos \weinbergAngle - g' \sin \weinbergAngle}{2 g g'} D_\gamma \right) t_-}_{D_{\LieA{p}}} + \underbrace{\frac{D_\gamma}{2 e}  t_+}_{D_{\LieA{k}}},
\end{equation}
where the normalization was chosen in such a way that the symplectic form stays in Darboux form in the new coordinates \( (W_\pm, D_\pm, Z, D_Z, A_\gamma, D_\gamma) \) (with respect to the scalar product in which \( \set{t_a, i} \) is an orthonormal basis).
In this notation, using~\eqref{eq:yangMillsHiggs:GWS:hatAandTauInComponents} and~\eqref{eq:yangMillsHiggs:GWS:commutationTBarTTPM}, we find
\begin{equation}\begin{split}
	\restr{(\dif_A D)}{\hat{P}}
		&= \dif \restr{D}{\hat{P}} + \commutator{\hat{A}}{\restr{D}{\hat{P}}} + \commutator{\tau}{\restr{D}{\hat{P}}}
		\\
		&= 
			\frac{1}{g} \dif D_- t + \frac{1}{g} \dif D_+ \bar{t} 
			+ \frac{e}{g g'} \dif D_Z t_- - \frac{g \cos \weinbergAngle - g' \sin \weinbergAngle}{2 g g'} \dif D_\gamma t_- 
			\\ 
			&\quad
			+ \frac{1}{2 e} \dif D_\gamma t_+
			+ \I \left(\sin \weinbergAngle A_\gamma + \cos \weinbergAngle Z\right) \wedge ( D_- t - D_+ \bar{t})
			\\
			&\quad
			+ \I W_+ \wedge \left(D_+ t_3 - (\sin \weinbergAngle D_\gamma + \cos \weinbergAngle D_Z) t \right)
			\\
			&\quad
			- \I W_- \wedge \left(D_- t_3 - (\sin \weinbergAngle D_\gamma + \cos \weinbergAngle D_Z) \bar{t}\right)	.
\end{split}\end{equation}
Hence, the Gauß constraint~\eqref{eq:yangMillsHiggs:GaussConstraint} takes the following form:
\begin{subequations}
	\label{eq:yangMillsHiggs:GWS:GaussConstraint}
	\begin{gather+}
		\begin{aligned}
		\dif D_+ - \I g & (\sin \weinbergAngle A_\gamma + \cos \weinbergAngle Z) \wedge D_+ 
			\\
			&= - \I g W_- \wedge (\sin \weinbergAngle D_\gamma + \cos \weinbergAngle D_Z) 
			- \I \frac{\eta \nu g}{4} \Pi_1^* \vol_g,
		\end{aligned}
		\\
		\begin{aligned}
		\dif D_- + \I g & (\sin \weinbergAngle A_\gamma + \cos \weinbergAngle Z) \wedge D_- 
			\\
			&= \I g W_+ \wedge (\sin \weinbergAngle D_\gamma + \cos \weinbergAngle D_Z) 
			+ \I \frac{\eta \nu g}{4} \Pi_1 \vol_g,
		\end{aligned}
		\\
		\dif D_Z 
			= \I g \cos \weinbergAngle (W_- \wedge D_- - W_+ \wedge D_+)
			+ \frac{\eta \nu g g'}{2 \sqrt{2} e} \Im \Pi_2 \vol_g,
		\\
		\dif D_\gamma
			= \I e (W_- \wedge D_- - W_+ \wedge D_+).
	\end{gather+}
\end{subequations}
Thus, considered on the singular stratum where we have \( W_\pm = 0 = D_\pm \) according to \cref{prop:yangMillsHiggs:GWS:orbitTypesCotBundle}, the Gauß constraint is equivalent to the two equations
\begin{equation}
	\dif D_Z = \frac{\eta \nu g g'}{2 \sqrt{2} e} \Im \Pi_2 \vol_g,
	\qquad
	\dif D_\gamma = 0.
\end{equation}
Since these equations are decoupled, the Gauß constraint cuts out a smooth subbundle of \( \seam{(\CotBundle \SectionSpaceAbb{Q})}^{(K)}_{(K)} \), whose fiber is parametrized by the fields \( D_\gamma \in \clDiffFormSpace^2(M) \), \( D_Z \in \DiffFormSpace^2(M) \) and \( \frac{\nu}{\sqrt{2}} \Re \Pi_2 \in \sFunctionSpace(M) \).

\begin{figure}
	\centering
	\begin{tikzcd}
		\seam{\check{\SectionSpaceAbb{P}}}^{(\Z_2)}_{(\Z_2)}
			\arrow[rr, dashed]
			\arrow[rrd, dashed]
			&
			& \seam{\check{\SectionSpaceAbb{P}}}^{(\Z_2)}_{(K)}
				\arrow[d, dashed]
				\arrow[r, phantom, "\Bigg\rbrace", very near start]
			&[-1.9em] \check{\SectionSpaceAbb{P}}_{(\Z_2)} 
			&[-2.3em] \scriptstyle D_\pm \, \neq \, 0
		\\
			&
			& \seam{\check{\SectionSpaceAbb{P}}}^{(K)}_{(K)}
				\arrow[r, phantom, "\Bigg\rbrace", very near start]
			& \check{\SectionSpaceAbb{P}}_{(K)} 
			& \scriptstyle  D_\pm \, = \, 0
		\\[-3.5ex]
			\arrow[d, two heads]
			&
			& \arrow[d, two heads]
			&
			&
		\\[-1.5ex]
		\check{\SectionSpaceAbb{Q}}_{(\Z_2)}
			\arrow[rr, dashed]
			&
			& \check{\SectionSpaceAbb{Q}}_{(K)}
			&
			&
		\\[-3.5ex]
		\scriptstyle W_\pm \, \neq \, 0
			&
			& \scriptstyle W_\pm \, = \, 0
			&
			&
	\end{tikzcd}
	\caption{Schematic illustration of the secondary stratification of the reduced phase \( \check{\SectionSpaceAbb{P}} \) and its relation to the orbit type stratification of the reduced configuration space \( \check{\SectionSpaceAbb{Q}} \). Dotted arrows mean that the target lies in the closure of the source.}
	\label{fig:yangMillsHiggs:GWS:secondaryStratification}
\end{figure}

According to \cref{prop:cotangentBundle:singularCotangentBundleRed}, the reduced phase space decomposes into
\begin{equation}
	\check{\SectionSpaceAbb{P}} 
		= \underbrace{\seam{\check{\SectionSpaceAbb{P}}}^{(K)}_{(K)}}_{\check{\SectionSpaceAbb{P}}_{(K)}}
		\union 
		\underbrace{
		\seam{\check{\SectionSpaceAbb{P}}}^{(\Z_2)}_{(K)}
		\union
		\seam{\check{\SectionSpaceAbb{P}}}^{(\Z_2)}_{(\Z_2)}
		}_{\check{\SectionSpaceAbb{P}}_{(\Z_2)}}
\end{equation}
and the strata \( \check{\SectionSpaceAbb{P}}_{(K)} \) and \( \check{\SectionSpaceAbb{P}}_{(\Z_2)} \) are symplectic.
Moreover, \( \check{\SectionSpaceAbb{P}}^{(K)}_{(K)} \) is symplectomorphic to the cotangent bundle of \( \check{\SectionSpaceAbb{Q}}_{(K)} \).
As we have seen, this singular stratum is the (reduced) phase space of the theory of a photon and the \( Z \)-boson without any other intermediate bosons.
In contrast, on the generic stratum \( \seam{\check{\SectionSpaceAbb{P}}}^{(\Z_2)}_{(\Z_2)} \isomorph \CotBundle \check{\SectionSpaceAbb{Q}}_{(\Z_2)} \) all intermediate vector bosons are present.
This cotangent bundle is stitched together by the seam \( \seam{\check{\SectionSpaceAbb{P}}}^{(\Z_2)}_{(K)} \), where no \( W \)-bosons are present but their conjugate momenta are non-zero.
The secondary stratification is schematically illustrated in \cref{fig:yangMillsHiggs:GWS:secondaryStratification}.
In the next section, we study the structure of the reduced phase space in detail and show that it is similar to that of the harmonic oscillator discussed in \cref{ex:cotangentBundle:harmonicOss}.

\subsection{Description of the reduced phase space}

In this section, we give an explicit parameterization of the reduced phase space.
The main idea is to identify a part of the configuration space on which the group of gauge transformations acts transitively and thereby to reduce the symmetry to a subgroup.
Next, we repeat this process until only a finite-dimensional symmetry remains.
To simplify the discussion, we limit our attention to the case \( M = S^3 \).
As noted in \cref{rem:yangMillsHiggs:GWS:SThree}, for \( M = S^3 \) all bundles occurring in the construction are trivial and we can hence represent all geometric objects on the bundle as objects living on \( S^3 \).
The phenomena resulting from a non-trivial topology of these bundles for a general base manifold will be discussed in a separate paper.

\paragraph*{1. Reduction of the gauge group from \( \sFunctionSpace(M, G) \) to \( \sFunctionSpace(M, K) \):}
Recall from \cref{prop:yangMillsHiggs:configSpaceSymmBreaking} that the Higgs mechanism yields a parametrization of \( (A, \varphi) \in \SectionSpaceAbb{Q} \) in terms of the variables \( (\phi, \hat{A}, \tau, \eta) \), which are viewed as elements of a bundle over \( \sSectionSpace(P \times_G G \slash K) \).
In the present case of a trivial bundle \( P \), we can strengthen this result.
For this purpose, recall that \( G \slash K \) is diffeomorphic to \( S^3 \) via the \( G \)-action on \( \C^2 \) defined in~\eqref{eq:yangMillsHiggs:GWS:representation}.
Hence, in particular, the \( K \)-bundle \( G \to G \slash K \) is trivial.
Therefore, every smooth map \( M \to G \slash K \) lifts to a smooth map \( M \to G \) and so the action of \( \GauGroup = \sFunctionSpace(M, G) \) on \( \sSectionSpace(P \times_G G \slash K) \isomorph \sFunctionSpace(M, G \slash K) \) is transitive.
Moreover, the stabilizer of the constant function taking values in the identity coset is the subgroup \( \hat{\GauGroup} = \sFunctionSpace(M, K) \) (this is in accordance with \cref{prop:yangMillsHiggs:stabPhi}).
Therefore, \( \sFunctionSpace(M, G \slash K) \) is diffeomorphic to the homogeneous space \( \GauGroup \slash \hat{\GauGroup} \).
In other words, the decomposition~\eqref{eq:yangMillsHiggs:GWS:decompositionHiggsField} takes the form
\begin{equation}
	\label{eq:yangMillsHiggs:GWS:decompositionHiggsField:trivial}
	\varphi = \frac{\eta}{\sqrt{2}} \, \lambda \cdot \Vector{0 \\ \nu},
\end{equation}
where \( \eta \in \sFunctionSpace(M, \R_{> 0}) \) and \( \lambda \in \GauGroup \) is unique up to an element of \( \hat{\GauGroup} \).
By implementing the unitary gauge, \ie, by gauging away \( \lambda \), we obtain the following refinement of \cref{prop:yangMillsHiggs:configSpaceSymmBreaking}.
Recall the decomposition \( \LieA{g} = \LieA{p} \oplus \LieA{k} \) with \( \LieA{p} \) spanned by \( \set{t, \bar{t}, t_-} \) and \( \LieA{k} \) spanned by \( t_+ \).
\begin{prop}
	\label{prop:yangMillsHiggs:GWS:unitaryGauge}
	The assignment\footnote{With a slight abuse of notation, we continue to use the notation \( \hat{A} \) and \( \tau \). Note, however, that these fields differ from the ones introduced in~\eqref{eq:yangMillsHiggs:GWS:connectionInComponents} by a gauge transformation.}
	\begin{equation}
		(A, \varphi)
		\mapsto
		(\equivClass{\lambda, \hat{A}, \tau}, \eta),
	\end{equation}
	where \( \lambda \) and \( \eta \) are determined by~\eqref{eq:yangMillsHiggs:GWS:decompositionHiggsField:trivial} and \( (\lambda^{-1} \cdot A) = \hat{A} + \tau \), defines a \( \GauGroup \)-equivariant diffeomorphism
	\begin{equation}
		\label{eq:yangMillsHiggs:GWS:unitaryGauge:configSpace}
		\SectionSpaceAbb{Q} \to \GauGroup \times_{\hat{\GauGroup}} \bigl(\DiffFormSpace^1(M, \LieA{k}) \times \DiffFormSpace^1(M, \LieA{p})\bigr) \times \sFunctionSpace(M, \R_{> 0}).
	\end{equation}
	Here, on the right hand side, \( \GauGroup \) acts by left translation on the first factor and \( \hat{\GauGroup} \) acts by gauge transformations on the space of connections \( \DiffFormSpace^1(M, \LieA{k}) \) and via the adjoint action on \( \DiffFormSpace^1(M, \LieA{p}) \).
	In particular, the gauge orbit space \( \check{\SectionSpaceAbb{Q}} \) is isomorphic to \( \bigl(\DiffFormSpace^1(M, \LieA{k}) \times \DiffFormSpace^1(M, \LieA{p})\bigr) \slash \hat{\GauGroup} \times \sFunctionSpace(M, \R_{> 0}) \) in the sense of stratified spaces.
\end{prop}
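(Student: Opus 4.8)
The plan is to obtain the asserted diffeomorphism as a refinement of \cref{prop:yangMillsHiggs:configSpaceSymmBreaking}, exploiting that over \( M = S^3 \) all bundles in sight are trivial. The starting observation, already recorded before the statement, is that triviality of \( P \) together with triviality of the principal \( K \)-bundle \( G \to G \slash K \isomorph S^3 \) (here \( \pi_2(K) = \pi_2(\UGroup(1)) = 0 \)) makes the \( \GauGroup \)-action on \( \sSectionSpace(P \times_G G \slash K) \isomorph \sFunctionSpace(M, G \slash K) \) transitive, with stabilizer \( \hat{\GauGroup} = \sFunctionSpace(M, K) \) at the constant identity coset. This identifies the base of the bundle \( \SectionSpaceAbb{E} \) of \cref{prop:yangMillsHiggs:configSpaceSymmBreaking} with the homogeneous space \( \GauGroup \slash \hat{\GauGroup} \). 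A global section \( \sigma \colon G \slash K \to G \) of the trivial bundle \( G \to G \slash K \) induces, by post-composition, a global section \( \phi \mapsto \sigma \circ \phi \) of the projection \( \GauGroup \to \GauGroup \slash \hat{\GauGroup} \); hence this projection is a trivial principal \( \hat{\GauGroup} \)-bundle, and this is what will supply the smoothness throughout.

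First I would verify that the assignment is well defined on equivalence classes. Given \( (A, \varphi) \), the positive function \( \eta \) and the coset \( \equivClass{\lambda} \in \GauGroup \slash \hat{\GauGroup} \) are uniquely fixed by~\eqref{eq:yangMillsHiggs:GWS:decompositionHiggsField:trivial}, since \( K \) stabilizes \( (0, \nu) \) and so replacing \( \lambda \) by \( \lambda h \) with \( h \in \hat{\GauGroup} \) does not change \( \varphi \). Under such a replacement, \( (\lambda h)^{-1} \cdot A = h^{-1} \cdot (\hat{A} + \tau) \); because \( h \) takes values in \( K \), its action preserves the splitting \( \LieA{g} = \LieA{k} \oplus \LieA{p} \), acting by gauge transformations on the \( \LieA{k} \)-part and by the rotational adjoint action~\eqref{eq:yangMillsHiggs:GWS:adjointActionOfK} on the \( \LieA{p} \)-part. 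Thus \( (\hat{A}, \tau) \mapsto h^{-1} \cdot (\hat{A}, \tau) \), which is precisely the relation defining the associated bundle \( \GauGroup \times_{\hat{\GauGroup}} (\DiffFormSpace^1(M, \LieA{k}) \times \DiffFormSpace^1(M, \LieA{p})) \) once one uses triviality of \( \hat{P} \) to identify \( \ConnSpace(\hat{P}) \isomorph \DiffFormSpace^1(M, \LieA{k}) \) (with the trivial connection as origin) and \( \DiffFormSpace^1(M, \hat{P} \times_K \LieA{p}) \isomorph \DiffFormSpace^1(M, \LieA{p}) \). Hence \( \equivClass{\lambda, \hat{A}, \tau} \) is well defined.

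Next I would check that the map is a \( \GauGroup \)-equivariant bijection. Equivariance is immediate: for \( g \in \GauGroup \) the decomposition~\eqref{eq:yangMillsHiggs:GWS:decompositionHiggsField:trivial} of \( g \cdot \varphi \) carries the same \( \eta \) and the lift \( g \lambda \), while \( (g\lambda)^{-1} \cdot (g \cdot A) = \lambda^{-1} \cdot A = \hat{A} + \tau \) leaves the fiber part untouched, matching left translation on the \( \GauGroup \)-factor. The candidate inverse sends \( (\equivClass{\lambda, \hat{A}, \tau}, \eta) \) to \( \bigl( \lambda \cdot (\hat{A} + \tau), \frac{\eta}{\sqrt 2} \lambda \cdot (0, \nu) \bigr) \), and the computation of the previous paragraph shows it is independent of the representative. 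Smoothness in both directions follows from the smoothness of the decomposition~\eqref{eq:yangMillsHiggs:decompositionHiggsField} established in the general setting, together with the global section of \( \GauGroup \to \GauGroup \slash \hat{\GauGroup} \), which trivializes the associated bundle and expresses the maps through the smooth operations \( (A, \lambda) \mapsto \lambda^{-1} \cdot A \) and its inverse.

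The statement on the orbit space then follows from the standard identification \( (\GauGroup \times_{\hat{\GauGroup}} F) \slash \GauGroup \isomorph F \slash \hat{\GauGroup} \) for the left \( \GauGroup \)-action, applied to \( F = \DiffFormSpace^1(M, \LieA{k}) \times \DiffFormSpace^1(M, \LieA{p}) \), together with the \( \GauGroup \)-invariance of the \( \sFunctionSpace(M, \R_{>0}) \)-factor; the identification respects the stratifications because, by equivariance, it matches orbit type strata on both sides. The one place where the topology of \( S^3 \) genuinely enters is the global liftability of maps \( M \to G \slash K \), hence the transitivity of the \( \GauGroup \)-action and the global triviality of \( \GauGroup \to \GauGroup \slash \hat{\GauGroup} \); everything else is bookkeeping of the \( K \)-equivariance of the splitting \( \LieA{g} = \LieA{k} \oplus \LieA{p} \). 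I expect no serious analytic obstacle, precisely because the delicate infinite-dimensional slice arguments have already been absorbed into \cref{prop:yangMillsHiggs:configSpaceSymmBreaking}.
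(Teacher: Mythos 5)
Your proposal is correct and takes essentially the same route as the paper, which derives the proposition from the discussion immediately preceding it: triviality of the bundles over \( S^3 \) makes the \( \GauGroup \)-action on \( \sFunctionSpace(M, G \slash K) \isomorph \GauGroup \slash \hat{\GauGroup} \) transitive with stabilizer \( \hat{\GauGroup} \), and the statement is then the refinement of Proposition~\ref{prop:yangMillsHiggs:configSpaceSymmBreaking} obtained by gauging away \( \lambda \) in the decomposition~\eqref{eq:yangMillsHiggs:GWS:decompositionHiggsField:trivial}. Your additional details — the global section of \( \GauGroup \to \GauGroup \slash \hat{\GauGroup} \) induced by post-composition with a section of \( G \to G \slash K \), and the explicit checks of well-definedness, equivariance, and the inverse map — are precisely the bookkeeping the paper leaves implicit.
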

Recall from the discussion in \cref{sec:cotangentBundle:normalForm}, that there is a natural description of the cotangent bundle of an associated bundle such that the momentum map is brought into a convenient normal form, \cf \cref{prop:cotangentBundle:simpleNormalForm} (in \cref{sec:cotangentBundle:normalForm} the focus lies on certain associated bundles where the slice is the typical fiber --- however, the discussion there does not really use the properties of a slice).
Using the same strategy, we identify
\begin{equation}
	\GauGroup \times_{\hat{\GauGroup}} \Bigl(\sFunctionSpace(M, \LieA{p}) \times \DiffFormSpace^1(M, \LieA{k})^2 \times \DiffFormSpace^1(M, \LieA{p})^2\Bigr) \times \sFunctionSpace(M, \R_{> 0}) \times \sFunctionSpace(M, \R)
\end{equation}
with \( \TBundle \SectionSpaceAbb{Q} \) via the map
\begin{equation}
	\bigl(\equivClass{\lambda, (\varsigma, \hat{A}, \diF \hat{A}, \tau, \diF \tau)}, \eta, \diF \eta\bigr)
	\mapsto
	\Vector{
		A = \lambda \cdot (\hat{A} + \tau)
		\\
		\diF A = \AdAction_\lambda (\diF \hat{A} + \diF \tau) - \dif_A \varsigma
		\\
		\varphi = \frac{\eta}{\sqrt{2}} \, \lambda \cdot \Vector{0 \\ \nu}
		\\
		\diF \varphi =  \frac{\diF \eta}{\eta} \varphi + \varsigma \ldot \varphi 
	}.
\end{equation}
Here, we have denoted \( \DiffFormSpace^1(M, \cdot) \times \DiffFormSpace^1(M, \cdot)\equiv \DiffFormSpace^1(M, \cdot)^2 \).
A straightforward calculation shows that the dual map
\begin{equation}\begin{split}
	\CotBundle \SectionSpaceAbb{Q} 
	&\to
	\GauGroup \times_{\hat{\GauGroup}} \Bigl(\DiffFormSpace^3(M, \LieA{p}^*) \times \DiffFormSpace^1(M, \LieA{k}) \times \DiffFormSpace^2(M, \LieA{k}^*) \times \DiffFormSpace^1(M, \LieA{p}) \times \DiffFormSpace^2(M, \LieA{p}^*)\Bigr)
	\\
	&\qquad\qquad \times \sFunctionSpace(M, \R_{> 0}) \times \DiffFormSpace^3(M, \R),
	\\
	(A, D, &\varphi, \Pi) 
	\mapsto
	(\equivClass{\lambda, \nu, \hat{A}, \hat{D}, \tau, D_\tau}, \eta, \Pi_\eta)
\end{split}\end{equation}
is given by
\begin{align}
	\hat{D} + D_\tau &= \CoAdAction_{\lambda^{-1}} D,
	\\
	\Pi_\eta &= \frac{1}{\eta} \Re (\Pi^* \varphi),
	\\
	\nu &= \pr_{\LieA{p}^*} (\dif_A D + \varphi \diamond \Pi).
\end{align}
Moreover, let us parameterize \( (\hat{A}, \tau) \) in terms of the fields \( (W_\pm, Z, A_\gamma) \) as in~\eqref{eq:yangMillsHiggs:GWS:hatAandTauInComponents} and \( (\hat{D}, D_\tau) \) in terms of the fields \( (D_\pm, D_Z, D_\gamma) \) as in~\eqref{eq:yangMillsHiggs:GWS:Ddecomposition}.
Then,~\eqref{eq:yangMillsHiggs:GWS:GaussConstraint} shows that the \( \LieA{k}^* \)-projection of the Gauß constraint is given by
\begin{equation}
	\label{eq:yangMillsHiggs:GWS:GaussRed}
	0 = \pr_{\LieA{k}^*} (\dif_A D + \varphi \diamond \Pi) = \frac{1}{2e} \dif D_\gamma +  \Im \bigl(D_- \wedge W_-\bigr).
\end{equation}
Denote \( \SectionSpaceAbb{Q}_{\textrm{red}} = \DiffFormSpace^1(M, \LieA{k}) \times \DiffFormSpace^1(M, \C) \).
Its elements are \( (A_\gamma, W_-) \).
Moreover, elements of \( \CotBundle_{(A_\gamma, W_-)} \SectionSpaceAbb{Q}_{\textrm{red}} = \DiffFormSpace^2(M, \LieA{k}^*) \times \DiffFormSpace^2(M, \C) \) are denoted by \( (D_\gamma, D_-) \). 
The right hand side of~\eqref{eq:yangMillsHiggs:GWS:GaussRed} is the momentum map for the induced \( \hat{\GauGroup} \)-action\footnote{Note that \( \lambda \in \hat{\GauGroup} \) acts on \( A_\gamma \in \DiffFormSpace^1(M, \LieA{k}) \) by \( \lambda \cdot A_\gamma = A_\gamma - \frac{1}{e} \lambda^{-1} \dif \lambda \).} on \( \CotBundle \SectionSpaceAbb{Q}_{\textrm{red}} \), because the momentum map for the lift of the \( K \)-action~\eqref{eq:yangMillsHiggs:GWS:adjointActionOfK} to \( \CotBundle \C \) is given by \( \FibreBundleModel{J}_K(z, w) = - \Im(w^* z) \).
In summary, the upshot of this first symmetry reduction is the following description of the reduced phase space.
\begin{prop}
	\label{prop:yangMillsHiggs:GWS:QPhaseSpace}
	The diffeomorphism~\eqref{eq:yangMillsHiggs:GWS:unitaryGauge:configSpace} induces an isomorphism 
	\begin{equation}
		\check{\SectionSpaceAbb{P}}
		=
		\CotBundle \SectionSpaceAbb{Q} \sslash \GauGroup 
		=
		\bigl(\CotBundle \SectionSpaceAbb{Q}_{\textrm{red}} \sslash \hat{\GauGroup} \bigr) \times \CotBundle \bigl(\DiffFormSpace^1(M, \R \, t_-)  \times \sFunctionSpace(M, \R_{> 0}) \bigr)
	\end{equation}
	of symplectic stratified spaces.
\end{prop}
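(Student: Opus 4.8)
The plan is to prove the statement by symplectic reduction in stages, first removing the gauge freedom $\GauGroup \slash \hat{\GauGroup}$ and then reducing by the residual current group $\hat{\GauGroup} = \sFunctionSpace(M, K)$. By \cref{prop:yangMillsHiggs:GWS:unitaryGauge}, the unitary gauge~\eqref{eq:yangMillsHiggs:GWS:unitaryGauge:configSpace} realizes $\SectionSpaceAbb{Q}$ as the associated bundle $\GauGroup \times_{\hat{\GauGroup}} \SectionSpaceAbb{N}$ with typical fiber $\SectionSpaceAbb{N} \defeq \bigl(\DiffFormSpace^1(M, \LieA{k}) \times \DiffFormSpace^1(M, \LieA{p})\bigr) \times \sFunctionSpace(M, \R_{>0})$, on which $\hat{\GauGroup}$ acts by gauge transformations on the $\LieA{k}$-connection, by the adjoint action on $\DiffFormSpace^1(M, \LieA{p})$, and trivially on the Higgs modulus $\sFunctionSpace(M, \R_{>0})$. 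This is exactly the associated-bundle structure to which the normal form of \cref{sec:cotangentBundle:normalForm} applies, with $\hat{\GauGroup}$ in the role of the stabilizer and $\SectionSpaceAbb{N}$ in the role of the slice; as stressed there, the construction uses only the reductive associated-bundle structure and not the slice property. The $\AdAction_K$-orthogonal splitting $\LieA{g} = \LieA{k} \oplus \LieA{p}$ induces pointwise the $\AdAction_{\hat{\GauGroup}}$-invariant complement $\LieA{m} \defeq \DiffFormSpace^0(M, \LieA{p})$ of the Lie algebra $\DiffFormSpace^0(M, \LieA{k})$ of $\hat{\GauGroup}$ in $\GauAlgebra$, together with the weakly complementary dual $\LieA{m}^* = \DiffFormSpace^3(M, \LieA{p}^*)$.

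First I would invoke \cref{prop:cotangentBundle:simpleNormalForm} for this structure. It yields a $\GauGroup$-equivariant identification $\CotBundle \SectionSpaceAbb{Q} \isomorph \GauGroup \times_{\hat{\GauGroup}} \bigl(\LieA{m}^* \times \CotBundle \SectionSpaceAbb{N}\bigr)$ under which the momentum map is identified with $\equivClass{\lambda, (\nu, \beta)} \mapsto \CoAdAction_\lambda\bigl(\nu + J_{\hat{\GauGroup}}(\beta)\bigr)$, where $J_{\hat{\GauGroup}} \colon \CotBundle \SectionSpaceAbb{N} \to \DiffFormSpace^3(M, \LieA{k}^*)$ is the momentum map for the lifted $\hat{\GauGroup}$-action. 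Since $\nu$ and $J_{\hat{\GauGroup}}(\beta)$ lie in the complementary summands $\DiffFormSpace^3(M, \LieA{p}^*)$ and $\DiffFormSpace^3(M, \LieA{k}^*)$ of $\GauAlgebra^*$, the Gauß constraint $\SectionMapAbb{J} = 0$ splits into $\nu = 0$ and $J_{\hat{\GauGroup}}(\beta) = 0$; the former is the $\LieA{p}^*$-projection $\nu = \pr_{\LieA{p}^*}(\dif_A D + \varphi \diamond \Pi)$ recorded above, the latter the $\LieA{k}^*$-projection~\eqref{eq:yangMillsHiggs:GWS:GaussRed}. Passing to the quotient by left $\GauGroup$-translation on the first factor collapses the $\GauGroup \slash \hat{\GauGroup}$ and the $\LieA{m}^*$ directions and gives the stage-one identification $\check{\SectionSpaceAbb{P}} = \SectionMapAbb{J}^{-1}(0) \slash \GauGroup \isomorph \CotBundle \SectionSpaceAbb{N} \sslash \hat{\GauGroup}$. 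By \cref{prop:cotangentBundle:normalFormSymplecticForm}, at $\nu = 0$ the pulled-back symplectic form is the direct sum of the canonical form on $\LieA{m} \times \LieA{m}^*$ and the canonical form on $\CotBundle \SectionSpaceAbb{N}$, so this identification is symplectic.

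It remains to factorize the residual reduction. The $\AdAction_K$-decomposition $\LieA{p} = \C \oplus \R t_-$, with $\C$ spanned by $\{t, \bar{t}\}$, yields $\DiffFormSpace^1(M, \LieA{p}) = \DiffFormSpace^1(M, \C) \oplus \DiffFormSpace^1(M, \R t_-)$ and hence $\SectionSpaceAbb{N} \isomorph \SectionSpaceAbb{Q}_{\textrm{red}} \times \bigl(\DiffFormSpace^1(M, \R t_-) \times \sFunctionSpace(M, \R_{>0})\bigr)$ with $\SectionSpaceAbb{Q}_{\textrm{red}} = \DiffFormSpace^1(M, \LieA{k}) \times \DiffFormSpace^1(M, \C)$. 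By~\eqref{eq:yangMillsHiggs:GWS:adjointActionOfK}, $\hat{\GauGroup}$ acts trivially on $\R t_-$, and it acts trivially on the gauge-invariant modulus $\eta$; thus the entire $\hat{\GauGroup}$-action is concentrated on $\SectionSpaceAbb{Q}_{\textrm{red}}$. Consequently $J_{\hat{\GauGroup}}$ is the pullback of the reduced momentum map $J_{\textrm{red}} \defeq \frac{1}{2e}\dif D_\gamma + \Im(D_- \wedge W_-)$ on $\CotBundle \SectionSpaceAbb{Q}_{\textrm{red}}$, independent of the passive factor, so that $J_{\hat{\GauGroup}}^{-1}(0) = J_{\textrm{red}}^{-1}(0) \times \CotBundle\bigl(\DiffFormSpace^1(M, \R t_-) \times \sFunctionSpace(M, \R_{>0})\bigr)$. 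Since $\hat{\GauGroup}$ acts only on the first factor and, again by \cref{prop:cotangentBundle:normalFormSymplecticForm}, the symplectic form splits as a direct sum, both the quotient and the form factorize, giving the asserted isomorphism. Compatibility with the stratifications is automatic: the stage-one and stage-two identifications are $\GauGroup$- respectively $\hat{\GauGroup}$-equivariant and hence preserve orbit types, while the passive factor contributes a single smooth stratum, so that the product stratification of the right-hand side matches the orbit-type stratification of $\check{\SectionSpaceAbb{P}}$.

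The main obstacle I expect is the justification of the stage-one normal form itself in the present setting, in which the residual group $\hat{\GauGroup} = \sFunctionSpace(M, K)$ is infinite-dimensional rather than a compact stabilizer. Concretely, one must verify that the requirements bundled in \cref{defn:cotangentBundle:sliceCompatible} hold for the associated-bundle structure $\GauGroup \times_{\hat{\GauGroup}} \SectionSpaceAbb{N}$: that the relevant injections admit surjective adjoints, so that the dual splitting $\GauAlgebra^* = \DiffFormSpace^3(M, \LieA{p}^*) \oplus \DiffFormSpace^3(M, \LieA{k}^*)$ and the cotangent-bundle identification are legitimate, and that the lifted $\hat{\GauGroup}$-action on $\CotBundle \SectionSpaceAbb{N}$ genuinely possesses the momentum map $J_{\hat{\GauGroup}}$ with the stated value. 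The $\AdAction_G$-invariance of $\LieA{k} \oplus \LieA{p}$ and the ellipticity underlying the Hodge-type decompositions of \cref{sec:yangMillsHiggs:normalForm} should supply these ingredients, but confirming the weak closedness and surjectivity conditions carefully is the delicate step; once they are in place, the subsequent factorization is essentially formal.
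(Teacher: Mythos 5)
Your proposal is correct and follows essentially the same route as the paper: the unitary gauge of \cref{prop:yangMillsHiggs:GWS:unitaryGauge} realizes \( \SectionSpaceAbb{Q} \) as an associated bundle to which the normal form of \cref{prop:cotangentBundle:simpleNormalForm} applies (the paper itself remarks that the slice properties are not needed there), the Gauß constraint then splits into \( \nu = 0 \) and the residual \( \hat{\GauGroup} \)-momentum constraint~\eqref{eq:yangMillsHiggs:GWS:GaussRed}, and the factorization follows from \( \LieA{p} = \C \oplus \R \, t_- \) together with the trivial action of \( K \) on \( t_- \) and on \( \eta \). The only point worth noting is that the compatibility issue you flag at the end is settled in the paper not by ellipticity but by the explicit, pointwise-algebraic dual map written out just before the proposition — the splitting \( \LieA{g} = \LieA{k} \oplus \LieA{p} \) is fiberwise and the \( G \)-action is transitive along the Higgs orbit directions, so no non-local inversion or surjectivity of adjoints of differential operators is required.
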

\begin{proof}
	First note that the Gauß constraint~\eqref{eq:yangMillsHiggs:GaussConstraint} is equivalent to \( \nu = 0 \) and the condition~\eqref{eq:yangMillsHiggs:GWS:GaussRed}, which is the momentum map constraint for the \( \hat{\GauGroup} \)-action as we have discussed above.
	Now the assertion follows from the decomposition \( \LieA{p} = \C \oplus \R \) with \( \C \) spanned by \( \set{t, \bar{t}} \) and \( \R \) spanned by \( t_- \) and from the fact that \( K \) acts trivially on \( t_- \), \cf~\eqref{eq:yangMillsHiggs:GWS:adjointActionOfK}.
\end{proof}
Hence, the singular structure of the phase space is completely encoded in the symplectic reduction of \( \CotBundle \SectionSpaceAbb{Q}_{\textrm{red}} \) by the \( \hat{\GauGroup} \)-action.

\paragraph*{2. Reduction from \( \hat{\GauGroup} \) to \( K \):}
Since \( S^3 \) is simply connected, the Hodge decomposition theorem yields
\begin{equation}
	\DiffFormSpace^1(M, \LieA{k}) = \dif \sFunctionSpace(M, \LieA{k}) \oplus \dif^* \DiffFormSpace^2(M, \LieA{k}). 
\end{equation}
Note that every map \( f: M \to \LieA{k} \) induces a map \( \hat{f} = \exp_K \circ f: M \to K \) such that \( \hat{f}^{-1} \dif \hat{f} = \dif f \).
Accordingly, every \( K \)-connection \( A_\gamma \) can be written as
\begin{equation}
	A_\gamma = \psi^{-1} \dif \psi + \beta,
\end{equation}
where \( \psi \in \sFunctionSpace(M, K) \) and \( \beta \in \dif^* \DiffFormSpace^2(M, \LieA{k}) \) is uniquely determined by the curvature \( F_{A_\gamma} \) of \( A_\gamma \). 
The action of \( \hat{\GauGroup} \) on \( \dif \DiffFormSpace^0(M, \LieA{k}) \), viewed as a subspace of the space of \( K \)-connections, is transitive with kernel consisting of the constant functions.
We identify this kernel with \( K \).
Note that \( K \) acts trivially on \( \dif^* \DiffFormSpace^2(M, \LieA{k}) \) and by rotation on \( \DiffFormSpace^1(M, \C) \), \cf~\eqref{eq:yangMillsHiggs:GWS:adjointActionOfK}.

\begin{lemma}
	The map
	\begin{equation}\label{eq:yangMillsHiggs:GWS:diffeoQRed}\begin{split}
		\SectionSpaceAbb{Q}_{\textrm{red}}
		\to
		\bigl( \hat{\GauGroup} \times_K \DiffFormSpace^1(M, \C) \bigr) \times \dif \DiffFormSpace^1 (M, \LieA{k})
		\\
		(A_\gamma, W_-)
		\mapsto
		(\equivClass{\psi, v}, F_{A_\gamma}),
	\end{split}\end{equation}
	with \( v = \psi^{-1} W_- \) is a diffeomorphism.
\end{lemma}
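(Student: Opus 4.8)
The plan is to realize the map as the packaging of two Hodge-theoretic decompositions — one for the \( K \)-connection \( A_\gamma \), one (essentially trivial) for the matter field \( W_- \) — and then to exhibit an explicit inverse. First I would unpack the target. Since \( K \) is abelian, the curvature is \( F_{A_\gamma} = \dif A_\gamma \), so it automatically lies in \( \dif \DiffFormSpace^1(M, \LieA{k}) \), and conversely every exact \( 2 \)-form arises this way. Using \( \sCohomology^1(S^3, \Z) = 0 \), every \( \psi \in \hat{\GauGroup} = \sFunctionSpace(M, K) \) is of the form \( \exp_K \circ f \) for some \( f \in \sFunctionSpace(M, \LieA{k}) \), whence \( \psi^{-1} \dif \psi = \dif f \in \dif \sFunctionSpace(M, \LieA{k}) \); the lift \( f \) is unique up to an additive constant, so \( \psi \) is unique up to a constant in \( K \). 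Feeding the Hodge splitting \( \DiffFormSpace^1(M, \LieA{k}) = \dif \sFunctionSpace(M, \LieA{k}) \oplus \dif^* \DiffFormSpace^2(M, \LieA{k}) \) into \( A_\gamma = \psi^{-1} \dif \psi + \beta \) identifies the exact part with \( \dif f \) and the coexact part with \( \beta \). I would then check that \( \equivClass{\psi, v} \) is well-defined: replacing \( \psi \) by \( \psi k \) with \( k \in K \) constant sends \( v = \psi^{-1} W_- \) to \( k^{-1} \cdot v \) (as \( K \) acts on \( \C \) by rotation), which is precisely the relation defining \( \hat{\GauGroup} \times_K \DiffFormSpace^1(M, \C) \), while \( A_\gamma \) and \( F_{A_\gamma} \) are unchanged because \( K \) is abelian and \( k \) constant.

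Second I would build the inverse. Given \( (\equivClass{\psi, v}, \gamma) \) with \( \gamma \in \dif \DiffFormSpace^1(M, \LieA{k}) \), Hodge theory on the simply connected closed manifold \( S^3 \) supplies a unique coexact \( \beta \in \dif^* \DiffFormSpace^2(M, \LieA{k}) \) with \( \dif \beta = \gamma \), depending linearly and continuously on \( \gamma \) through the Green operator of the Laplacian. Choosing any representative \( (\psi, v) \), I set \( A_\gamma = \psi^{-1} \dif \psi + \beta \) and \( W_- = \psi \cdot v \). The same abelian computation shows these are independent of the representative, so the assignment descends to the associated bundle. It is then routine that this is a two-sided inverse: \( F_{\psi^{-1} \dif \psi + \beta} = \dif \beta = \gamma \) (using \( \dif(\psi^{-1} \dif \psi) = 0 \)) and \( \psi^{-1} W_- = v \) recover the original data, while in the opposite direction the Hodge splitting of \( A_\gamma \) reproduces \( \psi^{-1} \dif \psi \) and \( \beta \).

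Finally, I would address smoothness in the Fréchet category. Both directions are assembled from operations that are smooth — either continuous-linear or built from the fixed covering \( \exp_K \): the \( L^2 \)-orthogonal projections onto the exact and coexact summands, the Green operator recovering \( \beta \) from \( \gamma \), the pointwise rotation \( (\psi, v) \mapsto \psi \cdot v \), and the exponentiation \( f \mapsto \exp_K \circ f \). The one step needing care — and the main obstacle — is the passage \( A_\gamma \mapsto \equivClass{\psi, \cdot} \): it factors as the continuous-linear projection \( A_\gamma \mapsto \dif f \) followed by integration to \( f \) and exponentiation, and one must argue that this yields a well-defined smooth map into the \emph{quotient} \( \hat{\GauGroup} \slash K \) rather than into \( \hat{\GauGroup} \) itself. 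I would handle this by working throughout with the class \( \equivClass{\psi, v} \), using that the exact-part projection is globally smooth and that \( \exp_K \colon \LieA{k} \to K \) is a smooth covering, so the induced map on \( K \)-classes is smooth, with inverse the manifestly smooth map constructed above. Since the remaining data \( v \) and \( \beta \isomorph F_{A_\gamma} \) enter linearly, this establishes that the map is a diffeomorphism.
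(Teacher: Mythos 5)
Your proposal is correct and takes essentially the same route as the paper: the paper's (very terse) proof simply exhibits the inverse \( (\equivClass{\psi, v}, \beta) \mapsto \bigl(\psi^{-1} \dif \psi + \dif^* \laplace^{-1} \beta, \, \psi \ldot v\bigr) \) and observes that \( \dif \dif^* \laplace^{-1} \) is the identity on \( \dif \DiffFormSpace^1(M, \LieA{k}) \), relying on the preceding discussion (Hodge splitting of \( \DiffFormSpace^1(M, \LieA{k}) \) on the simply connected \( S^3 \) and the lift \( \psi = \exp_K \circ f \)) for the decomposition \( A_\gamma = \psi^{-1} \dif \psi + \beta \). Your extra verifications — well-definedness of \( \equivClass{\psi, v} \) under \( \psi \mapsto \psi k \) and smoothness of both directions in the Fréchet category — are details the paper leaves implicit rather than a different argument.
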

\begin{proof}
	The inverse map \( (\equivClass{\psi, v}, \beta) \mapsto (A_\gamma, W_-) \) is given by
	\begin{equation}
		\label{eq:yangMillsHiggs:GWS:diffeoQRedInverse}
		A_\gamma = \psi^{-1} \dif \psi + \dif^* \laplace^{-1} \beta,
		\qquad
		W_- = \psi \ldot v,
	\end{equation}
	with \( \psi \in \hat{\GauGroup} \), \( \beta \in \dif \DiffFormSpace^1(M, \LieA{k}) \) and \( v \in \DiffFormSpace^1(M, \C) \), because \( \dif \dif^* \laplace^{-1} \) is the identity operator on \( \dif \DiffFormSpace^1(M, \LieA{k}) \). 
\end{proof}
To get a convenient description of the cotangent bundle \( \CotBundle \SectionSpaceAbb{Q}_{\textrm{red}} \), we follow the same strategy as above.
Let \( \sFunctionSpace(M, \LieA{k})_0 \) denote the functions whose average over \( M \) vanishes.
The surjective linear map
\begin{equation}
	\sFunctionSpace(M, \LieA{k})
	\to
	\sFunctionSpace(M, \LieA{k})_0,
	\qquad
	\psi \mapsto \psi - \frac{1}{\vol_M} \int_M \psi \vol_g
\end{equation}
has kernel \( \LieA{k} \) and thus yields the identification \( \sFunctionSpace(M, \LieA{k}) \slash \LieA{k} \isomorph \sFunctionSpace(M, \LieA{k})_0 \).
Dually, the integral map \( \int_M: \DiffFormSpace^3(M, \LieA{k}^*) \to \LieA{k}^* \) has as kernel the space \( \DiffFormSpace^3(M, \LieA{k}^*)_0 \), which is the dual space to \( \sFunctionSpace(M, \LieA{k})_0 \).
Linearizing the reconstruction equations~\eqref{eq:yangMillsHiggs:GWS:diffeoQRedInverse} yields
\begin{equation}\begin{split}
	\diF A_\gamma &= \dif \diF \psi + \dif^* \laplace^{-1} \diF \beta,
	\\
	\diF W_- &= \diF \psi \ldot (\psi \cdot v) + \psi \ldot \diF v,
\end{split}\end{equation}
where \( \diF \psi \in \sFunctionSpace(M, \LieA{k})_0 \), \( \diF \beta \in \dif \DiffFormSpace^1 (M, \LieA{k}) \) and \( \diF v \in \DiffFormSpace^1(M, \C) \).
By dualizing these equations, we get a diffeomorphism
\begin{equation}\begin{split}
	\CotBundle \SectionSpaceAbb{Q}_{\textrm{red}}
	\to
	\hat{\GauGroup} \times_K \bigl(\DiffFormSpace^3(M, \LieA{k}^*)_0 \times \DiffFormSpace^1(M, \C)^2\bigr) \times \dif \DiffFormSpace^1 (M, \LieA{k}) \times \dif^* \DiffFormSpace^2 (M, \LieA{k}^*)
	\\
	(A_\gamma, D_\gamma, W_-, D_-)
	\mapsto
	\bigl(\equivClass{\psi, \Pi_0, v, D_v}, F_{A_\gamma}, D_{F_{A_\gamma}} \bigr),
\end{split}\end{equation}
with 
\begin{equation}\begin{split}
	\Pi_0 &= \pr_{\DiffFormSpace^3(M, \LieA{k}^*)_0} \left(\frac{1}{2 e} \dif D_\gamma + \Im(D_- \wedge W_-) \right),
	\\
	D_v &= \psi^{-1} \ldot D_-,
	\\
	D_{F_{A_\gamma}} &= \dif^* \laplace^{-1} D_\gamma.
\end{split}\end{equation}
Moreover, the reduced Gauß constraint~\eqref{eq:yangMillsHiggs:GWS:GaussRed} is equivalent to \( \Pi_0 = 0 \) and
\begin{equation}
	0 = \int_M \left(\frac{1}{2e} \dif D_\gamma + \Im \bigl(D_- \wedge W_-\bigr)\right)
		= \int_M \Im \bigl(D_- \wedge W_-\bigr).
\end{equation}
The right hand side is the momentum map for the lifted \( K \)-action on \( \CotBundle \bigl(\DiffFormSpace^1(M, \C)\bigr) \).
Thus, the second symmetry reduction yields the following.
\begin{prop}
	\label{prop:yangMillsHiggs:GWS:QredPhaseSpace}
	The diffeomorphism~\eqref{eq:yangMillsHiggs:GWS:diffeoQRed} induces an isomorphism 
	\begin{equation}
		\CotBundle \SectionSpaceAbb{Q}_{\textrm{red}} \sslash \hat{\GauGroup} 
		=
		\Bigl(\CotBundle \bigl(\DiffFormSpace^1(M, \C)\bigr) \sslash K \Bigr) \times \dif \DiffFormSpace^1 (M, \LieA{k}) \times \dif^* \DiffFormSpace^2 (M, \LieA{k}^*)
	\end{equation}
	of symplectic stratified spaces.
\end{prop}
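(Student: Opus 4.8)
The plan is to recognize the coordinate description of \( \CotBundle \SectionSpaceAbb{Q}_{\textrm{red}} \) displayed immediately before the statement as an instance of the normal form of \cref{prop:cotangentBundle:simpleNormalForm} for the \( \hat{\GauGroup} \)-action, and then to read off the reduced space from the accompanying momentum map normal form. Concretely, the diffeomorphism~\eqref{eq:yangMillsHiggs:GWS:diffeoQRed} exhibits \( \SectionSpaceAbb{Q}_{\textrm{red}} \) as the product of the associated bundle \( \hat{\GauGroup} \times_K \DiffFormSpace^1(M, \C) \) with the spectator factor \( \dif \DiffFormSpace^1(M, \LieA{k}) \), on which \( \hat{\GauGroup} \) acts trivially because the curvature of an abelian connection is gauge invariant. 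This is exactly the situation of \cref{prop:cotangentBundle:simpleNormalForm} with \( G = \hat{\GauGroup} \), stabilizer \( G_q = K \) (the constant gauge transformations), the typical fiber \( \DiffFormSpace^1(M, \C) \) playing the role of the slice \( S \), and \( \LieA{m} = \sFunctionSpace(M, \LieA{k})_0 \) the complement of \( \LieA{k} \) in \( \sFunctionSpace(M, \LieA{k}) \) with dual \( \LieA{m}^* = \DiffFormSpace^3(M, \LieA{k}^*)_0 \). As already noted after~\eqref{eq:yangMillsHiggs:GWS:unitaryGauge:configSpace}, the construction of the normal form does not use the separating properties of a genuine slice, so it applies verbatim here.

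Next I would invoke the momentum map normal form~\eqref{eq:cotangentBundle:normalFormMomentumMap}, which shows that the \( \hat{\GauGroup} \)-momentum map vanishes precisely when its \( \LieA{m}^* \)-component \( \nu \) and the \( K \)-momentum map \( J_K \) on \( \CotBundle(\DiffFormSpace^1(M, \C)) \) both vanish. This matches the explicit splitting already carried out: with respect to \( \sFunctionSpace(M, \LieA{k})^* = \LieA{m}^* \oplus \LieA{k}^* \), the reduced Gauß constraint~\eqref{eq:yangMillsHiggs:GWS:GaussRed} decomposes into \( \Pi_0 = 0 \) for the mean-zero component (that is, \( \nu = 0 \)) and \( \int_M \Im(D_- \wedge W_-) = 0 \) for the mean component, the contribution of \( \dif D_\gamma \) dropping out by Stokes' theorem since \( M = S^3 \) is closed. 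The latter equation is exactly the momentum map for the lifted \( K \isomorph \UGroup(1) \)-action on \( \CotBundle(\DiffFormSpace^1(M, \C)) \).

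The quotient then follows the pattern of \cref{prop:cotangentBundle:orbitTypeMomentumLevelSetIsManifold}: since \( \hat{\GauGroup} \) acts trivially on the spectator cotangent bundle \( \CotBundle(\dif \DiffFormSpace^1(M, \LieA{k})) = \dif \DiffFormSpace^1(M, \LieA{k}) \times \dif^* \DiffFormSpace^2(M, \LieA{k}^*) \), both the momentum map and, by \cref{prop:cotangentBundle:normalFormSymplecticForm} at zero momentum, the symplectic form split off this factor as a direct summand. Reduction therefore commutes with the product, and the zero level set modulo \( \hat{\GauGroup} \) becomes \( \bigl(J_K^{-1}(0)\slash K\bigr) \times \CotBundle(\dif \DiffFormSpace^1(M, \LieA{k})) \); the first factor is by definition \( \CotBundle(\DiffFormSpace^1(M, \C)) \sslash K \), giving the claimed isomorphism. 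That it is an isomorphism of symplectic stratified spaces follows from \cref{prop:cotangentBundle:normalFormSymplecticForm}, which presents \( \Phi^* \Omega \) at zero momentum as the direct sum of the canonical forms on \( \LieA{m} \times \LieA{m}^* \), on \( \CotBundle S \), and on the spectator factor; the symplectically-closed hypothesis needed to descend these forms is supplied by the argument of \cref{prop:yangMillsHiggs:reduction:orbitsSymplecticallyClosed} restricted to the subgroup \( \hat{\GauGroup} \).

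The \emph{main obstacle} is justifying that the cotangent bundle reduction machinery of \cref{sec:cotangentBundleReduction} applies to the \( \hat{\GauGroup} \)-action presented through this associated-bundle model, where \( \DiffFormSpace^1(M, \C) \) is only the typical fiber and not an honest slice of an action on a manifold. The resolution is that \cref{prop:cotangentBundle:simpleNormalForm} and the identity~\eqref{eq:cotangentBundle:normalFormMomentumMap} require only the equivariant diffeomorphism and the dual pairings, not the tubular structure of a slice, and that the residual \( K \isomorph \UGroup(1) \)-action is a \emph{finite-dimensional} compact action for which the linear reduction results (\cref{prop:singularReduction:linear}) apply without further hypotheses. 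Everything else is the routine bookkeeping of matching the displayed coordinate formulas for \( (\Pi_0, v, D_v, F_{A_\gamma}, D_{F_{A_\gamma}}) \) to the abstract normal form.
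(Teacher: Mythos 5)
Your proposal is correct and takes essentially the same route as the paper: the paper proves this proposition precisely by the associated-bundle normal-form strategy of \cref{prop:cotangentBundle:simpleNormalForm} (explicitly remarking, as you do, that the slice properties are not actually used), splitting the reduced Gauß constraint~\eqref{eq:yangMillsHiggs:GWS:GaussRed} into the mean-zero condition \( \Pi_0 = 0 \) (your \( \nu = 0 \)) and the \( K \)-momentum constraint \( \int_M \Im\bigl(D_- \wedge W_-\bigr) = 0 \) obtained by Stokes on the closed manifold \( M = S^3 \), with the gauge-invariant curvature variables \( \bigl(F_{A_\gamma}, D_{F_{A_\gamma}}\bigr) \) splitting off as spectators. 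Your added bookkeeping — invoking \cref{prop:cotangentBundle:normalFormSymplecticForm} for the direct-sum form of the symplectic structure at zero momentum and \cref{prop:singularReduction:linear} for the residual compact \( K \)-reduction — only makes explicit what the paper leaves implicit in asserting the isomorphism is one of symplectic stratified spaces.
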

Hence, in combination with the first reduction, see \cref{prop:yangMillsHiggs:GWS:QPhaseSpace}, we find that the singular structure of the reduced phase space \( \check{\SectionSpaceAbb{P}} \) is completely determined by the singular cotangent bundle reduction of \( \CotBundle \bigl(\DiffFormSpace^1(M, \C)\bigr) \) with respect to the action of the \emph{finite-dimensional} (compact) Lie group \( K \).
Note that \( \CotBundle \bigl(\DiffFormSpace^1(M, \C)\bigr) \) is pointwise the phase space of three (coupled) harmonic oscillators and the \( K \) action corresponds to the diagonal \( \UGroup(1) \)-symmetry.
This shows that the singularity structure of the reduced phase space is essentially determined by a finite-dimensional Lie group action.
The reduced phase space \( \CotBundle \bigl(\DiffFormSpace^1(M, \C)\bigr) \sslash K \) may thus be studied using classical geometric invariant theory for the finite-dimensional reference system.
This will be done elsewhere.

\subsection{Dynamics on the singular stratum}

It is a challenge for further research to study the dynamics of this model on its full stratified phase space.
As a first step, we analyze the dynamics on the singular stratum.
First, a word of warning concerning conventions is in order.
Since we followed the usual physics convention and introduced the coupling constants when writing the connection \( A \) in terms of \( W \) and \( B \) fields, see~\eqref{eq:yangMillsHiggs:GWS:connectionInComponents}, we need to use a different scalar product on \( \LieA{g} \) in the Hamiltonian to counterbalance the coupling constants\footnote{The reader might find it instructive to compare the situation at hand to that of classical mechanics. There, the kinetic part of the Lagrangian is usually written in the form \( L = \frac{m}{2} v^2 \). But, the mass can be absorbed in the metric \( g \) on the configuration space by setting \( \norm{v}^2_g \defeq m v^2 \). Then, the Hamiltonian is given by \( H = \frac{1}{2} \norm{p}^2_{g^{-1}} \), where the norm is taken with respect to the inverse (or dual) metric \( g^{-1} \). In our field theoretic setting, the coupling constants play the role of inverse masses.}.
Let \( \kappa \) be the scalar product on \( \LieA{g} \) in which \( \set{t_a, i} \) is orthogonal with norm \( \kappa(t_a, t_a) = \frac{1}{g^2} \) and \( \kappa(\I, \I) = \frac{1}{g'^2} \) (and hence \( \kappa(t_+, t_\pm) = \frac{g'^2 \pm g^2}{g^2 g'^2} = \kappa(t_-, t_\mp) \)).
Moreover, let \( \kappa^{-1} \) be the inverse of \( \kappa \), \ie, we have \( \kappa^{-1}(t_a, t_a) = g^2 \) and \( \kappa^{-1}(\I, \I) = g'^2 \).
In terms of these scalar products, the Hamiltonian \( \SectionMapAbb{H} \) defined by~\eqref{eq:yangMillsHiggs:hamiltonian} reads as follows:
\begin{equation}
	\label{eq:yangMillsHiggs:GWS:hamiltonian}
	\SectionSpaceAbb{H}(A, D, \varphi, \Pi)
		= \int_M \frac{\ell}{2} \left(\norm{D}^2_{\kappa^{-1}} + \norm{F_A}^2_{\kappa}
		+ \norm{\Pi}^2_{\C} + \norm{\dif_A \varphi}^2_{\C} + 2 \, V(\varphi) \right) \vol_g.
\end{equation}

\begin{prop}
	On the singular stratum \( \seam{(\CotBundle \SectionSpaceAbb{Q})}^{(K)}_{(K)} \), the Hamiltonian~\eqref{eq:yangMillsHiggs:GWS:hamiltonian} has the form
	\begin{equation}\label{eq:yangMillsHiggs:GWS:hamiltonianSingular}\begin{split}
		\SectionMapAbb{H}(A_\gamma, Z, &\eta, D_\gamma, D_Z, \Pi_2)
			\\
			&= \int_M \frac{\ell}{2} \Big( 
				\norm{D_\gamma}^2 + \norm{D_Z}^2
				+ \norm{\dif A_\gamma}^2 + \norm{\dif Z}^2
				+ \norm{\Pi_2}_{\C}^2
				\\
				&\qquad\quad
				+ \frac{\nu^2}{2} \norm{\dif \eta}^2 + \frac{\eta^2 \nu^2 (g^2 + g'^2)}{8} \norm{Z}^2
				+ \lambda \nu^2 (\eta^2 - 1)^2 \Big) \vol_g.
	\end{split}\end{equation} 
\end{prop}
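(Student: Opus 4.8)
The plan is to evaluate each of the five terms in the Hamiltonian~\eqref{eq:yangMillsHiggs:GWS:hamiltonian} on the singular stratum, using its characterization from \cref{prop:yangMillsHiggs:GWS:orbitTypesCotBundle}: a point $(A, \varphi, D, \Pi)$ lies in $\seam{(\CotBundle \SectionSpaceAbb{Q})}^{(K)}_{(K)}$ precisely when $W_\pm = 0$, $D_\pm = 0$ and $\Pi_1 = 0$ on $\hat{P}$. Since $\SectionMapAbb{H}$ is $\GauGroup$-invariant, I would evaluate it on the unitary-gauge representative, so that $\varphi = \frac{\eta}{\sqrt{2}}(0, \nu)$ and $A = \hat{A} + \tau$ with the components given in~\eqref{eq:yangMillsHiggs:GWS:hatAandTauInComponents}; under the singular conditions these reduce to $\hat{A} = (e A_\gamma + \tfrac{g \cos \weinbergAngle - g' \sin \weinbergAngle}{2} Z)\, t_+$ and $\tau = \tfrac{g g'}{2e} Z\, t_-$. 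The potential term is immediate, since $2 V(\varphi) = \lambda \nu^2 (\eta^2 - 1)^2$ yields the last summand.

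For the curvature term, the key observation is that on the singular stratum $A$ takes values in the abelian subalgebra spanned by $t_+$ and $t_-$. Indeed, the commutation relations~\eqref{eq:yangMillsHiggs:GWS:commutationTBarTTPM} give $\commutator{t_+}{t_-} = 0$, whence $\commutator{A}{A} = 0$ and $F_A = \dif \hat{A} + \dif \tau$, a combination of $\dif A_\gamma$ and $\dif Z$. I would then compute $\norm{F_A}^2_\kappa$ using the inner products $\kappa(t_+, t_\pm) = \frac{g'^2 \pm g^2}{g^2 g'^2}$ recorded in the text. The definitions $e = \frac{g g'}{\sqrt{g^2 + g'^2}}$ and $\tan \weinbergAngle = g'/g$ make the coefficient of $\norm{\dif A_\gamma}^2$ equal to $e^2 \kappa(t_+, t_+) = 1$, collapse the $\norm{\dif Z}^2$ coefficient to $1$ after using $(g^2 + g'^2)^2 - (g^2 - g'^2)^2 = 4 g^2 g'^2$, and force the mixed term proportional to $\langle \dif A_\gamma, \dif Z \rangle$ to cancel; this produces $\norm{\dif A_\gamma}^2 + \norm{\dif Z}^2$. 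The momentum term $\norm{D}^2_{\kappa^{-1}}$ is handled by the dual version of the same computation: with $D_\pm = 0$ the decomposition~\eqref{eq:yangMillsHiggs:GWS:Ddecomposition} leaves only the $t_\pm$-components built from $D_\gamma$ and $D_Z$, and the normalization chosen there together with $\kappa^{-1}(t_+, t_\pm) = g^2 \pm g'^2$ yields $\norm{D_\gamma}^2 + \norm{D_Z}^2$, the cross term again vanishing.

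For the matter terms, $\Pi_1 = 0$ immediately reduces $\norm{\Pi}^2_\C$ to $\norm{\Pi_2}^2_\C$. The essential calculation is $\norm{\dif_A \varphi}^2_\C$. Using~\eqref{eq:yangMillsHiggs:GWS:representationAlgebra} one computes $\rho_{t_+}(0, \nu) = 0$ and $\rho_{t_-}(0, \nu) = (0, -\I \nu)$, so the unbroken generator $t_+$ annihilates $\varphi$ and only $\tau$ contributes to the gauge part of the covariant derivative. Writing $\dif_A \varphi = \dif \varphi + \rho_A \varphi$ then gives a $\C$-valued one-form with vanishing first component, whose second component has real part $\tfrac{\nu}{\sqrt{2}} \dif \eta$ and imaginary part $-\tfrac{\eta \nu \sqrt{g^2 + g'^2}}{2\sqrt{2}} Z$ (after inserting $\frac{g g'}{2e} = \frac{\sqrt{g^2 + g'^2}}{2}$). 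Since $\dif \eta$ and $Z$ are real-valued one-forms, the Hermitian norm splits without a cross term as $\frac{\nu^2}{2}\norm{\dif \eta}^2 + \frac{\eta^2 \nu^2 (g^2 + g'^2)}{8}\norm{Z}^2$, which is exactly the $Z$-mass term of the electroweak theory. Assembling the five contributions gives~\eqref{eq:yangMillsHiggs:GWS:hamiltonianSingular}.

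The main obstacle is purely computational: the coupling-constant bookkeeping in $\norm{F_A}^2_\kappa$ and $\norm{D}^2_{\kappa^{-1}}$. The conceptual content—that all bracket terms in the curvature vanish and that $\varphi$ is annihilated by the unbroken generator $t_+$—is straightforward; the real work lies in verifying that the Weinberg-angle rotation~\eqref{eq:yangMillsHiggs:GWS:WpmZAgamma} diagonalizes these quadratic forms, so that the mixed $A_\gamma$--$Z$ terms cancel and the diagonal coefficients normalize to one.
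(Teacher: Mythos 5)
Your proposal is correct and takes essentially the same route as the paper's proof: restrict to the singular stratum (\( W_\pm = 0 \), \( D_\pm = 0 \), \( \Pi_1 = 0 \)), expand \( \norm{D}^2_{\kappa^{-1}} \), \( \norm{F_A}^2_{\kappa} \) and \( \norm{\dif_A \varphi}^2_{\C} \) in the \( t_\pm \)-basis, and check that the Weinberg-angle bookkeeping cancels the mixed \( A_\gamma \)--\( Z \) terms and normalizes the diagonal coefficients to one, exactly as in the paper. Your only deviation is a harmless shortcut in the curvature term: you observe that \( \commutator{t_+}{t_-} = 0 \) makes \( \restr{A}{\hat{P}} \) valued in an abelian subalgebra, so \( \restr{(F_A)}{\hat{P}} = \dif \hat{A} + \dif \tau \) directly, whereas the paper first writes the general formula \( F_{\hat{A}} + \dif_{\hat{A}} \tau + \frac{1}{2} \wedgeLie{\tau}{\tau} \) and then notes that all bracket terms carry a factor \( W_\pm \) and vanish on the stratum.
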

By examining the Hamiltonian~\eqref{eq:yangMillsHiggs:GWS:hamiltonianSingular}, we can read off the particle content over the singular stratum.
We obtain a non-interacting system consisting of electrodynamics described by the photon \( A_\gamma \), the theory of a massive vector boson described by the \( Z \)-boson with mass \( m_Z^2 = \frac{\eta^2 \nu^2 (g^2 + g'^2)}{4} \) and the theory of a self-interacting real scalar field described by the Higgs boson \( \eta \) with mass \( m_\eta^2 = - 4 \lambda \nu^2 \).
\begin{proof}
	Over the singular stratum, we have \( D_\pm = 0 \) and thus
	\begin{equation}\begin{split}
		\norm{D}^2_{\kappa^{-1}} 
			&= \norm*{D_Z - \frac{g^2-g'^2}{2gg'} D_\gamma}^2 + \frac{(g^2 + g'^2)^2}{4 g^2 g'^2} \norm{D_\gamma}^2
			\\
			&\qquad+ \frac{g^2 - g'^2}{g g'} \scalarProd*{D_Z - \frac{g^2-g'^2}{2gg'} D_\gamma}{D_\gamma}
			\\
			&= \norm{D_Z}^2 + \norm{D_\gamma}^2.
	\end{split}\end{equation}
	The curvature of \( A \) reads in terms of the fields after symmetry breaking as follows:
	\begin{equation}\begin{split}
		\restr{(F_A)}{\hat{P}}
			&= F_{\hat{A}} + \dif_{\hat{A}} \tau + \frac{1}{2} \wedgeLie{\tau}{\tau}
			\\
			&= F_{\hat{A}} 
				+ \dif \tau 
				+ \I g^2 \left( \sin \weinbergAngle  A_\gamma + \cos \weinbergAngle Z \right) \wedge (W_+ t - W_- \bar{t})
				+ \I g^2 \, W_+ \wedge W_- \, t_3.
	\end{split}\end{equation}
	Hence, on the singular stratum we simply have
	\begin{equation}
		\restr{(F_A)}{\hat{P}} = \left(e \dif A_\gamma + \frac{g \cos \weinbergAngle - g' \sin \weinbergAngle}{2} \dif Z \right) t_+ + \frac{g g'}{2 e} \dif Z \, t_-.
	\end{equation}
	The norm of \( F_A \) with respect to \( \kappa \), is thus, on the singular stratum, given by
	\begin{equation}
		\norm{F_A}^2_\kappa = \norm{\dif A_\gamma}^2 + \norm{\dif Z}^2.
	\end{equation}
	According to~\eqref{eq:yangMillsHiggs:GWS:representationAlgebra}, we get
	\begin{equation}
		\dif_A \varphi = \left( \dif + g W^a t_a + \frac{\I g'}{2} B \one \right) \varphi.
	\end{equation}
	Using the representation~\eqref{eq:yangMillsHiggs:GWS:decompositionHiggsField} and the definition~\eqref{eq:yangMillsHiggs:GWS:WpmZAgamma} of \( W_\pm \) and \( Z \), we find in terms of the fields after symmetry breaking
	\begin{equation}\begin{split}
		\restr{(\dif_A \varphi)}{\hat{P}} 
			&= \frac{\dif \eta}{\sqrt{2}} \Vector{0 \\ \nu} + \frac{g \eta}{\sqrt{2}} W^a t_a \Vector{0 \\ \nu} + \frac{\I g' \eta}{2 \sqrt{2}} B \Vector{0 \\ \nu}
			\\
			&= \frac{\I g \eta \nu}{2} W_+ \Vector{1 \\ 0} + \left(\frac{\nu}{\sqrt{2}} \dif \eta - \frac{\I \eta \nu \sqrt{g^2 + g'^2}}{2 \sqrt{2}} Z\right)  \Vector{0 \\ 1}.
	\end{split}\end{equation}
	Thus, on the singular stratum,
	\begin{equation}
		\norm{\dif_A \varphi}^2_{\C} = \frac{\nu^2}{2} \norm{\dif \eta}^2 + \frac{\eta^2 \nu^2 (g^2 + g'^2)}{8} \norm{Z}^2.
	\end{equation}
	Plugging these identities into~\eqref{eq:yangMillsHiggs:GWS:hamiltonian} yields~\eqref{eq:yangMillsHiggs:GWS:hamiltonianSingular}.
\end{proof}

\section{Outlook}
There is a number of fundamental issues, which may be studied in the future:
\begin{enumerate}
	\item
		Extend the discussion of the singular reduction for the Glashow--Weinberg--Salam model to Cauchy surfaces \( M \) of arbitrary topological type. 
	\item
		Analyze the structure of the singular reduced phase space for the Glashow--Weinberg--Salam model in terms of geometric invariant theory.
	\item
		Study the dynamics of the full Hamiltonian system on the reduced phase space and clarify, in particular, the meaning of the seams.	
	\item
		Pass to the quantum theory by developing a theory of geometric quantization for infinite-dimensional stratified symplectic spaces. 
\end{enumerate}
As mentioned in the beginning, there is a general symplectic reduction theory in infinite dimensions going beyond the cotangent bundle case.
This will be presented elsewhere \parencite{DiezThesis}.
 

\appendix
\section{Appendix: Calculus on infinite-dimensional manifolds}
\label{Frechet}


Our references for terminology and notation in the framework of infinite-dimensional differential geometry are~\parencite{Hamilton1982} for the Fréchet case and~\parencite{Neeb2006} for the locally convex setting.

By a manifold we understand a possibly infinite-dimensional smooth manifold \( M \) without boundary modeled on locally convex vector spaces.
More precisely, different connected components of \( M \) are allowed to be modeled on non-isomorphic vector spaces, see \eg \parencites[Section~II.1]{Lang1999}[Definition~3.1.1]{MarsdenRatiuEtAl2002}.
A subset \( S \) of \( M \) is called a submanifold if at each point \( s \in S \) there exists a chart \( \kappa: M \supseteq U \to X \) and a closed subspace \( Y \subseteq X \) such that \( \kappa(U \intersect S) = \kappa(U) \intersect Y \).
The submanifold \( S \) is said to be split if, in addition, each subspace \( Y \) is topologically complemented in \( X \).
A Lie group is a group endowed with a smooth manifold structure such that multiplication and inversion are smooth maps.
As a vector space, the Lie algebra \( \LieA{g} \) of a Lie group \( G \) coincides with the tangent space \( \TBundle_e G \) at the identity element, and the Lie bracket is obtained by identifying \( \LieA{g} \) with the space of left invariant vector fields.
This turns \( \LieA{g} \) into a locally convex Lie algebra.


\subsection{Cotangent bundles in infinite dimensions}
\label{sec:cotangentBundle:definition}


The tangent bundle \( \TBundle Q \) of a manifold \( Q \) is itself a smooth manifold again in such a way that the the projection \( \TBundle Q \to Q \) is a smooth locally trivial bundle.
The dual bundle \( \TBundle' Q \defeq \bigDisjUnion_{q \in Q} (\TBundle_q Q)' \) is more problematic, \cf \parencites[Remark~I.3.9]{Neeb2006}{Urbanski1974}.
In order to endow \( \TBundle' Q \) with a smooth fiber bundle structure we need a locally convex topology on the dual \( X' \) of the model space \( X \) of \( Q \) such that for every local diffeomorphism \( \phi: X \to X \) the map
\begin{equation}
	\tau_\phi: X \times X' \to X', \qquad (x, \alpha) \mapsto \alpha \circ \tangent_x \phi
\end{equation}
is smooth since otherwise the notion of smoothness of \( \TBundle' Q \) is chart-dependent.
It is straightforward to construct a map \( \phi \) such that \( \tau_\phi \) involves the natural pairing \( X \times X' \to \R \).
However, this pairing is discontinuous in \( 0 \) for any vector topology on \( X' \) except when \( X \) is a Banach space, see \parencite[Satz~1]{Maissen1963}.
Thus, in summary, we cannot endow \( \TBundle' Q \) with a natural smooth bundle structure for non-Banach manifolds \( Q \).

Hence, the most important class of examples of symplectic manifolds is a priori not available in infinite dimensions.
We now present a substitute which will play the role of the cotangent bundle.
\begin{defn}
	A \emphDef{dual pairing between two vector bundles} \( E \to Q \) and \( F \to Q \) is a smooth map \( h: E \times_Q F \to \R \) which is fiberwise non-degenerate, \ie, \( h_q: E_q \times F_q \to \R \) is a non-degenerate bilinear form for every \( q \in Q \).
	If \( E \) is a vector bundle in duality to the tangent bundle \( F = \TBundle Q \), then we will write \( \CotBundle Q \equiv E \) and call it a \emphDef{cotangent bundle} of $Q$.
	Correspondingly, we denote the bundle projection \( \CotBundle Q \to Q \) by \( \CotBundleProj \).
\end{defn}	
We will often denote the duality by brackets and write the dual pair as \( \dualPair{E}{F} \).
If \( Q \) carries a Riemannian metric \( g \), then the pairing \( g: \TBundle Q \times_Q \TBundle Q \to \R \) identifies the cotangent bundle of \( Q \) with \( \TBundle Q \).
In the following, we assume that a dual pair \( \dualPair{\CotBundle Q}{\TBundle Q} \) is fixed and we will simply refer to \( \CotBundle Q \) as \emph{the} cotangent bundle of $Q$.
The reader should however keep in mind that there is no smooth \emph{canonical} cotangent bundle and always a choice of a dual pair is required.

Recall that, for a dual pair \( \dualPair{X_2}{X_1} \) of vector spaces \parencite{Koethe1983}, one has a natural inclusion of \( X_2 \) into the topological dual \( X'_1 \) of \( X_1 \).
Similarly, for a dual pair of vector bundles, we obtain a natural vector bundle injection of \( E \) into the dual bundle \( F' \), whose fiber over \( m \) is the topological dual \( F'_m \) of \( F_m \).
In particular, every cotangent bundle \( \CotBundle Q \) comes with a natural injection into the topological cotangent bundle \( \TBundle' Q \).

\begin{prop}
\label{Cot-Bundle}
	For a cotangent bundle \( \CotBundle Q \), the formula
	\begin{equation}
		\theta_p (X) = \dualPair{p}{\tangent_p \CotBundleProj (X)}, \qquad X \in \TBundle_p (\CotBundle Q)
	\end{equation}
	defines a smooth \( 1 \)-form \( \theta \) on \( \CotBundle Q \).
	Furthermore, \( \omega = \dif \theta \) is a symplectic form.
	We say that \( \omega \) is the \emphDef{canonical symplectic form} on \( \CotBundle Q \).
\end{prop}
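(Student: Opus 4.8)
The plan is to check the three defining properties of a weakly symplectic form separately, the only substantial point being non-degeneracy. First I would pin down the meaning of ``smooth \( 1 \)-form'': since, as explained in \cref{sec:cotangentBundle:definition}, the bundle \( \TBundle'(\CotBundle Q) \) carries no natural smooth structure, a \( 1 \)-form must be read as a smooth map \( \theta\colon \TBundle(\CotBundle Q) \to \R \) that is linear on each fiber. With this reading smoothness of \( \theta \) is immediate from writing it as the composition
\begin{equation*}
	\TBundle(\CotBundle Q) \ni X \;\longmapsto\; \bigl(p,\, \tangent_p \CotBundleProj(X)\bigr) \in \CotBundle Q \times_Q \TBundle Q \;\longmapsto\; \dualPair{p}{\tangent_p \CotBundleProj(X)},
\end{equation*}
where \( p \) denotes the base point of \( X \) in \( \CotBundle Q \): the tangent map \( \tangent \CotBundleProj \) is smooth because \( \CotBundleProj \) is, the pair \( (p, \tangent_p\CotBundleProj(X)) \) lands in the fibered product because both entries sit over \( \CotBundleProj(p) \in Q \), and the pairing \( \dualPairDot \) is smooth by the definition of a cotangent bundle. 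Fiberwise linearity in \( X \) is built into the formula, and closedness of \( \omega = \dif\theta \) is then free from \( \dif \circ \dif = 0 \).

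The real content is weak non-degeneracy of \( \omega \), that is, injectivity of \( \omega^\flat \), which I would reduce to a local computation. Fix \( q_0 \in Q \), a manifold chart identifying a neighborhood with an open subset of the model space \( X \), and a vector bundle trivialization \( \CotBundle Q \isomorph \hat U \times V \) over it. In these coordinates a point of \( \CotBundle Q \) is a pair \( (x,v) \), the projection is \( (x,v) \mapsto x \), and the pairing becomes a smooth family \( h_x\colon V \times X \to \R \) of bilinear forms, each non-degenerate in \emph{both} arguments. Writing tangent vectors to \( \CotBundle Q \) as pairs \( (w,\nu) \in X \times V \), the tautological form reads \( \theta_{(x,v)}(w,\nu) = h_x(v,w) \), and differentiating gives
\begin{equation*}
	\omega_{(x,v)}\bigl((w_1,\nu_1),(w_2,\nu_2)\bigr) = h_x(\nu_1, w_2) - h_x(\nu_2, w_1) + B_x(w_1,w_2),
\end{equation*}
where \( B_x \) is the antisymmetric bilinear form on \( X \) arising from the \( x \)-dependence of \( h \).

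To finish I would run the kernel argument, now purely algebraic. Suppose \( (w_1,\nu_1) \) lies in the kernel of \( \omega_{(x,v)} \). Testing against \( (w_2,\nu_2) \) with \( w_2 = 0 \) annihilates both the \( h_x(\nu_1,\cdot) \) term and \( B_x \), leaving \( h_x(\nu_2, w_1) = 0 \) for all \( \nu_2 \in V \); non-degeneracy of \( h_x \) in the \( X \)-argument then forces \( w_1 = 0 \). Feeding \( w_1 = 0 \) back in leaves \( h_x(\nu_1, w_2) = 0 \) for all \( w_2 \in X \), so non-degeneracy in the \( V \)-argument gives \( \nu_1 = 0 \). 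Hence \( \omega^\flat \) is injective. The step I expect to be the main obstacle is exactly this non-degeneracy, and its resolution highlights the two features specific to infinite dimensions: first, one can only conclude that \( \omega \) is \emph{weakly} symplectic, since \( \omega^\flat \) need not surject onto \( \TBundle'(\CotBundle Q) \) — precisely the failure discussed in the Dynamics subsection; second, the argument genuinely uses that the defining pairing separates points in \emph{both} slots, which is what ``dual pairing of vector bundles'' provides, while the base-dependent correction \( B_x \) drops out of the two test directions employed. No inverse function theorem or completeness enters, so the proof is valid verbatim for any locally convex model space.
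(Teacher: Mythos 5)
Your proposal is correct and follows essentially the same route as the paper's proof: a local trivialization in which \( \theta_{(x,v)}(w,\nu) = h_x(v,w) \), the computation of \( \dif\theta \) yielding the pairing terms plus an antisymmetric base-derivative term, and weak non-degeneracy from the fiberwise non-degeneracy of the dual pairing in both slots. Your explicit two-step kernel test (first \( w_2 = 0 \), then \( w_1 = 0 \)) merely spells out the paper's one-line conclusion, and your chart-free smoothness argument for \( \theta \) via the fibered product \( \CotBundle Q \times_Q \TBundle Q \) is a minor cosmetic variant of the paper's chart computation.
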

\begin{proof}
	Let \( U \subseteq Q \) be an open subset over which \( \CotBundleProj: \CotBundle Q \to Q \) as well as \( \tau: \TBundle Q \to Q \) trivialize.
	Denote the fiber model space of \( \TBundle Q \) and of \( \CotBundle Q \) by \( X \) and \( X^* \), respectively.
	Using a chart on \( Q \) we identify \( U \) as a subspace of \( X \).
	Then, the local chart representation of the pairing is a smooth map \( U \times X^* \times X \to \R \).
	In this chart, the canonical form \( \theta: (U \times X^*) \times (X \times X^*) \to \R \) is given by
	\begin{equation}
		\theta_{q, \alpha}(u, \beta) = \dualPair{\alpha}{u}_q
	\end{equation}
	and, hence, \( \theta \) is a smooth \( 1 \)-form on \( \CotBundle Q \).
	For the symplectic structure 
	\begin{equation}
		\omega:  (U \times X^*) \times {(X \times X^*)}^2 \to \R 
	\end{equation}
	we find\footnote{This local expression for \( \omega \) is well-known for the case that \( \TBundle Q \) is put in duality with itself using a Riemannian metric on \( Q \), see for example \parencite[p.~591]{Marsden1972}.} 
	\begin{equation}
		\omega_{q,p}(u_1, \beta_1, u_2, \beta_2)
			= \difp_q \left(\dualPair{p}{u_2}_q\right) (u_1) - 
			\difp_q \left(\dualPair{p}{u_1}_q\right) (u_2) + \dualPair{\beta_1}{u_2}_q - 
			\dualPair{\beta_2}{u_1}_q.
	\end{equation}
	In finite dimensions, this corresponds to the classical formula \( \omega = g\indices{^i_j} \dif p_i \wedge \dif q^j + \difp_k \, (g\indices{^i_j} \, p_i) \dif q^k \wedge \dif q^j \), where \( g\indices{^i_j}(q) \) denote the components of the dual pair \( \dualPairDot_q \).
	In either case, we conclude that \( \omega \) is non-degenerate, because the dual pair \( \dualPairDot_q \) possesses this property for every \( q \in Q \).
\end{proof}

\subsection{Momentum maps in infinite dimensions}
Let \( G \) be a Lie group acting smoothly on \( Q \).
By linearization, we get a smooth action of \( G \) on the tangent bundle, which we write using the lower dot notation as \( g \ldot Y \in \TBundle_{g \cdot q} Q \) for \( g \in G \) and \( Y \in \TBundle_q Q \).
The action on \( \TBundle Q \) induces a \( G \)-action on \( \CotBundle Q \) by requiring that the pairing be left invariant, that is,
\begin{equation}
 	\dualPair{g \cdot p}{Y}_q 
 		= \dualPair{p}{g^{-1} \ldot Y}_{g^{-1} \cdot q}, 
 		\qquad p \in \CotBundle_{g^{-1} \cdot q} Q, Y \in \TBundle_q Q. 
\end{equation} 
With respect to this action, the cotangent bundle projection \( \CotBundleProj \) is \( G \)-equivariant and so the action on \( \CotBundle Q \) is a lift of the action on \( Q \). 

Let $\LieA{g}$ be the Lie algebra of $G$.
Similarly to the above strategy for the cotangent bundle, the choice of a Fréchet space \( \LieA{g}^* \) and of a separately continuous non-degenerate bilinear form \( \kappa: \LieA{g}^* \times \LieA{g} \to \R \) yields a dual pair, with $\LieA{g}^*$ serving as the dual space of $\LieA{g}$.
The following is the infinite-dimensional version of the well-known result in finite dimensions that every lifted action on the cotangent bundle has a momentum map, see \eg \parencite[Example~4.5.4]{OrtegaRatiu2003}.
\begin{prop}
	\label{prop:contangentBundle:existenceMomentumMap}
	Let \( G \) be a Lie group that acts smoothly on \( Q \).
	Then, the lifted action of \( G \) on the cotangent bundle \( \CotBundle Q \) preserves the canonical symplectic form \( \omega \).
	Let, moreover, \( \kappa(\LieA{g}^*, \LieA{g}) \) be a dual pair.
	If, for every \( p \in \CotBundle Q \), the functional \( \xi \mapsto \theta_p (\xi \ldot p) \) on \( \LieA{g} \) is represented by an element \( J(p) \in \LieA{g}^* \) with respect to \( \kappa \), then the resulting map \( J: \CotBundle Q \to \LieA{g}^* \) is a smooth \( G \)-equivariant \( \LieA{g}^* \)-valued momentum map for the lifted \( G \)-action on \( \CotBundle Q \).
\end{prop}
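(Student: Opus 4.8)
The plan is to follow the classical finite-dimensional argument (see \parencite[Example~4.5.4]{OrtegaRatiu2003}), whose geometric core is that the cotangent lift preserves the tautological one-form $\theta$; the representability hypothesis then intervenes at exactly the one place where infinite dimensions obstruct the naive construction. Denote by $\Psi_g \colon \CotBundle Q \to \CotBundle Q$ the diffeomorphism induced by $g \in G$. First I would show $\Psi_g^* \theta = \theta$. Writing $q = \CotBundleProj(p)$ and using the equivariance $\CotBundleProj \circ \Psi_g = g \circ \CotBundleProj$, hence $\tangent_{g \cdot p} \CotBundleProj \circ \tangent_p \Psi_g = \tangent_q g \circ \tangent_p \CotBundleProj$, I would compute for $X \in \TBundle_p(\CotBundle Q)$
\begin{equation}
	(\Psi_g^* \theta)_p(X)
		= \dualPair{g \cdot p}{g \ldot \bigl(\tangent_p \CotBundleProj(X)\bigr)}
		= \dualPair{p}{\tangent_p \CotBundleProj(X)}
		= \theta_p(X),
\end{equation}
where the middle equality is the defining left-invariance of the pairing. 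Since $\theta$ is $G$-invariant, so is $\omega = \dif \theta$, which settles the first assertion.

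With the action preserving $\theta$ in hand, I would define $J \colon \CotBundle Q \to \LieA{g}^*$ by declaring $\kappa(J(p), \xi) = \theta_p(\xi \ldot p)$ for all $\xi \in \LieA{g}$; this is well-defined exactly by the hypothesis that the functional $\xi \mapsto \theta_p(\xi \ldot p)$ is $\kappa$-representable. Since $\tangent_p \CotBundleProj(\xi \ldot p) = \xi \ldot q$ by equivariance of $\CotBundleProj$, this $J$ agrees with the defining relation $\kappa(J(p),\xi) = \dualPair{p}{\xi \ldot q}$. To verify the momentum map property, let $X_\xi$ denote the fundamental vector field $p \mapsto \xi \ldot p$ of the lifted action. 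Invariance of $\theta$ means its Lie derivative along $X_\xi$ vanishes, so Cartan's magic formula gives
\begin{equation}
	0 = \dif\bigl(X_\xi \contr \theta\bigr) + X_\xi \contr \dif\theta
		= \dif\bigl(\kappa(J(\cdot),\xi)\bigr) + X_\xi \contr \omega,
\end{equation}
which is the defining identity of a momentum map (for the sign convention used here). Equivariance I would deduce from the standard relation $\xi \ldot (g \cdot p) = \tangent_p \Psi_g\bigl((\AdAction_{g^{-1}} \xi) \ldot p\bigr)$ together with $\Psi_g^* \theta = \theta$:
\begin{equation}
	\kappa(J(g \cdot p), \xi)
		= \theta_{g \cdot p}\bigl(\xi \ldot (g \cdot p)\bigr)
		= \theta_p\bigl((\AdAction_{g^{-1}} \xi) \ldot p\bigr)
		= \kappa\bigl(\CoAdAction_g J(p), \xi\bigr),
\end{equation}
so $J(g \cdot p) = \CoAdAction_g J(p)$.

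The only genuinely infinite-dimensional difficulty, and the step I expect to be the main obstacle, is the smoothness of $J$ as a map into $\LieA{g}^*$. For each fixed $\xi$ the function $\kappa(J(\cdot),\xi) = X_\xi \contr \theta$ is smooth, being the contraction of the smooth $1$-form $\theta$ with the smooth fundamental vector field $X_\xi$; this yields smoothness of $J$ only in the weak sense of being smooth after pairing with each $\xi \in \LieA{g}$. To upgrade this I would work in a local trivialization as in \cref{Cot-Bundle}, where $J$ takes the form $(q, p) \mapsto \Lambda_q^* p$ with $\Lambda_q \colon \LieA{g} \to \TBundle_q Q$, $\xi \mapsto \xi \ldot q$, the (continuous, smoothly $q$-dependent) infinitesimal action and $\Lambda_q^*$ its adjoint. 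Smoothness of $J$ thus reduces to smoothness of $(q,p) \mapsto \Lambda_q^* p$ into $\LieA{g}^*$, and this is precisely where the representability hypothesis is indispensable: it is exactly the assumption that these adjoints land in the chosen dual $\LieA{g}^*$ rather than merely in the full topological dual of $\LieA{g}$ (contrast \cref{ex:contangentBundle:actionWithoutMomentumMap}). I would then close the gap by equipping $\LieA{g}^*$ with the weak topology $\sigma(\LieA{g}^*, \LieA{g})$ and invoking the criterion that a map into this space is smooth once it is smooth after pairing with every $\xi \in \LieA{g}$, which is the weak smoothness already established. All remaining verifications — that $J$ really solves the momentum equation pointwise and the chart-independence of the construction — are routine and follow the finite-dimensional template without change.
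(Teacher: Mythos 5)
Your core argument coincides with the paper's own proof: you establish invariance of the tautological one-form by the same computation (equivariance of \( \CotBundleProj \) plus invariance of the pairing \( \dualPair{\CotBundle Q}{\TBundle Q} \)), conclude invariance of \( \omega = \dif\theta \), define \( J \) through \( \kappa(J(p),\xi) = \theta_p(\xi \ldot p) \) using the representability hypothesis, obtain the momentum map identity from \( \difLie_{\xi^*}\theta = 0 \) via Cartan's formula, and deduce equivariance from the invariance of the pairing --- all exactly as in the paper.

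The one place you go beyond the paper is the smoothness of \( J \), which the paper's proof is in fact silent about, and your closing device there does not deliver what the proposition states. Smoothness of \( \kappa(J(\cdot),\xi) \) for each fixed \( \xi \) tests \( J \) only against the predual \( \LieA{g} \), which is in general a \emph{proper} subspace of the full topological dual of the Fréchet space \( \LieA{g}^* \); equipping \( \LieA{g}^* \) with \( \sigma(\LieA{g}^*, \LieA{g}) \) therefore yields smoothness only into that weak topology, which is strictly weaker than smoothness into \( \LieA{g}^* \) with its given Fréchet topology. Even in the convenient-calculus framework, scalarwise smoothness implies smoothness only when tested against a family of functionals that recognizes bounded sets, and \( \sigma(\LieA{g}^*,\LieA{g}) \)-bounded sets need not be bounded in the Fréchet topology; in the Bastiani setting used here (following Neeb), weak smoothness does not imply smoothness at all without further input. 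So your local-trivialization reduction to \( (q,p) \mapsto \Lambda_q^* p \) is the right picture, but the last step either needs an extra hypothesis (e.g., continuity/smoothness of the adjoint family into \( \LieA{g}^* \)) or an honest acknowledgment that one only obtains smoothness in the weak sense --- a gap the paper itself quietly shares, since its proof never addresses smoothness of \( J \) as a map into \( \LieA{g}^* \).
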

\begin{proof}
	The canonical form \( \theta \) is \( G \)-invariant, because
	\begin{equation}\begin{split}
		\theta_{g \cdot p}(g \ldot X)
			= \dualPair{g \cdot p}{\tangent_{g \cdot p} \CotBundleProj (g \ldot X)}
			= \dualPair{g \cdot p}{g \ldot \tangent_{p} \CotBundleProj (X)}
			= \dualPair{p}{\tangent_{p} \CotBundleProj (X)}.
	\end{split}\end{equation}
	In particular, the action also leaves the symplectic form \( \omega = \dif \theta \) invariant.
	By assumption, the functional \( \xi \mapsto \theta_p (\xi \ldot p) \) on \( \LieA{g} \) is represented by an element \( J(p) \in \LieA{g}^* \) with respect to the given dual pair \( \kappa:\LieA{g}^* \times \LieA{g} \to \R \).
	This is to say,
	\begin{equation} \label{eq:cotbundle:momentumMapDef}
		\kappa(J(p), \xi) 
			= \theta_p (\xi \ldot p)
			= \dualPair{p}{\xi \ldot \CotBundleProj(p)}.
	\end{equation}
	On the other hand, we have \( \difLie_{\xi^*} \theta = 0 \), because \( \theta \) is \( G \)-invariant and so
	\begin{equation}
		0 
			= \difLie_{\xi^*} \theta 
			= \xi^* \contr \dif \theta + \dif (\xi^* \contr \theta)
			= \xi^* \contr \omega + \kappa(\dif J, \xi),
	\end{equation}
	which shows that \( J \) is a \( \LieA{g}^* \)-valued momentum map.
	The \( G \)-equivariance property of \( J \) is a direct consequence of the \( G \)-invariance of the pairing \( \dualPair{\CotBundle Q}{\TBundle Q} \) and~\eqref{eq:cotbundle:momentumMapDef}.
\end{proof}
In contrast to the finite-dimensional case, lifted actions of infinite-dimensional Lie groups do not need to possess a momentum map, see \cref{ex:contangentBundle:actionWithoutMomentumMap}.

We next state two results concerning the infinitesimal momentum map geometry.
These are well known for finite-dimensional symplectic manifolds, see \eg \parencite[Proposition~4.5.12]{OrtegaRatiu2003}.
A proof in the infinite-dimensional setting is given in \parencite{DiezThesis}.
\begin{lemma}[Bifurcation lemma]
	\label{prop:bifurcationLemma}
	Let \( (M, \omega) \) be a symplectic \( G \)-manifold with equivariant momentum map \( J: M \to \LieA{g}^* \).
	Then, the following holds:
	\begin{thmenumerate}
		\item (weak version)
			\label{prop::bifurcationLemma:weak}
			\begin{align}
				\ker \tangent_m J &= (\LieA{g} \cdot m)^\omega, \\
				\LieA{g}_m &= (\img \tangent_m J)^\perp.
			\end{align}
		\item (strong version)
			\label{prop::bifurcationLemma:strong}
			If, moreover, \( \LieA{g} \cdot m = (\LieA{g} \cdot m)^{\omega\omega} \), then in addition
			\begin{equation} \label{eq:bifurcationLemma:strong:ker}
				(\ker \tangent_m J)^\omega = \LieA{g} \cdot m
			\end{equation}
			holds and we have
			\begin{equation}
				\ker \tangent_m J \intersect (\ker \tangent_m J)^\omega = \LieA{g}_\mu \cdot m,
			\end{equation}
			where \( \mu = J(m) \in \LieA{g}^* \).
			\qedhere
	\end{thmenumerate}
\end{lemma}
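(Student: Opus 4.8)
The plan is to deduce both versions from the infinitesimal momentum map relation $\kappa(\tangent_m J(X), \xi) = \omega_m(X, \xi \ldot m)$, valid for all $X \in \TBundle_m M$ and $\xi \in \LieA{g}$ (this is the pointwise form of the equation $\xi^* \contr \omega + \kappa(\dif J, \xi) = 0$ from \cref{prop:contangentBundle:existenceMomentumMap}), together with only two non-degeneracy inputs: non-degeneracy of the pairing $\kappa$ and weak non-degeneracy of $\omega$. For the first identity of the weak version I would argue that $X \in \ker \tangent_m J$ iff $\kappa(\tangent_m J(X), \xi) = 0$ for all $\xi$ (non-degeneracy of $\kappa$), iff $\omega_m(X, \xi \ldot m) = 0$ for all $\xi$, which is exactly $X \in (\LieA{g} \cdot m)^\omega$. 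For the second, reading $(\img \tangent_m J)^\perp \subseteq \LieA{g}$ as the annihilator with respect to $\kappa$, the element $\xi$ lies in it iff $\kappa(\tangent_m J(X), \xi) = 0$ for all $X$, iff $\omega_m(X, \xi \ldot m) = 0$ for all $X$, iff $\xi \ldot m = 0$ by weak non-degeneracy of $\omega$, iff $\xi \in \LieA{g}_m$. Both chains are purely formal and transfer verbatim to the Fréchet setting.

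For part (ii), the first identity follows by taking symplectic orthogonals in the weak version: from $\ker \tangent_m J = (\LieA{g} \cdot m)^\omega$ I obtain $(\ker \tangent_m J)^\omega = (\LieA{g}\cdot m)^{\omega\omega}$, and this equals $\LieA{g} \cdot m$ precisely by the assumed closure $(\LieA{g}\cdot m)^{\omega\omega} = \LieA{g}\cdot m$. The second identity I would reduce, using the weak version together with the first identity of part (ii), to the statement $\ker \tangent_m J \intersect (\ker \tangent_m J)^\omega = (\LieA{g}\cdot m)^\omega \intersect \LieA{g}\cdot m$, so that the task becomes to single out those fundamental vectors $\xi \ldot m$ that lie in $(\LieA{g}\cdot m)^\omega$.

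The computational core is the identity $\omega_m(\xi \ldot m, \eta \ldot m) = -\kappa(\mu, \commutator{\xi}{\eta})$, which I would derive by differentiating the equivariance relation $J(g\cdot m) = \CoAdAction_g J(m)$ at the identity to get the infinitesimal equivariance $\kappa(\tangent_m J(\eta \ldot m), \zeta) = -\kappa(\mu, \commutator{\eta}{\zeta})$ for all $\zeta$, and then substituting into the defining relation (the sign being fixed by the paper's conventions for $\CoAdAction$ and $\commutator{\cdot}{\cdot}$, but irrelevant for the equalities of subspaces at stake). Consequently $\xi \ldot m \in (\LieA{g}\cdot m)^\omega$ iff $\kappa(\mu, \commutator{\xi}{\eta}) = 0$ for all $\eta \in \LieA{g}$, which by non-degeneracy of $\kappa$ is exactly $\xi \in \LieA{g}_\mu$; hence $(\LieA{g}\cdot m)^\omega \intersect \LieA{g}\cdot m = \LieA{g}_\mu \cdot m$. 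I would also record the compatibility $\LieA{g}_m \subseteq \LieA{g}_\mu$, immediate from the same infinitesimal equivariance since $\zeta \ldot m = 0$ forces $\kappa(\mu, \commutator{\zeta}{\eta}) = 0$ for all $\eta$; this guarantees that the characterization is independent of the representative $\xi$.

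The weak version and the formal parts of (ii) carry no analytic difficulty. The single genuinely infinite-dimensional point is the equality $(\LieA{g}\cdot m)^{\omega\omega} = \LieA{g}\cdot m$: while the inclusion $\LieA{g}\cdot m \subseteq (\LieA{g}\cdot m)^{\omega\omega}$ always holds, the reverse is automatic in finite dimensions by a dimension count but can fail for a merely weakly symplectic Fréchet space, which is exactly why it enters as a hypothesis in the strong version and must be verified separately in applications (as in \cref{prop:yangMillsHiggs:reduction:orbitsSymplecticallyClosed}). Thus the lemma cleanly isolates the analytic content into this one symplectic-closedness condition, requiring neither an inverse function theorem nor any splitting of subspaces; I would present the detailed sign bookkeeping and the complete argument as in \parencite{DiezThesis}.
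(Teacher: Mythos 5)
The paper contains no proof of this lemma at all — it is stated as a known result with the proof deferred to \parencite{DiezThesis} — so there is no in-paper argument to compare against; your proof is correct and is precisely the standard formal argument one expects there: the weak version from the pointwise relation \( \kappa(\tangent_m J(X), \xi) = \omega_m(X, \xi \ldot m) \) together with non-degeneracy of \( \kappa \) and weak non-degeneracy of \( \omega \), the strong version by taking symplectic orthogonals and invoking infinitesimal equivariance, with the symplectic-closedness condition correctly isolated as the only genuinely infinite-dimensional input (no inverse function theorem or splittings needed). Your sign bookkeeping is consistent with the paper's conventions (the definition \( \dualPair{\CoAdAction_a \nu}{\xi} = \dualPair{\nu}{\AdAction_{a^{-1}} \xi} \) indeed yields \( \omega_m(\xi \ldot m, \eta \ldot m) = -\kappa(\mu, \commutator{\xi}{\eta}) \)), and your identity \( (\LieA{g} \cdot m)^\omega \intersect \LieA{g} \cdot m = \LieA{g}_\mu \cdot m \) goes through provided \( \LieA{g}_\mu \) is read, as your argument implicitly does, as the kernel of the infinitesimal coadjoint action \( \set{\xi \in \LieA{g} \given \kappa(\mu, \commutator{\xi}{\eta}) = 0 \text{ for all } \eta \in \LieA{g}} \).
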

We say that the orbit is \emphDef{symplectically closed} if the condition \( \LieA{g} \cdot m = (\LieA{g} \cdot m)^{\omega\omega} \) is satisfied.
\begin{prop}
	\label{prop:bifuractionLemma:symplecticSubspace}
	Let \( (M, \omega) \) be a symplectic \( G \)-manifold with equivariant momentum map \( J: M \to \LieA{g}* \).
	Assume that the orbit \( \LieA{g} \cdot m \) through \( m \in M \) is symplectically closed.
	Then, for every complement \( X \) of \( \LieA{g} \ldot m \) in \( \TBundle_m M \),
	\begin{equation}
		V = X \intersect \ker \tangent_m J
	\end{equation}
	is a symplectic subspace of \( (\TBundle_m M, \omega_m) \).
\end{prop}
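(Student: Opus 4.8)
The plan is to show that the restriction $\omega_m|_V$ is weakly non-degenerate, i.e.\ that its radical $V \intersect V^{\omega}$ is trivial. First I would record the two structural facts supplied by the Bifurcation \cref{prop:bifurcationLemma}: the weak version gives $\ker \tangent_m J = (\LieA{g} \ldot m)^{\omega}$, while the strong version, available because the orbit is assumed symplectically closed, gives $(\ker \tangent_m J)^{\omega} = \LieA{g} \ldot m$ together with $\ker \tangent_m J \intersect (\ker \tangent_m J)^{\omega} = \LieA{g}_\mu \ldot m$. Writing $K \defeq \ker \tangent_m J$, this last identity says precisely that the radical of $\omega_m|_K$ is $K \intersect K^{\omega} = \LieA{g}_\mu \ldot m$, so on $K$ the form is as non-degenerate as it can be once the null directions coming from the orbit are divided out. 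The whole point is then to exhibit $V$ as a complement of this radical inside $K$.

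The key step is the direct-sum decomposition $K = \LieA{g} \ldot m \oplus V$. Since the orbit is tangent to the level set of $J$ through $m$, one has $\LieA{g} \ldot m \subseteq K$ (this is automatic at the zero momentum value at which the theorem is applied; in general one would replace $\LieA{g} \ldot m$ by $\LieA{g}_\mu \ldot m$, which always lies in $K$). Granting $\LieA{g} \ldot m \subseteq K$, the decomposition is elementary: the intersection $\LieA{g} \ldot m \intersect V \subseteq \LieA{g} \ldot m \intersect X$ is trivial because $X$ is a complement of $\LieA{g} \ldot m$; and for spanning one takes $k \in K$, writes $k = a + x$ with $a \in \LieA{g} \ldot m$ and $x \in X$ according to $\TBundle_m M = \LieA{g} \ldot m \oplus X$, and observes that $a \in \LieA{g} \ldot m \subseteq K$ forces $x = k - a \in X \intersect K = V$.

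With the decomposition in hand the non-degeneracy argument is short. Let $v \in V$ lie in the radical of $\omega_m|_V$, so that $\omega_m(v, w) = 0$ for all $w \in V$. Because $v \in K = (\LieA{g} \ldot m)^{\omega}$ we also have $\omega_m(v, a) = 0$ for all $a \in \LieA{g} \ldot m$. Using $K = \LieA{g} \ldot m \oplus V$, these two facts combine to give $\omega_m(v, \cdot) = 0$ on all of $K$, that is, $v \in K^{\omega} = \LieA{g} \ldot m$ by the strong Bifurcation \cref{prop:bifurcationLemma}. Hence $v \in V \intersect \LieA{g} \ldot m \subseteq X \intersect \LieA{g} \ldot m = \{0\}$, so $v = 0$ and $\omega_m|_V$ is non-degenerate, as required.

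The main obstacle is exactly the decomposition $K = \LieA{g} \ldot m \oplus V$, and more precisely the inclusion $\LieA{g} \ldot m \subseteq K$ that makes it go through: without it the component $a$ above need not lie in $K$, and then $V$ can genuinely fail to be symplectic. A secondary, purely topological point in the Fréchet setting is that one should take $X$ to be a \emph{closed, topologically complemented} subspace, so that $V = X \intersect K$ is again closed and $\omega_m|_V$ is weakly (not merely algebraically) non-degenerate; this needs no analytic input beyond $K$ being closed and $X$ being a topological complement, so the entire proof rests only on the Bifurcation \cref{prop:bifurcationLemma}.
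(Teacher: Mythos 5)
Your argument is correct precisely in the regime where the paper actually uses \cref{prop:bifuractionLemma:symplecticSubspace}, and your closing caveat identifies a real issue with the statement read at arbitrary momentum. Note that the paper itself contains no proof of this proposition --- the appendix delegates it to \parencite{DiezThesis} --- so the comparison can only be with the standard argument one expects behind that citation, which is exactly yours: by the strong Bifurcation \cref{prop:bifurcationLemma}, the radical of \( \omega_m \) restricted to \( K \defeq \ker \tangent_m J \) is \( K \intersect K^{\omega} = \LieA{g}_\mu \ldot m \), and any algebraic complement of the radical is symplectic; your chain (a radical vector of \( \omega_m|_V \) annihilates \( K = \LieA{g} \ldot m + V \), hence lies in \( K^{\omega} = \LieA{g} \ldot m \), hence in \( X \intersect \LieA{g} \ldot m = 0 \)) implements this mechanism and yields weak non-degeneracy, which is the appropriate notion for the weakly symplectic Fréchet setting; your remark on taking \( X \) to be a closed topological complement is likewise consistent with how the paper invokes the result.

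The hypothesis you isolate is genuinely needed, not just convenient. By equivariance, \( \tangent_m J(\xi \ldot m) = - \mathrm{ad}^*_\xi \, \mu \) with \( \mu = J(m) \), so \( \LieA{g} \ldot m \subseteq K \) holds exactly when \( \mu \) is infinitesimally fixed by the coadjoint action, in particular at \( \mu = 0 \) --- and every application in the paper (in \cref{prop:singularReduction:linear} and in the non-degeneracy of \( \check{\omega}_{(K)} \)) takes place on \( J^{-1}(0) \). For \( \mu \neq 0 \) the statement fails verbatim: take \( G = \SOGroup(3) \) acting on \( \CotBundle \R^3 \) and \( m = (q, p) = (e_1, e_2) \), so \( \mu = e_3 \) and \( K = \set{(\delta q, \delta p) \given \delta q_3 = \delta p_3 = 0, \ \delta q_1 + \delta p_2 = 0} \); the subspace \( X \) spanned by \( (e_2, 0) \), \( (e_1, -e_2 + e_3) \) and \( (e_1, 0) \) is a complement of \( \LieA{g} \ldot m \), yet \( V = X \intersect K = \R \, (e_2, 0) \) is a non-zero isotropic line (the orbit is symplectically closed here, as is every subspace in finite dimensions). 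Your proposed repair --- passing to \( \LieA{g}_\mu \ldot m \) --- is the right one, but note it must be applied to \( X \) as well: taking \( X \) a complement of \( \LieA{g}_\mu \ldot m \) restores the spanning step (\( k = a + x \) with \( a \in \LieA{g}_\mu \ldot m \subseteq K \)) and exhibits \( V \) as a complement of the radical; merely substituting \( \LieA{g}_\mu \ldot m \) into the decomposition while keeping \( X \) a complement of \( \LieA{g} \ldot m \) does not, as the example shows.
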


\subsection{Slices and stratifications}
\label{sec:calculus:groupActionsSlices}

Since slices play a fundamental role in the construction of the normal form of the lifted action on the cotangent bundle, for the convenience of the reader, we now recall the definition of a slice in infinite dimensions, \cf \parencite{DiezSlice}.   
\begin{defn}
	\label{defn:slice:slice}
	Let \( M \) be a \( G \)-manifold.
	A \emphDef{slice} at \( m \in M \) is a submanifold \( S \subseteq M \) containing \( m \) with the following properties:
	\begin{thmenumerate}[label=(SL\arabic*), ref=(SL\arabic*), leftmargin=*] 
		\item \label{i::slice:SliceDefSliceInvariantUnderStab}
			The submanifold \( S \) is invariant under the induced action of the stabilizer subgroup \( G_m \), that is \( G_m \cdot S \subseteq S \).

		\item \label{i::slice:SliceDefOnlyStabNotMoveSlice}
			Any \( g \in G \) with \( (g \cdot S) \cap S \neq \emptyset \) is necessarily an element of \( G_m \). 

		\item \label{i::slice:SliceDefLocallyProduct}
			The stabilizer \( G_m \) is a principal Lie subgroup\footnote{A Lie subgroup \( H \subseteq G \) is called \emphDef{principal} if the natural fibration \( G \to G \slash H \) is a principal bundle, see \parencite[Section~7.1.4]{GloecknerNeeb2013}.} of \( G \) and the principal bundle \( G \to G \slash G_m \) admits a local section \( \chi: G \slash G_m \supseteq U \to G \) defined on an open neighborhood \( U \) of the identity coset \( \equivClass{e} \) in such a way that the map
			\begin{equation}
				\chi^S: U \times S \to M, \qquad (\equivClass{g}, s) \mapsto \chi(\equivClass{g}) \cdot s
			\end{equation}
			is a diffeomorphism onto an open neighborhood \( V \subseteq M \) of \( m \). We call \( V \) a \emphDef{slice neighborhood} of \( m \).

		\item \label{i::slice:SliceDefPartialSliceSubmanifold}
			The partial slice \( S_{(G_m)} = \set{s \in S \given G_s \text{ is conjugate to } G_m} \) is a closed submanifold of \( S \).

		\item
			\label{i:slice:linearSlice}
			There exist a continuous representation of \( G_m \) on a locally convex vector space \( X \) and a \( G_m \)-equivariant diffeomorphism \( \iota_S \) from a \( G_m \)-invariant open neighborhood of \( 0 \) in \( X \) onto \( S \) such that \( \iota_S(0) = m \).
			\qedhere
	\end{thmenumerate}
\end{defn}

In the finite-dimensional context, the existence of slices for proper actions is ensured by Palais' slice theorem \parencite{Palais1961}.
Passing to the infinite-dimensional case, this may no longer be true and additional hypotheses have to be made.
We refer the reader to \parencite{DiezSlice,Subramaniam1986} for general slice theorems in infinite dimensions and \parencite{AbbatiCirelliEtAl1989,Ebin1970,CerveraMascaroEtAl1991} for constructions of slices in concrete examples.
One of the problems one faces in the infinite-dimensional setting is the failure of the inverse function theorem and one usually is forced to use hard alternatives such as the Nash--Moser theorem.
However, for linear actions of compact groups the situation is better and we have the following existence result.
\begin{thm}[{\parencite[Theorem~3.15]{DiezSlice}}]
	\label{prop:slice:sliceTheoremLinearAction}
	Let \( X \) be a Fréchet space and let \( G \) be a compact Lie group that acts linearly and continuously on \( X \).
	Then, there exists a slice at every point of \( X \).
\end{thm}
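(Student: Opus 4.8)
The plan is to construct, at a point \( x_0 \in X \), a slice of the form \( S = x_0 + W \), where \( W \) is a \( G_{x_0} \)-invariant closed complement of the (finite-dimensional) orbit tangent space; the essential device for escaping the failure of the inverse function theorem in the Fréchet setting is that, since \( G \) is compact, all the directions along the orbit are finite-dimensional. First I would record the structural consequences of compactness. As \( G \) is compact it is finite-dimensional, the stabilizer \( G_{x_0} \) is closed and hence a compact Lie subgroup, and \( G \to G \slash G_{x_0} \) is principal and admits a local section \( \chi \colon U \to G \) on a neighborhood \( U \) of the identity coset with \( \chi(\equivClass{e}) = e \). The orbit tangent space \( \LieA{g} \ldot x_0 \isomorph \LieA{g} \slash \LieA{g}_{x_0} \) is finite-dimensional, therefore closed and topologically complemented; averaging any continuous projection onto it over the Haar measure of the compact group \( G_{x_0} \) produces a \( G_{x_0} \)-invariant closed complement \( W \), so that \( X = (\LieA{g} \ldot x_0) \oplus W \) as topological \( G_{x_0} \)-modules. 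I then set \( S \defeq x_0 + W \), shrunk to a \( G_{x_0} \)-invariant open neighborhood of \( x_0 \) as needed.

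With this choice, properties (SL1), (SL4) and (SL5) of \cref{defn:slice:slice} are essentially formal. Invariance (SL1) holds because \( G_{x_0} \) fixes \( x_0 \) and preserves \( W \); the linear isomorphism \( w \mapsto x_0 + w \) is a \( G_{x_0} \)-equivariant diffeomorphism \( W \to S \) sending \( 0 \) to \( x_0 \), giving (SL5). For (SL4), every \( s = x_0 + w \in S \) has \( G_s \subseteq G_{x_0} \), and a subgroup of the compact group \( G_{x_0} \) that is conjugate to \( G_{x_0} \) must equal it; hence \( G_s \) is conjugate to \( G_{x_0} \) precisely when \( w \) lies in the fixed-point subspace \( W^{G_{x_0}} \defeq \set{w \in W \given h \cdot w = w \text{ for all } h \in G_{x_0}} \), so the partial slice \( S_{(G_{x_0})} = x_0 + W^{G_{x_0}} \) is a closed affine submanifold.

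The heart of the proof is the slice-neighborhood statement (SL3): the tube map \( (\equivClass{g}, s) \mapsto \chi(\equivClass{g}) \cdot s \) should be a diffeomorphism of \( U \times S \) onto an open neighborhood of \( x_0 \). Writing \( P \colon X \to \LieA{g} \ldot x_0 \) for the projection along \( W \), the equation \( \chi(\equivClass{g}) \cdot s = x \) with \( s \in x_0 + W \) is equivalent to \( s = \chi(\equivClass{g})^{-1} \cdot x \) together with the condition
\[
	P\bigl( \chi(\equivClass{g})^{-1} \cdot x \bigr) = P x_0 .
\]
This is a finite-dimensional equation for the unknown \( \equivClass{g} \in U \), with the Fréchet-space-valued \( x \) entering only as a smooth parameter. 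Its partial derivative with respect to \( \equivClass{g} \) at \( (\equivClass{e}, x_0) \) is the map \( \xi \mapsto - \xi \ldot x_0 \), which is precisely the canonical isomorphism \( \LieA{g} \slash \LieA{g}_{x_0} \to \LieA{g} \ldot x_0 \). The implicit function theorem — applicable here because the equation takes values in the finite-dimensional space \( \LieA{g} \ldot x_0 \) and the unknown ranges over the finite-dimensional manifold \( U \), so that no inversion on the infinite-dimensional factor is required — then yields a smooth solution \( \equivClass{g} = \equivClass{g}(x) \) near \( x_0 \). Recovering \( s(x) = \chi(\equivClass{g}(x))^{-1} \cdot x \in S \) by the continuous linear action inverts the tube map smoothly, establishing (SL3). \textbf{This finite-dimensional reduction is the main obstacle and its resolution the crux of the argument:} it is only because \( G \) is compact, and hence the orbit directions finite-dimensional, that the inversion can be reduced to the classical inverse function theorem, while the transverse \( W \)-directions are handled purely by linearity.

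It remains to verify (SL2), namely that \( (g \cdot S) \intersect S \neq \emptyset \) forces \( g \in G_{x_0} \). I would argue by contradiction using compactness: if no shrinking of \( S \) achieves this, one extracts group elements \( g_n \notin G_{x_0} \) and points \( s_n, s_n' \to x_0 \) with \( g_n \cdot s_n = s_n' \); by compactness \( g_n \to g_\infty \in G_{x_0} \), and after translating by \( g_\infty^{-1} \in G_{x_0} \) (using invariance of \( S \)) we may assume \( g_n \to e \). Factoring \( g_n = \chi(u_n) h_n \) with \( h_n \in G_{x_0} \) and \( u_n \to \equivClass{e} \), the element \( h_n \cdot s_n \) still lies in \( S \), so \( \chi(u_n) \cdot (h_n \cdot s_n) = s_n' = \chi(\equivClass{e}) \cdot s_n' \); the injectivity of the local tube diffeomorphism established for (SL3) then forces \( u_n = \equivClass{e} \), i.e. \( g_n \in G_{x_0} \), a contradiction. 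Shrinking \( S \) to the corresponding slice neighborhood delivers all five properties of \cref{defn:slice:slice}, completing the construction.
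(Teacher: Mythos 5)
This theorem is imported from \parencite{DiezSlice}; the paper under review contains no proof of it, so your argument can only be measured against the strategy of the cited source — and there it matches. The mechanism you isolate is precisely what makes the linear--compact case go through without Nash--Moser: compactness forces \( G \), and hence the orbit directions \( \LieA{g} \ldot x_0 \), to be finite-dimensional, so after producing the \( G_{x_0} \)-invariant topological splitting \( X = (\LieA{g} \ldot x_0) \oplus W \) by Haar averaging (the same device as in \cref{prop:cotangentBundle:compactGroupReducitive}), inverting the tube map reduces to an implicit-function problem whose unknown and target are finite-dimensional while the Fréchet variable enters only as a smooth parameter; an implicit function theorem of classical strength with locally convex parameter dependence (Glöckner's version) applies, and this is indeed how the cited proof avoids any inversion on the infinite-dimensional factor. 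Your verification of (SL2) by compactness, factoring \( g_n = \chi(u_n) h_n \) and invoking the injectivity of the tube map already established for (SL3), is also the standard closing move and is correct, including the reduction to \( g_n \to e \) via \( G_{x_0} \)-invariance of \( S \).

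Two caveats. First, a minor ordering issue: your treatment of (SL4) opens with ``every \( s \in S \) has \( G_s \subseteq G_{x_0} \)'', which is not automatic for the affine transversal \( x_0 + W \); it is a consequence of (SL2). Since you prove (SL2) independently of (SL4) and shrink \( S \) at the end, there is no circularity, but the dependence should be stated — after the shrinking, the identification \( S_{(G_{x_0})} = x_0 + W^{G_{x_0}} \) via \cref{prop::compactLieSubgroup:conjugatedSubgroupEqual} goes through exactly as you say. Second, and more substantively: the hypothesis as stated assumes the action is merely \emph{continuous}, while every step of your crux differentiates it in the group directions — the derivative \( \xi \mapsto \xi \ldot x_0 \), the smoothness of the constraint \( \equivClass{g} \mapsto P\bigl(\chi(\equivClass{g})^{-1} \cdot x\bigr) \), and the smoothness of the tube map and of \( s(x) = \chi(\equivClass{g}(x))^{-1} \cdot x \). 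For a merely continuous linear action of a compact group, orbit maps need not be differentiable (rotations acting on a space of non-smooth functions already fail this), so that even \( \LieA{g} \ldot x_0 \) is then undefined; your argument, like every application of \cref{prop:slice:sliceTheoremLinearAction} in this paper (restrictions of smooth lifted actions, the \( \GauGroup_{A_0} \)-action on \( \SectionSpaceAbb{F} \)), must be read in the category of smooth linear actions, in which the slice notion of \cref{defn:slice:slice} — with its diffeomorphic tube map — is meaningful. Under that reading your proof is complete; if the literal continuity hypothesis is insisted upon, the smoothness of the action in the \( G \)-direction is a genuine unaddressed point and would require either restricting to smooth vectors or weakening the conclusion.
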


In \cref{sec:yangMillsHiggs}, we need to construct a slice for a group action on a product manifold.
This situation is covered by the next result.
\begin{prop}[{\parencite[Proposition~3.29]{DiezSlice}}]
	\label{prop:slice:sliceForProduct}
	Let \( G \) be a Lie group that acts smoothly on the manifolds \( M \) and \( N \).
	Assume that the \( G \)-action admits a slice \( S_m \subseteq M \) at a given point \( m \in M \) and that the \( G_m \)-action on \( N \) admits a slice \( S_n \) at the point \( n \in N \).
	Then, \( S_m \times S_n \) is a slice at \( (m, n) \) for the diagonal action of \( G \) on the product \( M \times N \).
\end{prop}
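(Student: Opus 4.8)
The plan is to verify the five conditions of \cref{defn:slice:slice} for the candidate slice \( S \defeq S_m \times S_n \) at the point \( (m,n) \). Throughout I would abbreviate \( H \defeq G_m \) and \( L \defeq G_{(m,n)} \). Since the action on \( M \times N \) is diagonal, \( L = G_m \intersect G_n \), and because \( S_n \) is a slice for the \( H \)-action at \( n \) we have \( L = H_n \); in particular the structure group attached to \( S_n \) is exactly \( L \). I will repeatedly use the elementary consequence of \cref{defn:slice:slice}\iref{i::slice:SliceDefOnlyStabNotMoveSlice} that points of a slice have stabilizer contained in that of the base point, so that \( G_s \subseteq H \) for all \( s \in S_m \) and \( H_t \subseteq L \) for all \( t \in S_n \).

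The first two conditions I would dispatch immediately. For \iref{i::slice:SliceDefSliceInvariantUnderStab}, \( L \subseteq H \) preserves \( S_m \) and \( L = H_n \) preserves \( S_n \), hence \( L \) preserves \( S_m \times S_n \). For \iref{i::slice:SliceDefOnlyStabNotMoveSlice}, if \( g \cdot (s,t) = (s',t') \) with \( s,s' \in S_m \) and \( t,t' \in S_n \), then \( g \cdot s = s' \in S_m \) forces \( g \in H \) by \iref{i::slice:SliceDefOnlyStabNotMoveSlice} for \( S_m \), and then \( g \cdot t = t' \) with \( g \in H \) forces \( g \in H_n = L \) by the same condition for \( S_n \).

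The technical heart is \iref{i::slice:SliceDefLocallyProduct}, and this is the step I expect to be the main obstacle, since it is where the two slice structures must genuinely be interlocked. First I would build a local section of \( G \to G \slash L \) adapted to the two factors: the section \( \chi_M \) of \( G \to G \slash H \) gives a local trivialization \( U_M \times H \to G \), \( (u,h) \mapsto \chi_M(u) h \), exhibiting a neighbourhood of the identity coset in \( G \slash L \) as \( U_M \times (H \slash L) \), and composing with the section \( \chi_N \) of \( H \to H \slash L \) yields \( \chi(u_M, u_N) = \chi_M(u_M)\, \chi_N(u_N) \). The associated tube map sends \( ((u_M,u_N),(s,t)) \) to \( \bigl(\chi_M(u_M)\chi_N(u_N)\cdot s,\ \chi_M(u_M)\chi_N(u_N)\cdot t\bigr) \). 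Since \( \chi_N(u_N) \in H = G_m \) preserves \( S_m \), the first slot equals \( \chi^{S_m}(u_M, \chi_N(u_N)\cdot s) \); injectivity of \( \chi^{S_m} \) recovers \( u_M \) and \( \chi_N(u_N)\cdot s \), after which cancelling \( \chi_M(u_M) \) in the second slot and using injectivity of \( \chi^{S_n} \) recovers \( u_N \) and \( t \), and finally \( s \) is recovered from \( \chi_N(u_N)\cdot s \). This proves injectivity; the same bookkeeping run in reverse produces a smooth inverse on a neighbourhood of \( (m,n) \), where one only needs that \( \chi_M(u_M)^{-1}\cdot q \) lands in the slice neighbourhood \( V_N \) of \( n \), which holds by continuity near the base point. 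Hence the tube map is a diffeomorphism onto an open neighbourhood.

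It remains to treat \iref{i::slice:SliceDefPartialSliceSubmanifold} and \iref{i:slice:linearSlice}. For \( (s,t) \in S \) one has \( G_{(s,t)} = G_s \intersect G_t = G_s \intersect H_t \) because \( G_s \subseteq H \); as \( H_t \subseteq L \) this stabilizer is contained in \( L \), so the requirement \( G_{(s,t)} \sim_G L \) together with \cref{prop::compactLieSubgroup:conjugatedSubgroupEqual} (the stabilizers being compact) forces \( G_{(s,t)} = L \), i.e. \( H_t = L \) and \( L \subseteq G_s \). Consequently the partial slice splits as \( S_{(L)} = S_m^{\,L} \times (S_n)_{(L)} \), where \( S_m^{\,L} \) denotes the \( L \)-fixed-point set; the latter is a split closed submanifold, since in the linear model \iref{i:slice:linearSlice} of \( S_m \) it corresponds to the fixed subspace \( X_m^{L} \), which is complemented by the \( L \)-averaging projection, while \( (S_n)_{(L)} \) is a closed submanifold by \iref{i::slice:SliceDefPartialSliceSubmanifold} for \( S_n \). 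Finally, \iref{i:slice:linearSlice} I would obtain by taking the linear model \( X_m \times X_n \) with the diagonal \( L \)-representation and the \( L \)-equivariant diffeomorphism \( \iota_{S_m} \times \iota_{S_n} \), which sends the origin to \( (m,n) \).
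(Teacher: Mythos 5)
Your overall strategy --- verifying the five conditions of \cref{defn:slice:slice} componentwise, with the composed section \( \chi(u_M, u_N) = \chi_M(u_M)\,\chi_N(u_N) \) driving the tube argument --- is the natural one; note that this paper contains no proof to compare against, since it imports the statement from \parencite{DiezSlice}. Your verifications of \iref{i::slice:SliceDefSliceInvariantUnderStab}, \iref{i::slice:SliceDefOnlyStabNotMoveSlice} and \iref{i:slice:linearSlice} of \cref{defn:slice:slice} are correct, and the untangling argument for \iref{i::slice:SliceDefLocallyProduct} (recover \( u_M \) and \( \chi_N(u_N) \cdot s \) from the \( M \)-slot, then \( u_N \) and \( t \) from the \( \chi_M(u_M) \)-translated \( N \)-slot) is sound, including the construction of the smooth inverse after shrinking. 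Two small points there: you should normalize \( \chi_M(\equivClass{e}) = e \) and \( \chi_N(\equivClass{e}) = e \) (always possible by right translation of the sections) so that the continuity argument at the base point actually applies --- note that \( \chi_M(\equivClass{e}) \in G_m \) fixes \( m \) but need not fix \( n \); and you should state explicitly that \( L \) is a principal Lie subgroup of \( G \), which \iref{i::slice:SliceDefLocallyProduct} demands. This does follow from your construction, since \( (u_M, u_N, l) \mapsto \chi_M(u_M)\,\chi_N(u_N)\, l \) is a diffeomorphism onto an open neighbourhood of \( e \) in \( G \) and its translates trivialize \( G \to G \slash L \), but it is part of the condition and must be said.

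The genuine gap is in \iref{i::slice:SliceDefPartialSliceSubmanifold}. Your identification \( S_{(L)} = S_m^{\,L} \times (S_n)_{(L)} \) invokes \cref{prop::compactLieSubgroup:conjugatedSubgroupEqual} with the parenthetical ``the stabilizers being compact'', but compactness is nowhere in the hypotheses: the proposition assumes only smooth actions, not proper ones, so \( L = G_{(m,n)} \) need not be compact. Without compactness, the implication you use in both directions --- a subgroup of \( L \) that is conjugate to \( L \) must equal \( L \) --- fails in general (already in the affine group \( \R_{>0} \ltimes \R \), conjugating the closed subgroup \( \Z \) of translations by the scaling \( 2 \) gives \( 2\Z \), a proper subgroup of \( \Z \)); the same issue infects your identification \( (S_n)_{(L)} = (S_n)_L \) of the conjugacy-type set with the isotropy-type set, and your splitness remark for \( S_m^{\,L} \) additionally uses Haar averaging over \( L \). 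So as written, your argument proves the proposition only under the extra assumption that \( G_{(m,n)} \) is compact. This is harmless for every use the paper makes of the result --- the ambient actions in \cref{sec:yangMillsHiggs} are proper, so all stabilizers are compact, and the paper itself pairs slices with \cref{prop::compactLieSubgroup:conjugatedSubgroupEqual} in exactly this way elsewhere --- but at the stated level of generality, condition \iref{i::slice:SliceDefPartialSliceSubmanifold} requires an argument that does not pass through that lemma, or the compactness hypothesis must be made explicit.
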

As in the finite-dimensional case, the existence of slices implies many nice properties of the orbit space.
For example, we have the following.
\begin{prop}
	\label{prop:slice:orbitTypeSubsetIsSubmanifold}
	If the \( G \)-action admits a slice at every point of \( M \), then \( M_{(H)} \) is a submanifold of \( M \).
	Moreover, \( \check{M}_{(H)} = M_{(H)} \slash G \) carries a smooth manifold structure such that the natural projection \( \pi_{(H)}: M_{(H)} \to \check{M}_{(H)} \) is a smooth submersion.  
\end{prop}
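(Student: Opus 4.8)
The plan is to prove both assertions locally and transport everything through a slice chart. Fix $m \in M_{(H)}$; after replacing $H$ by a conjugate we may assume $G_m = H$. By \iref{i::slice:SliceDefLocallyProduct} of \cref{defn:slice:slice} there is a slice $S$ at $m$, a slice neighborhood $V$, and a diffeomorphism $\chi^S: U \times S \to V$, $(\equivClass{g}, s) \mapsto \chi(\equivClass{g}) \cdot s$, with $U \subseteq G \slash G_m$ open. Since stabilizers transform equivariantly, $G_{\chi(\equivClass{g}) \cdot s} = \chi(\equivClass{g}) \, G_s \, \chi(\equivClass{g})^{-1}$, so the orbit type of $\chi^S(\equivClass{g}, s)$ equals that of $s$; hence a point of $V$ lies in $M_{(H)}$ if and only if its slice representative does. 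As $G_m = H$, the set of such representatives is exactly $S \intersect M_{(H)} = S_{(G_m)}$, the partial slice, and therefore $\chi^S$ restricts to a diffeomorphism $U \times S_{(G_m)} \to M_{(H)} \intersect V$.

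\emph{First assertion.} By \iref{i::slice:SliceDefPartialSliceSubmanifold} of \cref{defn:slice:slice}, $S_{(G_m)}$ is a closed submanifold of $S$, so $U \times S_{(G_m)}$ is a submanifold of $U \times S$. Transporting along the diffeomorphism $\chi^S$ shows that $M_{(H)} \intersect V$ is a submanifold of $V$, and since $m \in M_{(H)}$ was arbitrary, $M_{(H)}$ is a submanifold of $M$.

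\emph{Second assertion.} The saturation $G \cdot S$ is $G$-invariant and open, and the tube structure gives a $G$-equivariant diffeomorphism $G \cdot S \isomorph G \times_{G_m} S$, whence $(G \cdot S) \slash G \isomorph S \slash G_m$; restricting to orbit type $(H)$ identifies the image of $\check{M}_{(H)}$ in this chart with $S_{(G_m)} \slash G_m$. The crucial observation is that $G_m$ acts trivially on $S_{(G_m)}$: for $s \in S_{(G_m)}$ the stabilizer $G_s$ is conjugate to $G_m$ and, by \iref{i::slice:SliceDefOnlyStabNotMoveSlice} of \cref{defn:slice:slice}, contained in $G_m$, so $G_s = G_m$ by \cref{prop::compactLieSubgroup:conjugatedSubgroupEqual}; consequently the $G_m$-stabilizer of $s$ is all of $G_m$. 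Thus $S_{(G_m)} \slash G_m = S_{(G_m)}$, a Fréchet manifold, on which $\check{M}_{(H)}$ is locally modeled. Declaring these slice charts and checking that the induced transition maps between overlapping slices are smooth endows $\check{M}_{(H)}$ with a smooth structure, with respect to which $\pi_{(H)}$ corresponds locally to the projection $U \times S_{(G_m)} \to S_{(G_m)}$, a surjective submersion.

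The bulk of the work lies in the second assertion, and two points deserve care. First, one must know that each $G$-orbit meets $S_{(G_m)}$ in a single point; this follows from \iref{i::slice:SliceDefOnlyStabNotMoveSlice} together with the triviality of the $G_m$-action on $S_{(G_m)}$ just established, the latter resting on compactness of the stabilizer through \cref{prop::compactLieSubgroup:conjugatedSubgroupEqual}. Second, since we work in the Fréchet category, the submersion property and the well-definedness of the global smooth structure cannot be obtained by dimension counting but must be read off directly from the explicit model $U \times S_{(G_m)}$ and from the equivariant gluing of overlapping slice charts; verifying this compatibility is the main obstacle.
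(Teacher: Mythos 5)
Your proof is correct and follows essentially the route the paper itself indicates: the paper gives no detailed proof, merely remarking that the result rests on \cref{prop::compactLieSubgroup:conjugatedSubgroupEqual} (deferring the details to the slice-theorem reference), and your combination of the slice chart from \iref{i::slice:SliceDefLocallyProduct}, the partial-slice axiom \iref{i::slice:SliceDefPartialSliceSubmanifold} for the submanifold claim, and \iref{i::slice:SliceDefOnlyStabNotMoveSlice} together with that lemma to obtain \( G_s = G_m \) on \( S_{(G_m)} \) — hence the trivial \( G_m \)-action and the local model \( S_{(G_m)} \) for the quotient — is exactly that argument. The only point worth flagging is that \cref{prop::compactLieSubgroup:conjugatedSubgroupEqual} requires compact stabilizers, which in this context comes from the properness of the action assumed throughout the paper rather than from the slice hypothesis alone.
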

The proof of this proposition is based on the following lemma whose proof carries over word by word from the finite-dimensional setting, compare \parencite[Lemma~4.2.9]{Pflaum2000} or \parencite[Lemma~2.1.14]{OrtegaRatiu2003}.
\begin{lemma} \label{prop::compactLieSubgroup:conjugatedSubgroupEqual}
	Let \( G \) be a Lie group. 
	Let \( H \) and \( K \) be two compact Lie subgroups of \( G \). 
	If \( K \) is conjugate to \( H \) and \( K \subseteq H \), then \( K = H \).
\end{lemma}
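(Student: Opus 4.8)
The plan is to reduce the statement to the classical finite-dimensional situation and then run the standard argument combining dimensions with a count of connected components. First I would record that, being compact, both $H$ and $K$ are \emph{finite-dimensional} Lie groups; this is exactly the observation already used in the proof of \cref{prop:cotangentBundle:compactGroupReducitive}, namely that a locally compact topological vector space is finite-dimensional \parencite[Proposition~8.7.1]{Koethe1983}. Consequently all the usual structural facts about compact Lie groups are available, and in particular each of $H$ and $K$ has only finitely many connected components. This is what makes the infinite-dimensional ambient setting harmless, and it is why, as remarked in the text, the finite-dimensional proof carries over verbatim.

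Next I would exploit the conjugacy hypothesis. There exists $a \in G$ with $a H a^{-1} = K$, and conjugation by $a$ is a smooth group isomorphism of $H$ onto $K$, hence in particular a diffeomorphism. It follows that $\dim K = \dim H$ and that $K$ and $H$ have the same (finite) number of connected components.

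I then compare identity components. Since $K^0$ is connected and contains the identity, the inclusion $K \subseteq H$ forces $K^0 \subseteq H^0$. Because $\dim K^0 = \dim K = \dim H = \dim H^0$, the connected Lie subgroup $K^0$ is a submanifold of $H^0$ of full dimension, hence open in $H^0$; an open subgroup is automatically closed, so connectedness of $H^0$ yields $K^0 = H^0$. Finally I count cosets: both $K$ and $H$ are unions of cosets of the common identity component $K^0 = H^0$, the number of cosets being the number of connected components, which coincide by the conjugacy step. As $K \subseteq H$, the $K^0$-cosets constituting $K$ form a subfamily of those constituting $H$; two finite families of equal cardinality, one contained in the other, must agree, whence $K = H$.

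I expect no genuine obstacle here. The only step demanding any care is the implication $\dim K^0 = \dim H^0 \Rightarrow K^0 = H^0$, which rests on the purely finite-dimensional fact that a Lie subgroup of full dimension inside a connected Lie group is open and therefore exhausts it; since $H$ and $K$ are finite-dimensional by compactness, this presents no difficulty.
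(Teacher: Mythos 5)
Your proof is correct and is essentially the paper's proof: the paper simply states that the finite-dimensional argument of \parencite[Lemma~4.2.9]{Pflaum2000} and \parencite[Lemma~2.1.14]{OrtegaRatiu2003} carries over word by word, and that argument is exactly your combination of equal dimension (forcing $K^0 = H^0$ via the open-subgroup trick) with a count of the finitely many connected components transported by the conjugation diffeomorphism. Your only addition is to make explicit why the transfer to the infinite-dimensional ambient group is harmless --- compact subgroups are finite-dimensional, as in \cref{prop:cotangentBundle:compactGroupReducitive} --- which is precisely the observation the paper leaves implicit.
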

If, in addition, a certain approximation property is satisfied, then the orbit type manifolds fit together nicely and so the orbit space is a stratified space, see \parencite[Theorem~4.2]{DiezSlice}.
For completeness, we include here our definition of stratification and refer the reader to \parencite{DiezSlice} for further details and comparison with other notions of stratified spaces in the literature.
\begin{defn}
	\label{def:stratification:stratification}
	Let \( X \) be Hausdorff topological space. 
	A partition \( \stratification{Z} \) of \( X \) into subsets \( X_\sigma \) indexed by \( \sigma \in \Sigma \) is called a \emphDef{stratification} of \( X \) if the following conditions are satisfied:
	\begin{thmenumerate}[label=(DS\arabic*), ref=(DS\arabic*), leftmargin=*]
		\item \label{i::stratification:stratumIsManifold} 
			Every piece \( X_\sigma \) is a locally closed, smooth manifold (whose manifold topology coincides with the relative topology).
			We will call \( X_\sigma \) a \emphDef{stratum} of \( X \).
 
		\item \label{i::stratification:frontierCondition} (frontier condition)
			Every pair of disjoint strata \( X_\varsigma \) and \( X_\sigma \) with \( X_\varsigma \cap \closureSet{X_\sigma} \neq \emptyset \) satisfies:
			\begin{thmenumerate}[label=\alph*), ref=(DS2\alph*)]
				\item \label{i:stratification:frontierConditionBoundary}
					\( X_\varsigma \) is contained in the frontier \( \closureSet{X_\sigma} \setminus X_\sigma\) of \( X_\sigma \),
				\item \label{i:stratification:frontierConditionIntersection}
					\( X_\sigma \) does not intersect \( \closureSet{X_\varsigma} \).
			\end{thmenumerate}
			In this case, we write \( X_\varsigma < X_\sigma \) or \( \varsigma < \sigma \). 
			\qedhere
	\end{thmenumerate}
\end{defn}

\begin{refcontext}[sorting=nyt]{}
	\printbibliography
\end{refcontext}

\end{document}